\tikzset{current point is local=true}
\newcounter{claimcounterglobal}
\newcounter{claimcounter}
\renewenvironment{claim}[1][]{
  \renewcommand{\proof}{\smallskip\par\noindent\textit{Proof. }}
  \medbreak\par\noindent%
  \ifthenelse{\equal{#1}{}}{%
      \setcounter{claimcounter}{0}%
      \refstepcounter{claimcounterglobal}%
      \refstepcounter{claimcounter}%
      \textit{Claim~\arabic{claimcounter}.}
  }{%
    \ifthenelse{\equal{#1}{resume}}{%
      \refstepcounter{claimcounterglobal}%
      \refstepcounter{claimcounter}%
      \textit{Claim~\arabic{claimcounter}.}
    }{%
      \textit{Claim~#1.}
    }
  }
}{
  \par\medbreak
}
\newcommand{\case}[1]{\par\medskip\noindent\textit{Case #1: }}
\newenvironment{cs}{
  \begin{description}
    \renewcommand{\case}[1]{\item[\itshape\mdseries Case ##1:]}
  }{
  \end{description}
}
\newcommand{\uend}{\hfill$\lrcorner$}
\newcommand{\uenda}{\tag*{$\lrcorner$}} 
\newenvironment{eroman}{\enumerate[(i)]}{\endenumerate}
\renewcommand{\mathbf}[1]{\boldsymbol{#1}}
\renewcommand{\max}{\operatorname{max}}
\renewcommand{\min}{\operatorname{min}}
\renewcommand{\phi}{\varphi}
\newcommand{\bigmid}{\;\big|\;}
\newcommand{\Bigmid}{\;\Big|\;}
\newcommand{\ceil}[1]{\left\lceil#1\right\rceil}
\newcommand{\floor}[1]{\left\lfloor#1\right\rfloor}
\newcommand{\formel}[1]{\textsf{\upshape #1}}%\textit{#1}}
\newcommand{\logic}[1]{\textsf{\upshape #1}}
\newcommand{\LL}{\logic L}
\newcommand{\LC}[1][]{\logic C\ifx\uchyph#1\uchyph\else^{#1}\fi}
\newcommand{\LW}[1][]{\logic C\ifx\uchyph#1\uchyph\else_{\textsf{\upshape w}}^{#1}\fi}
\newcommand{\FOL}{\logic{FO}}
\newcommand{\free}{\mathrm{free}}
\newcommand{\height}{\operatorname{ht}}
\newcommand{\ZZ}{\mathbb Z}
\newcommand{\dagle}{\trianglelefteq}
\renewcommand{\int}{\operatorname{int}}
\newcommand{\iso}[1]{\formel{iso}_{#1}}
\newcommand{\CB}{\mathcal B}
\newcommand{\CC}{\mathcal C}
\newcommand{\CE}{\mathcal E}
\newcommand{\CQ}{\mathcal Q}
\newcommand{\FB}{{\mathbf B}}
\newcommand{\FC}{{\mathbf C}}
\newcommand{\FD}{{\mathbf D}}
\newcommand{\FG}{{\mathbf G}}
\newcommand{\FH}{{\mathbf H}}
\newcommand{\FN}{{\mathbf N}}
\newcommand{\FQ}{{\mathbf Q}}
\newcommand{\FR}{{\mathbf R}}
\newcommand{\FS}{{\mathbf S}}
\newcommand{\FX}{{\mathbf X}}
\newcommand{\FY}{{\mathbf Y}}
\newcommand{\Fb}{{\mathbf b}}
\newcommand{\Fe}{{\mathbf e}}
\newcommand{\Ff}{{\mathbf f}}
\newcommand{\Fg}{{\mathbf g}}
\newcommand{\Fbd}{\mathbf{bd}}
\newcommand{\Fint}{\mathbf{int}}
\newcommand{\Fcl}{\mathbf{cl}}
\newcommand{\FCQ}{\mathbf{G(\CQ)}}
\newcommand{\hier}[1][]{\par\bigskip\hrule\mbox{}\\{\red HIER WEITER%
\ifthenelse{\equal{#1}{}}{}{\par\noindent#1}}\\\hrule\mbox{}\par\bigskip}
\newcommand{\dist}{\operatorname{dist}}
\newcommand{\eg}{\operatorname{eg}}
\newcommand{\og}{\operatorname{og}}
\newcommand{\nog}{\operatorname{ng}}
\newcommand{\Cut}{\operatorname{Cut}}
\newcommand{\at}{\operatorname{at}}
\newcommand{\art}{\operatorname{art}}
\newcommand{\Aut}{\operatorname{Aut}}
\renewcommand{\tilde}{\widetilde}
\renewcommand{\hat}{\widehat}
\newcommand{\llcurly}{\{\hspace{-3.5pt}\{}
\newcommand{\rrcurly}{\}\hspace{-3.5pt}\}}
\newcommand{\atp}{\operatorname{atp}}
\title{A Linear Upper Bound on the Weisfeiler-Leman Dimension of Graphs of Bounded Genus}
\titlerunning{A Linear Upper Bound on the WL Dimension of Graphs of Bounded Genus}
\author{Martin Grohe}{RWTH Aachen University, Aachen, Germany}{grohe@informatik.rwth-aachen.de}{}{}
\author{Sandra Kiefer}{RWTH Aachen University, Aachen, Germany}{kiefer@informatik.rwth-aachen.de}{}{}
\authorrunning{M.\ Grohe and S.\ Kiefer}
\begin{document}

\maketitle

\begin{abstract}
  The Weisfeiler-Leman (WL) dimension of a graph is a measure for the
  inherent descriptive complexity of the graph. While originally
  derived from a combinatorial graph isomorphism test called the
  Weisfeiler-Leman algorithm, the WL dimension can also be
  characterised in terms of the number of variables that is required
  to describe the graph up to isomorphism in first-order logic with
  counting quantifiers.

  It is known that the WL dimension is upper-bounded for all graphs
  that exclude some fixed graph as a minor \cite{gro17}. However, the
  bounds that can be derived from this general result are
  astronomic. Only recently, it was proved that the WL dimension of
  planar graphs is at most $3$ \cite{kieponschwe17}.

  In this paper, we prove that the WL dimension of graphs embeddable
  in a surface of Euler genus $g$ is at most $4g+3$. For the WL dimension of graphs embeddable in an orientable surface of Euler genus $g$, our approach yields an upper bound of $2g+3$.
\end{abstract}

\section{Introduction}
The Weisfeiler-Leman (WL) algorithm is a simple combinatorial graph
isomorphism test. The 1-dimensional version of the algorithm, also
known as \emph{colour refinement} and \emph{naive vertex classification}, is known
since at least the mid 1960s, and it is widely used as a subroutine in
almost all practical graph isomorphism tools (see, for instance,
\cite{darliffsakmar04,junkas07,mck81,mckaypip14}), but also in machine
learning~(see, for instance, \cite{ahmkermlanat13,NIPS2017_6703,ICLR_6703,morritfey+19,sheschlee+11}). The 2-dimensional version can be traced
back to an article by Weisfeiler and Leman that appeared 50 years
ago~\cite{weilem68}. It is closely related to the algebraic theory of
coherent configurations. The generalisation to higher dimensions is
due to Babai (see~\cite{bab16,caifurimm92}), and again it plays an important
role as a subroutine in graph isomorphism algorithms, albeit more on
the theoretical side. Notably, the $(\log n)$-dimensional version is
used as a subroutine in Babai's quasipolynomial graph isomorphism
test~\cite{bab16}. 

The connection between the WL algorithm and logic was
made by Immerman and Lander~\cite{immlan90} and Cai, F\"urer, and
Immerman~\cite{caifurimm92}. They showed that two graphs are
distinguished by the $k$-dimensional WL algorithm if and only if they
can be distinguished in the logic $\LC[k+1]$, the $(k+1)$-variable
fragment of first-order logic which uses counting quantifiers of the form
$\exists^{\geq p}x$. 
The connection between the WL algorithm and logical
definability is at the core of some of the most interesting
developments in descriptive complexity theory (see, for example,
\cite{gro17,imm99,ott97}). Only recently, it was noted that the
WL algorithm (and thus the finite variable counting logic) has further
surprising characterisations. In a breakthrough paper, Atserias and
Maneva~\cite{atsman13} showed that the dimension $k$ of the
WL algorithm required to distinguish two graphs corresponds
to the level of the Sherali-Adams relaxation of the natural integer
linear program for graph isomorphism testing (also see
\cite{groott15,mal14}). This spawned a lot of work relating the WL
algorithm to semidefinite programming \cite{atsoch18,odowriwu+14} and
algebraic (Gr\"obner basis) approaches \cite{bergro15,gragropagpak18a}
to graph isomorphism testing. These results can also be phrased in
terms of propositional proof complexity. The latest facet of the theory is
a characterisation in terms of homomorphism counts from graphs of tree
width $k$ \cite{delgrorat18}. Various aspects of the WL
algorithm and its relation to logic have been studied in detail in
recent years~(see, for instance, \cite{arvkobratver15,arvkobratverb17,fur17,kieschwe16,kieschwesel15,krever15}).

Cai, Fürer, and Immerman~\cite{caifurimm92} proved that for every $k$
there are non-isomorphic 3-regular graphs $G_k,H_k$ of size $O(k)$
that cannot be distinguished by the $k$-dimensional WL algorithm. Thus,
as an isomorphism test, the $k$-dimensional WL algorithm is
incomplete. But, in view of the wide variety of seemingly unrelated
combinatorial, logical, and algebraic characterisations of the
algorithm, \emph{we are convinced that the structural information the
  algorithm is able to detect is of fundamental importance.} 

  The basic
parameter of the algorithm is the dimension, corresponding to the
number of variables in logical and the degree of polynomials in
algebraic characterisations. It yields a structural invariant called
the \emph{WL dimension} of a graph $G$ \cite{gro17}, defined to be the
least $k$ such that the $k$-dimensional WL algorithm distinguishes $G$
from every graph $H$ that is not isomorphic to $G$ (we say that $k$-WL
\emph{identifies} $G$), or equivalently, the least $k$ such that $G$
can be characterised up to isomorphism (or \emph{identified}) in the
logic $\LC[k+1]$. It is also convenient to define the \emph{WL
  dimension} of a class $\CC$ of graphs to be the maximum of the
WL dimensions of all graphs in $\CC$ if this maximum exists, or
$\infty$ otherwise. We see the WL dimension as a measure for the inherent
combinatorial or descriptive complexity of a graph or class of
graphs. We are mostly interested in the relation between the WL dimension
and other graph invariants.

Work in descriptive complexity shows that the WL dimension is bounded
for many natural graph classes, among them trees~\cite{immlan90},
graphs of bounded tree width~\cite{gromar99}, planar
graphs~\cite{gro98a}, graphs of bounded genus~\cite{gro00,gro12}, all
graph classes that exclude some fixed graph as a minor \cite{gro17},
interval graphs \cite{kobkuhlau+11,lau10}, and graphs of bounded rank
width \cite{groneu19}. However, most of these results do not give
explicit bounds on the WL dimension, and the bounds that can be
derived from the proofs are usually bad.
Only recently, the second author of this paper, jointly with
Ponomarenko and Schweitzer, established an almost tight bound for
planar graphs \cite{weilem68}: the WL dimension of planar graphs is at
most $3$, and there are planar graphs of WL dimension $2$. 

In this paper we establish bounds for graphs that can be embedded into
an arbitrary surface, for example, a torus or a projective plane. By the
classification theorem for surfaces (see
\cite[Theorem~3.1.3]{mohtho01}), up to homeomorphism (that is,
topological equivalence) all surfaces fall into only two countably
infinite families, the family $(\FS_k)_{k\ge0}$ of orientable surfaces
and the family $(\FN_\ell)_{\ell\ge 1}$ of non-orientable surfaces. For
example, the sphere $\FS_0$, the torus $\FS_1$, and the double torus
$\FS_2$ are the first three orientable surfaces, and the projective
plane $\FN_1$ and the Klein bottle $\FN_2$ are the first two
non-orientable surfaces. The \emph{Euler genus} $\eg(\FS)$ of a surface
$\FS$ is $2k$ if $\FS$ is homeomorphic to the orientable surface
$\FS_k$, and $\ell$ if $\FS$ is homeomorphic to the non-orientable surface
$\FN_\ell$. We define the \emph{Euler genus} of a graph $G$ to be the least
$g$ such that $G$ is embeddable (that is, can be drawn without edge
crossings) in a surface of Euler genus $g$ (see
Figure~\ref{fig:k5torus} for an example).

\begin{figure}
  \centering
\begin{tikzpicture}
  [
  line width=0.3mm,
  ]

  \path (0,0) node {\includegraphics[height=4cm]{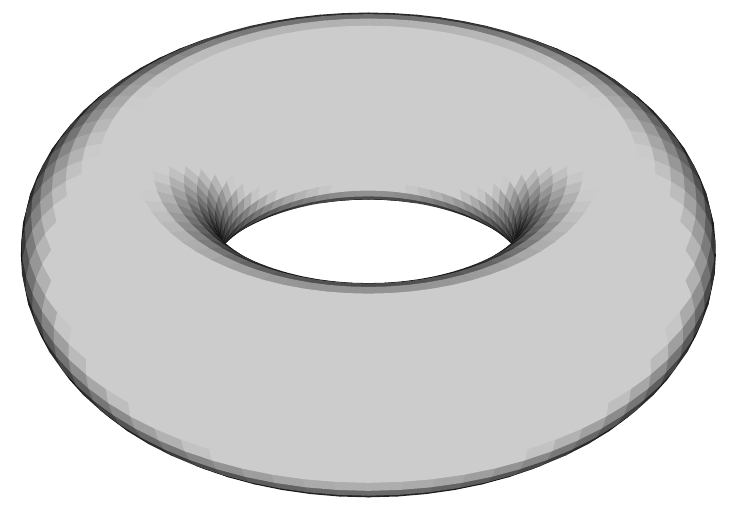}};

  \fill[red] (0.7,-0.5) circle (0.8mm) (0.8,-1.3) circle (0.8mm)
  (-0.8,-1.3) circle (0.8mm) (-0.7,-0.5) circle (0.8mm) (1.5,-0.6) circle (0.8mm); 

  \draw[red] (-2,0.45) arc (180:0:2cm and 1cm);
  \draw[red] (-2,0.45) arc (180:248:2cm and 1.01cm);

  \draw[red] (0.7,-0.5)  .. controls (1,-0.5) and (1.3,-0.55) ..  (1.5,-0.6);
  \draw[red] (0.8,-1.3)  .. controls (1,-1.3) and (1.4,-0.9) ..  (1.5,-0.6);
  \draw[red] (2,0.45) arc (360:316:2cm and 1.56cm);
  \draw[red] (-0.8,-1.3)  .. controls (0,-1.8) and (1.4,-1.8) ..  (1.5,-0.6);

  \draw[red] (0.7,-0.5) .. controls (0.3,-0.6) and (-0.3,-0.6) .. (-0.7,-0.5);
  \draw[red] (0.8,-1.3) .. controls (0.3,-1.25) and (-0.3,-0.85) .. (-0.7,-0.5);
  \draw[red] (0.8,-1.3) .. controls (0.3,-1.4) and (-0.3,-1.4) .. (-0.8,-1.3);
  \draw[red] (0.7,-0.5) .. controls (0.8,-0.7) and (0.85,-1.1) .. (0.8,-1.3);
  \draw[red] (-0.7,-0.5) .. controls (-0.8,-0.7) and (-0.85,-1.1)
  .. (-0.8,-1.3);
  \draw[red] (0.7,-0.5) .. controls (0.6,-0.3) and (0.4,-0.2)  .. (0.25,-0.2);
  \draw[red][dashed] (0.25,-0.2) .. controls (0.00,-0.3)  .. (-0.05,-1)
  .. controls (-0.1,-1.6) and (-0.2,-1.7) .. (-0.5,-1.8);
  \draw[red] (-0.5,-1.8) .. controls (-0.75,-1.65) and  (-0.77,-1.45)  .. (-0.8,-1.3);
\end{tikzpicture}
  \caption{Embedding of $K_5$ into the torus}
  \label{fig:k5torus}
\end{figure}

\begin{theorem}\label{thm:main}
  The WL dimension of a graph of Euler genus $g$ is at most $4g+3$.
\end{theorem}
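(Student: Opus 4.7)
The plan is to reduce the Euler-genus-$g$ case to the planar case, exploiting the Kiefer-Ponomarenko-Schweitzer theorem \cite{kieponschwe17} that $3$-WL identifies planar graphs. The basic mechanism is the individualisation-refinement principle: if there is a WL-invariant set $A\subseteq V(G)$ of size $p$ such that, after individualising $A$, the vertex-coloured graph $(G,A)$ is identified by $d$-WL, then $G$ itself is identified by roughly $(p+d)$-WL. I would therefore aim to produce, in every graph of Euler genus $g$, a canonically defined anchor set of size at most $4g$ whose individualisation leaves a ``planarised'' coloured graph identifiable by $3$-WL.

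The combinatorial heart is a cut-system argument. Any $2$-cell embedding of $G$ into a surface $\FS$ of Euler genus $g$ admits a collection of non-contractible closed walks in $G$ (a handle-cut system) whose removal from the embedded graph yields a planar graph; such a system has size at most $g$ in the non-orientable and $2g$ in the orientable case. For each cut I would place a small constant number of anchor vertices, which serve to pin down the endpoints of the cut, its orientation along the walk, and the identification of the two sides of the cut. Spending four anchors per cut in the general (non-orientable) case gives $4g$ anchors and the $4g+3$ bound, while in the orientable case a global orientation is already fixed, so two anchors per cut suffice, producing the sharper $2g+3$ bound.

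The main obstacle is to glue the topological cut-system construction to WL-invariance. Two things have to be verified: first, that a suitable anchor set can be selected in a way that is invariant under $\AUT(G)$ and detectable by an $O(g)$-variable counting formula, so that the individualisation lemma really applies; and second, that $3$-WL on the individualised coloured graph recovers enough of the embedding to identify it. I would attack the second point by decomposing $G$ into its $3$-connected components and using the fact that a $3$-connected graph of bounded Euler genus has only boundedly many $2$-cell embeddings up to reflection, so that once anchors are in place the embedding and hence the whole graph can be reconstructed by $3$-WL through a coloured version of the planar argument. The delicate interplay between the surface topology, the $3$-connected decomposition, and the bookkeeping of the anchor colours is where the bulk of the technical work lies.
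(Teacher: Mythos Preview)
Your high-level strategy matches the paper's---cut along non-contractible curves and invoke the planar bound---but the mechanism you propose has two genuine gaps that the paper resolves quite differently.

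First, you need a cut system that is \emph{definable from the abstract graph}, not from a chosen embedding. A handle-cut system is tied to a specific embedding, and for $g\ge 1$ a $3$-connected graph can have many inequivalent polyhedral embeddings; there is no evident first-order way to single one out. You flag this as ``the main obstacle'' but offer no resolution. The paper's entire machinery of shortest path systems, patches, and necklaces exists precisely to manufacture embedding-invariant substitutes for cut curves: the canonical sps $\CQ^G(u,u')$ is definable with three variables, and Lemma~\ref{lem:ns-abstract} together with Lemma~\ref{lem:disk-non-def} shows that the disk structure around a non-simplifying patch is forced by the abstract graph once one knows the unique non-planar component of its complement. The simplifying-patch case (Section~\ref{subsec:simplifying}) requires a further, separate structural analysis.

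Second, the inference ``individualise $4g$ anchors, then $3$-WL identifies the resulting coloured graph'' does not follow. The coloured graph still has Euler genus $g$; there is no reason $3$-WL identifies it, and the fact that a $3$-connected bounded-genus graph has boundedly many embeddings does not by itself let $3$-WL choose or exploit one. What is actually needed is that, with the anchors fixed, a bounded-width formula can \emph{define} the cut, delete it, colour the pieces so the original is reconstructible, and then identify each (now lower-genus) piece. The paper does exactly this, but \emph{inductively}: three parameters $u^0,u^1,u^2$ pin down a reducing necklace (Lemma~\ref{lem:exnecklace}), the cut graph $\Cut(\CB)$ is defined in $\LW[7]$ (Lemma~\ref{lem:necklace-def2}), and the inductive hypothesis is applied to its components, each of Euler genus at most $g-1$ (Lemma~\ref{lem:necklace-simp}). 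Each step costs four variables (three for the necklace plus one for the $3$-connected reduction of Lemma~\ref{lem:3-connected}), giving $4g+4$ variables and WL dimension $4g+3$. Your all-at-once scheme would need a global definable cut system and a direct argument that the planar pieces can be assembled within bounded width; neither is supplied.
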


For graphs embeddable in orientable surfaces, we can improve the bound further.

\begin{corollary}\label{cor:orient}
  The WL dimension of a graph embeddable in an orientable surface of
  Euler genus $g$ is at most $2g+3$.
\end{corollary}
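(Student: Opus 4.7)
The plan is to derive Corollary~\ref{cor:orient} by specialising the argument behind Theorem~\ref{thm:main}, exploiting the fact that in an orientable surface every essential simple closed curve is two-sided. By the classification of surfaces, an orientable surface of Euler genus $g$ is homeomorphic to $\FS_k$ with $g = 2k$, and cutting along a non-separating two-sided curve reduces the orientable genus by $1$ (so the Euler genus by $2$), while cutting along a separating two-sided curve splits the surface into two orientable pieces whose Euler genera sum to $g$. In contrast, a non-orientable surface additionally admits one-sided curves, along which a cut reduces Euler genus by only $1$, and it is these one-sided cuts that are responsible for the worse coefficient in Theorem~\ref{thm:main}.

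My approach is to inspect the inductive step of Theorem~\ref{thm:main} and observe that, in its natural organisation as an induction over Euler genus, each topological reduction step increases the WL-dimension bound by the additive constant $4$, irrespective of whether the Euler genus drops by $1$ (a crosscap-removing one-sided cut) or by $2$ (a handle-removing two-sided cut). In the general case, up to $g$ such steps may be required to reach a planar graph, accounting for the coefficient $4$ in the bound $4g+3 = 3+4g$. In the orientable case only two-sided cuts occur, and so at most $k = g/2$ reduction steps are needed; combined with the planar base case of WL dimension at most $3$~\cite{kieponschwe17}, this yields $3 + 4k = 2g+3$.

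The main obstacle will be to verify that the induction of Theorem~\ref{thm:main} is compatible with this better accounting when restricted to orientable embeddings. Concretely, one has to check that whenever the proof invokes a non-contractible cycle for the cut, that cycle can always be taken to be two-sided in the orientable setting (which is automatic) and that the resulting embedding is again orientable (so the induction hypothesis applies in the orientable version); and in the separating case, one has to confirm that the WL-dimension bound for the whole graph is controlled by the \emph{maximum} of the bounds for the two pieces, as is standard in separator-based inductions. If, however, the proof of Theorem~\ref{thm:main} is set up so that some reduction step is intrinsically one-sided, I would have to reorganise the induction to avoid such steps when the embedding is orientable; the clean match between the constants $4g+3$ and $2g+3$ strongly suggests that the proof is already naturally built around pairs of Euler-genus reductions, and that the corollary will follow with only a minor refinement of the bookkeeping rather than a genuinely new argument.
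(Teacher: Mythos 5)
Your proposal matches the paper's argument essentially point for point: the paper's proof of Corollary~\ref{cor:orient} simply observes that orientable surfaces have even Euler genus, so each cut (along a non-contractible cycle, necklace, or simplifying patch) drops the Euler genus by at least $2$ while staying orientable, and then reruns the induction of Section~\ref{sec:mainthm} with $s$ redefined to be the variable count for graphs embeddable in orientable surfaces of Euler genus at most $g-2$, giving $4 + 4\cdot(g/2)$ variables, i.e.\ WL dimension $2g+3$. Your framing in terms of two-sided versus one-sided curves and your caveats (the cut pieces are again orientable, the bound is controlled by the maximum over components) are precisely the points the paper's short proof tacitly relies on, so this is the same approach rather than an alternative one.
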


As mentioned above, it was first proved in \cite{gro00} that the
WL dimension of graphs of bounded genus is bounded. A more detailed
proof of the same result can be found in the journal paper
\cite{gro12}. Neither of the two papers gives an explicit bound on the
WL dimension. The proof of \cite{gro12} only yields a quadratic bound
(in terms of the genus). It seems that the proof of \cite{gro00} gives
a linear bound, albeit with a large constant factor of at least $80$
(not all details are worked out there, so it is difficult to
determine the exact bound). The proof in both of these papers is based
on the fact that sufficiently large graphs of minimum degree at least
$3$ embedded in a surface will have a facial cycle of length at most $6$. The proof we give
here is completely different. It is based on the straightforward idea
of removing a non-contractible cycle to reduce the genus and then applying
induction. The problem with this idea is that we cannot define
non-contractible cycles, only families of such cycles that may
intersect in complicated patterns. Understanding these leads to
significant technical complications, but in the end enables us to
obtain a much better bound than the simpler proofs of
\cite{gro00,gro12}. Our proof is based on a simplified version of a
construction from \cite[Chapter~15]{gro17}, applied there to graphs ``almost embeddable'' in a surface.

\subsection*{Outline of the Paper} In Section $\ref{sec:prelim}$, we
introduce the conventions as well as some topological notions and
facts that we use throughout the paper. In
Section~\ref{subsec:wl-logic}, we introduce the WL dimension and relate it
to logic. In Section~\ref{sec:sps-patch-necklace}, we introduce the graph-theoretic machinery that we need in the proof of our main theorem. The
proof is outlined in Section \ref{sec:mainthm}. The detailed proof is
long and complicated, and we defer it to a technical appendix.

\section{Preliminaries}\label{sec:prelim}

We introduce the definitions and conventions regarding notation in this paper, which mostly follow \cite[Chapters~9 and~15]{gro17}. 

\subsection{Graphs} \label{subsec:graphs}
All graphs in this paper are finite, simple, and undirected. For a
graph $G$, we denote by $V(G)$ and $E(G)$ its set of vertices and
edges, respectively. We denote an edge between vertices $v$ and $w$ by
$uv$. Depending on the context, we sometimes view the edge set $E(G)$
as a subset of $\binom{V(G)}{2}$ and sometimes as an irreflexive
symmetric binary relation on $V(G)$; this should cause no confusion. The
\emph{order} of a graph $G$ is $|G|\coloneqq|V(G)|$, and we let
$\|G\|\coloneqq|E(G)|$.

For a set $V \subseteq V(G)$, we set $N^G(V) \coloneqq \{w
\mid w \in V(G)\setminus V, \exists v \in V \colon vw \in
E(G)\}$. Here, and in similar notations, we omit the superscript ${}^G$
if $G$ is clear from the context. 

For two graphs $G$ and $H$, we denote by $G \cup H$ the graph with vertex set $V(G) \cup V(H)$ and edge set $E(G) \cup E(H)$.
A graph $H$ is a \emph{subgraph} of $G$ (we write $H \subseteq G$) if $V(H) \subseteq V(G)$ and $E(H) \subseteq E(G)$. In this case we let $N(H)\coloneqq N\big(V(H)\big)$. We denote by $G[V] \coloneqq (V, E(G)
\cap \{uv\mid u,v \in V\})$ the \emph{subgraph of $G$
induced by $V$}. For a set $X$ (not necessarily a subset of $V(G)$) we
let $G \setminus X\coloneqq G[V(G) \setminus
X]$, and for a graph $H$, we let $G \setminus H \coloneqq G\setminus
V(H)$. 

For $k\ge 1$, the graph $G$ is \emph{$k$-connected} if $|G| > k$ and for
every $V \subseteq V(G)$ with $|V| < k$, the graph $G \setminus V$ is
connected. A \emph{$k$-separator} of $G$ is a set $S \subseteq V(G)$
of size $|S|=k$ such that there are vertices $u,v\in V(G)\setminus S$
which belong to the same connected component of $G$, but to different
connected components of $G\setminus S$.

Let $H\subseteq G$. For a connected component $A$ of $G\setminus H$, the
vertices in $N(A)\subseteq V(H)$ are \emph{vertices of
  attachment} of
$A$. An \emph{$H$-bridge} is a subgraph $B\subseteq
\big(V(G),E(G)\setminus E(H)\big)$ such that either $B=(\{u,v\},\{uv\})$ for
some edge $uv \in E(G)\setminus E(H)$ or $B$ is the union of a connected
component $A$ of $G\setminus H$ together with all its vertices of
attachment and all edges with at least one endvertex in $V(A)$. 
The \emph{vertices of attachment} of an $H$-bridge $B$ are the
vertices in $V(B)\cap V(H)$. We denote the set of vertices of
attachment of $B$ by $\at(B)$.

An \emph{arc-coloured graph}~$(G, \chi)$ is a graph~$G$ with a
function~$\chi\colon \big\{(u,u)\mid u\in V(G)\} \cup \{ (u,v) \mid
\{u,v\} \in E(G)\big\}\rightarrow \mathcal{C}$, where $\CC$ is some set of
colours. In an arc-coloured graph
we interpret~$\chi(u,u)$ as the vertex colour of~$u$ and
for~$uv \in E(G)$ we interpret~$\chi(u,v)$ as the colour of the arc
from~$u$ to~$v$. In particular it may be the case
that~$\chi(u,v) \neq \chi(v,u)$, that is, the two orientations of an
(undirected) edge $uv$ may receive different colours. A
\emph{vertex-coloured graph} is the special case of an arc-coloured
graph where all arcs receive the same colour, say, $1$, that is,
$\chi(u,v)=1$ for all $u\neq v$.
Whenever we refer to coloured graphs in this paper, we mean arc-coloured graphs. To simplify the notation, we usually do not mention the colouring
explicitly and just denote an arc-coloured graph by $G$, implicitly
assuming that the colouring is $\chi$. 

For a (possibly coloured) graph $G$ and a sequence of vertices $v_1, \dots, v_\ell$, we write $G_{v_1, \dots, v_\ell}$ to denote the graph resulting from individualising every vertex $v_i$, i.e., by assigning every $v_i$ for $i \in [\ell]$ a unique colour. When comparing two graphs with individualised vertices $G_{v_1, \dots, v_\ell}$ and $H_{v'_1, \dots, v'_\ell}$ we assume that for $i \in [\ell]$, the two vertices $v_i$ and $v'_i$ have the same colours.

We write $G \cong H$ to indicate that the graphs $G$ and $H$ are isomorphic via a colour-preserving isomorphism.
An \emph{automorphism} of $G$ is an isomorphism from $G$ onto itself. The set of automorphisms of $G$ equipped with concatenation forms a group, also denoted by $\Aut(G)$. For a vertex $v \in V(G)$, the \emph{orbit} of $v$ is the set $\{\pi(v) \mid \pi \in \Aut(G)\}$. A set $V \subseteq V(G)$ is called a \emph{block} of $\Aut(G)$ if for every $\pi \in \Aut(G)$ it holds that $V \cap \pi(V) \in \{\emptyset, V\}$, i.e., if every automorphism of $G$ maps $V$ onto itself or onto a set that is disjoint to $V$.

For a set $W \subseteq V(G)$, let $G/W$ be the graph obtained from $G$
by identifying all vertices in $W$ and eliminating loops and parallel
edges. We usually denote the vertex of $G/W$ representing the set $W$
by $w$. Formally, $G/W$ is the graph with vertex set $V(G/W) \coloneqq
V(G \setminus W) \cup \{w\}$ and edge set $E(G/W) \coloneqq E(G \setminus W) \cup \{vw \mid v \in V(G \setminus W), \exists w' \in W : vw' \in E(G)\}$.
Moreover, if $G$ has the colouring $\chi$ with range
$\mathcal{C}$, then $G/W$ has the colouring $\chi'$ where
$\chi'(w,w) = \emptyset$ and $\chi'(u,v) = \chi(u,v)$ and
 $\chi'(u,w) \coloneqq \llcurly \chi(u,w') \mid w' \in W
\rrcurly$ and $\chi'(w,v) \coloneqq \llcurly \chi'(w',v) \mid w' \in W
\rrcurly$ for all $u,v \in V(G \setminus W)$ and all $w \in W$. (We use
$\llcurly\ldots\rrcurly$ as notation for multisets.)
For a subgraph $A\subseteq G$, we let $G/A\coloneqq G/V(A)$, with the
convention of denoting the vertex of $G/A$ representing $V(A)$ by $a$.

\subsection{Topology}\label{subsec:topo}
In this section we review basic notions of surface topology and graph
embeddings. In our presentation and notation, we follow
\cite[Chapter~9]{gro17}. Many more details can be found there, in
\cite{mohtho01}, and in \cite[Appendix~B]{die10}.

We denote topological spaces like surfaces, curves, and embedded
graphs by bold-face letters. A \emph{simple curve} in a topological space is
a homeomorphic image of the real interval $[0,1]$, equipped with the usual
topology. Similarly, a \emph{simple closed curve} is a homeomorphic image of
the 1-sphere. A \emph{closed disk} is a
homeomorphic image of $\{x\in\mathbb R^2\mid \|x\|\le 1\}$ equipped
with the usual topology, and an \emph{open disk} is a
subspace of $\mathbb R^2$ that is homeomorphic to $\mathbb R^2$ (viewed as a
topological space).
A topological space $\FX$ is \emph{arcwise
  connected} if for any two points $x,y\in\FX$ there is a simple curve
with endpoints $x$ and $y$. For a subset $\FY\subseteq\FX$, we define the
\emph{boundary} of $\FY$ in $\FX$ to be the set $\Fbd_{\FX}(\FY)$ of
all points $x\in\FX$ such that every neighbourhood of $x$ has a
nonempty intersection with both $\FY$ and $\FX\setminus\FY$. The
\emph{interior} of $\FY$ is $\Fint_{\FX}(\FY)\coloneqq\FY\setminus
\Fbd_{\FX}(\FY)$, and the \emph{closure} of $\FY$ is $\Fcl_{\FX}(\FY)\coloneqq\FY\cup
\Fbd_{\FX}(\FY)$. We omit the subscript $\FX$ if the space, usually a
surface, is clear from the context.

A \emph{surface} is an arcwise
connected 2-manifold (intuitively, a space that looks like a disk in a
small neighbourhood of every point).\footnote{In this paper, we only consider surfaces without
boundary.} Recall from the introduction that up to
homeomorphism there are only two families $(\FS_g)_{g\ge0}$ and
$(\FN_g)_{g\ge 1}$ of surfaces. $\FS_0$ is the $2$-sphere, and for
$g\ge 1$, $\FS_g$ is the surface obtained from the $2$-sphere by
adding $g$ handles, and $\FN_g$ is the surface obtained
from the $2$-sphere by adding $g$ crosscaps. Intuitively, adding a
\emph{handle} to a surface means punching two holes into the surface and
gluing a cylinder to these holes. Adding a \emph{crosscap} means punching a hole
into the surface and gluing a Möbius strip to this hole. The \emph{Euler genus} $\eg(\FS)$ of a surface
$\FS$ is $2g$ if $\FS$ is homeomorphic to $\FS_g$ and $g$ if $\FS$ is homeomorphic to 
$\FN_g$.

Let $\Fg$ be a simple closed curve in a surface $\FS$. Then $\Fg$ is
\emph{contractible} if it is the boundary of a closed disk in $\FS$,
otherwise $g$ is \emph{non-contractible}. If $\Fg$ is non-contractible,
we can obtain one or two surfaces of strictly smaller Euler genus by
the following construction: we cut the surface along $\Fg$; what
remains is a surface with one or two holes in it. Then we glue a
disk onto these hole(s) and obtain one or two simpler surfaces. For a
more detailed description of this construction, see \cite[Appendix~B]{die10}.

Formally, an embedded graph in a surface $\FS$ is a pair
$G=\big(V(G),E(G)\big)$ where $V(G)\subseteq \FS$ is a finite set and $E(G)$
is a set of simple curves in $\FS$ such
that for all $\Fe\in E(G)$, both endpoints and no internal point of
$\Fe$ are in $V(G)$ and any two distinct $\Fe,\Fe'\in E(G)$ have at
most one endpoint and no internal points in common. $\FG$ denotes the
point set $V(G)\cup\bigcup_{\Fe\in E(G)}\Fe\subseteq\FS$. Sometimes,
we also regard $\FG$ as a topological (sub)space (of $\FS$). The
\emph{underlying graph} of an embedded graph $G$ is the graph with
vertex set $V(G)$ and edge set $\{\Fe\cap V(G)\mid \Fe\in E(G)\}$. We
usually blur the distinction between an embedded graph $G$ and its
underlying ``abstract'' graph.  The \emph{faces} of $G$ are the
arcwise connected components of the space $\FS\setminus\FG$. It is
easy to see that for every face $\Ff$ of $G$ there is a subgraph
$B\subseteq G$ such that the (topological) boundary $\Fbd(\Ff)$ of
$\Ff$ in $\FS$ is
precisely $\FB$. We call $B$ a \emph{facial subgraph} of $G$.

We say that an (abstract) graph $G$ is \emph{embeddable} into a
surface $\FS$ if it is isomorphic to (the underlying graph of) a graph
embedded in $\FS$.  The \emph{Euler genus} $\eg(G)$ of a graph $G$ is
the least $g$ such that $G$ is embeddable into a surface of Euler
genus $g$. It is useful to also
define the \emph{orientable genus} $\og(G)$ of a graph $G$ to be the smallest $g$
such that $G$ is embeddable into $\FS_g$ and the \emph{non-orientable
  genus} $\nog(G)$ of $G$ to be the smallest $g$
such that $G$ is embeddable into $\FN_g$. Then
$\eg(G)=\min\{2\og(G),\nog(G)\}$.

The graphs of Euler genus $0$ are precisely the \emph{planar graphs}
because a graph can be embedded into the 2-sphere $\FS_0$ if and
only if it can be embedded into the plane $\FR^2$. The class of all graphs of
Euler genus at most $g$ is denoted by $\CE_g$.

A
\emph{non-contractible cycle} in a graph $G$ embedded in $\FS$ is a
cycle $C\subseteq G$ such that $\FC$ is a non-contractible simple
closed curve in $\FS$.

\begin{fact}\label{fact:disks}
  Let $\FS$ be a surface, and let $\FD,\FD'\subseteq\FS$ be closed disks such
  that $\Fbd(\FD')\cap\FD$ is a simple curve. Then $\FD\cup\FD'$ is a
  closed disk.
\end{fact}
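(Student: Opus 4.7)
My plan is to identify the simple arc $\gamma := \Fbd(\FD') \cap \FD$, show its endpoints lie in $\Fbd(\FD) \cap \Fbd(\FD')$, use $\gamma$ to split $\FD$ into two closed sub-disks, observe that exactly one of them sits inside $\FD'$, and conclude that $\FD \cup \FD'$ is the union of the remaining sub-disk with $\FD'$ along a common boundary arc, which is standardly a closed disk.

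First, I would let $p, q$ denote the two endpoints of the simple arc $\gamma$. Both lie on $\Fbd(\FD')$ since $\gamma \subseteq \Fbd(\FD')$, and both lie on $\Fbd(\FD)$ by the following observation: the set $\Fbd(\FD') \setminus \gamma$ is an open arc of the simple closed curve $\Fbd(\FD')$ whose points all avoid $\FD$ by definition of $\gamma$, and $p, q$ are its two limit points; hence every neighbourhood of $p$ (and of $q$) meets $\FS \setminus \FD$, while $p, q \in \FD$, forcing $p, q \in \Fbd(\FD)$.

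Next, I would focus on the generic configuration in which $\gamma$ meets $\Fbd(\FD)$ only at $p$ and $q$. Then $\gamma$ is a chord of the closed disk $\FD$ connecting two boundary points, and Schoenflies's theorem applied inside $\FD$ splits $\FD$ into two closed sub-disks $\FD_1, \FD_2$ with $\FD_1 \cap \FD_2 = \gamma$ and $\Fbd(\FD_i) = \gamma \cup \delta_i$, where $\delta_1, \delta_2$ are the two arcs into which $\{p, q\}$ splits $\Fbd(\FD)$. The open set $\Fint(\FD_i) \setminus \gamma$ is connected and disjoint from $\Fbd(\FD')$, hence lies entirely inside $\Fint(\FD')$ or entirely inside $\FS \setminus \FD'$. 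If both lie in $\Fint(\FD')$ then $\FD \subseteq \FD'$ and the conclusion is immediate; otherwise, after possibly relabelling, $\FD_1 \subseteq \FD'$ and $\FD_2 \cap \FD' = \gamma$, whence $\FD \cup \FD' = \FD_2 \cup \FD'$.

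To close the argument I would verify that two closed disks meeting in a common boundary arc form a closed disk, by picking homeomorphisms of $\FD_2$ and $\FD'$ onto the upper and lower closed halves of the standard unit disk that both carry $\gamma$ onto the horizontal diameter via the same homeomorphism, and concatenating them. The main (but still mild) technical obstacle will be the non-generic case in which the interior of $\gamma$ touches $\Fbd(\FD)$ at points other than $p, q$: there one may either treat each connected component of $\gamma \cap \Fint(\FD)$ separately or instead perturb $\gamma$ slightly into $\Fint(\FD')$ away from $\Fbd(\FD)$ to reduce to the generic case. Once that bookkeeping is settled, the rest is a routine application of Schoenflies for the disk together with the standard gluing of half-disks.
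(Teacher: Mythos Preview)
The paper does not prove this statement at all: it is stated as a \emph{Fact} (a background topological result taken for granted) with no accompanying argument. So there is no ``paper's own proof'' to compare against.

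Your sketch is the standard Schoenflies-and-glue argument and is essentially correct. One small remark: your suggested handling of the non-generic case by ``perturbing $\gamma$ slightly into $\Fint(\FD')$'' is not quite the right move, since $\gamma$ is \emph{defined} as $\Fbd(\FD')\cap\FD$ and cannot be moved. The cleaner route is the one you mention first: treat the case $\gamma\subseteq\Fbd(\FD)$ directly (then $\Fint(\FD)$ misses $\Fbd(\FD')$, so either $\FD\subseteq\FD'$ or $\FD$ and $\FD'$ meet exactly in the boundary arc $\gamma$, and the half-disk gluing applies immediately), and otherwise work component-by-component on $\gamma\cap\Fint(\FD)$. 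With that adjustment the argument goes through.
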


\begin{fact}[see Fact~9.1.14, \cite{gro17}]\label{fact:disk-or-nc}
  Let $G$ be a graph embedded in a surface $\FS$. Then either $G$ contains a non-contractible cycle or there
  is a closed disk $ \FD \subseteq \FS$ such that $\FG \subseteq \FD$.

  In the latter case, if $G$ is 2-connected, the disk $\FD$ can be chosen in such a way
  that there is a cycle $C\subseteq G$ such that $\Fbd(\FD)=\FC$.
\end{fact}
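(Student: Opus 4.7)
I would prove the first dichotomy by contrapositive: assume every cycle of $G$ is contractible in $\FS$, and exhibit a closed disk $\FD\subseteq\FS$ with $\FG\subseteq\FD$. The natural auxiliary object is a closed regular neighbourhood $\FN\subseteq\FS$ of the $1$-complex $\FG$, i.e.\ a compact surface-with-boundary that deformation-retracts onto $\FG$. Its topological boundary $\Fbd(\FN)$ is a finite disjoint union of simple closed curves in $\FS$, each parallel through a collar to a closed walk in $\FG$.

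The heart of the argument is to show that $\FN$ is \emph{planar}, meaning every connected component has Euler genus $0$. Suppose some component $\FN_0$ had positive Euler genus. By the classification of compact surfaces with boundary, $\FN_0$ admits a simple closed curve $\Fg$ that is non-contractible in $\FN_0$ (a standard generator of $\pi_1(\FN_0)$ corresponding to a handle or a crosscap). Pushing $\Fg$ along the deformation retraction and removing backtracks turns it into a cyclically reduced closed walk in the component of $G$ underlying $\FN_0$, which represents the same non-trivial element of $\pi_1(\FN_0)$; from this walk one reads off a cycle $C$ of $G$ that is non-contractible in $\FN_0$. The main obstacle is then to promote this non-contractibility from $\FN_0$ to the ambient surface $\FS$: if $\FC$ bounded a closed disk $\FD_C\subseteq\FS$, one would put $\FD_C$ in general position with $\Fbd(\FN_0)$ so that the intersection is a finite disjoint union of simple closed curves and arcs, and run an innermost-component / Jordan--Schoenflies cut-and-paste which either contracts $\Fg$ already inside $\FN_0$ (contradicting the choice of $\Fg$) or caps off a handle or crosscap of $\FN_0$ using a sub-disk of $\FD_C$ (contradicting the positive Euler genus of $\FN_0$). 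The outcome is a non-contractible cycle $C$ of $G$, contradicting the standing hypothesis.

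Once $\FN$ is known to be planar, each boundary component is a simple closed curve in $\FS$ whose free homotopy class is represented by a product of cycles of $G$ (all contractible by hypothesis); it is therefore contractible in $\FS$ and, by Jordan--Schoenflies, bounds a closed disk in $\FS$ whose interior is disjoint from $\FG$. Iteratively gluing these disks onto $\FN$ via Fact~\ref{fact:disks} to cap off all but one boundary component of each component of $\FN$ yields a closed disk $\FD\subseteq\FS$ with $\FG\subseteq\FD$. For the second assertion, suppose additionally that $G$ is $2$-connected and consider the induced planar embedding of $G$ inside $\FD$. The face of $G$ meeting $\Fbd(\FD)$ is then bounded by a single cycle $C\subseteq G$, because in any $2$-connected plane graph each face boundary is a cycle. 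Since $C$ is a cycle of $G$, it is contractible in $\FS$, so $\FC$ bounds a closed disk $\FD'\subseteq\FS$; choosing $\FD'$ on the side disjoint from this outer face yields $\FG\subseteq\FD'$ with $\Fbd(\FD')=\FC$, as required.
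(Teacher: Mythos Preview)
The paper does not supply a proof of this statement; it is quoted as a fact from \cite{gro17}. So there is nothing in the present paper to compare against, and I assess your proposal on its own terms. The overall architecture---regular neighbourhood, show it is planar, cap boundary circles---is reasonable, and your treatment of the $2$-connected addendum is fine. Two steps in the main dichotomy, however, have genuine gaps.

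\textbf{Planarity of $\FN_0$ (paragraph 2).} Your innermost circle $\alpha\subseteq\FD_C\cap\Fbd(\FN_0)$ is necessarily a full boundary component of $\FN_0$. If the sub-disk $\FD_\alpha\subseteq\FD_C$ lies on the outside, gluing it on caps a \emph{boundary component}, not a handle or crosscap: from $\chi(\FN_0\cup\FD_\alpha)=\chi(\FN_0)+1$ and $b'=b-1$ one sees the Euler genus is unchanged. Iterating, you only conclude that $C$ is contractible in the enlarged surface $\FN_0^{(k)}\supseteq\FN_0$, which does not contradict non-contractibility of $\Fg$ in $\FN_0$. So the dichotomy you assert does not arise from the surgery. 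A clean replacement: since every cycle of $G$ is contractible in $\FS$, the map $\pi_1(\FG)\to\pi_1(\FS)$ is trivial (the fundamental group of a graph is generated by its cycles), so the embedding lifts to the universal cover $\tilde\FS\in\{S^2,\mathbb R^2\}$. The lift has the same local rotation data, hence $\FN_0$ is homeomorphic to a regular neighbourhood of a graph in a simply connected surface, and therefore $\eg(\FN_0)=0$.

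\textbf{Capping (paragraph 3).} The assertion that every boundary circle of $\FN$ bounds a disk in $\FS$ whose interior is disjoint from $\FG$ is false. Take $\FS$ the torus and $G$ a single contractible cycle: $\FN$ is an annulus with two boundary circles, and the unique disk bounded by the outer one necessarily contains $\FG$. What is true (for $\FS\neq S^2$) is that the disks bounded by the boundary circles of $\FN_0$ are pairwise nested or disjoint; capping from the innermost outward---equivalently, taking the maximal such disk---then produces the desired $\FD$. You need to make this nesting argument explicit rather than claiming disjointness from $\FG$ for every boundary circle.
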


Let $\FS$ be a surface and let $G$ be a graph embedded in $\FS$. A set
$\FX \subseteq \FS$ is \emph{$G$-normal} if
$\FX \cap \FG \subseteq V(G)$. The \emph{representativity} $\rho(G)$
of $G$ is the maximum $r \in \mathbb{N}$ such that every $G$-normal
non-contractible simple closed curve $\Fg$ in $\FS$ intersects $\FG$ in
at least $r$ vertices. $G$ is \emph{polyhedrally embedded} in $\FS$ if
$G$ is 3-connected and $\rho(G)\ge 3$. Note that, particularly, every
3-connected plane graph is polyhedrally embedded in $\FS_0$.
Polyhedrally embedded graphs have several useful properties (see
\cite[Fact~9.1.17]{gro17}). In particular, all facial subgraphs of a
polyhedrally embedded graph are chordless and non-separating cycles
\cite{robvit90}. Conversely, for every graph embedded in a surface,
all contractible, chordless, and non-separating cycles are facial
subgraphs (see \cite[Lemma~9.1.15]{gro17}). (Here a cycle
$C\subseteq G$ is \emph{chordless} if it is an induced subgraph of
$G$, and it is \emph{non-separating} if $G\setminus V(C)$ is
connected.) This is a generalisation of the well-known
theorem that the facial subgraphs of a 3-connected plane graph are
precisely the chordless and non-separating cycles. It implies
Whitney's Theorem \cite{whi32} that
all plane embeddings of 3-connected planar graphs have the same facial
cycles and that, up to homeomorphism, a 3-connected planar graph has
a unique embedding into the sphere $\FS_0$.

\section{Finite Variable Logic with Counting}\label{sec:logic}

Here we give a detailed introduction into the logic
$\LC$, the extension of $\FOL$ by \emph{counting
  quantifiers} and its finite variable fragments, and we prove several
technical lemmas.

We interpret the logic $\LC$ over graphs, possibly 
coloured. In a logical context, we view a graph $G$ as a
relational structure whose vocabulary consists of a single binary
relation $E$. We view a coloured graph $(G,\chi)$ as a relational
structure whose vocabulary contains, in addition to the binary
relation symbol $E$, a binary relation symbol $R_c$ for every colour
$c$ in the range of $\chi$.
This relation symbol is interpreted by the
set of all pairs $(u,v)$ such that $\chi(u,v)=c$.

An occurrence of a variable $x$ is \emph{free} in a formula $\varphi$
if it is outside of all subformulae $\exists^{\geq p} x \psi$.
We often write $\varphi(x_1,\dots,x_\ell)$ to indicate that the free
variables of $\varphi$ are among $x_1,\dots,x_\ell$. (Not all of these
variables are required to appear in $\phi$.)
Then we also denote by $\varphi(y_1,\dots,y_\ell)$ the result of
substituting variables $y_1,\dots,y_\ell$ for the free occurrences of
$x_1,\dots,x_\ell$.

For a graph $G$ and
vertices $u_1,\ldots,u_\ell\in V(G)$, we write
$G\models\phi(u_1,\ldots,u_\ell)$ to denote that $G$ satisfies $\phi$ if
for all $i$ the variable $x_i$ is interpreted by $u_i$. Moreover, we
write $\phi[G,u_1,\ldots, u_i,x_{i+1},\ldots,x_\ell]$ to denote the
set of all $(\ell-i)$-tuples $(u_{i+1},\ldots,u_\ell)$ such that
$G\models\phi(u_1,\ldots,u_\ell)$.

For a logic $\LL$ and two graphs $G$ and $H$, we say \emph{$\LL$ distinguishes $G$ and $H$} if there is a formula $\varphi \in \LL$ such that $G \models \varphi$ and $H \not\models \varphi$. Similarly \emph{$\LL$ identifies $G$} if for every graph $H \not\cong G$, it holds that $\LL$ distinguishes $G$ and $H$.

\emph{Atomic formulae} in the language of (arc-coloured) graphs are of
the form $x_1 = x_2$, $E(x_1,x_2)$, or $R_c(x_1,x_2)$, where
$x_1,x_2$ are variables.
$\LC$-formulae are constructed from the atomic formulae using negation
$\neg \varphi$, disjunction $(\varphi \vee \psi)$, and counting
quantifiers $\exists^{\geq p} x\phi$ where $p \in \mathbb{N}_{\geq 1}$
and $x$ is a variable, and $\varphi$, $\psi$ are formulae. As
abbreviations, we also use conjunctions $(\phi\wedge\psi)$,
implications $(\phi\to\psi)$, and standard existential and
universal quantifiers $\exists x\phi$, $\forall x\phi$
($\forall x \varphi$ abbreviates
$\neg \exists^{\geq 1} x \neg \varphi$) as well as variants of the
counting quantifiers such as $\exists^{<p}x\phi$ and
$\exists^{=p}x\phi$. We also use $\textsc{true}$ for $\forall x (x =
x)$ and $\textsc{false}$ for $\neg \textsc{true}$ and $\phi \leftrightarrow \psi$ for $(\phi \rightarrow \psi) \wedge (\psi \rightarrow \phi)$. 

As a notational
convention throughout the paper, we shall use $x$, $y$, $z$ for
variables in first-order logic, whereas $u$, $v$, $w$ denote graph
vertices. 
The semantics of
the logic $\LC$ is defined in the usual way by inductively defining a
satisfaction relation $\models$ between pairs $(G,\nu)$ consisting of a
graph $G$ and an assignment $\nu$ of values in $V(G)$ to the variables
and formulae $\phi$. The only step going beyond standard first-order
logic is that of counting quantifiers: $(G,\nu) \models \exists^{\geq p} x \varphi$ if and
only if there are distinct vertices $v_1,\dots,v_p \in V(G)$ such that
$\big(G,\nu(v_i/x)\big) \models \varphi$ for each $v_i$, where
$\nu(v_i/x)$ is the assignment identical to $\nu$ except that
$\nu(v_i/x)(x) = v_i$.

Observe that $\LC$ is
only a syntactical extension of $\FOL$ with not more expressive power, because $\exists^{\ge
  p}x\phi(x)$ is equivalent to $\exists x_1\ldots\exists
x_p\Big(\bigwedge_{i\neq j}x_i\neq
x_j\wedge\bigwedge_i\phi(x_i)\Big)$. However, we are mainly interested
in the fragments $\LC[k]$ of $\LC$ consisting of all formulae with at
most $k$ variables. If $p>k$, then $\exists^{\ge p}x$ cannot be
expressed in the $k$-variable fragment of $\FOL$, thus $\LC[k]$ is strictly more expressive than the
$k$-variable fragment of $\FOL$. The logics $\LC[k]$ have played an important role in finite
model theory since the 1980s.  

We say a formula $\phi\in\LC$ has \emph{width} $k$ if every subformula of $\phi$ has at most $k$ free variables. 
We denote the $\LC$-formulae of width $k$ by $\LW[k]$. 

\begin{example}
  The following formula in $\LC[7]$ has width $3$:
  \[
    \exists x_1(E(x,x_1)\wedge\exists x_2(E(x_1,x_2)\wedge \exists
    x_3(E(x_2,x_3)\wedge\exists x_4(E(x_3,x_4)\wedge\exists x_5
    E(x_5,y))))).
  \]
  It is equivalent to the $\LC[3]$-formula
  \[
    \exists z(E(x,z)\wedge\exists x(E(z,x)\wedge \exists
    z(E(x,z)\wedge\exists x(E(z,x)\wedge\exists z
    E(z,y))))).
  \]
\end{example}
 
We will use the following well-known characterisation of $\LC[k]$.

\begin{lemma}\label{lem:width}
  Every $\LC$-formula of width $k$ is equivalent to a $\LC[k]$-formula.
\end{lemma}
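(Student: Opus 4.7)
The plan is a structural induction on $\varphi$ with a strengthened inductive hypothesis. Fix once and for all a pool $Z = \{z_1,\ldots,z_k\}$ of $k$ variable names. I will prove the following claim: for every $\LC$-formula $\varphi$ of width at most $k$ and every injection $\sigma \colon \free(\varphi) \to Z$ (which exists because the width bound forces $|\free(\varphi)| \le k$), there is an $\LC[k]$-formula $\varphi^\sigma$ using only variables from $Z$, with free variables exactly $\sigma(\free(\varphi))$, that is equivalent to the $\sigma$-renaming of $\varphi$. Applying this claim to the original formula with any such $\sigma$ yields the lemma.

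The atomic, negation, and disjunction cases are routine: translate atoms by replacing each $y \in \free(\varphi)$ with $\sigma(y)$, and for Boolean connectives restrict $\sigma$ to the free variables of each subformula and apply induction before recombining. The crucial step is a counting quantifier $\varphi = \exists^{\ge p} x\, \psi$. If $x \notin \free(\psi)$, the inductive hypothesis applies directly to $\psi$ and $\sigma$. Otherwise $\free(\psi) = \free(\varphi) \cup \{x\}$ with $x \notin \free(\varphi)$, and the width condition gives $|\free(\psi)| \le k$, so $|\free(\varphi)| \le k-1$. Since $\sigma$ is injective, $|\sigma(\free(\varphi))| \le k-1$, so some $z^* \in Z \setminus \sigma(\free(\varphi))$ exists. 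Extend $\sigma$ to $\sigma' \colon \free(\psi) \to Z$ by setting $\sigma'(x) \coloneqq z^*$, apply induction to obtain $\psi^{\sigma'} \in \LC[k]$, and define $\varphi^\sigma \coloneqq \exists^{\ge p} z^*\, \psi^{\sigma'}$. This formula uses only variables from $Z$, has free variables $\sigma(\free(\varphi))$, and is equivalent to the $\sigma$-renaming of $\varphi$: because $z^* \notin \sigma(\free(\varphi))$, changing the value of $z^*$ in an assignment does not affect the interpretation of the free variables of $\varphi^\sigma$, so the standard semantic calculation for the counting quantifier goes through.

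The main obstacle is formulating a sufficiently strong induction hypothesis. The naive statement ``every width-$k$ formula has an equivalent $\LC[k]$-formula'' does not survive the quantifier step, because translating $\exists^{\ge p} x\, \psi$ requires picking a concrete variable from the pool $Z$ that is not already used by the surrounding free variables, and this choice has to be coordinated with the translation of $\psi$. Carrying the injection $\sigma$ along explicitly, and exploiting the width bound to guarantee at every quantifier a slot in $Z$ not yet in $\sigma(\free(\varphi))$, is exactly what makes the induction go through; the example preceding the lemma illustrates this reuse of slots, with $\{x,y,z\}$ cycled through the nested quantifiers.
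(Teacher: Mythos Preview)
Your proof is correct and takes essentially the same approach the paper has in mind: the paper omits the argument entirely, remarking only that ``to translate a $\LC$-formula of width $k$ into a $\LC[k]$-formula, we only have to rename bound variables,'' and your structural induction with the injection $\sigma$ into a fixed pool $Z$ of $k$ variables is precisely a careful formalisation of that renaming. The strengthened induction hypothesis you carry is the right move to make the quantifier step go through cleanly.
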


We omit the straightforward proof. We note that to translate a $\LC$-formula
of width $k$ into a $\LC[k]$-formula, we only have to rename bound variables.
Also note that every $\LC[k]$-formula has width $k$.

\begin{example}\label{exa:dist}
  For every $k\ge0$ we define a $\LW[3]$-formula $\formel{dist}_{\le k}$
  such that for every graph $G$ and all vertices  $u,u'\in V(G)$ it
  holds that $G \models \formel{dist}_{\le k}(u,u')$ if and only if
  $u$ and $u'$ have distance at most $k$ in $G$. We let 
  \[\formel{dist}_{\le k}(x,x') \coloneqq
  \begin{cases}
    x = x' & \text{if } k = 0\\ 
     \exists y_k \big( E(x,y_k) \wedge \formel{dist}_{\le k-1}(y_k,x')\big) & \text{otherwise.} 
  \end{cases}
  \]
  Note that for $k\ge 1$, the $\LW[3]$-formula
  $\formel{dist}_{=k}(x,x')\coloneqq\formel{dist}_{\le k}(x,x')\wedge\neg
  \formel{dist}_{\le k-1}(x,x')$ states that $x$ and $x'$ have distance
  exactly $k$. Moreover, in every graph of order at most $n$ the  $\LW[3]$-sentence
  $\formel{conn}_n\coloneqq\forall x\forall x'\formel{dist}_{\le n-1}(x,x')$
  states that the graph is connected.
  \uend
\end{example}

The following lemma bounds the number of variables needed for avoiding
definable subsets.

\begin{lemma}\label{lem:avoiding-path}
    Let $\phi(x_1,\ldots,x_k,y)\in\LW[\ell]$. Then there is a formula
    $\formel{comp}_\phi(x_1,\ldots,x_k,y,y') \in \LW[\max\{k+3,\ell\}]$ such
    that for all graphs $G$ of order $|G|\le n$ and all
    $u_1,\ldots,u_k,v,v'\in V(G)$,
    \[
     G \models
      \formel{comp}_\phi(u_1,\ldots,u_k,v,v')\iff\parbox[t]{7cm}{$v$
        and $v'$ belong to the same connected component of
        $G\setminus\phi[G,u_1,\ldots,u_k,y]$.}
    \]
  \end{lemma}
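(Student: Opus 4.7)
The plan is to mimic the construction of $\formel{dist}_{\le k}$ from Example~\ref{exa:dist}, but with three modifications: (i) thread the additional parameters $x_1, \ldots, x_k$ through every level of the recursion, (ii) insert a conjunct $\neg\phi(\vec x, y)$ at each step so that only vertices outside $\phi[G, \vec u, y]$ can appear on the walk, and (iii) include a $y = y'$ disjunct at each level so that walks of length strictly less than the current depth are also recognised.

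Concretely, for $m \ge 0$ I would define inductively
\begin{align*}
\formel{rdist}_0(\vec x, y, y') &\coloneqq \neg\phi(\vec x, y) \wedge y = y',\\
\formel{rdist}_m(\vec x, y, y') &\coloneqq \neg\phi(\vec x, y) \wedge \big(y = y' \vee \exists z_m\big(E(y, z_m) \wedge \formel{rdist}_{m-1}(\vec x, z_m, y')\big)\big),
\end{align*}
using a fresh variable $z_m$ at each recursion level, and then set $\formel{comp}_\phi \coloneqq \formel{rdist}_{n-1}$. A straightforward induction on $m$ shows that $G \models \formel{rdist}_m(\vec u, v, v')$ iff there is a walk $v = v_0, v_1, \ldots, v_j = v'$ in $G$ with $j \le m$ such that $\neg\phi(\vec u, v_i)$ holds for every $i$. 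Since $|G| \le n$, any two vertices lying in the same connected component of $G \setminus \phi[G, \vec u, y]$ are joined by such a walk of length at most $n - 1$, so $\formel{comp}_\phi$ has exactly the required semantics.

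For the width bound: the subformula $\phi(\vec x, y)$ contributes width $\ell$; the only subformulae outside $\phi$ containing more than $k+2$ free variables are the conjunctions $E(y, z_m) \wedge \formel{rdist}_{m-1}(\vec x, z_m, y')$, whose free variables are $\{x_1, \ldots, x_k, y, y', z_m\}$ and thus number $k+3$. Hence $\formel{comp}_\phi$ has width at most $\max\{k+3, \ell\}$, and by Lemma~\ref{lem:width} it is equivalent to a formula in $\LC[\max\{k+3, \ell\}]$, so it lies in $\LW[\max\{k+3, \ell\}]$ as required.

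The construction is essentially a bookkeeping exercise and I do not expect any real obstacle; the only things to watch are the two design choices (ii) and (iii) above, which are necessary to forbid $\phi$-vertices at every step of the walk and to correctly handle the degenerate case $v = v'$ where no edges along the walk are needed.
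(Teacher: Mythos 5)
Your proof is correct and follows essentially the same approach as the paper: thread the parameters $x_1,\ldots,x_k$ through a bounded-depth connectivity recursion and guard each step with $\neg\phi$, yielding width $\max\{k+3,\ell\}$. The only difference is presentational — the paper obtains $\psi$ by rewriting the pre-existing formula $\formel{dist}_{\le n-1}$ from Example~\ref{exa:dist} (replacing each $\exists^{\ge p}z\,\chi$ with $\exists^{\ge p}z\,(\neg\phi(\vec x,z)\wedge\chi)$) and then conjoins $\neg\phi(\vec x,y)\wedge\neg\phi(\vec x,y')$, whereas you build the recursion from scratch with an explicit $y=y'$ disjunct at each level, which bakes those endpoint checks in directly.
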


  \begin{proof}
    Without loss of generality, we assume that in all formulae of the form $\exists^{\ge p}z\chi$ that we consider, the variable $z$ occurs free in $\chi$. We let $\psi(x_1,\ldots,x_k,y,y')$ be the formula obtained from
    the formula $\formel{dist}_{\le n-1}(y,y') $ of
    Example~\ref{exa:dist} by replacing each subformula
    $\exists^{\ge p}z\chi$ by
    $\exists^{\ge p}z(\neg\phi(x_1,\ldots,x_k,z)\wedge\chi)$. Then, letting $U \coloneqq \phi[G,u_1,\ldots,u_k,y]$, for
    all $v,v'\in V(G)\setminus U$ we have
    $G\models\psi[u_1,\ldots,u_k,v,v']$ if and only if $v$ and $v'$
    belong to the same connected component of $G\setminus U$. Note
    that $\psi(x_1,\ldots,x_k,y,y')\in \LW[\max\{k+3,\ell\}]$, because
    the formula $\chi\in\LW[3]$ has at most two free variables besides $z$.

    Now
    $\formel{comp}_\phi(x_1,\ldots,x_k,y,y')\coloneqq
    \neg\phi(x_1,\ldots,x_k,y)\wedge \neg\phi(x_1,\ldots,x_k,y')\wedge
    \psi(x_1,\ldots,x_k,y,y')$.
  \end{proof}

\begin{lemma}\label{lem:rel2comp}
  Let $n\ge 1$, $\ell \ge 3$, and $1 \le k \le \ell$. Then for every $\LW[\ell]$-formula
  $\psi(x_1,\ldots,x_k)$ there is a $\LW[\ell]$-formula
  $\hat\psi(x_1,\ldots,x_k)$ such that for every graph $G$ of
  order $|G|\le n$, every
  connected component $A$ of $G$, and all $u_1,\ldots,u_k\in
  V(A)$, it holds that
  \[
   G\models\hat\psi(u_1,\ldots,u_k)\iff A\models
   \psi(u_1,\ldots,u_k).
 \]
\end{lemma}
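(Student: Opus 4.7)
\smallskip
\noindent\textbf{Proof plan.} The plan is to define $\hat\psi$ as a relativization of $\psi$ to the vertex set $V(A)$ of the connected component $A$. Because $|G|\le n$, a vertex $v\in V(G)$ lies in $V(A)$ if and only if $G\models\formel{dist}_{\le n-1}(v,u_1)$, using the $\LW[3]$-formula $\formel{dist}_{\le n-1}$ of Example~\ref{exa:dist}. I therefore proceed by structural induction on $\psi$, systematically replacing each counting quantifier $\exists^{\ge p} z\,\chi$ by $\exists^{\ge p} z\,(\formel{dist}_{\le n-1}(z,y)\wedge\hat\chi)$, where $y$ is an ``anchor'' variable whose current value is known to lie in $V(A)$. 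The initial anchors are the free variables $x_1,\ldots,x_k$, and every successfully relativized quantifier yields a new variable known to lie in $V(A)$ that may serve as an anchor further down.

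Atomic formulas and boolean connectives translate unchanged: on $V(A)$-tuples, the edge, colour, and equality relations coincide in $G$ and in $A$, because $A$ is a union of connected components of $G$ and hence contains all edges incident to its vertices. To track which variables currently carry a value known to lie in $V(A)$, I would maintain a set $S$ of such variables; it starts as $\{x_1,\ldots,x_k\}$, and each relativized quantifier adds its bound variable $z$ to $S$, so $S$ never shrinks as we descend the parse tree. Whenever we encounter a quantifier $\exists^{\ge p} z\,\chi$ with $z\ne y$, we can simply pick any element of $S\setminus\{z\}$ as the new anchor.

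The main obstacle is the case in which the new quantifier $\exists^{\ge p} z\,\chi$ rebinds the current anchor ($z = y$) and no other variable in $S$ is available; since $|S|\ge k$ throughout, this can occur only when $k=1$. In that situation I would insert a save step of the form $\exists y'\,(y' = x_1 \wedge \exists^{\ge p} x_1\,(\formel{dist}_{\le n-1}(x_1,y')\wedge\hat\chi))$, using a variable name $y'\ne x_1$ not currently in $S$; such a name exists because $\ell\ge 3 > |S|$. The hardest part of the argument will be the careful width accounting: using that $\formel{dist}_{\le n-1}$ has width $3$ and $\ell\ge 3$, one verifies that the width of $\hat\psi$ never exceeds $\ell$. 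Correctness, namely $G\models\hat\psi(u_1,\ldots,u_k) \iff A\models\psi(u_1,\ldots,u_k)$, then follows by a routine induction on $\psi$, exploiting that the inserted $\formel{dist}_{\le n-1}$-conjuncts exactly characterize membership in $V(A)$ and that atomic relations agree between $G$ and $A$ on $V(A)$-tuples.
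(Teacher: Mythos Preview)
Your proposal is correct and follows essentially the same relativization strategy as the paper's proof: both replace each quantifier $\exists^{\ge p}z\,\chi$ by $\exists^{\ge p}z\,(\formel{dist}_{\le n-1}(z,a)\wedge\hat\chi)$ for an anchor $a$ known to lie in $V(A)$, and both identify the only delicate case as $k=1$ with the sole anchor being rebound. The paper handles that case more simply by alpha-renaming the bound variable (so that $x_1$ remains available as the anchor) rather than inserting your explicit save step $\exists y'(y'=x_1\wedge\ldots)$, but the two devices are equivalent and yield the same width bound.
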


\begin{proof}
  We construct $\hat\psi$ by induction on $\psi$. If $\psi$ is atomic,
  then we simply let $\hat\psi\coloneqq\psi$. If $\psi=\neg\phi$ we
  let $\hat\psi\coloneqq\neg\hat\phi$, and if $\psi=\phi_1\vee\phi_2$
  we let $\hat\psi\coloneqq\hat\phi_1\vee\hat\phi_2$. The only
  interesting case is that
  $\psi(x_1,\ldots,x_k)=\exists^{\ge p}
  y\phi(x_1,\ldots,x_k,y)$. Note that the variable $y$ may be among
  $x_1,\ldots,x_k$. If this is the case,
  $\phi(x_1,\ldots,x_k,y)$ is the same formula as
  $\phi(x_1,\ldots,x_k)$. Without loss of generality we may assume
  that there is a $j\le k$ such that $y\neq x_j$. This is obvious
  if $k\ge 2$. If $k=1$, we can rename the bound variable $y$ and choose $j=1$. We let
$\hat\psi(x_1,\ldots,x_k)=\exists^{\ge p}
  y\big(\formel{dist}_{\le n-1}(x_j,y)\wedge\hat\psi(x_1,\ldots,x_k,y)\big)$,
  where $\formel{dist}_{\le n-1}$ is the $\LW[3]$-formula defined in Example~\ref{exa:dist}.
\end{proof}

Recall that
the notation $\psi(x_1,\ldots,x_k)$ merely says that the free variables
of the formula $\psi$ are \emph{among} $x_1,\ldots,x_k$; not all of
these variables actually have to appear. Thus we can also apply the
lemma to sentences $\psi$ and obtain the following corollary.

\begin{corollary}\label{cor:rel2comp}
    Let $n\ge 1$ and $\ell\ge 3$. Then for every $\LW[\ell]$-sentence
  $\phi$ there is a $\LW[\ell]$-formula
  $\hat\phi(x)$ such that for every graph $G$ of
  order $|G|\le n$ and every $u\in V(G)$ we have 
  $G\models\hat\phi(u)$ if and only if $A\models\phi$ for the connected component $A$ of
  $u$ in $G$.
\end{corollary}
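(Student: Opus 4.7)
The plan is to derive the corollary directly from Lemma~\ref{lem:rel2comp} by the trick hinted at in the preceding paragraph: view the sentence $\phi$ as a formula $\psi(x)$ in which the variable $x$ happens not to appear. Since $\ell\ge 3$, the assumption $1\le k\le \ell$ of the lemma is satisfied for $k=1$, so I can apply Lemma~\ref{lem:rel2comp} to $\psi(x)\coloneqq \phi$ to obtain a $\LW[\ell]$-formula $\hat\psi(x)$. I would then set $\hat\phi(x)\coloneqq\hat\psi(x)$ and check that it has the claimed semantics.

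For the verification, let $G$ be a graph of order at most $n$ and let $u\in V(G)$. Let $A$ be the connected component of $u$ in $G$, so $u\in V(A)$. Lemma~\ref{lem:rel2comp} gives
\[
G\models\hat\psi(u)\iff A\models\psi(u).
\]
Because $x$ does not occur in $\phi$, the formula $\psi(x)=\phi$ evaluates the same in $A$ regardless of the assignment to $x$, so $A\models\psi(u)$ if and only if $A\models\phi$. Combining, $G\models\hat\phi(u)$ if and only if $A\models\phi$, as required.

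There is no real obstacle here: the only thing to observe is that the premise $1\le k\le\ell$ of the lemma is met (it is, since $\ell\ge 3$), and that Lemma~\ref{lem:rel2comp} does not require all of the free variables listed in its parameter tuple to actually occur in the formula — precisely the convention restated in the remark just before the corollary. This is why the same construction used in the lemma (adding a $\formel{dist}_{\le n-1}(x_j,y)$ guard inside each counting quantifier) suffices to localise $\phi$ to the connected component of $u$, yielding the desired $\hat\phi(x)\in\LW[\ell]$.
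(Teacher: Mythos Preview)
Your proposal is correct and follows exactly the approach the paper intends: it instantiates Lemma~\ref{lem:rel2comp} with $k=1$ and $\psi(x)\coloneqq\phi$, relying on the convention (explicitly recalled just before the corollary) that not all listed free variables need to actually occur. There is nothing to add.
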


\begin{corollary}\label{cor:id-comp}
  Let $\ell\ge 3$, and let $G$ be a graph such that every connected
  component of $G$ is identified by a $\LW[\ell]$-sentence. Then $G$ is
  identified by a $\LW[\ell]$-sentence.
\end{corollary}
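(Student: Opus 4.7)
The plan is to combine the individual identifying sentences for the connected components of $G$ into a single $\LW[\ell]$-sentence that counts, for each isomorphism type of component, the exact number of vertices of $G$ lying in components of that type. Let $A_1,\ldots,A_t$ be representatives of the isomorphism classes of connected components of $G$, let $k_i$ denote the number of components of $G$ isomorphic to $A_i$, and let $\phi_i\in\LW[\ell]$ be a sentence identifying $A_i$ (provided by the hypothesis). Applying Corollary~\ref{cor:rel2comp} with $n\coloneqq|G|$ to each $\phi_i$ yields a $\LW[\ell]$-formula $\hat\phi_i(x)$ such that, for every graph $H$ with $|H|\le n$ and every $u\in V(H)$, one has $H\models\hat\phi_i(u)$ iff the connected component of $u$ in $H$ is isomorphic to $A_i$.

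I would then form the sentence
\[
\Phi \;\coloneqq\; \exists^{=|G|} x\,(x=x) \;\wedge\; \bigwedge_{i=1}^{t} \exists^{=k_i|A_i|} x\,\hat\phi_i(x),
\]
which lies in $\LW[\ell]$ since each conjunct does and conjunctions of sentences do not increase the width. By construction $G\models\Phi$. Conversely, suppose $H\models\Phi$. The first conjunct forces $|H|=|G|\le n$, so Corollary~\ref{cor:rel2comp} applies to $H$ and each $\hat\phi_i$ correctly captures the vertices of $H$ whose component is isomorphic to $A_i$. The $i$-th counting conjunct then guarantees exactly $k_i|A_i|$ such vertices, and since every component isomorphic to $A_i$ has exactly $|A_i|$ vertices, $H$ contains precisely $k_i$ components isomorphic to $A_i$. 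These components account for $\sum_{i=1}^{t}k_i|A_i|=|G|=|H|$ vertices, so they exhaust $V(H)$; hence $H$ has no other components and its multiset of component isomorphism types agrees with that of $G$, giving $H\cong G$.

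The only subtlety is bookkeeping: Corollary~\ref{cor:rel2comp} only controls the behaviour of $\hat\phi_i$ on graphs of order at most $n$, which is why the order-pinning conjunct $\exists^{=|G|}x\,(x=x)$ is essential; it simultaneously caps the size of $H$ and forces the individual component counts to sum exactly to $|H|$, leaving no room for stray extra components. There is no real obstacle beyond this, since each counting conjunct introduces a single quantified variable and therefore preserves the width bound $\ell$.
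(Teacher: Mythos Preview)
Your proof is correct and is precisely the intended argument: the paper states this result as an immediate corollary of Corollary~\ref{cor:rel2comp} without spelling out the details, and your construction---pinning down $|H|$, then counting vertices whose component has each possible isomorphism type via the relativised formulae $\hat\phi_i$---is exactly the natural way to fill them in. Your observation that the order-pinning conjunct is needed both to make Corollary~\ref{cor:rel2comp} applicable to $H$ and to ensure the counted components exhaust $V(H)$ is the one nontrivial point, and you handle it cleanly.
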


Observe that the corollary fails for $\ell=2$. An example is the graph
$G$ that is the disjoint union of two triangles.

\begin{lemma}\label{lem:rel2comp2}
    Let $k \geq 0$, $n \geq 1$, $\ell \geq 3$ and let $\psi\in\LW[\ell]$ and $\phi(x_1,\ldots,x_k,y)\in\LW[m]$. Then there is a formula
    $\tilde\psi(x_1,\ldots,x_k,y) \in
    \LW[\max\{k+\ell,m\}]$ such that for all graphs $G$ of order
    $|G|\le n$ and all $u_1,\ldots,u_k,v\in V(G)$ the following
    holds. Let $U \coloneqq \phi[G,u_1,\ldots,u_k,y] $, and let $A_v$ be the
    connected component of $v$ in $G\setminus U$ (assuming
    $v\not\in U$). Then
    \[
     G \models\tilde\psi(u_1,\ldots,u_k,v)\iff v\not\in
      U\text{ and }A_v \models\psi.
    \]
  \end{lemma}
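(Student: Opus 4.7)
The plan is to combine the ideas of Lemma~\ref{lem:avoiding-path} (restricting quantifiers so as to avoid a definable set $U$) and Lemma~\ref{lem:rel2comp} (restricting the interpretation of a formula to one connected component). We interpret $\psi$ inside $A_v$ by restricting every quantifier in $\psi$ to range over vertices of $A_v$, i.e., over vertices outside $U$ lying in the same $G\setminus U$-component as $v$.

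For the induction, I generalise the statement: for each subformula $\chi$ of $\psi$ with free variables $\vec z_\chi$ and each \emph{anchor} variable $z^*$, I construct a formula $\hat\chi^{z^*}$ with free variables in $\vec x\cup\{z^*\}\cup\vec z_\chi$ satisfying $G\models\hat\chi^{z^*}(\vec u,w^*,\vec w)$ iff $w^*\notin U$, $\vec w\subseteq A_{w^*}$, and $A_{w^*}\models\chi(\vec w)$. The final formula is $\tilde\psi(\vec x,y):=\hat\psi^y(\vec x,y)$.

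Set $\mathrm{inA}^{z^*}(\vec x,z^*,\vec z):=\neg\phi(\vec x,z^*)\wedge\bigwedge_{z_i\in\vec z\setminus\{z^*\}}\formel{comp}_\phi(\vec x,z^*,z_i)$, which asserts $\{z^*\}\cup\vec z\subseteq A_{z^*}$. For atomic, negation, and disjunction cases, $\hat\chi^{z^*}$ is the conjunction of $\mathrm{inA}^{z^*}$ with the obvious recursive version propagating the anchor $z^*$. For a quantifier $\chi=\exists^{\ge p}z\,\chi'$, I set $\hat\chi^{z^*}:=\mathrm{inA}^{z^*}(\vec x,z^*,\vec z_\chi)\wedge\exists^{\ge p}z\bigl(\formel{comp}_\phi(\vec x,z^*,z)\wedge\hat{\chi'}^{\,z}\bigr)$, and the crucial move is to re-anchor the recursive call to the freshly bound variable $z$. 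Semantic correctness follows by a routine induction on $\chi$, using that $A_{z^*}$ is an induced subgraph of $G$ and that any $z\in A_{z^*}$ satisfies $A_z=A_{z^*}$.

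The main obstacle is the width bound, since a naive construction keeping $y$ as a free variable throughout would produce subformulae of width $k+\ell+1$. Using Lemma~\ref{lem:width}, I may assume without loss of generality that $\psi\in\LC[\ell]$, so $\psi$ has at most $\ell$ distinct variable names, all different from $y$. Then at each subformula $\chi$, either $z^*=y$ and $\vec z_\chi=\emptyset$ (the top sentence region before any quantifier), or $z^*\in\vec z_\chi$ (after a productive re-anchoring), or $z^*\notin\vec z_\chi$ with $|\vec z_\chi|\le\ell-1$ (this last case arises at disjunction sub-branches or when the body of a quantifier fails to use the re-anchored variable $z$; the bound on $|\vec z_\chi|$ follows from $\psi$ having only $\ell$ variables). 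In each case $\hat\chi^{z^*}$ has at most $k+\ell$ free variables, while the instances of $\formel{comp}_\phi$ and $\neg\phi$ contribute width $\max\{k+3,m\}\le\max\{k+\ell,m\}$ (using $\ell\ge 3$), giving the claimed bound $\max\{k+\ell,m\}$.
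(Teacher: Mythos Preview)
Your proof is correct and reaches the same bound, but it takes a noticeably different route from the paper. The paper proceeds in two clean steps: first it invokes Corollary~\ref{cor:rel2comp} to turn the sentence $\psi$ into a $\LW[\ell]$-formula $\hat\psi(y)$ that, evaluated in any graph $H$, expresses ``the connected component of $y$ satisfies $\psi$''; second, it relativises $\hat\psi(y)$ to $G\setminus U$ by inserting a guard $\neg\phi(x_1,\ldots,x_k,z)$ under every quantifier, exactly as in Lemma~\ref{lem:avoiding-path}. This is short and modular: the width analysis is immediate because each step adds at most $k$ new free parameters to a $\LW[\ell]$-formula, and the inserted $\phi$ contributes width $m$. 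Your approach fuses both steps into a single inductive pass over $\psi$, using $\formel{comp}_\phi$ directly and carrying an explicit anchor variable that you re-anchor at each quantifier. This forces you to do by hand what Corollary~\ref{cor:rel2comp} already packages (the re-anchoring is precisely the trick that makes $\formel{dist}_{\le n-1}$ stay in $\LW[3]$), and your width argument accordingly needs the extra case analysis on whether the anchor lies among $\vec z_\chi$. Both approaches are sound; the paper's is more economical because it reuses earlier lemmas as black boxes, while yours is self-contained but longer.
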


  \begin{proof}
    Again, without loss of generality, we assume that in all formulae of the form $\exists^{\ge p}z\chi$ that we consider, the variable $z$ occurs free in $\chi$. We apply Corollary~\ref{cor:rel2comp} to $\psi$ and obtain a
    $\LW[\ell]$-formula $\hat\psi(y)$ such that for every graph $H$ of order at most $n$
    and every $v \in V(H)$ we have $H\models\hat\psi(v)$ if and only if
    $A_v\models\psi$, where $A_v$ is the connected component of $v$ in
    $H$. In particular, this holds for the graph $H\coloneqq
    G\setminus U$.

    Without loss of generality we may assume that the variables
    $x_1,\ldots,x_k$ do not appear in $\hat\psi(y)$. We let
    $\hat\psi'(x_1,\ldots,x_k,y)$ be the  $\LW[\max\{k+\ell,m\}]$-formula obtained
    from $\hat\psi(y)$ by replacing each subformula $\exists^{\ge
      p}z\chi$ with $\exists ^{\ge
      p}z(\neg\phi(x_1,\ldots,x_k,z)\wedge\chi)$. Then for all $v\in
    V(H)$ we have $G\models\hat\psi'(v)\iff H\models\hat\psi(v)$. We let
    \[
      \tilde\psi(x_1,\ldots,x_k,y)\coloneqq
      \neg\phi(x_1,\ldots,x_k,y)\wedge\hat\psi'(x_1,\ldots,x_k,y).
      \qedhere
    \]
  \end{proof}

We need one more technical lemma which will be applied in one case of
the proof of our main theorem in Section~\ref{subsec:necklace}. The reason
we put it here is that we do not want to interrupt the flow of the
main argument later. The reader may safely skip the lemma on first
reading the paper and get back to it later.

For the purposes of the lemma, we need a way to prevent some free variables from counting towards the width of a formula.
We shall use the symbol $\circ$ as a special placeholder that can be substituted
for the free occurrences of variables with the effect that this placeholder does not count as a variable for the width.
For example, for $\psi(x,z) \coloneqq \exists y \big(E(x,y) \land E(y,z)\big)
\in \LW[3]$, we have $\psi(\circ,z) \in \LW[2]$ and $\psi(\circ,\circ)
\in \LW[1]$. Recall that for a graph $G$ and a subgraph $A\subseteq
G$, by $G/A$ we denote the graph obtained from $G$ by identifying all
vertices of $A$ and that $a$ is the vertex of $G/A$ corresponding to $A$.

\begin{lemma}\label{lem:factor}
Let $0\le\ell<m<k$ and $\xi(x_1,\ldots,x_\ell,y), \psi(x_1,\ldots,x_m,z) \in
\LW[k]$ such that $\psi(\circ,\ldots,\circ,x_{\ell+1},\ldots,x_m,\circ) \in
\LW[k-\ell]$. Then there is a formula $\phi(x_1,\ldots,x_m)
\in \LW[k]$ such that the following holds.

Let $G$ be a graph and let $A\subseteq G$. Suppose $u_1,\ldots,u_m\in
V(G)\setminus V(A)$ such that $V(A)=\xi[G,u_1,\ldots,u_\ell,y]$. Then 
\[
G\models\phi(u_1,\ldots,u_m)\iff G/A\models\psi(u_1,\ldots,u_m,a).
\]
\end{lemma}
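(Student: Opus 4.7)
The plan is to construct $\phi$ by recursively rewriting $\psi$, which speaks about $G/A$, as a formula over $G$ in which the contracted vertex $a$ is simulated by the $\xi$-definable set $V(A) = \xi[G, u_1, \ldots, u_\ell, y]$. Every reference to $a$ is replaced by a $\xi$-guarded quantification, and each quantifier of $\psi$ that ranges over $V(G/A)$ is split into the case where the witness is $a$ and the case where it lies in $V(G) \setminus V(A)$.

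After renaming so that $z$ does not occur as a bound variable of $\psi$, for each subformula $\chi$ of $\psi$ with free variables $F_\chi$ and each subset $S \subseteq F_\chi$ (indicating which free variables are interpreted as $a$), I would inductively build a formula $\chi^S$ over $G$ such that $G \models \chi^S(\ldots)$ iff $G/A \models \chi$ under the corresponding assignment. Atomic formulas split into cases on which arguments lie in $S$: for instance, $E(y_1, y_2)$ with $y_1 \in S$ and $y_2 \notin S$ becomes $\exists y' \bigl(\xi(x_1, \ldots, x_\ell, y') \wedge E(y', y_2)\bigr)$; $R_c(y_1, y_2)$ with $c$ a multiset colour of $G/A$ becomes a finite conjunction of counting formulas $\exists^{=n_d} y' \bigl(\xi(x_1, \ldots, x_\ell, y') \wedge R_d(y', y_2)\bigr)$ over the colours $d$ with multiplicity $n_d$ appearing in $c$; the symmetric and both-in-$S$ cases, as well as equality, reduce to $\top$, $\bot$, or the original atom. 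Boolean connectives are handled componentwise. A counting quantifier is translated by a case split on whether $a$ witnesses $\chi_1$:
\[
(\exists^{\geq p} y\,\chi_1)^S \coloneqq \bigl(\chi_1^{S\cup\{y\}} \wedge \exists^{\geq p-1} y(\neg\xi(x_1,\ldots,x_\ell,y) \wedge \chi_1^S)\bigr) \vee \bigl(\neg\chi_1^{S\cup\{y\}} \wedge \exists^{\geq p} y(\neg\xi(x_1,\ldots,x_\ell,y) \wedge \chi_1^S)\bigr).
\]
Finally I set $\phi(x_1,\ldots,x_m) \coloneqq \psi^{\{z\}}$. Semantic correctness is a routine structural induction, using the $\xi$-definability of $V(A)$ and the fact that every vertex of $V(G/A)$ is either $a$ or a vertex of $V(G) \setminus V(A)$.

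The main obstacle is to verify that $\phi \in \LW[k]$. For any subformula $\chi$ of $\psi$, write $r$ for the number of its free variables lying in $\{x_{\ell+1}, \ldots, x_m\}$ or among the bound variables of $\psi$; the hypothesis on $\psi(\circ, \ldots, \circ, x_{\ell+1}, \ldots, x_m, \circ)$ gives $r \leq k - \ell$. By induction on $\chi$, the free variables of $\chi^S$ form a subset of $(F_\chi \setminus S) \cup \{x_1, \ldots, x_\ell\}$, so there are at most $r + \ell \leq k$ of them. The helper subformulas introduced by the atomic-case translation, such as $\xi(x_1, \ldots, x_\ell, y') \wedge E(y', y_j)$, have free variables $\{x_1, \ldots, x_\ell, y', y_j\}$ of size $\ell + 2$; since $\ell < m < k$ yields $\ell \leq k - 2$, this is at most $k$. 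The helper subformulas arising from the quantifier translation, $\neg \xi(x_1, \ldots, x_\ell, y) \wedge \chi_1^S$, have free variables contained in $\{x_1, \ldots, x_\ell, y\} \cup \operatorname{free}(\chi_1^S)$; since $y$ is bound by the outer quantifier in $\chi$, it is free in $\chi_1$ and hence already in $\operatorname{free}(\chi_1^S)$, so this set equals $\operatorname{free}(\chi_1^S)$, which by the inductive bound has size at most $k$. Combining these estimates yields $\phi \in \LW[k]$.
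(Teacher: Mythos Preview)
Your proposal is correct and follows essentially the same approach as the paper's proof: both eliminate the contracted vertex by an inductive syntactic translation, splitting each counting quantifier into the case ``witness is $a$'' versus ``witness is in $V(G)\setminus V(A)$'', translating atoms involving $a$ via $\xi$-guarded (counting) quantifiers, and then arguing about width by combining the hypothesis on $\psi(\circ,\ldots,\circ,x_{\ell+1},\ldots,x_m,\circ)$ with the observation that the transformation adds only the parameters $x_1,\ldots,x_\ell$ as new free variables. The only real difference is notational: you index the translation by a set $S$ of free variables currently standing for $a$, whereas the paper fixes a single distinguished variable $z$ and substitutes it for the quantified variable in the ``witness is $a$'' branch before recursing; these are equivalent bookkeeping devices.

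One small point worth tightening: your width bound $|(F_\chi\setminus S)\cup\{x_1,\ldots,x_\ell\}|\le r+\ell$ uses implicitly that $z\in S$ (otherwise $z$ could contribute an extra free variable on top of the $r$ you count). This does hold throughout the recursion you actually run, since you start from $\psi^{\{z\}}$ and only ever enlarge $S$, but you should say so explicitly.
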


\begin{proof}
    Without loss of generality, we assume that every bound variable in $\psi$ does not occur free in $\psi$ or $\xi$ and is not bound by a second quantifier in $\psi$.

    We let $\phi \coloneqq \psi^*$, where we define the transformation $*$ inductively to eliminate the variable $z$ as follows.

    For atoms $\alpha$ that do not mention $z$, we let $\alpha^* \coloneqq \alpha$.
    Atoms with $z$ are treated as follows, where $x$ denotes a variable distinct from $z$.
    For equality atoms, we define $(z = z)^* \coloneqq \textsc{true}$ and $(x = z)^* \coloneqq (z = x)^* \coloneqq \textsc{false}$.
    For atoms with predicate symbol $E$, we let $E(z,z)^* \coloneqq \textsc{false}$, and $E(x,z)^* \coloneqq \exists z \big(\xi(x_1,\ldots,x_\ell,z) \land E(x,z)\big)$, and $E(z,x)^*$ analogous to $E(x,z)$.
    For atoms with predicate symbol $R_C$, where the colour $C$ is a multiset with $r$ distinct elements $c_1,\ldots,c_r$ of multiplicities $p_1,\ldots,p_r$, we define $R_C(z,z)^* \coloneqq \textsc{false}$, and $R_C(x,z)^* \coloneqq \bigwedge_{j=1}^r \exists^{= p_j} z \big(\xi(x_1,\ldots,x_\ell,z) \land R_{c_j}(x,z)\big)$, and $R_C(z,x)^*$ analogous to $R_C(x,z)$.

    Inductively, we define $(\neg \chi)^* \coloneqq \neg \chi^*$ and $(\chi_1 \lor \chi_2)^* \coloneqq (\chi_1^* \lor \chi_2^*)$.
    For the case $\exists^{\geq p} x \chi(x_1,\ldots,x_n,x,z)$ for $p\ge 2$, we define
    \begin{equation}
      \begin{array}{r@{\,}l}
      (\exists^{\geq p} x \chi)^* \coloneqq {} & \Big(\chi(x_1,\ldots,x_n,z,z)^* \land
                        \exists^{\geq p-1} x \big(\neg
                        \xi(x_1,\ldots,x_\ell,x) \land
                        \chi(x_1,\ldots,x_n,x,z)^*\big)\Big)\\
      &{}\lor \exists^{\geq p} x \big(\neg
              \xi(x_1,\ldots,x_\ell,x) \land
              \chi(x_1,\ldots,x_n,x,z)^*\big). \label{eq:4}
      \end{array}
    \end{equation}
    Note that the formula $\chi(x_1,\ldots,x_n,z,z)^*$ is obtained by first
    substituting $z$ for $x$ in $\chi$ and then applying $*$
    to the resulting formula to eliminate $z$.
    The case $p = 1$ is dealt with analogously.

    To prove the correctness of the construction, we need to show that the free
    variables of $\psi^*$ are among $\{x_1,\ldots,x_m\}$ and $\psi^* \in \LW[k]$,
    and that $\psi^*$ has the correct meaning. 

    First, observe that a straightforward induction obtains that for every formula $\chi$,
    \begin{equation}
      \label{eq:2}
      \free(\chi^*)\subseteq \big(\free(\chi)\setminus\{z\}\big)\cup\{x_1,\ldots,x_\ell\},
    \end{equation}
    where $\free(\chi)$ denotes the free variables of $\chi$.
    Thus, $\free(\psi^*) \subseteq \{x_1,\ldots,x_m\}$.

    Second, observe that the condition
    $\psi(\circ,\ldots,\circ,x_{\ell+1},\ldots,x_m,\circ) \in
    \LW[k-\ell]$ expresses that no subformula of $\psi$ (including
    $\psi$ itself) has more than $k-\ell$ free variables that are not
    contained in the set $\{x_1,\ldots,x_\ell,z\}$. So we can assume
    that all subformulae of $\psi$ satisfy this
    condition.

    Now we are ready to prove $\psi^* \in \LW[k]$ by
    induction on $\psi$.  For the base steps, note that
    $E(x,z)^*, E(z,x)^*, R_C(x,z)^*, R_C(z,x) \in \LW[\ell+2]$ and $\ell+2\le k$;
    the other base cases are trivial.

    For the inductive step, the case $\neg\chi$ is trivial. 
    For the case $\chi_1\vee\chi_2$ we exploit observations
    \eqref{eq:2} and that $\psi$ has at most $k-\ell$ free
    variables not in $\{x_1,\ldots,x_\ell,z\}$. The case
    $\exists^{\geq p} x \chi(x_1,\ldots,x_n,x,z)$ follows immediately by induction, since we have $\chi(x_1,\ldots,x_n,z,z)^*\in\LW[k]$
    and $\chi(x_1,\ldots,x_n,x,z)^*\in\LW[k]$. 

    Finally, we show the following statement for every formula $\chi(x_1,\dots,x_n,z)$, where $n \geq \ell$ and every bound variable in $\chi$ does not occur bound in $\chi$ or $\xi$ and is not bound by a second quantifier in $\chi$: for every graph $G$, every subgraph
    $A\subseteq G$, and all $u_{1},\ldots,u_n\in V(G)\setminus V(A)$
    such that $V(A)=\xi[G,u_1,\ldots,u_\ell,y]$ we have
    \begin{equation}
      \label{eq:3}
      G\models\chi^*(u_1,\ldots,u_n)\iff G/A\models\chi(u_1,\ldots,u_n,a).
    \end{equation}
    The proof is by induction on $\chi$.
    This statement in particular applies to $\psi(x_1,\ldots,x_m,z)$ and thus completes the proof of the lemma.

    The base step for atomic formulae follows from the fact that $u_i \neq a$ for every $1 \leq i \leq n$ and the definition of $G/A$ and its colouring.

    In the inductive step, the negation and disjunction cases are
    trivial. Now consider the case $\exists^{\geq p} x \chi(x_1,\ldots,x_{n},x,z)$. Recall
    the definition in \eqref{eq:4}. To understand the
    following argument, it is important to know exactly which
    variables occur free in $\big(\exists^{\geq p} x
    \chi(x_1,\ldots,x_n,x,z)\big)^*$ and its constituent
    formulae. The formula $\chi^1 \coloneqq \chi(x_1,\ldots,x_n,x,z)^*$ has free
    variables among $x_1,\ldots,x_n,x$; we write
    $\chi^1(x_1,\ldots,x_n,x)$ to make this explicit.  The formula $\chi^2 \coloneqq \chi(x_1,\ldots,x_n,z,z)^*$ has free
    variables among $x_1,\ldots,x_n$; we write
    $\chi^2(x_1,\ldots,x_n)$. The formula $(\exists^{\geq p} x \chi)^*$ has free
    variables among $x_1,\ldots,x_n$; we write $(\exists^{\geq p} x \chi)^*(x_1,\ldots,x_n)$.

Let $G$
    be a graph, $A\subseteq G$, and all $u_1,\ldots,u_n\in V(G)\setminus V(A)$
    such that $V(A)=\xi[G,u_1,\ldots,u_\ell,y]$. By the induction
    hypothesis, for all $u \in V(G)\setminus V(A)$ we have 
    \begin{align}
      \label{eq:5}
      G\models \chi^1(u_1,\ldots,u_n,u)&\iff
                                           G/A\models\chi(u_1,\ldots,u_n,u,a)\\
      \intertext{and}
      \label{eq:6}
      G\models \chi^2(u_1,\ldots,u_n)&\iff
                                           G/A\models\chi(u_1,\ldots,u_n,a,a).
    \end{align}
    To prove the forward
    direction of \eqref{eq:3}, suppose that
    $G\models(\exists^{\geq p} x \chi)^*(u_1,\ldots,u_n)$. 
    \begin{cs}
      \case1 $G\models\chi^2(u_1,\ldots,u_n)$ and there are pairwise
      distinct $u^1 ,\ldots,u^{p-1}\in V(G)$ such that for
      all $j$, 
      $G\not\models\xi(u_1,\ldots,u_\ell,u^j)$ and
      $G\models\chi^1(u_1,\ldots,u_n,u^j)$.

      Then it holds that 
      $u^j\neq a$ by the assumption that
      $V(A)=\xi[G,u_1,\ldots,u_\ell,y]$.
      Furthermore, $G/A\models\chi(u_1,\ldots,u_n,a,a)$ by \eqref{eq:6} and
      $G/A\models\chi(u_1,\ldots,u_n,u^j,a)$ by \eqref{eq:5}.
      Thus $a,u^1,\ldots,u^{p-1}$ witness that
      $G/A\models\exists^{\geq p} x \chi(u_1,\ldots,u_n,x,a)$.

     \case2 There are pairwise
      distinct $u^1 ,\ldots,u^p\in V(G)$ such that for
      all $j$, 
      $G\not\models\xi(u_1,\ldots,u_\ell,u^j)$ and
      $G\models\chi^2(u_1,\ldots,u_n,u^j)$.

      Then it holds that 
      $u^j\neq a$ by the assumption that
      $V(A)=\xi[G,u_1,\ldots,u_\ell,y]$.
      Furthermore, $G/A\models\chi(u_1,\ldots,u_n,u^j,a)$ by \eqref{eq:5}.
      Thus $u^1,\ldots,u^{p}$ witness that
      $G/A\models\exists^{\geq p} x \chi(u_1,\ldots,u_n,x,a)$.
    \end{cs}
    The backward direction of \eqref{eq:3} is proved by reverting the
    same argument.
\end{proof}

\section{The WL Dimension}\label{subsec:wl-logic}
We start by reviewing the \emph{$k$-dimensional WL
  algorithm} (for short: \emph{$k$-WL}) for $k\ge 1$. 

The
\emph{atomic type} $\atp(G,\bar u)$ of a $k$-tuple
$\bar u=(u_1,\ldots,u_k)$ of vertices of a (possibly coloured) graph $G$ is the set of all atomic facts satisfied by these
vertices. The exact encoding is not important for us, the relevant
property is that tuples $\bar
u=(u_1,\ldots,u_k)$ and $\bar v=(v_1,\ldots,v_k)$ of vertices of
graphs $G,H$, respectively, have the same atomic type if and only if the mapping $u_i\mapsto v_i$ is an isomorphism from
the induced subgraph $G[\{u_1,\ldots,u_k\}]$ to the induced subgraph
$H[\{v_1,\ldots,v_k\}]$. 

Now $k$-WL is the algorithm that, given a graph
$G$, computes the following
sequence of ``colourings'' $C_i^k$ of $V(G)^k$ for $i\ge0$ until it
returns $C_\infty^k \coloneqq C_i^k$ for the smallest $i$ such
that for all $\bar u,\bar v$ it holds that $C_i^k(\bar u)=C_i^k(\bar v)\iff
C_{i+1}^k(\bar u)=C_{i+1}^k(\bar v)$. The initial colouring $C_0^k$ assigns to
each tuple its atomic type: $C_0^k(\bar u) \coloneqq \atp(G,\bar u)$. 
In the $(i+1)$-st \emph{refinement round}, the colouring $C_{i+1}^k$ is defined
by
$
C_{i+1}^k(\bar u) \coloneqq \big(C_i^k(\bar u),M_{i}(\bar u)\big),
$
where, for $\bar u=(u_1,\ldots,u_k)$, $M_i(\bar u)$ is the multiset
\begin{align*}
&\llcurly\big(\atp(G,(u_1,\ldots,u_k,v)),C_i^k(u_1,\ldots,u_{k-1},v),\\
  &\hspace{3.8cm}C_i^k(u_1,\ldots,u_{k-2},v,u_k),\ldots,C_i^k(v,u_2,\ldots,u_k)\big)\mid
v\in V\rrcurly
\end{align*}
We say that $k$-WL \emph{distinguishes} two graphs $G$, $H$ if there is
some colour $c$ in the range of $C_\infty^k$ such that the number of
tuples $\bar u\in V(G)^k$ with $C_\infty^k(\bar u)=c$ is different
from the number of
tuples $\bar v\in V(H)^k$ with $C_\infty^k(\bar v)=c$. We say that
$k$-WL \emph{identifies} $G$ if it distinguishes $G$ from all graphs
$H$ not isomorphic to $G$. The \emph{WL dimension} of $G$ is the least
$k$ such that $k$-WL identifies $G$.

\begin{definition}[see Definition 12, \cite{kieponschwe17}]
 Let~$\mathcal{H}$ be a set of graphs. We say that the~$k$-dimensional WL algorithm \emph{determines orbits} in~$\mathcal{H}$ if for all coloured graphs~$(G, \lambda)$ and all coloured graphs $(G',\lambda')$ (with colourings $\lambda$ and $\lambda'$) and all vertices~$s \in V(G)$ and~$s' \in V(G')$ the following holds: there exists an isomorphism from~$(G,\lambda)$ to~$(G',\lambda')$ mapping~$s$ to~$s'$ if and only if~$C^k_{\infty}(s) = C^k_{\infty}(s')$.
\end{definition}

The following proposition is a useful correspondence between identification and determination of orbits in a graph.

\begin{proposition}\label{prop:corrdetorbits}
  Let $k \geq 1$ be a natural number and let $G$ be a coloured graph. Suppose $k$-WL identifies all vertex-coloured versions of $G$. Then $(k+1)$-WL determines orbits on $G$.
\end{proposition}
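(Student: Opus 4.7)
The forward direction is immediate, since every coloured-graph isomorphism preserves the WL refinement at every round, so it preserves $C_\infty^{k+1}$ as well. For the converse, my plan is to bridge the two WL dimensions via the Cai-F\"urer-Immerman correspondence \cite{immlan90,caifurimm92}: the stable tuple colour $C_\infty^k(\bar u)$ coincides with the $\LC[k+1]$-type of $\bar u$, and in particular the diagonal colour $C_\infty^{k+1}(s)\coloneqq C_\infty^{k+1}(s,\ldots,s)$ encodes the $\LC[k+2]$-type of the single vertex $s$. Dually, the hypothesis that $k$-WL identifies every vertex-coloured version of $G$ amounts to saying that each such version is determined up to isomorphism by its $\LC[k+1]$-theory.

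Assume now $C^{k+1}_\infty(s)^G = C^{k+1}_\infty(s')^{G'}$, and let $G_s$ and $G'_{s'}$ be the coloured graphs obtained by individualising $s$ and $s'$, respectively, with the same fresh colour $c$. Because $G_s$ is a vertex-coloured version of $G$, the hypothesis says $G_s$ is identified by its $\LC[k+1]$-theory. Thus it suffices to establish $G_s \equiv_{\LC[k+1]} G'_{s'}$; the ensuing isomorphism $G_s \cong G'_{s'}$ must respect the individualising colour $c$ and hence map $s$ to $s'$, yielding the desired isomorphism $(G,\lambda) \to (G',\lambda')$. Given any sentence $\phi \in \LC[k+1]$, I construct a formula $\phi^*(x)$ by replacing every atom of the form $R_c(y,y)$ in $\phi$ with the equality $y = x$ (and every atom $R_c(y,z)$ with syntactically distinct $y,z$ by $\textsc{false}$, which preserves meaning since $c$ is fresh and appears only as a vertex colour). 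A straightforward structural induction shows that $G_s \models \phi \iff G \models \phi^*(s)$ and analogously over $G'$ and $s'$. Each subformula of $\phi^*$ has at most one more free variable than its counterpart in $\phi$, namely $x$, so $\phi^*(x)$ has width at most $k+2$ and hence lies in $\LC[k+2]$ by Lemma~\ref{lem:width}. Applying the coincidence of $\LC[k+2]$-types of $s$ and $s'$ to $\phi^*(x)$ then yields $G_s \models \phi \iff G'_{s'} \models \phi$, as required.

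The main technical subtlety is the width bookkeeping for the translation $\phi \mapsto \phi^*(x)$: although $x$ may appear in every subformula of $\phi^*$, it is introduced only in place of the atomic tests $R_c(\cdot,\cdot)$, and hence it adds exactly one to the free-variable count of every subformula. This is precisely the situation in which Lemma~\ref{lem:width} is convenient, since it allows us to argue in terms of width and avoid having to rename bound variables to force $\phi^*$ into only $k+2$ syntactic variable slots from the outset.
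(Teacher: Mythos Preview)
Your proof is correct and follows essentially the same strategy as the paper: individualise $s$ and $s'$, then invoke the hypothesis that $k$-WL identifies the individualised graph $G_s$. The only difference is cosmetic—the paper argues the reduction from $(k+1)$-WL on $(G,s)$ to $k$-WL on $G_s$ directly via the combinatorial fact that $\chi^{k+1}(v)=\chi^{k+1}(v')$ forces the multisets $\llcurly\chi^{k+1}(v,\bar w)\rrcurly$ and $\llcurly\chi^{k+1}(v',\bar w')\rrcurly$ to coincide, whereas you route the same reduction through the Cai--F\"urer--Immerman correspondence and the syntactic translation $\phi\mapsto\phi^*(x)$.
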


\begin{proof}
  Let $\chi^k$ denote the stable colouring computed by $k$-WL. Let $G$ be a graph. Suppose there are a graph $H$ and vertices $v \in V(G)$, $v' \in V(H)$ such that $\chi^{k+1}(v) = \chi^{k+1}(v')$ holds.
  Then we can individualise $v$ in $G$ and $v'$ in $H$ and apply $k$-WL to these coloured graphs $G_v$ and $H_{v'}$. Since $\chi^{k+1}(v) = \chi^{k+1}(v')$, we have that $\{\!\!\{\chi^{k+1}(v,w_1, \dots, w_k) \mid (w_1, \dots, w_k) \in V^k(G) \}\!\!\} = \{\!\!\{\chi^{k+1}(v',w'_1, \dots, w'_k) \mid (w'_1, \dots, w'_k) \in V^k(H)\} \!\!\}$. Thus, the graphs $G_v$ and $H_{v'}$ obtain isomorphic colourings under $k$-WL. By assumption, this implies $G_v \cong H_{v'}$, which is equivalent to the existence of an isomorphism from $G$ to $H$ mapping $v$ to $v'$.
\end{proof}

For the following lemma, we assume that the reader is familiar with
graph minors. For those who are not, we remark that for every $g\ge 0$
the class $\CE_g$ of all graphs of Euler genus at most $g$ is closed
under taking minors. We will only apply the lemma to these classes.

For a class $\CC$ of (uncoloured) graphs, we let $\CC^*$ be the class of all coloured graphs with underlying graph in $\CC$.

\begin{lemma}[\cite{kieponschwe17}]\label{lem:3-connected}
Let $\CC$ be a graph class that is closed under taking
minors. Suppose $k$-WL identifies all 3-connected graphs
in $\CC^*$. Then  $(k+1)$-WL identifies all graphs in $\CC^*$.
\end{lemma}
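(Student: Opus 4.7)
The plan is to prove the lemma by strong induction on $|V(G)|$ for graphs $G\in\CC^*$, splitting into cases by connectivity and reducing each case to a situation where either the induction hypothesis or the 3-connected hypothesis applies. Throughout, the ``extra'' variable in $\LC[k+2]$ (i.e., the fact that we go from $k$-WL to $(k+1)$-WL, equivalent to $\LC[k+1]$ vs.\ $\LC[k+2]$) will be used to individualise one vertex or a small separator so as to expose either a smaller graph or a 3-connected piece.

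First, if $G$ is disconnected, each connected component is a subgraph of $G$, hence a minor, hence in $\CC^*$; by induction each component is identified by $\LW[k+2]$, and Corollary~\ref{cor:id-comp} lifts this to an identification of $G$. So assume $G$ is connected. If $G$ has a cut vertex, the predicate ``$v$ is a cut vertex'' is $\LW[3]$-definable using Lemma~\ref{lem:avoiding-path}, and the block-cut tree of $G$ is therefore canonically refined by $(k+1)$-WL. We individualise one cut vertex $v$ (this uses one variable beyond those needed to identify each block), and then each block rooted at $v$ is a smaller graph in $\CC^*$, to which the induction hypothesis applies; Lemma~\ref{lem:rel2comp2} (or a variant) lets us transport the block-identifying sentences back to $G$.

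The main case, and the heart of the proof, is when $G$ is 2-connected but not 3-connected. Here we invoke the canonical decomposition of a 2-connected graph into its \emph{3-connected components} (the SPQR / Tutte decomposition) along 2-separators: the torsos of this decomposition are either 3-connected graphs, cycles, or bonds (pairs of vertices joined by parallel virtual edges), all of them minors of $G$ and therefore members of $\CC^*$. The formula ``$\{u,v\}$ is a 2-separator'' is again $\LW[4]$-definable by Lemma~\ref{lem:avoiding-path}. Using one extra variable, $(k+1)$-WL can individualise a separator pair $\{u,v\}$; for each torso $T$ of the decomposition attached at $\{u,v\}$, we contract the rest of $G$ outside $T$ into the pair $\{u,v\}$ (adjusting the colouring of $u,v$ to encode the isomorphism type of what is contracted), obtaining a graph $T^+$ that lies in $\CC^*$ and that, if $T$ is 3-connected, is identified by $k$-WL by the hypothesis (applied to a coloured version), and otherwise is identified by induction since $|T^+|<|G|$. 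Lemma~\ref{lem:factor} is exactly the tool that turns a sentence identifying $T^+ = G/W$ into a formula over $G$ with the separator vertices as free variables, without blowing up the number of variables.

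The main obstacle is the orbit-consistency step: the identifications of the different torsos must be glued along the decomposition tree, so that distinct choices of separator pair lead to compatible labellings of the whole decomposition. Here Proposition~\ref{prop:corrdetorbits}, applied to 3-connected graphs in $\CC^*$, upgrades identification of all coloured versions to orbit determination by $(k+1)$-WL, which is what makes the gluing canonical. Combined with the observation that the decomposition tree itself is invariantly computable from the refined $(k+1)$-WL colouring of pairs, this yields an $\LW[k+2]$-sentence identifying $G$, completing the induction.
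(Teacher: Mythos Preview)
The paper's proof is two lines: it cites \cite[Theorem~13]{kieponschwe17}, which states that $(k+1)$-WL identifies all graphs in $\CC^*$ provided $(k+1)$-WL \emph{determines orbits} on all 3-connected graphs in $\CC^*$, and then invokes Proposition~\ref{prop:corrdetorbits} to pass from ``$k$-WL identifies'' to ``$(k+1)$-WL determines orbits''. Your proposal is, in effect, an attempt to reprove \cite[Theorem~13]{kieponschwe17} from scratch via the Tutte decomposition. That is the right structural idea, but the sketch has a real gap in the variable accounting.

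The problematic step is ``using one extra variable, $(k+1)$-WL can individualise a separator pair $\{u,v\}$''. A separator pair is two vertices, not one; one extra variable does not individualise both. More seriously, the recursive gluing along the SPQR tree requires, at each torso $T$, that the virtual edges be coloured by the isomorphism type of the attached subtree \emph{with its two attachment vertices distinguished}. By induction those subtree types are expressed by $\LW[k+2]$-formulae with two free variables; substituting such a formula for an edge-colour atom inside the $\LW[k+1]$-sentence identifying the 3-connected torso $T$ does not obviously stay within width $k+2$, because at the substitution point up to $k-1$ other torso variables may already be in scope. This is precisely the delicate point that \cite{kieponschwe17} handles (via orbit determination rather than naive substitution), and it is why the paper outsources the argument. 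Your invocation of Proposition~\ref{prop:corrdetorbits} at the very end is the right ingredient, but it needs to be used throughout the recursion, not just for the final gluing. Also, Lemma~\ref{lem:factor} contracts a definable set to a \emph{single} vertex; forming a torso keeps both separator vertices and adds a virtual edge, so that lemma is not the tool you want here.
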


\begin{proof}
  By \cite[Theorem~13]{kieponschwe17}, $(k+1)$-WL identifies all graphs in $\CC^*$ if $(k+1)$-WL determines orbits on all 3-connected graphs in
  $\CC^*$. Thus, the statement follows from Proposition \ref{prop:corrdetorbits}.
\end{proof}

In this paper, we reason about the WL dimension in terms of logic, using the following correspondence.

\begin{theorem}[\cite{caifurimm92,immlan90}]\label{thm:immlan}
  Let $k \geq 1$. Let $G$ and $H$ be graphs, possibly coloured,
  and $\bar{u} \coloneqq (u_1,\ldots,u_k)\in V(G)^k$ and
  $\bar{v} \coloneqq (v_1,\ldots,v_k)\in V(H)^k$. Then the following are equivalent:
    \begin{enumerate}
      \item $C_\infty^k(\bar u)=C_\infty^k(\bar v)$;
      \item
        $G\models\phi(u_1,\ldots,u_k)\iff
        H\models\phi(v_1,\ldots,v_k)$ for all $\LC[k+1]$-formulae
        $\phi(x_1,\ldots,x_k)$.
    \end{enumerate}
\end{theorem}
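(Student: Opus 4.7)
\smallskip\noindent
\textbf{Proof plan for Theorem~\ref{thm:immlan}.}
The theorem is the classical Immerman--Lander / Cai--F\"urer--Immerman correspondence, so I would follow the standard approach of proving the two implications separately, matching refinement rounds of $k$-WL to quantifier rank in $\LC[k+1]$.

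\smallskip
\emph{Direction $(1)\Rightarrow(2)$.}
I would prove, by induction on the structure of the formula $\phi(x_1,\dots,x_k)\in\LC[k+1]$, the stronger statement that $C_\infty^k(\bar u)=C_\infty^k(\bar v)$ implies $G\models\phi(\bar u)\iff H\models\phi(\bar v)$ whenever $\bar u\in V(G)^k$ and $\bar v\in V(H)^k$. For atomic formulae this is immediate since $C_0^k(\bar u)=\atp(G,\bar u)$ and $C_\infty^k$ refines $C_0^k$. Boolean connectives are routine. For a counting quantifier $\exists^{\geq p}x_j\,\phi(x_1,\dots,x_k)$ (which, having at most $k{+}1$ variables available, re-uses position~$j$ for the quantified variable, possibly with one other variable floating freely inside $\phi$), I would use the stability of $C_\infty^k$: the identity of the refined colour $C_{i+1}^k$ entails that the multisets $M_\infty(\bar u)$ and $M_\infty(\bar v)$ coincide. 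In particular, for the $j$-th slot, the multiset of colours $\{\!\!\{C_\infty^k(u_1,\dots,u_{j-1},w,u_{j+1},\dots,u_k)\mid w\in V(G)\}\!\!\}$ agrees with its analogue on the $H$-side. The inductive hypothesis applied to $\phi$ then allows transferring the count of witnesses from $G$ to $H$.

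\smallskip
\emph{Direction $(2)\Rightarrow(1)$.}
Here I would, by induction on the round $i\ge 0$, exhibit for each colour $c$ in the range of $C_i^k$ an $\LC[k+1]$-formula $\phi_{i,c}(x_1,\dots,x_k)$ with $G\models\phi_{i,c}(\bar u)\iff C_i^k(\bar u)=c$. For $i=0$ this is a quantifier-free conjunction describing $\atp(G,\bar u)$ using only the $k$ variables $x_1,\dots,x_k$. For the step, $C_{i+1}^k(\bar u)$ is determined by $C_i^k(\bar u)$ together with, for each slot $j\in\{1,\dots,k\}$ and each possible $C_i^k$-colour $c'$, the count $\bigl|\{w\in V(G):C_i^k(u_1,\dots,u_{j-1},w,u_{j+1},\dots,u_k)=c'\}\bigr|$; these counts can be expressed by $\exists^{=q}x_j\,\phi_{i,c'}^{\,(j)}(x_1,\dots,x_k)$, where $\phi_{i,c'}^{\,(j)}$ is $\phi_{i,c'}$ with the roles of variables permuted so that the quantified slot is $x_j$. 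Because $C_\infty^k$ stabilises after finitely many rounds, this produces formulae $\phi_{\infty,c}$. If $C_\infty^k(\bar u)\neq C_\infty^k(\bar v)$, taking $c=C_\infty^k(\bar u)$ yields a distinguishing formula, contradicting~(2).

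\smallskip
\emph{Main obstacle.}
The delicate point in both directions is variable management: counting witnesses into each of the $k$ slots must be done using only $k{+}1$ variables in total, and the same extra variable has to be re-used across different positional refinements. I would handle this by always letting the quantified variable be one of $x_1,\dots,x_k$ (the one whose slot is being refreshed), keeping one designated $(k{+}1)$-st variable as scratch space inside $\phi$; formally, it is cleaner to first construct the formulae of width $k{+}1$ and then invoke Lemma~\ref{lem:width} to rename bound occurrences so that the resulting formula lies in $\LC[k+1]$. The atomic-type base case and the translation of $M_i(\bar u)$ into counting quantifiers each need this bookkeeping to be done carefully, but once the variable reuse is set up, the inductions in both directions are straightforward.
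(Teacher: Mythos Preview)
The paper does not prove Theorem~\ref{thm:immlan}; it is stated with attribution to \cite{caifurimm92,immlan90} and used as a black box. Your plan follows the standard proof of this classical correspondence and is essentially correct, so there is nothing to compare against.

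One small remark on your variable management in direction $(1)\Rightarrow(2)$: the refinement multiset $M_i(\bar u)$ as defined in the paper records, for each $v$, the \emph{joint} datum $\big(\atp(G,(u_1,\ldots,u_k,v)),\,C_i^k(\bar u[v/1]),\ldots,C_i^k(\bar u[v/k])\big)$, not $k$ separate marginal multisets. When the outermost quantifier binds the $(k{+}1)$-st variable (rather than some $x_j$), the subformula may have all $k{+}1$ variables free, and you need the full joint multiset (in particular the $\atp$-component on the $(k{+}1)$-tuple) to handle atoms involving the fresh variable together with several $x_j$'s simultaneously. Your sketch hints at this but phrases it as ``re-using position~$j$'', which covers only the case where one of $x_1,\ldots,x_k$ is re-bound; make sure your induction treats both quantifier shapes.
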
 

Recall that we say a graph $G$ is \emph{identified} by the logic $\LC^{k}$
if there is a sentence $\iso G \in \LC^{k}$ such that for all graphs
$H$ we have
$H\models \iso G$ if and
only if $H$ is isomorphic to $G$.

\begin{corollary}\label{cor:correspondence}
  A graph has WL dimension $k$ if and only if it is identified by $\LC[k+1]$.
\end{corollary}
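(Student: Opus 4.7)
The plan is to derive this corollary directly from Theorem~\ref{thm:immlan}, which gives a tuple-level correspondence between the stable colouring $C_\infty^k$ and $\LC[k+1]$-equivalence; the task is to promote this to a sentence-level correspondence and then match it with the definition of the WL dimension.

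For the forward direction, I assume that $G$ has WL dimension $k$, so $k$-WL identifies $G$. To construct an identifying sentence $\iso{G}\in\LC[k+1]$, I would first note that by Theorem~\ref{thm:immlan}, each $C_\infty^k$-colour class $c$ occurring in $G$ is defined by some $\LC[k+1]$-formula $\phi_c(x_1,\ldots,x_k)$. The existence of such a formula follows from a standard type-counting argument: on graphs of a fixed finite size, there are only finitely many $\LC[k+1]$-inequivalent formulas with $k$ free variables, and by the contrapositive of Theorem~\ref{thm:immlan} these formulas must separate the $C_\infty^k$-colour classes. The sentence $\iso{G}$ then fixes $|V|=|V(G)|$ and states, for each colour $c$ appearing in $G$ with multiplicity $m_c$, that exactly $m_c$ $k$-tuples satisfy $\phi_c$. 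Counting $k$-tuples can be encoded inside $\LC[k+1]$ by iterated single-variable counting quantifiers, reusing variables in the style of the width arguments in Section~\ref{sec:logic}. Any $H$ satisfying $\iso{G}$ has the same $C_\infty^k$-colour distribution as $G$, so $k$-WL does not distinguish them, whence $H\cong G$ by the identification hypothesis.

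For the backward direction, suppose $G$ is identified by some $\iso{G}\in\LC[k+1]$. Arguing contrapositively, assume $k$-WL does not distinguish $G$ from some $H$. Then the multisets $\{\!\!\{\, C_\infty^k(\bar u)\mid \bar u\in V(G)^k\,\}\!\!\}$ and $\{\!\!\{\, C_\infty^k(\bar v)\mid \bar v\in V(H)^k\,\}\!\!\}$ coincide, so there is a bijection $V(G)^k\to V(H)^k$ preserving $C_\infty^k$. Picking any single pair $(\bar u,\bar v)$ with $C_\infty^k(\bar u)=C_\infty^k(\bar v)$ and applying Theorem~\ref{thm:immlan} yields $G\models\phi(\bar u)\iff H\models\phi(\bar v)$ for every $\LC[k+1]$-formula $\phi(x_1,\ldots,x_k)$; specialising to a sentence (which ignores its arguments) gives $G\models\phi\iff H\models\phi$ for every $\LC[k+1]$-sentence. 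In particular $H\models\iso{G}$, so $H\cong G$. Hence $k$-WL identifies $G$, and the WL dimension is at most $k$.

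The main technical hurdle is the forward direction: packaging the full $C_\infty^k$-colour distribution of $G$ into a single $\LC[k+1]$-sentence while staying within the variable budget of $k+1$. The key observation that makes this possible is that counting $k$-tuples satisfying $\phi_c$ can be done by nested counting quantifiers that reuse a single auxiliary variable, together with the finiteness of the set of realised colours. Both directions are otherwise routine consequences of Theorem~\ref{thm:immlan}.
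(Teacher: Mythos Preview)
Your proposal is correct and is exactly the standard derivation the paper has in mind; the corollary is stated without proof precisely because it follows from Theorem~\ref{thm:immlan} by the argument you outline. One minor remark: as literally phrased, the corollary should be read as ``WL dimension $\le k$ iff identified by $\LC[k+1]$'' (which is what you actually prove), since being identified by $\LC[k+1]$ does not rule out being identified already by $\LC[k]$; your closing sentence shows you are aware of this.
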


The WL dimension of a planar graph is at most $3$
\cite{kieponschwe17}. Using the previous corollary, we can re-phrase
this as follows.

\begin{theorem}[see \cite{kieponschwe17}]\label{thm:planar-dimension}
  For every colored planar graph $G$ there is a
  $\LC[4]$-sentence $\iso G$ that identifies $G$.
\end{theorem}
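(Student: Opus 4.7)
The statement is essentially the logical reformulation of the result of \cite{kieponschwe17} that the WL dimension of planar graphs is at most $3$, lifted to coloured planar graphs. My plan is therefore to assemble the pieces already stated in the excerpt rather than redo the hard combinatorial work.

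First I would invoke the main theorem of \cite{kieponschwe17}, which asserts that $3$-WL identifies every planar graph. The nontrivial observation is that this extends to \emph{coloured} planar graphs: the reduction to the $3$-connected case in the excerpt (Lemma~\ref{lem:3-connected}) is stated for the class $\CC^*$ of coloured graphs over $\CC$, with $\CC$ the minor-closed class of planar graphs. Hence if $2$-WL identifies all $3$-connected coloured planar graphs (which is the $3$-connected case handled in \cite{kieponschwe17}), then $3$-WL identifies every coloured planar graph. Thus the WL dimension of every coloured planar graph $G$ is at most~$3$.

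Next I would apply the Immerman--Lander correspondence Theorem~\ref{thm:immlan}, or more directly its packaging Corollary~\ref{cor:correspondence}, which says that a graph has WL dimension $k$ if and only if it is identified by a sentence of $\LC[k+1]$. Specialising to $k=3$, for every coloured planar graph $G$ there exists a $\LC[4]$-sentence $\iso{G}$ such that $H \models \iso{G}$ iff $H \cong G$, which is exactly the claim.

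The only substantive step here is the transfer from uncoloured to coloured planar graphs, and the main obstacle is really absorbed into \cite{kieponschwe17} and Lemma~\ref{lem:3-connected}: one must verify that the identification argument for $3$-connected planar graphs goes through when vertex and arc colourings are present, but this is automatic because $2$-WL refines any given initial colouring and the combinatorial reasoning used for $3$-connected planar graphs only ever strengthens the equivalence relation on tuples by refining with colours. After that, invoking Corollary~\ref{cor:correspondence} is a one-line step.
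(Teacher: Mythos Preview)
Your approach is correct and matches the paper's: the paper gives no standalone proof but simply cites the bound from \cite{kieponschwe17} and remarks, in the sentence preceding the theorem, that the logical formulation follows via Corollary~\ref{cor:correspondence}. Your detour through Lemma~\ref{lem:3-connected} to justify the coloured case is unnecessary overhead (the result in \cite{kieponschwe17} is already stated for coloured graphs), but harmless.
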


%\begin{corollary}\label{cor:planar-formula}
%  For every $n\ge 1$ there is a $\LC[4]$-sentence $\formel{planar}_n$ such
%  that for all graphs $G$ with at most $n$ vertices, it holds that 
%\[ G \models \formel{planar}_n \iff G \text{ is planar.}
%\]
%\end{corollary}

%\begin{proof}
%  Up to isomorphism, there are only finitely many planar graphs $H$
%  with at most $n$ vertices. We let $ \formel{planar}_n$ be the disjunction of the formulae $\iso H$ for all these $H$.
%\end{proof}

In the following sections, we use these formulae characterising certain parts of a decomposition of $G$ in order to obtain a bound on the number of variables we need to identify the entire graph.

\section{Shortest Path Systems, Patches and
  Necklaces}\label{sec:sps-patch-necklace}

Here we introduce the graph-theoretic machinery necessary
to prove our main theorem. Essentially, the definitions and results of
this section are from \cite[Chapter~15]{gro17}. In fact, things are
simpler here because \cite[Chapter~15]{gro17} deals with graphs
\emph{almost} embedded in a surface, whereas we only need to
consider surface graphs. Sometimes, we need to change the definitions in
order to improve the resulting bounds on the WL dimension
later. Notably, our definition of \emph{necklaces} is different from the
one in \cite{gro17}. This also requires an adaptation of the proof
that reducing necklaces exist.

\begin{definition}
  Let $G$ be a graph and $u,u'\in V(G)$. A \emph{shortest path system
    (sps) from $u$ to $u'$} is a family $\CQ$ of shortest paths in $G$ from
  $u$ to $u'$ such that every shortest path from $u$ to $u'$ in the
  subgraph $\bigcup_{Q\in\CQ}Q$ is contained in ${\CQ}$.

  We let
    $V(\mathcal Q)\coloneqq\bigcup_{Q\in{\CQ}}V(Q)$ and $E(\mathcal Q)\coloneqq\bigcup_{Q\in{\CQ}}E(Q)$ and
    $G(\CQ)\coloneqq\big(V({\CQ}),E({\CQ})\big)=\bigcup_{Q\in\CQ}Q$. We call
    ${\CQ}$ \emph{trivial} if
    $|V({\CQ})|\le 2$, that is, if $G(Q)$ consists of a single vertex or a single edge.

The \emph{height} $\height^\CQ(v)$ of $v\in V(\CQ)$ is the distance from $u$ to $v$. 
  The vertices in $\bigcap_{Q
  \in \CQ} V(Q)$ are the \emph{articulation vertices} of $\CQ$. An
articulation vertex $v$ is \emph{proper} if $v \neq u$ and $v \neq
u'$. We denote the set of all articulation vertices of $\CQ$ by
$\art(\CQ)$.  

For all $u$, $u' \in V(G)$ such that there is a path from $u$ to $u'$
in $G$, the \emph{canonical sps from $u$ to $u'$ in $G$} is the set
$\CQ^G(u,u')$ of all shortest paths from $u$ to $u'$ in $G$.
\end{definition}

For a path $Q$ and vertices $u, v \in V(Q)$, we denote by $uQv$ the segment of $Q$ from $u$ to $v$.
With every sps $\CQ$ from $u$ to $u'$ we can associate a partial order
$\dagle^{\CQ}$ on $V(\CQ)$ by letting $v\dagle^{\CQ}w$ if $v$ appears
  before $w$ on some path $Q\in \CQ$. For $v\dagle^{\CQ} w$, we
  define the \emph{segment} $\CQ[v,w]$ to be the set of segments
  $vQw$ from $v$ to $w$ of all paths $Q\in\CQ$ that contain both $v$
  and $w$. Observe that $\CQ[v,w]$ is an sps from $v$ to $w$.

  \begin{lemma}[\cite{gro17}, Lemma~15.2.3]\label{lem:sps2}
  Let $\CQ$ be an sps. Then $\CQ$ is non-trivial and has no
  proper articulation vertices if and only if the graph $G(\CQ)$ is 2-connected.
\end{lemma}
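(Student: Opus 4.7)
The approach rests on the layered structure of shortest paths. For any $w\in V(\CQ)$, the height $\height^{\CQ}(w)=\dist(u,w)$ equals the position of $w$ on any path $Q\in\CQ$ through $w$, so every edge of $G(\CQ)$ joins vertices at consecutive heights. Consequently, any walk in $G(\CQ)$ from a vertex of height strictly below $h$ to one of height strictly above $h$ must visit some vertex of height exactly $h$. I will use repeatedly the elementary fact that a shortest path contains exactly one vertex per height.

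For the direction ``$G(\CQ)$ 2-connected $\Rightarrow$ $\CQ$ non-trivial with no proper articulation vertex'', non-triviality is immediate from $|V(\CQ)|\ge 3$. For the second half, suppose for contradiction that a proper articulation vertex $v$ exists, so $v\notin\{u,u'\}$ and $v$ lies on every $Q\in\CQ$; in particular $0<\height(v)<\dist(u,u')$. If some $w\ne v$ had $\height(w)=\height(v)$, then any $Q_w\in\CQ$ containing $w$ would also contain $v$, forcing two distinct vertices at the same position on $Q_w$ --- impossible. Hence $v$ is the unique vertex of its height, and deleting $v$ separates $u$ (height $0$) from $u'$ (height $\dist(u,u')$), contradicting 2-connectivity.

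For the converse, non-triviality gives $|V(\CQ)|\ge 3$, and $G(\CQ)$ is connected as a union of $u$-to-$u'$ paths. I will verify that $G(\CQ)\setminus\{v\}$ stays connected for every $v\in V(\CQ)$. If $v\in\{u,u'\}$, every other vertex lies on some $Q\in\CQ$, and $Q\setminus\{v\}$ connects it to the remaining endpoint. If $v\notin\{u,u'\}$, then by hypothesis $v$ is not an articulation vertex, so some $Q_0\in\CQ$ avoids $v$ and connects $u$ and $u'$ in $G(\CQ)\setminus\{v\}$. For any further $x\in V(\CQ)\setminus\{v\}$, pick any $Q\in\CQ$ through $x$: if $\height(x)<\height(v)$, the initial segment $uQx$ uses only heights $\le\height(x)$ and thus avoids $v$; symmetrically if $\height(x)>\height(v)$, the segment $xQu'$ avoids $v$; and if $\height(x)=\height(v)$, the one-vertex-per-height principle forces $Q$ itself to avoid $v$, so $x$ is connected to both $u$ and $u'$ via $Q$.

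The main obstacle is handling vertices $x$ at the same height as $v$ in the converse direction, which might at first appear to be trapped on paths through $v$. The resolution --- that a single shortest path cannot accommodate both $x$ and $v$ at height $\height(v)$ --- is precisely the same principle that, in the forward direction, forces $v$ to be the unique vertex at its height. Once this symmetry is recognised, both directions collapse to routine case analyses on the height of $v$.
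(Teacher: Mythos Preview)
Your proof is correct. The paper does not supply its own proof of this lemma; it merely cites \cite{gro17}, Lemma~15.2.3, so there is no in-paper argument to compare against. Your height-based case analysis is exactly the natural approach and matches the standard reasoning: the key observation in both directions is that a shortest path from $u$ to $u'$ contains precisely one vertex at each height, so a proper articulation vertex must be the unique vertex at its height (yielding a cut vertex), while conversely any vertex $v\notin\{u,u'\}$ that is missed by some $Q_0\in\CQ$ can be routed around using $Q_0$ together with height-monotone segments of other paths.
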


\begin{lemma}[\cite{gro17}, Lemma~15.2.4]\label{lem:sps2a}
  Let $\CQ$ be a non-trivial sps that has no proper
  articulation vertices. Then there are internally disjoint paths $Q,Q'\in\CQ$.
\end{lemma}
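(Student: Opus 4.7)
The plan is to reduce the lemma to the vertex version of Menger's theorem in a layered DAG. The key observation is that $G(\CQ)$ carries a natural layered structure given by the height function: since every edge of $G(\CQ)$ lies on some shortest $u$-$u'$ path in $\CQ$, its two endpoints have heights $\height^\CQ$ differing by exactly $1$. Orienting each edge from its lower-height to its higher-height endpoint therefore produces a directed acyclic graph $D$ on $V(\CQ)$ whose directed $u$-$u'$ paths are precisely the strictly height-increasing $u$-$u'$ walks in $G(\CQ)$. Such walks have length $\height^\CQ(u')$ and hence are shortest $u$-$u'$ paths in $G(\CQ)$; by the defining closure property of an sps, they are in turn precisely the elements of $\CQ$.

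Under this correspondence the two hypotheses translate cleanly into properties of $D$. Non-triviality of $\CQ$ gives $|V(\CQ)| \geq 3$, which forces $\height^\CQ(u') \geq 2$ and in particular that $u$ and $u'$ are non-adjacent in $D$. Absence of a proper articulation vertex means that for every $w \in V(\CQ) \setminus \{u, u'\}$ some path in $\CQ$ avoids $w$, so $\{w\}$ is not a $u$-$u'$ vertex separator of $D$; equivalently, every $u$-$u'$ vertex separator of $D$ contains at least two vertices.

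Applying the directed, vertex version of Menger's theorem to the non-adjacent pair $(u,u')$ in $D$ then yields two internally vertex-disjoint directed $u$-$u'$ paths $Q, Q'$ in $D$, which under the correspondence above are elements of $\CQ$ satisfying $V(Q) \cap V(Q') = \{u, u'\}$. This is exactly the conclusion of the lemma.

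I do not anticipate a serious obstacle: the proof is essentially bookkeeping once the layered DAG viewpoint is in place. The one point that needs attention is verifying the non-adjacency of $u$ and $u'$ in $D$ before invoking vertex-Menger, which is precisely what the non-triviality hypothesis guarantees. An alternative route would be to first invoke Lemma~\ref{lem:sps2} to conclude that $G(\CQ)$ is $2$-connected and then use undirected Menger, but this would only deliver internally disjoint paths in $G(\CQ)$, which are not automatically shortest; the DAG orientation above avoids this mismatch.
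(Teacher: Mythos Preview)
The paper does not give its own proof of this lemma; it is simply cited from \cite{gro17}, so there is nothing to compare against. Your argument via the height-layered DAG and directed vertex-Menger is correct: the correspondence between directed $u$-$u'$ paths in $D$ and members of $\CQ$ is exactly what the sps closure property guarantees, non-triviality forces $\dist(u,u')\ge 2$ so that $u,u'$ are non-adjacent in $D$, and the absence of proper articulation vertices rules out singleton separators, so Menger yields the two internally disjoint paths in $\CQ$.
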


While shortest paths systems are defined with respect to abstract
graphs, the following notions are defined with respect to embedded
graphs. For the rest of the section, we make the following assumption.

\begin{assumption}\label{ass:polyhedral}
  $G$ is a graph polyhedrally embedded in a surface $\FS$ of Euler
  genus $g\ge 1$.
\end{assumption}

\begin{definition}\label{def:patch}
  A \emph{patch} in $G$ is an sps $\CQ$ in $G$ such that:
  \begin{eroman}
  \item $\CQ$ has no proper articulation vertices.
  \item There is a closed disk
    $\FD\subseteq\FS$ such that $\FCQ\subseteq\FD$.\uend
  \end{eroman}
\end{definition}

 Fact~\ref{fact:disk-or-nc} and Lemma~\ref{lem:sps2} imply that if $\CQ$ is a non-trivial patch then there is a unique disk
$\FD(\CQ)$ such that $\FCQ\subseteq\FD$ and $\Fbd\big(\FD(\CQ)\big)=\FC(\CQ)$
for a cycle $C(\CQ)\subseteq G(\CQ)$. Furthermore, $C(\CQ)=Q\cup Q'$ for two paths $Q,Q'\in\CQ$.

\begin{definition}\label{def:simplifying}
  A subgraph $H\subseteq G$
  is \emph{simplifying} if every
    connected component of $G\setminus H$ belongs to
    $\CE_{g-1}$.

  A patch $\CQ$ is \emph{simplifying} if the graph $G(\CQ)$ is simplifying.\uend
\end{definition}

\begin{lemma}[\cite{gro17}, Corollary 15.3.5]\label{fact:astar}
  For a non-simplifying subgraph $H\subseteq G$, there is at most one
  connected component $A^*$ of $G\setminus H$ with
  $A^* \notin \CE_{g-1}$, and all other connected components are planar.
\end{lemma}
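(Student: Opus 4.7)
The plan is to show that the ``bad'' component $A^*$ must be \emph{2-cell embedded} in $\FS$ (every face of $A^*$ is an open disk) and then conclude that every other component of $G \setminus H$ is trapped inside an open disk and hence planar.

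First, I would pin down $\eg(A^*)$. Since $A^*$ is a subgraph of the graph $G$ that is embedded in $\FS$, the embedding restricts to an embedding of $A^*$; hence $\eg(A^*) \le \eg(\FS) = g$. The hypothesis $A^* \notin \CE_{g-1}$ forces $\eg(A^*) \ge g$, so $\eg(A^*) = g$.

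Second, I claim that every face of $A^*$ in the embedding inherited from $G$ is an open disk. Suppose for contradiction that some face $\Ff$ of $A^*$ in $\FS$ is not an open disk. Then $\Ff$ is a connected open surface with $\chi(\Ff) \le 0$, and one can find a simple closed curve $\Fg \subseteq \Ff$ whose removal strictly reduces the topological complexity of $\Ff$ (either $\Fg$ is non-contractible in $\FS$, or $\Ff$ has more than one boundary component and $\Fg$ separates them). Cutting $\FS$ along $\Fg$ and capping the resulting boundary circle(s) with disks produces a surface $\FS'$ with $\eg(\FS') < \eg(\FS) = g$ in which $A^*$ still embeds, contradicting $\eg(A^*) = g$. (This is the standard equivalence ``$\eg(A^*)$ equals the minimum Euler genus over 2-cell embeddings of $A^*$'', underlying Euler's inequality $|V(A^*)| - |E(A^*)| + |F| \ge \chi(\FS)$ with equality iff every face is a disk.)

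Third, let $A$ be any component of $G \setminus H$ distinct from $A^*$. Since $A$ and $A^*$ are vertex-disjoint subgraphs of $G$, and edges of the embedded graph $G$ meet only at common endpoints, $\FA$ and $\FA^*$ are disjoint subsets of $\FS$. As $A$ is connected, $\FA$ lies in a single arcwise connected component of $\FS \setminus \FA^*$, i.e., in a single face of $A^*$. By Step~2 that face is an open disk, homeomorphic to $\mathbb R^2$, so $A$ is planar.

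Finally, since $g \ge 1$, every planar graph lies in $\CE_0 \subseteq \CE_{g-1}$; hence no component of $G\setminus H$ other than $A^*$ can fail to lie in $\CE_{g-1}$, giving both the uniqueness of $A^*$ and the planarity of the remaining components. The main technical obstacle is the ``all faces are disks'' step: while intuitive from Euler's formula, the rigorous surgery argument requires a small case distinction on the topological type of a non-disk face (annulus, one-holed torus, Möbius band, etc.), which is a standard application of the classification of compact surfaces with boundary and can be quoted directly from \cite{mohtho01} or \cite[Chapter~9]{gro17}.
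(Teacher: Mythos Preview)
Your argument is correct and is essentially the standard proof of this fact. Note, however, that the paper does not give its own proof of this lemma; it is simply cited as \cite[Corollary~15.3.5]{gro17}. So there is nothing in the paper to compare against.

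One small point on Step~2: faces of $A^*$ in the closed surface $\FS$ are open subsurfaces \emph{without} boundary, so the phrase ``$\Ff$ has more than one boundary component and $\Fg$ separates them'' is a bit loose. The clean version of the argument is: if $\Ff$ is not an open disk then $\Ff$ is not simply connected (a simply connected open subsurface of a closed surface is either $\mathbb R^2$ or all of $\FS\cong\FS_0$), so there is a simple closed curve $\Fg\subseteq\Ff$ that bounds no disk in $\Ff$; if $\Fg$ were contractible in $\FS$ it would bound a disk $\FD\subseteq\FS$, and since $A^*$ is connected and disjoint from $\Fg$ one sees $\Fint(\FD)\subseteq\Ff$, contradicting the choice of $\Fg$. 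Hence $\Fg$ is non-contractible in $\FS$, and cutting along it and capping produces a surface of Euler genus at most $g-1$ still containing $A^*$, contradicting $\eg(A^*)=g$. This is exactly what you intended, and you already flag the need for care here; I only mention it because the ``more than one boundary component'' alternative is never actually needed.
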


It turns out that non-simplifying patches form the basic building blocks of our theory.
Let
$\CQ$ be a non-trivial non-simplifying path. Let $A^*$ be the unique
connected of $G\setminus V(\CQ)$ that is not planar (the existence and
uniqueness of $A^*$ follow from Lemma~\ref{fact:astar}). Let $G/A^*$
be the graph obtained from $G$ by contracting the subgraph $A^*$ to a
single vertex $a^*$. By
\cite[Corollary~15.4.5]{gro17}, $G/A^*$ is a 3-connected planar graph.
Figure \ref{fig:astar} displays a schematic view of a patch $\CQ$ with
some attached (planar) connected components as well as the non-planar
component $A^*$, the disk $\FD(\CQ)$, and the
boundary cycle $\FC(\CQ)$. 

\begin{figure}
  \centering
  \includegraphics[width=0.45\textwidth]{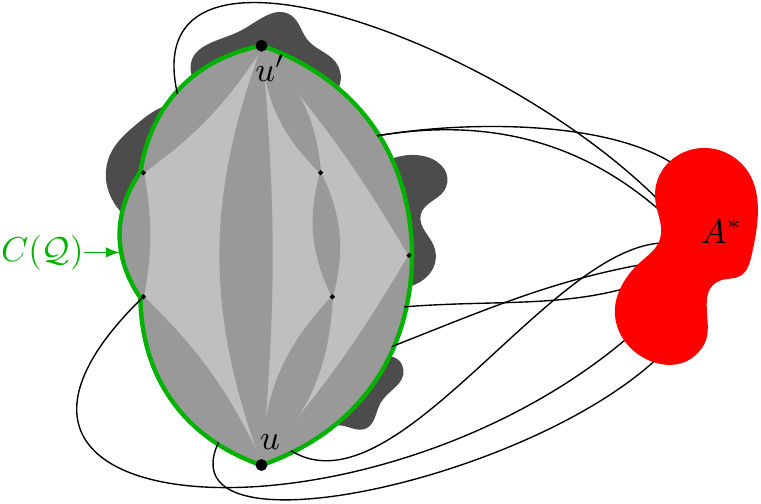}
  \hfill
  \includegraphics[width=0.45\textwidth]{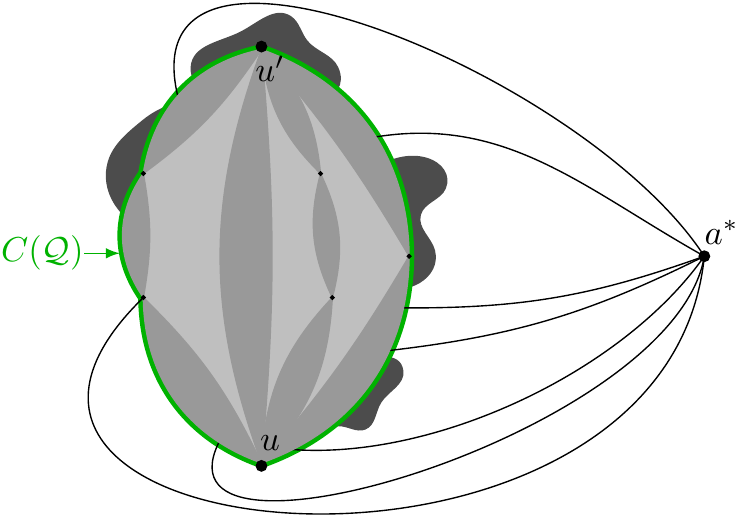}
  \caption{Left: A patch $\CQ$ with non-planar component $A^*$ and
    cycle $C(\CQ)$. The curve $\FC$ is the boundary of $\FD$. Right: the (planar)
    factor graph $\CQ/A^*$.}\label{fig:astar}
\end{figure}

We
define the \emph{internal graph} of a non-trivial patch $\CQ$ to be the graph
$I\coloneqq I(\CQ)$ with vertex set $V(I)\coloneqq V(G)\cap\FD(\CQ)$ and edge set
$E(I)\coloneqq\{\Fe\in E(G)\mid \Fe\subseteq\FD(\CQ)\}$. Note that
$C(\CQ)\subseteq I$. Formally, the definitions of the graphs $C(\CQ)$ and $I(\CQ)$ do not only depend on the abstract graph $G$ and the sps $\CQ$, but on the
embedding of $G$ in $\FS$. However, it can be proved that actually the
graphs are invariant under embeddings.

\begin{lemma}[\cite{gro17}]\label{lem:ns-abstract}
  Let $\CQ$ be a non-simplifying patch in $G$. Let $G'$ be a graph
  embedded in a surface $\FS'$ of Euler genus $g$ such that $G$ and
  $G'$ are isomorphic (as abstract graphs), and let $f$ be an
  isomorphism from $G$ to $G'$. Then $\CQ'\coloneqq f(\CQ')$ is a
  non-simplifying patch in $G'$, and it holds that $f\big(C(\CQ)\big)=C(\CQ')$
  and $f\big(I(\CQ)\big)=I(\CQ')$.
\end{lemma}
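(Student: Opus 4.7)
The plan is to reduce the topological content of the lemma to the abstract graph structure via the uniqueness-of-embedding theorem for polyhedrally embedded graphs, and then to push the definitions of $\FD(\CQ)$, $C(\CQ)$, and $I(\CQ)$ forward under the induced homeomorphism. First I would observe that $\CQ' \coloneqq f(\CQ)$ inherits from $\CQ$ all of its purely combinatorial properties: since $f$ is a graph isomorphism, it preserves graph-theoretic distances and therefore maps shortest paths to shortest paths, so $f(\CQ)$ is an sps in $G'$ from $f(u)$ to $f(u')$ with no proper articulation vertices. The non-simplifying property is defined purely in terms of the Euler genera of the connected components of $G\setminus V(\CQ)$, an abstract notion that is preserved by $f$. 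What remains is therefore to prove that $\CQ'$ is a patch in $G'$ with respect to its embedding in $\FS'$, and that the embedding-dependent constructions $C(\cdot)$ and $I(\cdot)$ commute with $f$.

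For this step I would invoke the Robertson--Vitray uniqueness theorem \cite{robvit90}: since $G$ is polyhedrally embedded in $\FS$ (i.e., $G$ is 3-connected and $\rho(G)\geq 3$), the embedding of $G$ in $\FS$ is unique up to homeomorphism. Because $G'$ is isomorphic to $G$ via $f$ and is embedded in the surface $\FS'$ of the same Euler genus, uniqueness yields a homeomorphism $h\colon \FS\to\FS'$ such that $h(v)=f(v)$ for every $v\in V(G)$ and $h$ maps each edge-curve $\Fe\in E(G)$ homeomorphically onto the edge-curve $f(\Fe)\in E(G')$; in particular, $h(\FG)=\FG'$.

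Given $h$, I would then push the relevant objects forward. By Definition~\ref{def:patch} and the discussion following it, there is a unique closed disk $\FD(\CQ)\subseteq\FS$ with $\FCQ\subseteq\FD(\CQ)$ and $\Fbd(\FD(\CQ))=\FC(\CQ)$, where $C(\CQ)=Q_1\cup Q_2$ for some $Q_1,Q_2\in\CQ$. The image $h(\FD(\CQ))$ is a closed disk in $\FS'$ that contains $h(\FCQ)$, which is the point set of $G'(\CQ')$, with boundary $h(\FC(\CQ))$, which is the point set of the cycle $f(Q_1)\cup f(Q_2)\subseteq G'(\CQ')$. This shows that $\CQ'$ is a patch in $G'$, and the uniqueness clause of that definition forces $\FD(\CQ')=h(\FD(\CQ))$ and $C(\CQ')=f(C(\CQ))$. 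For the internal graph, $V(I(\CQ))=V(G)\cap\FD(\CQ)$ and $E(I(\CQ))=\{\Fe\in E(G)\mid\Fe\subseteq\FD(\CQ)\}$; applying $h$ and using that it agrees with $f$ on vertices and edge-curves gives $f(V(I(\CQ)))=V(G')\cap\FD(\CQ')=V(I(\CQ'))$ and, analogously, $f(E(I(\CQ)))=E(I(\CQ'))$, whence $f(I(\CQ))=I(\CQ')$.

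The main obstacle is producing the surface homeomorphism $h$ realizing $f$: for an arbitrary isomorphism of embedded graphs there is no reason to expect such an $h$ to exist, and it is precisely the polyhedrality hypothesis on $G$ (3-connectedness plus representativity at least 3) that, via Robertson--Vitray, rules out alternative embeddings and thereby guarantees a unique lift of $f$ to a homeomorphism of surfaces. A secondary subtlety is that $\FS$ and $\FS'$ are a priori only known to have the same Euler genus; however, polyhedral embeddability of a 3-connected graph pins down whether the ambient surface is orientable or not, so $\FS\cong\FS'$ as surfaces, making the existence of $h$ meaningful.
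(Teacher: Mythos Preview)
Your approach is correct but takes a different route from the paper. The paper does not give a self-contained proof; it cites \cite[Lemma~15.4.10]{gro17} and indicates that the argument goes via the 3-connected \emph{planar} quotient $G/A^*$: since $A^*$ is the unique non-planar component of $G\setminus V(\CQ)$ (a purely combinatorial condition, by Lemma~\ref{fact:astar}), the isomorphism $f$ carries $A^*$ to the corresponding component of $G'\setminus V(\CQ')$ and hence induces an isomorphism of 3-connected planar graphs $G/A^*\to G'/\hat A^*$. Whitney's theorem gives these a unique spherical embedding, and $C(\CQ)$ and $I(\CQ)$ can be read off from that embedding without ever comparing the embeddings of $G$ in $\FS$ and $G'$ in $\FS'$ directly.

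Your global argument via a surface homeomorphism $h\colon\FS\to\FS'$ is cleaner once $h$ is in hand, but producing $h$ requires the full strength of the Robertson--Vitray machinery: not merely that polyhedral embeddings are unique \emph{among polyhedral embeddings}, but that the polyhedral embedding of $G$ in $\FS$ is equivalent to \emph{every} embedding of $G$ in \emph{any} surface of Euler genus $g$ (in particular forcing $\FS'\cong\FS$). You correctly flag both the embedding-uniqueness and the orientability subtleties, so the argument is sound. The paper's route via $G/A^*$ and Whitney is more elementary, avoids these global questions entirely, and dovetails with the $G/A^*$ machinery already in place for Lemma~\ref{lem:disk-non-def}; that is presumably why it is preferred in \cite{gro17}.
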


This follows from \cite[Lemma~15.4.10]{gro17}. Intuitively, the reason
this holds is that the 3-connected planar graph $G/A^*$ has a unique
embedding (see Section~\ref{subsec:topo}).

\begin{corollary}\label{cor:Q-invariance}
  Let $u,u'\in V(G)$ and $\CQ\coloneqq\CQ^G(u,u')$ such that $\CQ$ is
  a non-trivial non-simplifying patch. Let $f$ be an automorphism of $G$
  such that $f(u)=u$ and $f(u')=u'$. Then $f\big(C(\CQ)\big)=C(\CQ)$
  and $f\big(I(\CQ)\big)=I(\CQ)$.
\end{corollary}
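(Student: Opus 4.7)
The plan is to derive this corollary as an essentially immediate specialisation of Lemma~\ref{lem:ns-abstract}. The key observation is that the canonical sps $\CQ^G(u,u')$ is defined purely in terms of the abstract graph $G$ and the two vertices $u,u'$: it is the collection of \emph{all} shortest $u$-$u'$ paths in $G$. Therefore any automorphism of $G$ that fixes both endpoints must permute this set.

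First, I would verify that $f(\CQ) = \CQ$. Since $f$ is an automorphism of $G$, it preserves distances, so $f$ maps any shortest path from $u$ to $u'$ to a shortest path from $f(u)=u$ to $f(u')=u'$, i.e.\ to a member of $\CQ^G(u,u') = \CQ$. Applying the same reasoning to $f^{-1}$ gives the reverse inclusion, so $f(\CQ) = \CQ$ as a family of paths.

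Next, I would apply Lemma~\ref{lem:ns-abstract} in the special case $G' = G$, $\FS' = \FS$, with the isomorphism taken to be the automorphism $f$ itself. The hypotheses of that lemma are met (both embeddings live in the same surface of Euler genus $g$, and $\CQ$ is non-simplifying by assumption), so the lemma yields that $f(\CQ)$ is a non-simplifying patch in $G$ with $f\bigl(C(\CQ)\bigr) = C\bigl(f(\CQ)\bigr)$ and $f\bigl(I(\CQ)\bigr) = I\bigl(f(\CQ)\bigr)$. Substituting $f(\CQ) = \CQ$ from the previous step gives the two equalities claimed in the corollary.

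There is no real obstacle here: all the genuine content -- the fact that the embedded structures $C(\CQ)$ and $I(\CQ)$ depend only on the abstract data, which ultimately rests on the uniqueness of the embedding of the $3$-connected planar quotient $G/A^*$ -- is packaged into Lemma~\ref{lem:ns-abstract}. The only thing to add is the trivial combinatorial remark that the \emph{canonical} sps between two fixed vertices is invariant under automorphisms fixing those vertices, which is what lets us apply the abstract-graph invariance lemma to the automorphism~$f$.
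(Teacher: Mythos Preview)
Your proposal is correct and matches the paper's intent: the corollary is stated without proof immediately after Lemma~\ref{lem:ns-abstract}, and your argument---observing that the canonical sps $\CQ^G(u,u')$ is automorphism-invariant once $u,u'$ are fixed, then specialising Lemma~\ref{lem:ns-abstract} to $G'=G$, $\FS'=\FS$, $f$ an automorphism---is exactly the intended derivation.
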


We remark that the analogue of Corollary~\ref{cor:Q-invariance} for
simplifying patches does not hold. (Figure~\ref{fig:simp-patch} in Section \ref{subsec:simplifying} shows an example.) The analysis of simplifying patches is
much more involved, and we defer it to Section~\ref{subsec:simplifying}.

The final objects we define in this section are \emph{necklaces}.

\begin{definition}\label{def:necklace}
 A \emph{necklace} in $G$ is a tuple $\CB \coloneqq (u^0,\CQ^0,u^1,\mathcal{Q}^1,u^2,\CQ^2)$,
  where $u^0,u^1,u^2\in V(G)$ and $\CQ^i \coloneqq \CQ^G(u^{i},u^{i+1})$
  (indices taken modulo 3) is the canonical sps
  from $u^{i}$ to $u^{i+1}$, such that the following
  conditions are satisfied:
  \begin{enumerate}
  \item $u^0,u^1,u^2$ are pairwise distinct.\label{li:b1}
  \item
    $V(\CQ^{i})\cap V(\CQ^{i+1})=\{u^{i+1}\}$ (indices
    modulo 3).\label{li:b2}
  \item There is a disk $\FD_i\subseteq \FS$ such that $\mathbf{G(\CQ^i)}\subseteq \FD_i$.\label{li:b3} \uend
  \end{enumerate}
\end{definition}

For a necklace $\CB \coloneqq (u^0, \CQ^0, u^1, \CQ^1, u^2, \CQ^2)$ we
write $V(\CB)$ for the set $\bigcup_{i=0}^2 \bigcup_{Q
  \in \CQ^i} V(Q)$ and $E(\CB)$ for $\bigcup_{i=0}^2
\bigcup_{Q \in \CQ^i} E(Q)$, and we let
$G(\CB)\coloneqq \big(V(\CB),E(\CB)\big)$. Moreover, we define the set of
\emph{articulation vertices} of $\CB$ to be $\art(\CB) \coloneqq \{u_0,u_1,u_2\} \cup \bigcup_{i=0} ^2 \art(\CQ^i)$.

\begin{definition}\label{def:necklace-nc}
  A necklace $\CB \coloneqq (u^0,\CQ^1,u^1,\CQ^2,u^2,\CQ^3)$ is \emph{reducing} if
  there are paths $Q^i\in\CQ^i$ such that $B\coloneqq Q^1\cup Q^2\cup Q^3$ is a non-contractible cycle. \uend
\end{definition}

Figure \ref{fig:necklace} shows a reducing necklace on a torus with articulation
vertices $u^0$, $u^0_1$, $u^1$, $u^2$. 

\begin{figure}
  \centering
  \includegraphics[width=0.5\textwidth]{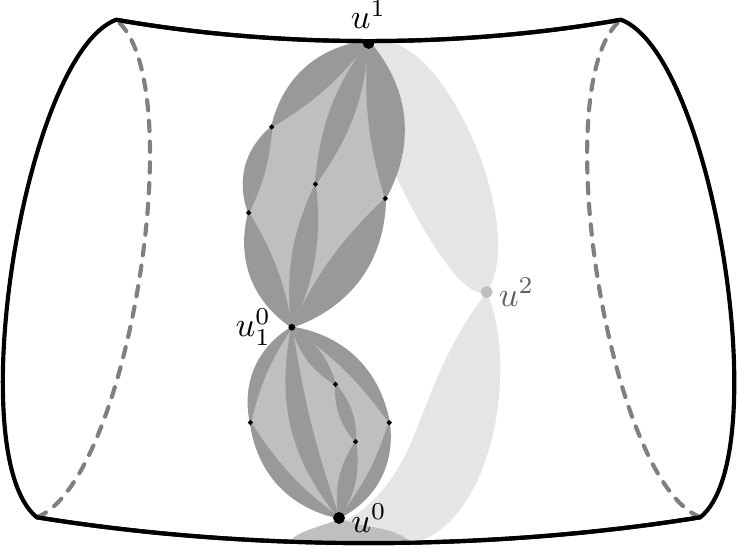}
  \caption{A reducing necklace on a torus section.}\label{fig:necklace}
\end{figure}

\begin{lemma}[Necklace Lemma]\label{lem:exnecklace}
  $G$ has a reducing necklace.
\end{lemma}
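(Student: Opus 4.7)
The plan is to construct the necklace from a shortest non-contractible cycle $C$ in $G$, whose existence is guaranteed by Fact~\ref{fact:disk-or-nc} since $g\ge 1$. Pick three vertices $u^0,u^1,u^2$ on $C$ dividing it into arcs $P^0,P^1,P^2$ with lengths $a^0,a^1,a^2\in\{\lfloor|C|/3\rfloor,\lceil|C|/3\rceil\}$, and set $\CQ^i\coloneqq\CQ^G(u^i,u^{i+1})$. Distinctness of the $u^i$ (the first clause of Definition~\ref{def:necklace}) is immediate since $|C|\ge\rho(G)\ge 3$. The overall strategy is to exploit the minimality of $C$ via the slogan that any closed walk on $\FS$ shorter than $|C|$ must be contractible, since otherwise it would contain a shorter non-contractible cycle.

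The first sub-claim I would check is that each arc $P^i$ is itself a shortest path from $u^i$ to $u^{i+1}$ in $G$, so that $P^i\in\CQ^i$: if some strictly shorter path $P$ existed, then $P\cup P^i$ would be a closed walk of length at most $2\lceil|C|/3\rceil<|C|$, hence contractible, which forces $P\simeq P^i$ relative to the endpoints; replacing $P^i$ by $P$ in $C$ would then yield a strictly shorter non-contractible cycle --- contradiction. This simultaneously supplies the reducing property, since $C=P^0\cup P^1\cup P^2$ meets Definition~\ref{def:necklace-nc}. For the disk condition (third clause of Definition~\ref{def:necklace}), any cycle in $G(\CQ^i)$ lies in the union of two shortest paths from $u^i$ to $u^{i+1}$ and so has length at most $2\lceil|C|/3\rceil<|C|$, making it contractible by the minimality of $C$; Fact~\ref{fact:disk-or-nc} then places $G(\CQ^i)$ inside a closed disk.

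The main obstacle is the second clause of Definition~\ref{def:necklace}, $V(\CQ^i)\cap V(\CQ^{i+1})=\{u^{i+1}\}$. Assume towards a contradiction that some $v\ne u^{i+1}$ lies in both sets, and set $t\coloneqq\dist(v,u^{i+1})>0$. Choose witnessing shortest paths $P\in\CQ^i$ and $P'\in\CQ^{i+1}$ through $v$; let $R^i$ be the sub-path of $P$ from $u^i$ to $v$, and $R^{i+1}$ the sub-path of $P'$ from $v$ to $u^{i+2}$. The sub-path of $P$ from $v$ to $u^{i+1}$ and the sub-path of $P'$ from $u^{i+1}$ to $v$ together form a closed walk of length $2t<|C|$, hence contractible; the analogous contractibility of $P\cup P^i$ and $P'\cup P^{i+1}$ gives $P\simeq P^i$ and $P'\simeq P^{i+1}$ rel endpoints. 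Splicing these homotopies yields $R^i\cup R^{i+1}\simeq P^i\cup P^{i+1}$ relative to $\{u^i,u^{i+2}\}$, so $R^i\cup R^{i+1}\cup P^{i+2}$ is a non-contractible closed walk of length $|C|-2t<|C|$, contradicting the minimality of $C$. The hard part is exactly this homotopy bookkeeping --- verifying that several a priori unrelated short closed walks land in a common homotopy class --- together with checking that the strict inequalities do not degenerate in the corner cases where $|C|$ is very small.
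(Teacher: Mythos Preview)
Your construction matches the paper's --- a shortest non-contractible cycle $C$ split into three near-equal arcs --- and your homotopy framework (closed walks of length $<\|C\|$ are null-homotopic, so equal-length paths between the same endpoints are homotopic rel endpoints) is a legitimate and arguably cleaner substitute for the paper's explicit path-surgery arguments. Your proof that each $P^i$ is shortest and your disjointness argument for Condition~\ref{li:b2} both go through with this slogan; the paper reaches the same conclusions more combinatorially, repeatedly invoking Lemma~\ref{lem:3curves} to swap segments of $C$ one at a time rather than working in $\pi_1(\FS)$.

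The one genuine gap is in your verification of the disk condition~\ref{li:b3}. The claim that ``any cycle in $G(\CQ^i)$ lies in the union of two shortest paths from $u^i$ to $u^{i+1}$'' is false: edges of $G(\CQ^i)$ connect consecutive height levels, and a cycle can zigzag between two intermediate levels using edges drawn from many different shortest paths, so its length is not bounded by $2a^i$ and may well exceed $\|C\|$. The paper closes this gap via Lemma~\ref{lem:exnecklace1}, a non-trivial external result that reduces the disk question to a single pair of internally disjoint paths $Q,Q'\in\CQ^i$, for which $\|Q\cup Q'\|\le 2a^i$ does hold. Your framework can also be repaired without that lemma: since any two paths in $\CQ^i$ are homotopic rel endpoints, fix for each $v\in V(\CQ^i)$ a reference path $\alpha_v$ from $u^i$ of length $\height(v)$; then each edge $e=vw$ of $G(\CQ^i)$ satisfies $[\alpha_v\cdot e]=[\alpha_w]$ in the fundamental groupoid (the loop $\alpha_v\cdot e\cdot\alpha_w^{-1}$ has length $\le 2a^i<\|C\|$), and any cycle in $G(\CQ^i)$ telescopes to the trivial class in $\pi_1(\FS)$, after which Fact~\ref{fact:disk-or-nc} applies. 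But this step is missing; the two-path claim as you wrote it does not hold. Your closing caution about degenerate inequalities is also on point: when $\|C\|=4$ one has $2a^i=\|C\|$ for the long arc, and then the opposite arc $P^{i+1}\cup P^{i+2}$ is itself a shortest $u^i$--$u^{i+1}$ path through $u^{i+2}$, so Condition~\ref{li:b2} fails outright --- a boundary case the paper's proof also glosses over.
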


Essentially, this is \cite[Lemma~15.5.8]{gro17}, with the necklaces corresponding to the \emph{belts} there. But since apart from a renaming, we have also
slightly changed the content of the definition of a necklace/belt, the proof also needs to be
adapted. For the proof of the Necklace Lemma, we need one well-known fact
and more complicated lemma from \cite{gro17}.

\begin{lemma}\label{lem:3curves}
  Let $\FS$ be a surface, and let $\Fg_1,\Fg_2,\Fg_3\subseteq\FS$ be simple
  curves with the same endpoints and mutually disjoint interiors. Then
   $\Fg_1\cup\Fg_2$, $\Fg_2\cup\Fg_3$, and  $\Fg_1\cup\Fg_3$ are
   simple closed curves, and if $\Fg_1\cup\Fg_2$
   and $\Fg_2\cup\Fg_3$ are contractible, then $\Fg_1\cup\Fg_3$ is
   contractible as well.
\end{lemma}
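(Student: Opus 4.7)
The first statement is immediate, so the plan focuses on the contractibility claim. I would fix closed disks $\FD_{12},\FD_{23}\subseteq\FS$ witnessing contractibility of $\Fg_1\cup\Fg_2$ and $\Fg_2\cup\Fg_3$, respectively, and then argue by a case split based on where $\Fg_3$ sits relative to $\FD_{12}$.

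Let $a,b$ denote the common endpoints. Since $\Fg_3\setminus\{a,b\}$ is connected and disjoint from $\Fbd(\FD_{12})=\Fg_1\cup\Fg_2$, it lies entirely in one of the two components $\Fint(\FD_{12})$ and $\FS\setminus\FD_{12}$ of $\FS\setminus\Fbd(\FD_{12})$. In the first case $\Fg_3$ is a chord of $\FD_{12}$, and the standard Jordan-curve-type fact for disks splits $\FD_{12}$ into two closed sub-disks bounded by $\Fg_1\cup\Fg_3$ and $\Fg_2\cup\Fg_3$; the former witnesses contractibility of $\Fg_1\cup\Fg_3$, and $\FD_{23}$ is never used.

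Otherwise $\Fg_3\setminus\{a,b\}$ lies outside $\FD_{12}$, and I would turn to $\FD_{23}$. Since $\Fbd(\FD_{23})\cap\Fint(\FD_{12})=\emptyset$, the connected set $\Fint(\FD_{12})$ sits in one of the two components $\Fint(\FD_{23})$ and $\FS\setminus\FD_{23}$ of $\FS\setminus\Fbd(\FD_{23})$. In the sub-case $\Fint(\FD_{12})\cap\FD_{23}=\emptyset$ a short computation gives $\FD_{12}\cap\FD_{23}=\Fg_2$, and Fact~\ref{fact:disks} yields that $\FD_{12}\cup\FD_{23}$ is a closed disk whose boundary one identifies with $\Fg_1\cup\Fg_3$ by checking that points of $\Fg_1$ and $\Fg_3$ admit neighbourhoods meeting the complement of the union, while points of $\Fint(\Fg_2)$ do not. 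In the sub-case $\FD_{12}\subseteq\FD_{23}$, one verifies that $\Fg_1\setminus\{a,b\}\subseteq\Fint(\FD_{23})$, so $\Fg_1$ is a chord of $\FD_{23}$ splitting it into two sub-disks bounded by $\Fg_1\cup\Fg_2$ and $\Fg_1\cup\Fg_3$, respectively.

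The only non-trivial ingredient is the \emph{chord-splits-disk} fact: a simple curve inside a closed disk whose endpoints lie on the boundary and whose interior lies in the open disk divides the disk into two closed sub-disks. I plan to invoke this as a standard consequence of the Jordan curve theorem for the $2$-sphere applied after collapsing the exterior of $\FD$ to a point, rather than reprove it here. Everything else is topological bookkeeping: the recurring move is to use that $\Fg_i\setminus\{a,b\}$ is connected and disjoint from the boundary of each disk formed from the other two curves, which pins down on which side of that disk it must lie.
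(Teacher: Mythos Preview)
Your argument is correct. The paper itself does not prove this lemma at all; it merely cites \cite[Proposition~4.3.1]{mohtho01} and moves on. Your direct topological proof---case-splitting on the position of $\Fg_3$ relative to $\FD_{12}$, and then, in the outside case, on the position of $\Fint(\FD_{12})$ relative to $\FD_{23}$---is a standard way to establish this three-path lemma and is essentially how such arguments run in the cited source. The one step that would benefit from an extra sentence when written out in full is the identification $\Fbd(\FD_{12}\cup\FD_{23})=\Fg_1\cup\Fg_3$ after invoking Fact~\ref{fact:disks}: the point is that near any interior point of $\Fg_2$ the two disks abut from opposite sides (this uses $\Fint(\FD_{12})\cap\FD_{23}=\emptyset$), so $\Fg_2\setminus\{a,b\}$ lies in the interior of the union, while $\Fg_1\cap\FD_{23}=\{a,b\}$ and $\Fg_3\cap\FD_{12}=\{a,b\}$ ensure that $\Fg_1$ and $\Fg_3$ remain boundary points. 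Your sketch already indicates this; it just needs to be made explicit.
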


For a proof, see \cite[Proposition~4.3.1]{mohtho01}.

\begin{lemma}[\cite{gro17}, Lemma~15.5.9]\label{lem:exnecklace1}
  Let $\CQ$ be an sps in $G$ such that there is no disk
  $\FD\subseteq\FS$ with ${\FCQ}\subseteq\FD$, but for every
  proper segment $\CQ'$ of $\CQ$ there is a disk $\FD'\subseteq\FS$
  with $\mathbf{G(\CQ')}\subseteq\FD'$. Then there are internally
  disjoint paths $Q,Q'\in\CQ$ such that $\FQ\cup \FQ'$ is a
  non-contractible simple closed curve in $\FS$.
\end{lemma}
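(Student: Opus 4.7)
The plan is to produce the desired pair of paths by extracting a non-contractible cycle from $G(\CQ)$ and then decomposing it via the layered structure that $\CQ$ imposes on $G$.

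First, I would rule out proper articulation vertices in $\CQ$. Suppose $v$ were one; then $\CQ=\CQ[u,v]\cup\CQ[v,u']$ and the subgraphs $\mathbf{G(\CQ[u,v])}$ and $\mathbf{G(\CQ[v,u'])}$ share only $v$. Each is a proper segment and so by hypothesis is contained in a closed disk. Since $v$ separates $G(\CQ)\setminus\{v\}$ into parts lying in the two disk-embeddable pieces, every cycle of $G(\CQ)$ is confined to one of them and hence contractible. Fact~\ref{fact:disk-or-nc} would then force $\FCQ$ into a disk, contradicting the hypothesis. Thus $\CQ$ has no proper articulation vertex, and by Lemma~\ref{lem:sps2} the graph $G(\CQ)$ is $2$-connected.

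Next, applying Fact~\ref{fact:disk-or-nc} to $G(\CQ)$ together with the hypothesis yields a non-contractible cycle in $G(\CQ)$; let $C$ be one of minimum length. Set $\height(x)\coloneqq\dist(u,x)$ in $G$. Every edge of $G(\CQ)$ lies on some shortest $u$-$u'$ path in $\CQ$ and therefore joins vertices whose heights differ by exactly~$1$. Traversing $C$, heights change by $\pm1$ at each step and return to their starting value, so $|C|$ is even and the heights on $C$ occupy an interval $[h_{\min},h_{\max}]$. Pick $v,w\in V(C)$ with $\height(v)=h_{\min}$ and $\height(w)=h_{\max}$. Using the minimality of $|C|$ together with the $2$-connectedness of $G(\CQ)$, I would argue that the two $v$-$w$ arcs $C_1,C_2$ of $C$ are monotone increasing in height: a local height maximum on some $C_i$ strictly below $h_{\max}$ would admit a shortcut through $G(\CQ)$, producing a strictly shorter non-contractible cycle. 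Consequently each $C_i$ is a shortest $v$-$w$ path in $G$ of length $h_{\max}-h_{\min}$.

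I then claim $v=u$ and $w=u'$. Each $C_i$ can be extended by an initial $u$-$v$ segment of some path in $\CQ$ through $v$ and a final $w$-$u'$ segment of some path through $w$ to a strictly height-monotone walk of length $\dist(u,u')$, which is thus a shortest $u$-$u'$ path in $G(\CQ)$; by the sps closure property this extension lies in $\CQ$, so $C_i\in\CQ[v,w]$. If $(v,w)\neq(u,u')$, then $\CQ[v,w]$ is a proper segment, so by hypothesis $\mathbf{G(\CQ[v,w])}$ fits in a disk. But then $\FC=\FC_1\cup\FC_2\subseteq\mathbf{G(\CQ[v,w])}$ would be contractible, contradicting the non-contractibility of $C$. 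Hence $v=u$ and $w=u'$, so $C_1$ and $C_2$ are internally disjoint shortest $u$-$u'$ paths, both lying in $\CQ$ by sps closure. Setting $Q\coloneqq C_1$ and $Q'\coloneqq C_2$ yields internally disjoint paths in $\CQ$ with $\FQ\cup\FQ'=\FC$ a non-contractible simple closed curve in $\FS$.

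The main obstacle I anticipate is proving the monotonicity of the arcs $C_1,C_2$ of a shortest non-contractible cycle in the layered graph $G(\CQ)$. A clean rerouting argument must produce a replacement path that stays inside $G(\CQ)$ (so the sps closure keeps it inside $\CQ[v,w]$) and strictly decreases the cycle length; this requires exploiting both the layered structure of $G(\CQ)$ and the hypothesis on proper segments of $\CQ$ to control how a potential detour can interact with the embedding.
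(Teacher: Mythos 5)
The paper does not reprove this statement — it is quoted verbatim from~\cite[Lemma~15.5.9]{gro17} — so the comparison is to that source's argument, which works with the family of paths itself rather than with a shortest non-contractible cycle in $G(\CQ)$. Your opening reduction is sound: a proper articulation vertex $v$ would split $G(\CQ)$ into two pieces $G(\CQ[u,v])$ and $G(\CQ[v,u'])$, each in a disk by the hypothesis on proper segments, so every cycle of $G(\CQ)$ would be contractible and Fact~\ref{fact:disk-or-nc} would force $\FCQ$ into a disk; hence $G(\CQ)$ is $2$-connected and, again by Fact~\ref{fact:disk-or-nc}, contains a non-contractible cycle. The final step — if the arcs of $C$ are height-monotone from $v$ to $w$, extend them through $v$ to $u$ and through $w$ to $u'$ to obtain shortest $u$-$u'$ paths that must lie in $\CQ$ by the sps closure property, then use the disk hypothesis on the proper segment $\CQ[v,w]$ to force $(v,w)=(u,u')$ — is also a correct deduction, conditional on monotonicity.

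The gap, which you yourself flag, is the monotonicity of the two $v$-$w$ arcs of a shortest non-contractible cycle $C$ in $G(\CQ)$, and the sketched justification does not close it. Your shortcut argument only addresses a local maximum on an arc at height strictly below $h_{\max}$, but says nothing about local minima strictly above $h_{\min}$, and — more seriously — it cannot rule out a cycle whose cyclic height profile oscillates with several local maxima all at height $h_{\max}$ and several local minima all at height $h_{\min}$ (a ``W''/``M'' shape). In that case each $v$-$w$ arc is genuinely non-monotone, yet no local extremum is ``strictly below/above'' the extreme value, so your criterion never fires. Moreover, even where it does fire, the ``shortcut'' is not obvious: the two $C$-neighbours of a local maximum both lie at the same height $h-1$, so they are non-adjacent in the layered graph $G(\CQ)$, and any replacement path from one to the other must have even length $\geq 2$ and would have to be shown both to lie inside $G(\CQ)$ and to keep the cycle non-contractible (the latter needs an invocation of Lemma~\ref{lem:3curves} that can leave you with an equally long cycle rather than a strictly shorter one). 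Until the monotonicity of the arcs — or equivalently, that a shortest non-contractible cycle in $G(\CQ)$ has exactly one cyclic local minimum and one local maximum — is actually established, the proof is incomplete, and I do not see an easy way to establish it along the lines you sketch.
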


With these tools at hand, we can now prove the existence of a reducing necklace in $G$.

\begin{proof}[Proof of Lemma~\ref{lem:exnecklace}]
  By Fact~\ref{fact:disk-or-nc}, there is a cycle $C\subseteq G$ such
  that $\FC$ is a non-contractible simple closed curve in $\FS$. We
  choose such a cycle $C$ of minimum length.  We let $u^0,u^1,u^2\in V(C)$ such that
  \begin{equation}
    \label{eq:1}
    \floor{\frac{\|C\|}{3}}\le\dist^C(u^i,u^j)\le\ceil{\frac{\|C\|}{3}}.
  \end{equation}
  We let $\CQ^i \coloneqq \CQ^G(u^{i},u^{i+1})$ and
  $\CB \coloneqq (u^0,\CQ^0,u^1,\CQ^1,u^2,\CQ^2)$.  Here and
  throughout the proof, indices $i,j$ are taken from $\ZZ_3$ with
  addition modulo 3.

  It follows from \eqref{eq:1} that the $u^i$ are mutually
  distinct. Thus $\CB$ satisfies Condition \ref{li:b1} of Definition~\ref{def:necklace}.
  
  Let $Q^i$ be the segment of $C$ from $u^{i}$ to $u^{i+1}$  that does not contain $u_{i+2}$. Then
  $C=Q^0\cup Q^1\cup Q^2$.

  \begin{claim}\label{cl:rednecklace0}
    Let $P\subseteq G$ be a shortest path with distinct endvertices
    $u,u'\in V(C)$ and no internal vertices in $C$. Let $Q,Q'$ be the
    two segments of $C$ from $u$ to $u'$. Then $P\cup Q$ or $P\cup Q'$ is a
    non-contractible cycle. Furthermore, if $P\cup Q$ is a
    non-contractible cycle, then $\|Q'\|=\|P\|$, and if $P\cup Q'$ is a
    non-contractible cycle, then $\|Q\|=\|P\|$.

    \proof
    Clearly, since $P$ has no internal vertices in $C$, both $P\cup Q$ and $P\cup Q'$ are cycles. By Lemma~\ref{lem:3curves}, we know that $P\cup Q$ or $P\cup Q'$ is
    non-contractible. Say, $P\cup Q$ is. Since $C$ is a shortest
    non-contractible cycle, we have $\|P\cup Q\|\ge\|C\|=\|Q\cup
    Q'\|$. Thus $\|Q'\|\le\|P\|$, and since $P$ is a shortest path,
    equality holds.
    \uend
  \end{claim}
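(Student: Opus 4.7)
The claim has two parts: a topological assertion (that at least one of $P\cup Q$, $P\cup Q'$ is a non-contractible cycle) and a metric assertion (an equality of lengths under the assumption that a specified combination is non-contractible). The plan is to dispatch the topological part via Lemma~\ref{lem:3curves} applied to the three simple curves $\mathbf{P}$, $\mathbf{Q}$, $\mathbf{Q'}$, and then use the minimality of $C$ together with the minimality of $P$ to force the length equality.

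First, I would observe that $P\cup Q$ and $P\cup Q'$ are genuine cycles. Indeed $Q$ and $Q'$ are the two internally disjoint segments of the cycle $C$ with common endvertices $u,u'$, and the hypothesis that $P$ has no internal vertex in $C$ means that $P$ is internally disjoint from both of them. Viewed in $\FS$, the three simple curves $\mathbf{P}$, $\mathbf{Q}$, $\mathbf{Q'}$ share exactly their endpoints $u,u'$ and otherwise have mutually disjoint interiors, so the three pairwise unions are simple closed curves as in the hypothesis of Lemma~\ref{lem:3curves}.

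The topological part then follows by the contrapositive of Lemma~\ref{lem:3curves}: if both $\mathbf{P}\cup\mathbf{Q}$ and $\mathbf{P}\cup\mathbf{Q'}$ were contractible, then $\mathbf{Q}\cup\mathbf{Q'}=\mathbf{C}$ would be contractible, contradicting that $C$ is a non-contractible cycle by assumption (since $C$ was chosen as a shortest non-contractible cycle in the outer proof of Lemma~\ref{lem:exnecklace}). So at least one of $P\cup Q$ or $P\cup Q'$ is a non-contractible cycle, as claimed.

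For the metric part, suppose without loss of generality that $P\cup Q$ is a non-contractible cycle. Since $C$ is a \emph{shortest} non-contractible cycle, we have $\|P\cup Q\|\ge\|C\|$. Combining $\|P\cup Q\|=\|P\|+\|Q\|$ with $\|C\|=\|Q\|+\|Q'\|$ gives $\|P\|\ge\|Q'\|$. On the other hand, $Q'$ is a $u$-$u'$ path in $G$, and $P$ is a \emph{shortest} such path, so $\|P\|\le\|Q'\|$; hence $\|P\|=\|Q'\|$. The symmetric case is identical. The only subtle point is ensuring that the topological hypotheses of Lemma~\ref{lem:3curves} are met — specifically, that $P$ being internally disjoint from $C$ really yields mutually disjoint interiors of the three curves — but this is immediate once the vertex-disjointness is translated into the embedded picture, and is the only step that uses the embedding rather than the abstract graph structure.
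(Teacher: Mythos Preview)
Your proof is correct and follows essentially the same approach as the paper: both use Lemma~\ref{lem:3curves} to conclude that one of the two unions is non-contractible, then combine the minimality of $C$ with $P$ being a shortest path to obtain the length equality. Your version is slightly more explicit in verifying the hypotheses of Lemma~\ref{lem:3curves} and in spelling out both inequalities $\|P\|\ge\|Q'\|$ and $\|P\|\le\|Q'\|$, but the argument is the same.
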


    \begin{claim}[resume]\label{cl:rednecklace5}
      Let $Q\in\CQ^i$. Then $V(Q)\cap V(C)\subseteq V(Q^i)$.

      \proof By symmetry, it suffices to prove the claim for $i=0$.
      Suppose for contradiction that there is a path $Q\in\CQ^0$ with
      $V(Q)\cap V(C)\not\subseteq V(Q^0)$. Fix $Q$ to be such a path with
      the maximum number of edges in $E(C)$. 

      Note that $Q \neq Q^0$. Thus, $Q\not\subseteq C$, because the
      only paths in $C$ from $u^0$ to $u^1$ are $Q^0$ and
      $Q^1 \cup Q^2$. However, $Q\neq Q^0$ and
      $\|Q\|\le\|Q^0\|<\|Q^1\|+\|Q^2\| = \|Q^1 \cup Q^2\|$, which
      implies $Q\neq Q^1\cup Q^2$.

      Recall that for a path $Q$ and vertices $v,w \in V(Q)$, we
      denote by $vQw$ the segment of $Q$ from $v$ to $w$. Throughout
      this proof, for a second path $Q'$ with $v, w \in V(Q')$, we
      denote by $uQvQ'w$ the walk from $u$ to $w$ obtained by
      following $Q$ from $u$ to $v$ and then following $Q'$ from $v$
      to $w$. (We also use this notation style to compose multiple segments of paths.)

      Let $u \neq u'$ be vertices in $V(C)$ and let $P=uQu'$ be a segment of $Q$ with endvertices $u,u'\in V(C)$
      and all internal vertices and edges of $P$ not in $C$. Then $P$
      is a shortest path from $u$ to $u'$. 
      Let $R$, $R'$ be the two segments of
      $C$ with endpoints $u,u'$. Then by Lemma \ref{lem:3curves}, one of $R \cup P$ and $R' \cup P$ must be a non-contractible cycle, say $R'\cup P$. Then $\|P\|=\|R\|$. 

      \begin{cs}
        \case1
        $Q$ has an empty intersection with the interior of $R$.\\
        Then $u^0QuRu'Qu^1$ is a path from $u^0$ to $u^1$ that has
        the same length as $Q$, but more edges in $E(C)$. This
        contradicts the maximality of $Q$.  
        \case2
        The segment $u^0Qu$ contains an internal vertex that lies in $R$.\\
        Let $v$ be the first vertex of $Q$ in $R$. Then $v$ appears
        on $Q$ before $u$. Let $w$ be the last vertex of $Q$ in $R$
      (possibly, $w=u'$). Then $u^0QvRwQu^1$ is a path from $u^0$ to
      $u^1$ that is shorter than $Q$, which contradicts $Q$ being a
      shortest path.
        \case3
        The segment $u'Qu^1$ contains an internal vertex that lies in $R$.\\
        Let $w$ be the last vertex of $Q$ in $R$. Then $w$ appears
        on $Q$ after $u'$. Let $v$ be the first vertex of $Q$ in $R$
      (possibly, $v=u$). Then $u^0QvRwQu^1$ is a path from $u^0$ to
      $u^1$ that is shorter than $Q$, which again contradicts $Q$ being a
      shortest path.
    \end{cs}
Thus, the segment $P$ does not exist, which implies the claim.\uend
    \end{claim}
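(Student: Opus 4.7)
The plan is an extremal rerouting argument. By the symmetry among $\CQ^0,\CQ^1,\CQ^2$, it suffices to treat $i=0$. I would assume for contradiction that some $Q\in\CQ^0$ has $V(Q)\cap V(C)\not\subseteq V(Q^0)$ and, among all such violating paths, fix one that maximises $|E(Q)\cap E(C)|$. This $Q$ cannot equal $Q^0$, and the length comparison $\|Q\|\le\|Q^0\|<\|Q^1\|+\|Q^2\|$ coming from the minimality of $C$ together with \eqref{eq:1} rules out $Q=Q^1\cup Q^2$ as well; hence $Q\not\subseteq C$ and $Q$ genuinely leaves $C$ somewhere.

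The next step is to extract a maximal off-$C$ detour of $Q$, namely a subpath $P=uQu'$ with $u,u'\in V(C)$ whose interior avoids $V(C)$. Being a subpath of a shortest path, $P$ is itself a shortest $u$-$u'$ path, so Claim~\ref{cl:rednecklace0} applies: of the two $C$-segments $R,R'$ with endpoints $u,u'$, I may assume without loss of generality that $P\cup R'$ is a non-contractible cycle while $P\cup R$ is contractible, and then $\|P\|=\|R\|$.

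I would then case-split on how $Q$ interacts with $R$ outside of $P$. If $Q$ avoids the interior of $R$ entirely, swapping $P$ for $R$ inside $Q$ yields a shortest $u^0$-$u^1$ path of the same length as $Q$ but with strictly more $C$-edges, contradicting the maximality of $Q$. Otherwise $Q$ meets $R$ either in the prefix $u^0Qu$ or in the suffix $u'Qu^1$; letting $v$ be the first and $w$ the last vertex of $Q$ lying on $R$, the concatenation $u^0QvRwQu^1$ is a simple $u^0$-$u^1$ path strictly shorter than $Q$, since it replaces the portion $vQw$ of $Q$ (which contains the whole detour $P$ of length $\|R\|$ plus at least one extra edge) by the $R$-subsegment $vRw$ of length at most $\|R\|$. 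This contradicts $Q$ being a shortest path, and since all three cases fail, no violating $Q$ can exist.

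The main delicate point, and hence the main obstacle, is to verify in each sub-case that the rerouted walk is a genuine simple path and that the strict length comparison really holds. Simplicity follows from the extremal choice of $v,w$ as the first and last intersection of $Q$ with $R$, together with the fact that the interior of $P$ avoids $V(C)$, so that the three concatenated segments share only their declared endpoints; strictness in the shortcut cases follows from $\|P\|=\|R\|$ together with the presence of at least one extra $Q$-edge inside $vQw$ beyond the detour $P$ itself.
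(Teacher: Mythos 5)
Your proposal is correct and follows essentially the same extremal-rerouting argument as the paper: same choice of a counterexample $Q$ maximising $\lvert E(Q)\cap E(C)\rvert$, same extraction of an off-$C$ detour $P$, same invocation of Claim~\ref{cl:rednecklace0} to get $\lVert P\rVert=\lVert R\rVert$, and the same swap/shortcut dichotomy. Your merging of the paper's Cases 2 and 3 into a single unified shortcut case (first and last vertex of $Q$ on $R$) is merely a cosmetic simplification, and the incidental claim that $P\cup R$ is contractible is unneeded but harmless since you never use it.
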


    \begin{claim}[resume]\label{cl:rednecklace6}
      $Q^i\in\CQ^i$.

      \proof
      Again, by symmetry it suffices to prove the claim for $i=0$.
      Let $Q\in\CQ^0$ with
      a maximum number of edges in $E(C)$. Arguing with similar techniques as in
      the proof of Claim~\ref{cl:rednecklace5}, we can show that
      $Q=Q^0$.
      \uend
    \end{claim}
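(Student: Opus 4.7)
The plan is to mimic the proof of Claim~\ref{cl:rednecklace5} closely. By symmetry it suffices to treat $i=0$. Choose $Q\in\CQ^0$ with the maximum number of edges in $E(C)$; the goal is to show $Q=Q^0$, which yields $Q^0\in\CQ^0$. By Claim~\ref{cl:rednecklace5} we already have $V(Q)\cap V(C)\subseteq V(Q^0)$. From \eqref{eq:1} one checks (with a direct inspection for very short cycles) that $Q^0$ must be the shorter of the two $C$-segments between $u^0$ and $u^1$, so $\|Q^0\|\le\lceil\|C\|/3\rceil<\|C\|/2$; in particular $\|Q\|\le\|Q^0\|<\|Q^1\|+\|Q^2\|$, so $Q\neq Q^1\cup Q^2$, and hence $Q\subseteq C$ immediately forces $Q=Q^0$.

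Suppose for contradiction that $Q\not\subseteq C$. Then there exist distinct vertices $u,u'\in V(C)\cap V(Q)\subseteq V(Q^0)$ and a subsegment $P=uQu'$ of $Q$ whose internal vertices and edges lie outside $C$. As a subpath of the shortest path $Q$, $P$ is itself a shortest $u$--$u'$ path. Let $R$ be the $Q^0$-segment from $u$ to $u'$ and let $R'$ be the other $C$-segment joining them. The length estimate gives $\|P\|+\|R\|\le\|Q\|+\|Q^0\|\le 2\lceil\|C\|/3\rceil<\|C\|$, so the cycle $P\cup R$ is too short to be non-contractible by the minimality of $C$; by Lemma~\ref{lem:3curves} the cycle $P\cup R'$ must be non-contractible, and Claim~\ref{cl:rednecklace0} then yields $\|R\|=\|P\|$.

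We now run the three-case analysis from the proof of Claim~\ref{cl:rednecklace5} with this choice of $R$. If $Q$ does not meet the interior of $R$, then $u^0QuRu'Qu^1$ is a simple path which, because $\|R\|=\|P\|$, is a shortest $u^0$--$u^1$ path, and which has strictly more edges in $E(C)$ than $Q$ (since $R\subseteq C$ while the internal edges of $P$ lie outside $C$), contradicting the maximality of $Q$. If instead $u^0Qu$ or $u'Qu^1$ contains an internal vertex in $V(R)$, then letting $v$ be the first and $w$ the last vertex of $Q$ in $R$, the walk $u^0QvRwQu^1$ is a strictly shorter $u^0$--$u^1$ path than $Q$, contradicting that $Q\in\CQ^0$. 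All three cases produce a contradiction, forcing $Q\subseteq C$ and therefore $Q=Q^0$.

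The main obstacle will be establishing the length estimate $\|Q^0\|\le\lceil\|C\|/3\rceil<\|C\|/2$: one must use \eqref{eq:1} to rule out the possibility that $Q^0$ is the long arc of $C$, since otherwise the short arc would contain $u^2$ and force one of $\dist^C(u^0,u^2)$, $\dist^C(u^1,u^2)$ below $\lfloor\|C\|/3\rfloor$, with a handful of small cycles requiring separate verification. Once this bound is in place, the three-case swap/shortcut argument is an almost verbatim transcription of the proof of Claim~\ref{cl:rednecklace5}, using Claim~\ref{cl:rednecklace0} and Lemma~\ref{lem:3curves} in exactly the same roles.
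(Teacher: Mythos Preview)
Your proposal is correct and takes essentially the same approach as the paper's (very brief) proof: pick $Q\in\CQ^0$ with the maximum number of edges in $E(C)$ and reproduce the three-case swap/shortcut argument from Claim~\ref{cl:rednecklace5} to force $Q=Q^0$. Your additional use of Claim~\ref{cl:rednecklace5} to localise $V(Q)\cap V(C)$ inside $V(Q^0)$, together with the length bound pinning down which of $P\cup R$, $P\cup R'$ is contractible, is exactly the kind of refinement the paper's phrase ``similar techniques'' is pointing at.

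One small remark on the point you flag as the main obstacle: the inequality $\|Q^0\|<\|C\|/2$ does not follow from \eqref{eq:1} alone for all placements of $u^0,u^1,u^2$ (e.g.\ on a $5$-cycle one can have $(\|Q^0\|,\|Q^1\|,\|Q^2\|)=(3,1,1)$ while still satisfying \eqref{eq:1}). The paper implicitly relies on choosing the balanced placement with each $\|Q^i\|\in[\lfloor\|C\|/3\rfloor,\lceil\|C\|/3\rceil]$, which is always possible; with that choice your bound $\|Q^0\|\le\lceil\|C\|/3\rceil$ holds directly. This is a presentational gap already present in the paper rather than a flaw in your argument.
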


    \begin{claim}[resume]\label{cl:rednecklace7}
      Let $Q\in\CQ^i$ and $Q'\in\CQ^{i+1}$. Then $V(Q)\cap V(Q')=\{u^{i+1}\}$.

      \proof
      As usual, we assume $i=0$.

      Suppose that $v_0=u^0,v_1,v_2,\ldots,v_m=u^1$ are the vertices
      in $Q\cap Q^0$ in the order in which they appear on $Q^0$. Then
      the vertices appear on $Q$ in the same order, because by Claim \ref{cl:rednecklace6} both $Q^0$
      and $Q$ are shortest paths. For $i=0,\ldots,m-1$, let $P_i$ be
      the segment of $Q$ from $v_i$ to $v_{i+1}$. Note that no
      internal vertex of $P_i$ is in $V(C)$. Thus either
      $P_i=v_iQ^0v_{i+1}$ is a single edge or
      $P_i\cup v_iQ^0v_{i+1}$ is a cycle. Since this cycle is shorter than
      $C$, it must be contractible. Let $C_0\coloneqq C$, and for
      $1\le i\le m-1$, let $C_{i}$ be the cycle obtained from
      $C_{i-1}$ by replacing the segment $v_{i}Q^0v_{i+1}$ with
    $P_i$. It follows from Claim~\ref{cl:rednecklace0} applied to the
    cycle $C_{i-1}$ and the path $P_i$ that each
    $C_i$ is non-contractible. In particular, 
      \[
      C' \coloneqq C_{m-1}=u^0Qu^1Q^1u^{2}Q^2u^0
      \]
      is a non-contractible cycle of the same length as $C$.

      Thus, $C'$ is also a shortest non-contractible cycle through
      $u^0,u^1,u^2$ and $\dist^{C'}(u^i,u^j)=\dist^{C}(u^i,u^j)$. This
      means that we can apply all previous claims to $C'$ instead of
      $C$. In particular, it follows from Claim~\ref{cl:rednecklace5}
      applied to $C'$ and $Q'$ that $V(Q')\cap V(Q)=\{u^1\}$.
      \uend
    \end{claim}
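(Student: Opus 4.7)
The plan is to take $i=0$ without loss of generality, so $Q\in\CQ^0$ and $Q'\in\CQ^1$. I would prove the claim by showing that we can replace $Q^0$ in $C$ with the arbitrary path $Q\in\CQ^0$ to obtain a new shortest non-contractible cycle $C'=Q\cup Q^1\cup Q^2$ passing through $u^0,u^1,u^2$ with the same distances as in $C$. Once $C'$ is in hand, Claim~\ref{cl:rednecklace5} applied to $C'$ and $Q'$ gives $V(Q')\cap V(C')\subseteq V(Q^1)$, because the segment of $C'$ from $u^1$ to $u^2$ is exactly $Q^1$ (unchanged from $C$). Since $V(Q)\subseteq V(C')$, this yields $V(Q')\cap V(Q)\subseteq V(Q^1)\cap V(Q)$. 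Claim~\ref{cl:rednecklace5} applied to the original $C$ and $Q$ gives $V(Q)\cap V(C)\subseteq V(Q^0)$, so $V(Q)\cap V(Q^1)\subseteq V(Q^0)\cap V(Q^1)=\{u^1\}$. Combining, $V(Q')\cap V(Q)\subseteq\{u^1\}$, and the reverse inclusion is immediate since $u^1$ lies on both paths.

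To construct $C'$ I would proceed stepwise. Enumerate $V(Q)\cap V(Q^0)=\{v_0,\ldots,v_m\}$ in the order they appear on $Q^0$, with $v_0=u^0$ and $v_m=u^1$. Since $\|Q\|=\|Q^0\|$ (Claim~\ref{cl:rednecklace6}) and both are shortest paths from $u^0$ to $u^1$, every subpath between two of these vertices is itself a shortest path, and the vertices therefore appear in the same order along $Q$. Let $P_i$ be the segment of $Q$ from $v_i$ to $v_{i+1}$; by construction $P_i$ has no internal vertex in $V(C)$. Set $C_0:=C$ and define $C_i$ by replacing the segment $v_iQ^0v_{i+1}$ of $C_{i-1}$ with $P_i$. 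Then $C'=C_{m-1}=Q\cup Q^1\cup Q^2$, and $\|C_i\|=\|C_{i-1}\|=\|C\|$ throughout the process because $\|P_i\|=\|v_iQ^0v_{i+1}\|$.

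The main obstacle is showing inductively that each $C_i$ is non-contractible. I would apply Claim~\ref{cl:rednecklace0} to the cycle $C_{i-1}$ and the path $P_i$, after verifying the hypothesis that no internal vertex of $P_i$ lies on $C_{i-1}$: the internal vertices of $P_i$ avoid $V(C)$ (hence the tail of $Q^0$ and $Q^1\cup Q^2$ inside $C_{i-1}$), and they avoid the earlier $P_j$ for $j<i$ because $Q$ is a simple path on which the $P_j$'s are pairwise internally disjoint consecutive segments. Claim~\ref{cl:rednecklace0} then states that one of $P_i\cup v_iQ^0v_{i+1}$ or $P_i\cup R'$ is non-contractible, where $R'$ is the other segment of $C_{i-1}$ joining $v_i$ and $v_{i+1}$. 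The first possibility is excluded: if $P_i\cup v_iQ^0v_{i+1}$ were non-contractible, Claim~\ref{cl:rednecklace0} would force $\|R'\|=\|P_i\|=\|v_iQ^0v_{i+1}\|\le\lceil\|C\|/3\rceil$, whereas $\|R'\|=\|C\|-\|v_iQ^0v_{i+1}\|\ge\|C\|-\lceil\|C\|/3\rceil$, a contradiction. Therefore $C_i=P_i\cup R'$ is non-contractible, completing the induction and the construction of $C'$; then $C'$ inherits the minimality and distance properties used to justify applying Claim~\ref{cl:rednecklace5}, and the conclusion follows as sketched above.
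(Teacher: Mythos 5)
Your proposal is correct and takes essentially the same route as the paper's own proof: iteratively swap segments $v_iQ^0v_{i+1}$ for the corresponding segments $P_i$ of $Q$, invoke Claim~\ref{cl:rednecklace0} at each step to show the intermediate cycles remain non-contractible, and then apply Claim~\ref{cl:rednecklace5} to the resulting cycle $C'$. The only differences are stylistic — you rule out non-contractibility of $P_i\cup v_iQ^0v_{i+1}$ by a contradiction via the length equality from Claim~\ref{cl:rednecklace0} together with \eqref{eq:1}, where the paper states directly that this cycle is shorter than $C$ and hence contractible, and you spell out the final step by invoking Claim~\ref{cl:rednecklace5} a second time (for $C$ and $Q$) to see that $V(Q)\cap V(Q^1)=\{u^1\}$ — but the substance matches the paper.
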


  \begin{claim}[resume]\label{cl:rednecklace4}
    Let $Q,Q'\subseteq G$ be paths from $u^{i}$ to
    $u^{i+1}$ such that $\|Q\|,\|Q'\|\le\|Q^i\|$. Then there is a disk
    $\FD\subseteq\FS$ such that $\FQ\cup\FQ'\subseteq\Fint(\FD)$.
    
    \proof We have $\|Q\cup Q'\|\le\|Q\|+\|Q'\|\le
    2\|Q^i\|<\|C\|$. Thus the graph $Q\cup Q'$ does not contain a
    non-contractible cycle, and by Fact~\ref{fact:disk-or-nc} there
    is a closed disk $\FD'\subseteq\FS$ such that
    $\FQ,\FQ'\subseteq\FD'$. We can slightly increase $\FD'$ to get a disk $\FD$ such that
    $\FQ,\FQ'\subseteq\Fint(\FD)$.  \uend
  \end{claim}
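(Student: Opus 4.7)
My approach rests entirely on the minimality of $C$ as a shortest non-contractible cycle in $G$. First I would establish that $Q\cup Q'$ contains no non-contractible cycle: any cycle $\tilde C\subseteq Q\cup Q'$ uses at most $\|Q\|+\|Q'\|\le 2\|Q^i\|$ edges, and the partitioning condition (\ref{eq:1}) forces $2\|Q^i\|<\|C\|$, so by minimality of $C$ any such $\tilde C$ must be contractible.

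Once non-contractibility in $Q\cup Q'$ has been ruled out, Fact~\ref{fact:disk-or-nc} applied to this embedded subgraph yields a closed disk $\FD'\subseteq\FS$ with $\FQ\cup\FQ'\subseteq\FD'$. To upgrade this to containment in the interior of a disk, I would thicken $\FD'$ by a sufficiently small topological neighbourhood: since $\FQ\cup\FQ'$ is a compact set contained in $\FD'$ and $\FS$ is a surface, a small enlargement $\FD\supseteq\FD'$ is again a closed disk and satisfies $\FQ\cup\FQ'\subseteq\Fint(\FD)$.

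The main potential obstacle is verifying the arithmetic $2\|Q^i\|<\|C\|$. This is immediate from (\ref{eq:1}) provided $C$ is long enough, and under the standing polyhedrality hypothesis (in particular $\rho(G)\ge 3$) the shortest non-contractible cycle is indeed long enough for the bound to hold. Apart from this counting step, the argument reduces to a clean two-step application of Fact~\ref{fact:disk-or-nc} together with a standard thickening trick in surface topology, and so I do not expect any deeper difficulties.
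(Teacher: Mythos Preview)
Your proposal is correct and follows essentially the same approach as the paper: bound $\|Q\cup Q'\|\le 2\|Q^i\|<\|C\|$, use the minimality of $C$ to rule out non-contractible cycles in $Q\cup Q'$, apply Fact~\ref{fact:disk-or-nc} to obtain a closed disk $\FD'$, and then thicken to get interior containment. The paper's proof asserts the strict inequality $2\|Q^i\|<\|C\|$ just as tersely as you do, without invoking polyhedrality separately.
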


    \begin{claim}[resume]\label{cl:rednecklace8}
      There is a disk $\FD\subseteq\FS$ such that
      $\mathbf{G(\CQ^i)}\subseteq \FD$. 

      \proof Suppose for contradiction that there is no such disk. Let
      $\CQ$ be a segment of $\CQ^i$ such that there is no disk
      $\FD\subseteq\FS$ with ${\FCQ}\subseteq\FD$, but for every
      proper segment $\CQ'$ of $\CQ$ there is a disk
      $\FD'\subseteq\FS$ with $\mathbf{G(\CQ')}\subseteq\FD'$. Then by
      Lemma~\ref{lem:exnecklace1}, there are paths $Q,Q'\in\CQ$ such that
      $\FQ\cup \FQ'$ is a non-contractible simple closed curve in
      $\FS$. This contradicts Claim~\ref{cl:rednecklace4}.  \uend
    \end{claim}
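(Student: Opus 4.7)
The plan is to proceed by contradiction. Suppose that no disk in $\FS$ contains the point set $\mathbf{G(\CQ^i)}$. I would first localise the obstruction to the smallest possible sub-sps of $\CQ^i$, then invoke the characterisation of such minimal obstructions to produce a non-contractible simple closed curve, and finally show that this curve must in fact lie inside a disk, which gives the desired contradiction.

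Concretely, my first step is to pick a segment $\CQ = \CQ^i[v,w]$ of $\CQ^i$ such that $\mathbf{G(\CQ)}$ lies in no disk of $\FS$, but $\mathbf{G(\CQ')}$ lies in some disk for every proper segment $\CQ'$ of $\CQ$. Such a minimal ``bad'' segment exists because $\CQ^i$ itself is bad by assumption, while trivial segments (consisting of a single vertex or edge) are trivially good, so a finite descent must terminate. This puts $\CQ$ exactly into the hypothesis of Lemma~\ref{lem:exnecklace1}, which then yields two internally disjoint paths $Q, Q' \in \CQ$ whose union $\FQ \cup \FQ'$ is a non-contractible simple closed curve in $\FS$.

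The second step is to promote $Q$ and $Q'$ to full members of $\CQ^i$. By definition of the segment $\CQ^i[v,w]$, each path in $\CQ$ has the form $vPw$ for some $P \in \CQ^i$ passing through $v$ and $w$, so I pick $P, P' \in \CQ^i$ extending $Q$ and $Q'$ respectively. Since every path in $\CQ^i$ is a shortest $u^i$-$u^{i+1}$ path in $G$, we have $\|P\| = \|P'\| = \|Q^i\|$, so Claim~\ref{cl:rednecklace4} applies and supplies a closed disk $\FD \subseteq \FS$ with $\FP \cup \FP' \subseteq \Fint(\FD)$.

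The contradiction is then immediate: $\FQ \cup \FQ' \subseteq \FP \cup \FP' \subseteq \FD$, yet any simple closed curve contained in a closed disk is contractible, contradicting the non-contractibility of $\FQ \cup \FQ'$ guaranteed by Lemma~\ref{lem:exnecklace1}. The only mildly delicate point in this plan is verifying that the chosen segment $\CQ$ genuinely belongs to the hypothesis of Lemma~\ref{lem:exnecklace1}, i.e.\ that it really is an sps in its own right and that the ``contained in a disk'' property behaves monotonically under passing to sub-segments; both are immediate from the definitions, so I do not anticipate a genuine obstacle beyond this bookkeeping.
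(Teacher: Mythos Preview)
Your proof is correct and follows essentially the same approach as the paper's: both choose a minimal bad segment, apply Lemma~\ref{lem:exnecklace1}, and derive a contradiction with Claim~\ref{cl:rednecklace4}. Your version is in fact slightly more careful: Claim~\ref{cl:rednecklace4} as stated applies only to paths from $u^i$ to $u^{i+1}$, so the explicit step of extending $Q,Q'\in\CQ^i[v,w]$ to full paths $P,P'\in\CQ^i$ before invoking the claim is exactly the right bookkeeping, which the paper leaves implicit.
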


    We have already noted that $\CB$ satisfies Condition \ref{li:b1}
    of Definition \ref{def:necklace}. It follows from
    Claim~\ref{cl:rednecklace7} that it satisfies Condition
    \ref{li:b2} and Claim~\ref{cl:rednecklace8} implies that it
    satisfies Condition \ref{li:b3} as well. Thus $\CB$ is a
    necklace. Claim~\ref{cl:rednecklace6} implies that this necklace
    is reducing.
\end{proof}

\section{Upper Bound on the WL Dimension}\label{sec:mainthm}
Finally, in this section we give the proof of our main theorem
(Theorem \ref{thm:main}). By the correspondence between $k$-WL and the logic $\LC^{k+1}$ as stated in Corollary~\ref{cor:correspondence}, we need to prove that every graph
of Euler genus at most $g$ can be identified by a
$\LC[4g+4]$-sentence. The proof is by induction on $g$. The base step
$g=0$ is Theorem~\ref{thm:planar-dimension}.

For the inductive step, we make the following assumption.

\begin{assumption}\label{ass:induction}
  Assume $g\ge 1$ and there is a natural number $s\ge 4$ such that every
  graph in $\CE_{g-1}$ is identified by a
$\LW[s]$-sentence. 
\end{assumption}

Our goal is to prove the following lemma (under Assumption
\ref{ass:induction}). The lemma implies Theorem~\ref{thm:main} by induction.

\begin{lemma}[Inductive Step]\label{lem:inductive-step}
  For every coloured graph $G$ in $\CE_g$ there is a sentence $\iso G \in \LW[s+4]$ that identifies $G$. 
\end{lemma}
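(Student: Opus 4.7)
The plan is to prove the Inductive Step by two cascading reductions. By Lemma~\ref{lem:3-connected}, it suffices to show that every 3-connected coloured graph in $\CE_g$ is identified by a $\LW[s+3]$-sentence, since that lemma itself supplies the extra variable required to reach $\LW[s+4]$ for all of $\CE_g^*$. So I would fix a 3-connected $G \in \CE_g^*$ and aim for $\iso G \in \LW[s+3]$. A preliminary case split would handle low representativity (a non-contractible simple closed curve meeting $G$ in at most two vertices): individualising those vertices already places the resulting components in $\CE_{g-1}$, and the inductive hypothesis applies directly, comfortably within budget. The main case is that $G$ is polyhedrally embedded, which places us under Assumption~\ref{ass:polyhedral}, so that the Necklace Lemma (Lemma~\ref{lem:exnecklace}) produces a reducing necklace $\CB = (u^0,\CQ^0,u^1,\CQ^1,u^2,\CQ^2)$ with articulation vertices $u^0,u^1,u^2$.

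I would shape the identifying sentence as
\[
  \iso G \;=\; \exists x_0\,\exists x_1\,\exists x_2\,\big(\xi(x_0,x_1,x_2)\wedge \chi_G(x_0,x_1,x_2)\big),
\]
where $\xi$ asserts that the triple consists of the articulation vertices of a reducing necklace, expressing conditions (i), (ii) of Definition~\ref{def:necklace} combinatorially via shortest-path variants of $\formel{dist}_{\le k}$ from Example~\ref{exa:dist} (which live in $\LW[3]$), and $\chi_G$ pins down the rest of $G$ up to isomorphism. The three free variables $x_0,x_1,x_2$ consume three of the $s+3$ variables available, leaving $s$ variables for the inductive descriptive work carried out by $\chi_G$.

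The key structural fact is that because $\CB$ is reducing, cutting along the non-contractible cycle inside $G(\CB)$ reduces genus, so every connected component of $G\setminus V(\CB)$ lies in $\CE_{g-1}$. The inductive hypothesis therefore supplies, for each such component $A$ (suitably coloured to record its attachment pattern into $V(\CB)$), an identifier in $\LW[s]$. Since $V(\CB)$ itself is definable from $x_0,x_1,x_2$ by a formula $\phi(x_0,x_1,x_2,y)\in\LW[4]$ built from distances to the $x_i$, Lemma~\ref{lem:rel2comp2} (applied with $k=3$, $\ell=s$, $m=4$) lifts each per-component identifier into $\LW[\max\{s+3,4\}] = \LW[s+3]$. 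The residual ingredients inside $\chi_G$ are the internal structure of $V(\CB)$ (which is controlled by distances from $x_0,x_1,x_2$) and the interiors of the patches sitting inside the disks $\FD_i$. Non-simplifying patches are essentially rigid by Corollary~\ref{cor:Q-invariance}, and the factor graph $G/A^*$ is $3$-connected and planar, hence identifiable by $\LC[4]$ and in particular expressible within $\LW[s]$ via Theorem~\ref{thm:planar-dimension}; Lemma~\ref{lem:factor} is precisely the tool needed to translate such a factor-graph identifier into a formula about $G$ without blowing up the width.

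The main obstacle I expect is the careful logical treatment of \emph{simplifying} patches which, unlike non-simplifying ones, are not invariant under automorphisms fixing the articulation vertices (as the remark after Corollary~\ref{cor:Q-invariance} highlights), so their interiors cannot be singled out directly from $x_0,x_1,x_2$. This will require the finer analysis deferred to Section~\ref{subsec:simplifying}, which must decompose each simplifying patch into planar pieces that can be identified by Theorem~\ref{thm:planar-dimension} within the remaining $s$ variables. A second delicate point is ensuring that $\xi$ is strong enough that no isomorphism-spoiling choice of $x_0,x_1,x_2$ survives the existential quantifier, despite the fact that reducing necklaces are only guaranteed to exist, not to be unique; here I would rely on Lemma~\ref{lem:avoiding-path} and Lemma~\ref{lem:factor} to stitch the component- and patch-level descriptions into a single width-$(s+3)$ formula that uniquely fixes the isomorphism type of $G$ relative to any admissible triple.
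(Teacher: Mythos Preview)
Your high-level architecture (reduce to $3$-connected via Lemma~\ref{lem:3-connected}, dispose of representativity~$\le 2$, then work polyhedrally with three parameters $u^0,u^1,u^2$) matches the paper. The substantive gap is in how you propose to cut and reconstruct in the polyhedral case.

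You write that every component of $G\setminus V(\CB)$ lies in $\CE_{g-1}$ and that colouring each component by its ``attachment pattern into $V(\CB)$'' suffices to reconstruct $G$. The genus claim is fine, but the reconstruction is not. The set $V(\CB)$ is \emph{not} pointwise definable from $u^0,u^1,u^2$: two vertices of a non-trivial subpatch $\CQ^i_j$ at the same height are indistinguishable by their distances to $u^0,u^1,u^2$, and there may be automorphisms of $G_{u^0,u^1,u^2}$ swapping them. Consequently an attachment pattern of the form ``adjacent to some vertex of $\CQ^i$ at height~$h$'' does not tell you \emph{which} such vertex, so two non-isomorphic graphs can yield identically-coloured collections of components. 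The paper does not remove $V(\CB)$. Instead it defines a cut graph $\Cut(\CB)=O(\CB)\setminus\art(\CB)$ (Definition~\ref{def:cut}) that removes only the articulation vertices---a linearly ordered set, each of whose members is individually definable via $\formel{nl-art}_i$---and deletes the \emph{interiors} of the disks $\FD(\CQ^i_j)$ while \emph{keeping} their boundary cycles $C(\CQ^i_j)$ inside the cut graph. The residual left/right ambiguity on each $C(\CQ^i_j)$ is then neutralised by adding orbit colours from $\Aut\big(I^*(\CQ^i_j)\big)$ (item~(2\ref{item:2d}) in the proof of Case~1), which is exactly what Corollary~\ref{cor:Q-invariance} makes possible. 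All of this rests on Lemma~\ref{lem:disk-non-def}, which defines $C(\CQ)$ and $I(\CQ)$, and that lemma is only available for \emph{non-simplifying} patches.

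This is also why the paper's top-level case split differs from yours. The necklace is used only under the blanket Assumption~\ref{ass:no-simplifying} that $G$ contains \emph{no} simplifying patch; then every subpatch $\CQ^i_j$ of $\CB$ is non-simplifying and Lemma~\ref{lem:disk-non-def} applies throughout. When some simplifying patch exists (Case~2), the paper abandons the necklace entirely and works directly with a minimal simplifying patch $\CQ$, using the regional graph $J(\CQ)$, its fibres, and a further subdivision into Cases~2.1 and~2.2. Your plan to ``handle simplifying subpatches within the necklace'' would need precisely the invariance that Corollary~\ref{cor:Q-invariance} fails to provide for them, so it does not go through as stated.
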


The proof will proceed in a sequence of lemmas. Eventually, it will
diverge into two main cases, to be dealt with in
Subsections~\ref{subsec:necklace} and \ref{subsec:simplifying}. We first show that we can assume without loss of generality that $\rho(G) \geq 3$.  

\begin{lemma}\label{lem:representativity}
  Let $G$ be a coloured graph that has an embedding of
  representativity at most $2$ into a surface of Euler genus at most
  $g$. Then there is a sentence
  $\iso G\in\LW[s+2]$ that identifies $G$.
\end{lemma}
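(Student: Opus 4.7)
The approach is to exploit the low representativity to find a small cutset $V_0\subseteq V(G)$ of at most two vertices whose removal strictly reduces the Euler genus, so that the induction hypothesis applies to $G\setminus V_0$ and we pay only the $+2$ cost of individualising $V_0$. Let $\FS$ be the surface of Euler genus at most $g$ into which $G$ is embedded with $\rho(G)\le 2$, fix a witnessing $G$-normal non-contractible simple closed curve $\Fg$ with $|\Fg\cap\FG|\le 2$, and set $V_0 \coloneqq \Fg\cap V(G)$, so $|V_0|\le 2$. I first verify that $G_0 \coloneqq G\setminus V_0 \in \CE_{g-1}$: take a thin tubular neighbourhood of $\Fg$ that meets $\FG$ only inside small disjoint disks around the vertices of $V_0$, delete this neighbourhood from $\FS$, and cap the resulting boundary component(s) with disks. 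This is the classical cut-and-cap operation, and because $\Fg$ is non-contractible the Euler genus strictly drops (by $1$ in the one-sided case, by $2$ in the two-sided case, with the separating subcase yielding two smaller pieces whose connected sum has Euler genus at most $g-1$). Since removing $V_0$ from $G$ also deletes all edges incident to $V_0$, the topological realisation of $G_0$ avoids the deleted neighbourhood and so embeds into the new surface.

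Second, I encode the interaction between $V(G_0)$ and $V_0$ into the colouring: let $\hat G_0$ be the coloured graph obtained from $G_0$ by refining each vertex colour $\chi_G(w,w)$ into the tuple $\bigl(\chi_G(w,w),\chi_G(w,v_1),\chi_G(v_1,w),\chi_G(w,v_2),\chi_G(v_2,w)\bigr)$, with obvious truncation if $|V_0|<2$. Since the underlying graph of $\hat G_0$ lies in $\CE_{g-1}$, Assumption~\ref{ass:induction} supplies a sentence $\varphi\in\LW[s]$ that identifies $\hat G_0$. I then translate $\varphi$ into a formula $\varphi'(x_1,x_2)\in\LW[s+2]$ by (a) rewriting every vertex-colour atom $R_{\tilde c}(y,y)$ of $\hat G_0$, where $\tilde c = (c_0,\ldots,c_4)$, as the conjunction $R_{c_0}(y,y)\wedge R_{c_1}(y,x_1)\wedge R_{c_2}(x_1,y)\wedge R_{c_3}(y,x_2)\wedge R_{c_4}(x_2,y)$ (using free variables $\{y,x_1,x_2\}$), and (b) relativising every counting quantifier $\exists^{\ge p} y\,\psi$ to $y\notin\{x_1,x_2\}$. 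The free parameters $x_1,x_2$ raise the width of every subformula by exactly $2$, so $\varphi'(x_1,x_2)\in\LW[s+2]$.

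Finally, set
\[
\iso G \coloneqq \exists x_1\exists x_2\,\bigl(x_1\neq x_2 \wedge \alpha(x_1,x_2)\wedge \varphi'(x_1,x_2)\bigr),
\]
where $\alpha(x_1,x_2)$ conjoins the atomic conditions $R_{\chi_G(v_i,v_i)}(x_i,x_i)$ together with $R_{\chi_G(v_1,v_2)}(x_1,x_2)$ and $R_{\chi_G(v_2,v_1)}(x_2,x_1)$, with the obvious simplifications if $|V_0|<2$. Then $\iso G\in\LW[s+2]$, and $G\models\iso G$ with witnesses $x_i\coloneqq v_i$. Conversely, if $H\models\iso G$ with witnesses $w_1,w_2$, then $\alpha$ fixes the local colours at $\{w_1,w_2\}$ to match those at $\{v_1,v_2\}$, and the isomorphism $\hat H_0\to\hat G_0$ furnished by $\varphi$ extends via $w_i\mapsto v_i$ to a colour-preserving isomorphism $H\to G$, because the augmented colouring ensures that all arc colours between $V(G_0)$ and $V_0$ are preserved. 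The main obstacle is the geometric Step~1: making the cut along $\Fg$ precise, separately in the one- and two-sided cases, so that $G_0$ genuinely embeds in a surface of Euler genus at most $g-1$; the remaining translation and correctness argument are then purely syntactic.
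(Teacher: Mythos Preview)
Your overall architecture is the same as the paper's: delete the at most two vertices hit by a non-contractible $G$-normal curve, colour the remainder to record adjacencies to the deleted vertices, apply the induction hypothesis, and relativise with two free parameters. The syntactic translation in your second and third steps is fine.

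The gap is in Step~1. Your assertion that $G_0=G\setminus V_0\in\CE_{g-1}$ is not correct in general, and your justification for the separating case is wrong on its face: when $\Fg$ is two-sided and separating, cutting and capping yields two closed surfaces $\FS_1,\FS_2$ with $\eg(\FS_1)+\eg(\FS_2)=\eg(\FS)=g$, and their connected sum is (up to homeomorphism) $\FS$ again, of Euler genus $g$, not $g-1$. What is true is only that each piece has Euler genus at most $g-1$, hence every \emph{connected component} of $G_0$ lies in $\CE_{g-1}$. But Euler genus is additive over connected components (indeed over blocks), so the disjoint union $G_0$ may well have Euler genus $g$. Concretely, for $g=2$ take $\FS=\FN_2$ with the separating curve that splits the Klein bottle into two M\"obius bands; capping gives two projective planes, and if each carries a $K_5$ then $G_0\supseteq K_5\sqcup K_5$ has $\eg=2=g$, so $G_0\notin\CE_{g-1}$ and Assumption~\ref{ass:induction} does not apply to $\hat G_0$ directly.

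The fix is exactly what the paper does: argue only that each connected component of $G_0$ is in $\CE_{g-1}$, apply Assumption~\ref{ass:induction} to each component, and then invoke Corollary~\ref{cor:id-comp} (which holds for $s\ge 4\ge 3$) to obtain a single $\LW[s]$-sentence identifying the coloured graph $\hat G_0$. After that one insertion your translation to $\varphi'(x_1,x_2)$ and the construction of $\iso G$ go through unchanged.
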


\begin{proof}
  Suppose that $G$ is embedded in a surface $\FS$ of Euler genus $g$
  with representativity $\rho(G)\le2$. Let $\Fg$ be a $G$-normal non-contractible
  simple closed curve in $\FS$ such that $U \coloneqq \Fg\cap V(G)$ contains at
  most two vertices. We only consider the case that $U=\{u_1,u_2\}$ for some $u_1, u_2 \in V(G)$ (possibly equal), the case $U = \emptyset$ follows similarly. Let $H_1,\ldots,H_m$ be the connected
  components of $G\setminus U$. 
 Every $H_i$
  can be embedded into a simpler surface obtained from $\FS$ by
  cutting along $\Fg$ and gluing (a) disk(s) on the hole(s). This
  means that $\eg(H_i)\le g-1$. We colour the vertices of $H_i$ so as to encode the adjacencies to $u_1$ and $u_2$. By
  Assumption~\ref{ass:induction}, there is a $\LW[s]$-sentence
  $\psi_i$ that identifies the coloured version of $H_i$.  Thus by
  Corollary~\ref{cor:id-comp}, there is a $\LW[s]$-sentence $\psi$ that
  identifies the disjoint union of the coloured $H_i$, that is, the coloured version of
  $G\setminus\{u_1,u_2\}$. Now we can identify $G$ by a sentence saying that there
  exist vertices $x_1,x_2$ such that deleting these vertices leaves a
  graph satisfying $\psi$ and having the correct
  adjacencies to $x_1,x_2$. 
This requires $s+2$ variables.
\end{proof}

So we can restrict our attention to graphs that only have embeddings of
representativity at least $3$. Furthermore, by
Lemma~\ref{lem:3-connected} we can restrict our
attention to 3-connected graphs (at the cost of $1$ more variable). Recall that a \emph{polyhedral
  embedding} is an embedding of representativity at least $3$ of a
3-connected graph. 
Thus to prove
Lemma~\ref{lem:inductive-step} and thereby complete the proof of
Theorem~\ref{thm:main}, it remains to
prove the following lemma.

\begin{lemma}\label{lem:main}
   Let $G$ be a coloured graph polyhedrally embedded in a surface
  $\FS$ of Euler genus $g$. Then there is a sentence
  $\iso G\in\LW[s+3]$ that identifies $G$.
\end{lemma}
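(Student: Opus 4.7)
The plan is to realise the strategy outlined in the introduction. By the Necklace Lemma (Lemma~\ref{lem:exnecklace}), $G$ contains a reducing necklace $\CB=(u^0,\CQ^0,u^1,\CQ^1,u^2,\CQ^2)$, so $G(\CB)$ contains a non-contractible cycle and every connected component of $G\setminus V(\CB)$ embeds in a surface of Euler genus at most $g-1$. By Assumption~\ref{ass:induction} each such component, suitably recoloured to record its attachment to $V(\CB)$, is identified by some $\LW[s]$-sentence. The sentence $\iso G$ will existentially quantify three variables $x_0,x_1,x_2$ standing for $u^0,u^1,u^2$, spending three of the $s+3$ variables, and then describe both the coloured graph $G[V(\CB)]$ and the components of $G\setminus V(\CB)$ using the remaining budget.

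The first step is to define $V(\CB)$ by a $\LW[3]$-formula $\xi(x_0,x_1,x_2,y)$: since each canonical sps $\CQ^G(u^i,u^{i+1})$ consists exactly of those $v$ with $\dist(u^i,v)+\dist(v,u^{i+1})=\dist(u^i,u^{i+1})$, this is expressible via the distance formulas of Example~\ref{exa:dist}. I cannot axiomatise ``reducing'' in $\LC$, but this is inessential: I simply existentially guess $(u^0,u^1,u^2)$, and the rest of the sentence will certify the expected post-removal structure, so only graphs isomorphic to $G$ produce a witness.

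Next, for each isomorphism-and-attachment type $[A]$ of a possible component, Assumption~\ref{ass:induction} provides a $\LW[s]$-sentence identifying it. Applying Lemma~\ref{lem:rel2comp2} with $k=3$, $\ell=s$, and $\phi=\xi$ lifts this to a formula $\tilde\psi_{[A]}(x_0,x_1,x_2,y)\in\LW[s+3]$ asserting that $y$ lies in a component of type $[A]$, and counting quantifiers over $y$ encode the correct multiplicity of each type. Combining with a $\LW[3]$-description $\phi_{\CB}$ of $G[V(\CB)]$ (readable off $\xi$ and the edge and colour relations) yields the desired sentence
\[
  \iso G \;\coloneqq\; \exists x_0\exists x_1\exists x_2\big(\phi_{\CB}(x_0,x_1,x_2)\wedge\phi_{\mathrm{comp}}(x_0,x_1,x_2)\big),
\]
which lies in $\LW[s+3]$.

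The main obstacle is that the individual patches of the necklace need not be simplifying on their own. When a patch $\CQ^i$ is non-simplifying, $G\setminus V(\CQ^i)$ has a unique non-planar component $A^*$ (Lemma~\ref{fact:astar}) which need \emph{not} lie in $\CE_{g-1}$, so a naive component-wise use of the inductive hypothesis may fail and one has to argue more carefully about $V(\CB)$ and its interaction with the components. This forces the split between Subsections~\ref{subsec:necklace} and~\ref{subsec:simplifying}. One subcase exploits the uniqueness and automorphism-invariance of $A^*$ (Corollary~\ref{cor:Q-invariance}) and the fact that $G/A^*$ is 3-connected planar, hence identifiable by a $\LW[4]$-sentence via Theorem~\ref{thm:planar-dimension}, letting Lemma~\ref{lem:factor} pull this planar identification back into $G$ at negligible variable cost. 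The harder subcase arises when only simplifying patches are available: Corollary~\ref{cor:Q-invariance} has no analogue here (as the remark following it notes), so canonical invariants are lacking, and the argument must manage families of patches and their intersection patterns while keeping the total variable count at $s+3$.
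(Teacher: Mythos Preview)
Your overall plan—guess three vertices, define $V(\CB)$, identify the components of $G\setminus V(\CB)$ by induction, and glue—has the right shape, but the gluing step hides the whole difficulty, and your proposal does not address it.

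The gap is in ``suitably recoloured to record its attachment to $V(\CB)$''. The only invariants of a neighbour $w\in V(\CB)$ that you can name from the three parameters $u^0,u^1,u^2$ are its sps index $i$ and its height in $\CQ^i$. But a non-trivial patch typically contains several vertices of the same height, and these need not lie in a common orbit of $\Aut(G_{u^0,u^1,u^2})$. Two graphs can agree on the isomorphism type of $G[V(\CB)]$, on the isomorphism types and multiplicities of all components, and on all (index, height) multisets of attachments, yet be non-isomorphic because the attachments land on different specific vertices of the same height. Your sentence $\iso G$ would not separate them, so the existential guess of $(x_0,x_1,x_2)$ is \emph{not} inessential as you claim: it is precisely what lets a non-isomorphic $H$ produce a false witness.

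The paper does not remove $V(\CB)$. In Case~1 (no simplifying patch exists anywhere in $G$) every subpatch of the necklace is non-simplifying, so Lemma~\ref{lem:disk-non-def} provides $\LW[7]$-definitions of the internal graph $I(\CQ^i_j)$ and boundary cycle $C(\CQ^i_j)$ of each subpatch. One then forms a cut graph $\Cut(\CB)$ by deleting only the \emph{interiors} of the disks and the articulation vertices, keeping the boundary cycles in place; components of $\Cut(\CB)$ attach only to boundary-cycle vertices, and the residual two-fold ambiguity (the two sides of each $C(\CQ^i_j)$) is resolved by colouring boundary vertices with their orbit in $\Aut(I^*(\CQ^i_j))$, which is definable because $I(\CQ^i_j)$ is planar. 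In Case~2 (a simplifying patch exists) the argument abandons the necklace entirely and works directly with a minimal simplifying patch and its fibre structure; your last paragraph suggests both cases run through the necklace, which is not how the split works.

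A minor correction: your $\xi$ has four free variables, so it lies in $\LW[4]$, not $\LW[3]$; and identifying the planar graph $G[V(\CB)]$ needs $\LW[4]$ plus three parameters, hence width $7$, not $3$. These still fit in $s+3$, but the bookkeeping as written is off.
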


For the rest of the section, we fix a positive integer $n$. The
intended meaning of $n$ is that it is the order of the target graph
$G$. At this point we have fixed three numerical parameters:
the Euler genus $g$, the number $s$ of variables required to identify
graphs of smaller Euler genus, and the order $n$.

To prepare for the proof of Lemma~\ref{lem:main}, we define
a number of useful concepts in $\LW[k]$ for sufficiently small $k$.

We start the proof with a simple lemma that follows immediately from
Assumption~\ref{ass:induction}.

\begin{lemma}\label{lem:define-genus}
  Let $h<g$. Then there is a sentence
  $\formel{genus}_{h} \in \LW[s]$ such that for every graph
  $G$ of order $|G| \leq n$, the following holds:
    \[
      G \models \formel{genus}_{h} \iff G \in \CE_{h}.
    \]
\end{lemma}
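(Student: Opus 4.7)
The plan is to exploit the bound $|G|\le n$ and reduce matters to a finite disjunction of identifying sentences supplied by the inductive assumption. Since $h\le g-1$, every graph in $\CE_h$ is in $\CE_{g-1}$, so Assumption~\ref{ass:induction} applies to it.

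Concretely, I would first observe that, because $|G|\le n$ and (as throughout the paper) we may restrict attention to graphs whose colour alphabet is finite, there are only finitely many isomorphism types of (coloured) graphs in $\CE_h$ of order at most $n$. Let $F_1,\dots,F_k$ be a list of representatives of these types. By Assumption~\ref{ass:induction}, each $F_i$ is identified by some sentence $\iso{F_i}\in\LW[s]$.

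Then I would simply take the disjunction
\[
  \formel{genus}_h \;\coloneqq\; \bigvee_{i=1}^{k}\iso{F_i}.
\]
Since each $\iso{F_i}$ is in $\LW[s]$ and taking disjunctions does not increase the width of a formula, $\formel{genus}_h\in\LW[s]$. For correctness, for any graph $G$ with $|G|\le n$ one has $G\models\formel{genus}_h$ iff $G\cong F_i$ for some $i$, which by the choice of the $F_i$ happens precisely when $G\in\CE_h$.

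There is no real obstacle in this argument; the only subtlety worth mentioning is justifying the finiteness of the list $F_1,\dots,F_k$. For uncoloured graphs this is immediate from $|G|\le n$, and for coloured graphs it follows from the convention that only finitely many colours occur (in particular, when $\formel{genus}_h$ is later applied to components of the fixed target graph, the colour alphabet is automatically bounded in terms of $n$).
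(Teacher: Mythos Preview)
Your proposal is correct and follows essentially the same approach as the paper: the paper's proof is precisely to take the disjunction of the identifying sentences $\iso H\in\LW[s]$ over all (finitely many) $H\in\CE_h$ with $|H|\le n$. Your additional remark about the finiteness of the colour alphabet is a reasonable clarification but not something the paper dwells on.
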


\begin{proof}
  Since there are only finitely many graphs of order at most $n$,
        we can let $\formel{genus}_{h}$ be a disjunction over the
        $\iso G \in \LW[s]$ for all $H \in \CE_{h}$ with $|H|\le n$.
\end{proof}

In the following lemmas, we study the definability of shortest path
systems, patches, and necklaces. Our strategy will then be to remove
either a (definable) reducing necklace or a (definable) simplifying patch
from the graph, then apply the induction hypothesis
(Assumption~\ref{ass:induction}) to the resulting simpler graph, and
finally lift the identifying sentence to the original graph.

\begin{lemma}\label{lem:csps-def}\sloppy
  There are formulae $\formel{csps-vert}(x,x',y) \in \LW[3]$,
  $\formel{csps-edge}(x,x',y_1,y_2) \in \LW[4]$,
  $\formel{csps-art}(x,x',y) \in \LW[4]$, and, for $i\ge 0$, formulae
  $\formel{csps-height}_k(x,x',y) \in \LW[3]$ and 
  $\formel{csps-art}_i(x,x',y) \in \LW[4]$
   such that for all connected graphs $G$ of order $|G|\le n$ and all vertices
   $u,u'\in V(G)$,
  \begin{align*}
    \formel{csps-vert}[G,u,u',y]&=V\big(\CQ^G(u,u')\big),\\
    \formel{csps-edge}[G,u,u',y_1,y_2]&=E\big(\CQ^G(u,u')\big),\\
    \formel{csps-art}[G,u,u',y]&=\art\big(\CQ^G(u,u')\big),\\
    \formel{csps-height}_i[G,u,u',y]&=\Big\{v \in V\big(\CQ^G(u,u')\big)\Bigmid
    \height^{\CQ^G(u,u')}(v) = i \Big\},\\
    \formel{csps-art}_i[G,u,u',y]&=\{v\},\; \parbox[t]{6cm}{where $v$ is the $i$-th vertex when sorting $\art\big(\CQ^G(u,u')\big)$ by height.}
  \end{align*}
\end{lemma}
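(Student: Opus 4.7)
The plan is to reduce every formula to the distance formulae $\formel{dist}_{=k}(x,y)\in\LW[3]$ of Example~\ref{exa:dist}: since $|G|\le n$, all relevant distances lie in $\{0,1,\dots,n\}$, so any condition involving numerical distances can be encoded as a bounded Boolean combination of such atoms.

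First I would handle $\formel{csps-vert}$ using the characterisation $v\in V(\CQ^G(u,u'))$ iff $\dist(u,v)+\dist(v,u')=\dist(u,u')$:
\[
  \formel{csps-vert}(x,x',y)\coloneqq\bigvee_{\substack{k,\ell\ge 0\\ k+\ell\le n}}\big(\formel{dist}_{=k}(x,y)\wedge\formel{dist}_{=\ell}(y,x')\wedge\formel{dist}_{=k+\ell}(x,x')\big),
\]
whose free variables are just $x,x',y$ (bound variables inside the $\formel{dist}$ formulae can be renamed to be shared), giving $\LW[3]$. Then $\formel{csps-height}_i(x,x',y)\coloneqq\formel{csps-vert}(x,x',y)\wedge\formel{dist}_{=i}(x,y)\in\LW[3]$. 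For $\formel{csps-edge}$, the edge $y_1y_2$ lies in $E(\CQ^G(u,u'))$ iff $y_1y_2\in E(G)$ and either $\dist(u,y_1)+1+\dist(y_2,u')=\dist(u,u')$ or the analogous equation with $y_1,y_2$ swapped holds; writing this as the conjunction of $E(y_1,y_2)$ with a bounded disjunction over the two orientations yields a formula with free variables $\{x,x',y_1,y_2\}$ and hence width $4$.

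The key observation for articulation vertices is that every shortest $u$-$u'$ path contains exactly one vertex at each height $0,1,\dots,\dist(u,u')$. Consequently a vertex $y\in V(\CQ^G(u,u'))$ is an articulation vertex iff it is the \emph{unique} element of $V(\CQ^G(u,u'))$ at its height: if another $y'\neq y$ at the same height lay on some shortest path $Q$ through $y$, then $Q$ would meet two distinct vertices at that height, which is impossible; conversely, uniqueness at height $k$ forces every shortest path to select $y$ as its $k$-th vertex. This gives
\[
  \formel{csps-art}(x,x',y)\coloneqq\formel{csps-vert}(x,x',y)\wedge\forall y'\big((\formel{csps-vert}(x,x',y')\wedge\formel{same-height}(x,y,y'))\to y=y'\big),
\]
where $\formel{same-height}(x,y,y')\coloneqq\bigvee_{k\le n}\big(\formel{dist}_{=k}(x,y)\wedge\formel{dist}_{=k}(x,y')\big)\in\LW[3]$. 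The widest subformula of $\formel{csps-art}$ has free variables $\{x,x',y,y'\}$, so $\formel{csps-art}\in\LW[4]$.

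Finally, since articulation vertices have pairwise distinct heights by the uniqueness just established, they are linearly ordered by height. So I would define
\[
  \formel{csps-art}_i(x,x',y)\coloneqq\formel{csps-art}(x,x',y)\wedge\bigvee_{0\le k\le n}\Big(\formel{dist}_{=k}(x,y)\wedge\exists^{=i-1}y'\big(\formel{csps-art}(x,x',y')\wedge\bigvee_{0\le j<k}\formel{dist}_{=j}(x,y')\big)\Big),
\]
expressing ``$y$ is an articulation vertex at some height $k$, and exactly $i-1$ articulation vertices have strictly smaller height.'' The main obstacle here is controlling width: the nested $\formel{csps-art}(x,x',y')$ contributes a $4$-variable subformula with free variables $\{x,x',y',z\}$ for its internal bound witness $z$, but this subformula does not mention $y$, so we stay at width $4$; likewise the $4$-variable subformula inside the outer $\formel{csps-art}(x,x',y)$ does not mention $y'$. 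By Lemma~\ref{lem:width} we may then rename bound variables to obtain $\formel{csps-art}_i\in\LW[4]$.
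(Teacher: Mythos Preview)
Your proposal is correct. The constructions of $\formel{csps-vert}$, $\formel{csps-height}_i$, and $\formel{csps-edge}$ match the paper's essentially verbatim, and your width bookkeeping for $\formel{csps-art}_i$ is sound (the paper merely asserts that combining $\formel{csps-art}$ with $\formel{csps-height}_i$ gives width~$4$ without spelling out the counting argument you give).

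The one genuine difference is your characterisation of articulation vertices. The paper defines
\[
  \formel{csps-art}(x,x',y)\;\coloneqq\;\formel{csps-vert}(x,x',y)\wedge\forall z\Big(\formel{csps-vert}(x,x',z)\to\big(\formel{csps-vert}(x,y,z)\vee\formel{csps-vert}(y,x',z)\big)\Big),
\]
i.e.\ $y$ is an articulation vertex iff every sps vertex lies on a shortest path from $x$ to $y$ or from $y$ to $x'$. You instead use the equivalent observation that $y$ is an articulation vertex iff it is the \emph{unique} sps vertex at its height. Both characterisations are easily seen to be correct and both yield width~$4$; yours is arguably cleaner, since it avoids nesting $\formel{csps-vert}$ with shifted endpoints, while the paper's version stays closer to the literal definition of ``lies on every shortest path''. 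Either choice works equally well for the downstream applications in the paper.
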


Recall that $\CQ^G(u,u')$ is the canonical sps from
$u$ to $u'$, that is, the set of all shortest paths from $u$ to $u'$.
\begin{proof}
  We let $\formel{csps-vert}(x,x',y) \coloneqq
  \bigvee\limits_{k=0}^{n} \left(\formel{dist}_{=k}(x,x') \wedge
    \bigvee\limits_{i=0}^k \big(\formel{dist}_{=i}(x,y) \wedge
      \formel{dist}_{=k-i}(y,x')\big)\right)$, where $\formel{dist}_{=k}(x,x')$ is
  the $\LW[3]$-formula defined in Example~\ref{exa:dist}. Note that $\formel{csps-vert}(x,x',y)\in\LW[3]$.

  Since a vertex $v$ lies on a shortest path from $u$ to $u'$ if and only if taking the shortest path from $u$ to $v$ and then to $u'$ yields no detour, the formula $\formel{csps-vert}$ defines the desired set of vertices.

  An edge is contained in $E\big(\CQ^G(u,u')\big)$ if and only if it connects an sps-vertex of a certain height $h$ with an sps-vertex of height $h+1$. Thus, it is easy to see that the formula for $\formel{csps-edge}$ can be constructed to have width $4$.

  A vertex $v$ is an articulation vertex of $\CQ^G(u,u')$ if every shortest path from $u$ to $u'$ contains $v$:
  \begin{align*}
    \formel{csps-art}(x,x',y) \coloneqq{}
    & \formel{csps-vert}(x,x',y) \wedge{}\\
    & \forall z \Big(\formel{csps-vert}(x,x',z) \rightarrow \big(\formel{csps-vert}(x,y,z) \vee \formel{csps-vert}(y,x',z)\big)\Big).
  \end{align*}
  This formula has width $4$.

  Similarly, the height of $v$ in $\CQ^G(u,u')$ is $i$ if and only if $v$ is contained in the sps and $G \models \formel{dist}_{=i}(u,v)$. Thus, we can construct $\formel{csps-height}_i(x,x',y)$ with width $3$.

  By employing $\formel{csps-art}$ and $\formel{csps-height}_{i}$, we can also construct $\formel{csps-art}_i$ with width $4$.
\end{proof}

The lemma shows how to define canonical shortest paths systems. We
would also like to define patches and necklaces, but they depend on the
embedding and since the embedding may not be unique, in general
the property of an sps being a patch is not definable in a logic which only has access to the abstract graph and not the embedding. We therefore define
``pseudo-patches'' and ``pseudo-necklaces'' purely in terms of the abstract graph; in
some situations they may serve as substitutes for the real object.

\begin{definition}
  Let $G$ be a graph. 
  \begin{enumerate}
  \item A \emph{pseudo-patch} in $G$ is an sps that has no
    articulation vertices.
  \item A \emph{pseudo-necklace} in $G$ is a tuple $\CB \coloneqq (u^0,\CQ^0,u^1,\mathcal{Q}^1,u^2,\CQ^2)$,
  where $u^0,u^1,u^2\in V(G)$ and $\CQ^i = \CQ^G(u^{i},u^{i+1})$
  (indices taken modulo 3) is the canonical sps
  from $u^{i}$ to $u^{i+1}$, such that $u^0,u^1,u^2$ are pairwise
  distinct and 
    $V(\CQ^{i})\cap V(\CQ^{i+1})=\{u^{i+1}\}$ (indices
    modulo 3).
 \uend 
  \end{enumerate}
\end{definition}

All the definitions for general sps apply to pseudo-patches, and we
can generalise all definitions that do not refer to the embedding (for
example, $V(\CB)$, $E(\CB)$, articulation vertices, et cetera) from
necklaces to pseudo-necklaces. Observe that every patch is a pseudo-patch and
every necklace is a pseudo-necklace. 

%\begin{corollary}
%There are formulae $\formel{p-patch}(x,x')\in\LW[4]$ and
%$\formel{p-necklace}(x,x',x'')\in\LW[4]$
%such that for all graphs $G$ of order
%$|G|\le n$ and all $u,u',u''\in V(G)$,
%\begin{align*}
%  G\models \formel{p-patch}(u,u')&\iff\CQ^{G}(u,u')\text{ is a
%      pseudo-patch},\\
%G\models \formel{p-necklace}(u,u',u'')&\iff\big(u,\CQ^{G}(u,u'),u', \CQ^{G}(u',u''),u'', \CQ^{G}(u'',u)\big)\text{ is a
%      pseudo-necklace}.
%\end{align*}
%\end{corollary}

\begin{corollary}\label{cor:necklace-def1}%\sloppy
  There are $\LW[4]$-formulae 
  \[
  \begin{array}{l@{\hspace{1.5cm}}l}
  \formel{nl-vert}(x^0,x^1,x^2,y),&
  \formel{nl-edge}(x^0,x^1,x^2,y),\\
  \formel{nl-art}(x^0,x^1,x^2,y),&
  \formel{nl-art}_i(x^0,x^1,x^2,y),\\
  \end{array}
  \]
  such that for all connected graphs $G$ of order $|G|\le n$ and all $u^0,u^1,u^2\in
  V(G)$ the following holds. If
  $\CB \coloneqq \big(u^0,\CQ^0,u^1,\CQ^1,u^2,\CQ^2\big)$
  is a pseudo-necklace in $G$,
  then
  \begin{align*}
  \formel{nl-vert}[G,u^0,u^1,u^2,y] &=V(\CB); \\
  \formel{nl-edge}[G,u^0,u^1,u^2,y_1,y_2] &=E(\CB); \\
  \formel{nl-art}[G,u^0,u^1,u^2,y] &=\art(\CB); \\
  \formel{nl-art}_i[G,u^0,u^1,u^2,y] &= \{v\},\; \parbox[t]{7.5cm}{where $v$ is the $i$-th vertex in the linear
    order of $\art(\CB)$ that orders the articulation vertices of
    $\CQ^0$, $\CQ^1$, and $\CQ^2$ by increasing height and puts the articulation vertices of
    $\CQ^0$ before those of $\CQ^1$ and the latter ones before those of $\CQ^2$.}
  \end{align*}
\end{corollary}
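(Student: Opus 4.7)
The plan is to assemble each formula as a Boolean combination of the $\formel{csps}$-formulas supplied by Lemma~\ref{lem:csps-def}, exploiting the decompositions
\[
  V(\CB)=\bigcup_{i=0}^2 V(\CQ^i),\quad E(\CB)=\bigcup_{i=0}^2 E(\CQ^i),\quad \art(\CB)=\{u^0,u^1,u^2\}\cup\bigcup_{i=0}^2\art(\CQ^i),
\]
together with the fact that each $\formel{csps}$-formula mentions at most two of the three base points $x^0,x^1,x^2$ at a time. The remaining base point can then ride along as a free variable without pushing any subformula past width $4$.

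The first three formulae are direct disjunctions. Setting
\[
  \formel{nl-vert}(x^0,x^1,x^2,y)\coloneqq\bigvee_{i=0}^{2}\formel{csps-vert}(x^i,x^{i+1},y)
\]
(indices modulo~$3$) captures $V(\CB)$, and every subformula has free variables in $\{x^0,x^1,x^2,y\}$, so it lies in $\LW[4]$. The formula $\formel{nl-edge}$ is defined analogously from $\formel{csps-edge}(x^i,x^{i+1},y_1,y_2)$; it carries five free variables and I read the signature in the statement as a minor typo. The formula $\formel{nl-art}$ is the disjunction of $y=x^i$ and $\formel{csps-art}(x^i,x^{i+1},y)$ over $i=0,1,2$, and lies in $\LW[4]$ by the same variable bookkeeping.

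The main work is $\formel{nl-art}_i$, where we must pick out one specific vertex in the concatenated linear order. I would split into three groups of disjuncts depending on which of $\CQ^0,\CQ^1,\CQ^2$ the target vertex lies in, using counting quantifiers to record how many articulation vertices appear in the preceding sps's. A representative disjunct for the case ``$y$ lies in $\CQ^2$'' is
\[
  \Big(\exists^{=k_0} z\,\formel{csps-art}(x^0,x^1,z)\Big)\wedge\Big(\exists^{=k_1} z\,\formel{csps-art}(x^1,x^2,z)\Big)\wedge\formel{csps-art}_{i-k_0-k_1}(x^2,x^0,y),
\]
with the disjunction running over all non-negative pairs $(k_0,k_1)$ satisfying $k_0+k_1<i$. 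This disjunction is finite since $i$ and $n$ are fixed, and analogous disjuncts handle the cases ``$y$ lies in $\CQ^0$'' and ``$y$ lies in $\CQ^1$''.

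The main obstacle is keeping everything within $\LW[4]$ while carrying all three base points and the target variable $y$ along. Each counting subformula $\exists^{=k}z\,\formel{csps-art}(x^a,x^b,z)$ has two free variables, its body $\formel{csps-art}(x^a,x^b,z)$ has three, and $\formel{csps-art}_j(x^a,x^b,y)$ itself has width~$4$; so the conjunction of two such counting subformulas with one occurrence of $\formel{csps-art}_j$ has all free variables inside $\{x^0,x^1,x^2,y\}$ and every subformula within the four-variable budget, giving $\formel{nl-art}_i\in\LW[4]$. One care point worth flagging is that endpoints like $u^1$ formally lie in both $\art(\CQ^0)$ and $\art(\CQ^1)$; I would accommodate this by adopting the convention that each shared vertex is counted at its earliest occurrence in the concatenated order, shifting the values of $k_0,k_1$ in the disjunction accordingly.
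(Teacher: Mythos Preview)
Your approach is correct and matches the paper's proof, which likewise assembles each necklace formula as a disjunction over $i\in\{0,1,2\}$ of the corresponding $\formel{csps}$-formulas from Lemma~\ref{lem:csps-def} and builds $\formel{nl-art}_i$ from $\formel{nl-art}$ together with the height formulas. Your observation that $\formel{nl-edge}$ necessarily carries five free variables is apt---the paper's statement has the same typo in the displayed signature, and the formula is only ever invoked later inside $\LW[7]$-contexts where the extra variable does no harm.
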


\begin{proof}
  For a vertex $v \in V(G)$, we have that $v \in V(\CB)$ if and only if $v \in \CQ^G(u^i, u^{i+1})$ for some $i \in \{0,1,2\}$ (indices taken modulo 3). Similarly, an edge is a necklace edge if and only if for some $i$, it connects a vertex $v \in V\big(\CQ^G(u^i, u^{i+1})\big)$ of a certain height $h$ with a vertex of height $h+1$ in $V\big(\CQ^G(u^i, u^{i+1})\big)$. Thus, containment in $V(\CB)$ and in $E(\CB)$ is definable in $\LW[4]$.

  A vertex $v$ is an articulation vertex of $\CB$ if $v$ equals $u_i$ or $v$ is an articulation vertex of $\CQ^G(u^i, u^{i+1})$ %$G \models v = u^i \vee \formel{csps-art}(u^i,u^{i+1},v)$ 
  for some $i \in \{0,1,2\}$. Thus, we can construct $\formel{nl-art}$ to have width $4$.

  We can construct $\formel{nl-art}_i$ in a straightforward manner by employing the subformulae $\formel{nl-art}$ and $\formel{csps-height}_i$.
\end{proof}

From Lemmas~\ref{lem:avoiding-path} and~\ref{lem:csps-def}, we obtain the
following corollary.

  \begin{corollary}
    There is a formula
    $\formel{csps-comp}(x,x',y,y') \in \LW[5]$ such
    that for all connected graphs $G$ of order $|G|\le n$ and all
    $u,u',v,v'\in V(G)$,
    \[
     G \models
      \formel{csps-comp}(u,u',v,v')\iff\parbox[t]{7cm}{$v$
        and $v'$ belong to the same connected component of
        $G\setminus V\big(\CQ^G(u,u')\big)$.}
    \]
  \end{corollary}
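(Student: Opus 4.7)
The plan is to simply instantiate Lemma~\ref{lem:avoiding-path} with the formula provided by Lemma~\ref{lem:csps-def}. Specifically, take the $\LW[3]$-formula $\phi(x,x',y) \coloneqq \formel{csps-vert}(x,x',y)$ from Lemma~\ref{lem:csps-def}, which satisfies $\phi[G,u,u',y] = V\big(\CQ^G(u,u')\big)$ for every connected graph $G$ of order at most $n$ and all $u,u' \in V(G)$. Here $\phi$ has $k = 2$ parameters $x_1,x_2$ (namely $x,x'$) and belongs to $\LW[\ell]$ with $\ell = 3$.

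Next, apply Lemma~\ref{lem:avoiding-path} to $\phi$. This yields a formula $\formel{comp}_\phi(x,x',y,y')$ of width $\max\{k+3,\ell\} = \max\{5,3\} = 5$ such that, for every connected graph $G$ of order at most $n$ and all $u,u',v,v' \in V(G)$, $G \models \formel{comp}_\phi(u,u',v,v')$ if and only if $v$ and $v'$ lie in the same connected component of $G \setminus \phi[G,u,u',y]= G \setminus V\big(\CQ^G(u,u')\big)$. Setting $\formel{csps-comp}(x,x',y,y') \coloneqq \formel{comp}_\phi(x,x',y,y')$ gives the desired $\LW[5]$-formula.

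There is no real obstacle: the corollary is an immediate specialisation of Lemma~\ref{lem:avoiding-path} to the $\LW[3]$-definition of the vertex set of a canonical shortest path system, and the arithmetic of the width bound matches exactly. The only thing to double-check is that the application is valid even though $v$ or $v'$ might lie in $V\big(\CQ^G(u,u')\big)$: this is handled inside the proof of Lemma~\ref{lem:avoiding-path}, whose output formula explicitly conjoins $\neg\phi(x_1,\ldots,x_k,y)\wedge\neg\phi(x_1,\ldots,x_k,y')$ to force both endpoints to lie outside the deleted set, so that the resulting bi-implication holds for all $v,v'\in V(G)$.
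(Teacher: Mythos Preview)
Your proposal is correct and follows exactly the approach the paper indicates: apply Lemma~\ref{lem:avoiding-path} with $\phi(x,x',y)\coloneqq\formel{csps-vert}(x,x',y)\in\LW[3]$ from Lemma~\ref{lem:csps-def}, so that $k=2$, $\ell=3$, and $\max\{k+3,\ell\}=5$. The paper itself gives no further detail beyond citing these two lemmas.
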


From
Lemma~\ref{lem:rel2comp2} applied to the $\LW[s]$-sentence
$\psi\coloneqq\formel{genus}_h$ 
of Lemma~\ref{lem:define-genus} and the $\LW[3]$-fomula
$\phi(x,x',y)\coloneqq \formel{csps-vert}(x,x',y)$  of
Lemma~\ref{lem:csps-def}, we obtain the following corollary.

\begin{corollary}\label{cor:comp-genus2}
    Let $h<g$. Then there is a formula
    $\formel{csps-comp-genus}_{h}(x,x',y) \in
    \LW[s+2]$ such that for all connected graphs $G$ of order
    $|G|\le n$ and all $u,u'\in V(G)$ the following
    holds. Let $\CQ \coloneqq \CQ^G(u,u')$, and let $A$ be the
    connected component of $v$ in $G\setminus V(\CQ)$ (assuming
    $v\not\in V(\CQ)$). Then
    \[
     G \models
      \formel{csps-comp-genus}_{h}(u,u',v)\iff v\not\in
      V(\CQ)\text{ and }\eg(A)\le h.
    \]
  \end{corollary}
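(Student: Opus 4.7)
The plan is to obtain the desired formula by directly instantiating Lemma~\ref{lem:rel2comp2} with appropriate choices of $\psi$ and $\phi$.

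First, I would take $\psi \coloneqq \formel{genus}_h$, which by Lemma~\ref{lem:define-genus} is a $\LW[s]$-sentence that holds on a graph $H$ of order at most $n$ exactly when $\eg(H)\le h$. Thus, in the notation of Lemma~\ref{lem:rel2comp2}, I set $\ell \coloneqq s$. Second, I would take $\phi(x,x',y) \coloneqq \formel{csps-vert}(x,x',y) \in \LW[3]$ from Lemma~\ref{lem:csps-def}, so that for every connected graph $G$ of order at most $n$ and all $u,u' \in V(G)$, the definable set $U = \phi[G,u,u',y]$ is precisely $V(\CQ^G(u,u'))$. Here $k=2$ (corresponding to the two parameter variables $x,x'$) and $m=3$.

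Feeding these choices into Lemma~\ref{lem:rel2comp2} yields a formula $\tilde\psi(x,x',y)$ in $\LW[\max\{k+\ell,m\}] = \LW[\max\{s+2,3\}]$. Since the hypothesis $s\ge 4$ of Assumption~\ref{ass:induction} gives $s+2 \ge 3$, this is exactly $\LW[s+2]$, as required. I then define $\formel{csps-comp-genus}_h(x,x',y) \coloneqq \tilde\psi(x,x',y)$.

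The correctness is immediate from the conclusion of Lemma~\ref{lem:rel2comp2}: for connected $G$ of order at most $n$ and $u,u',v \in V(G)$, we have $G \models \formel{csps-comp-genus}_h(u,u',v)$ iff $v \notin U = V(\CQ^G(u,u'))$ and the connected component $A_v$ of $v$ in $G \setminus V(\CQ^G(u,u'))$ satisfies $\formel{genus}_h$, which by the choice of $\psi$ is equivalent to $\eg(A_v) \le h$. Since the argument is a direct substitution into an already-established lemma, no real obstacle arises; the only point that requires any care is confirming the width arithmetic, which reduces to the observation $s \ge 4 > 1$.
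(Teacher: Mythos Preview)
Your proposal is correct and matches the paper's approach exactly: the paper also derives the corollary by instantiating Lemma~\ref{lem:rel2comp2} with $\psi\coloneqq\formel{genus}_h\in\LW[s]$ from Lemma~\ref{lem:define-genus} and $\phi(x,x',y)\coloneqq\formel{csps-vert}(x,x',y)\in\LW[3]$ from Lemma~\ref{lem:csps-def}. Your width computation and verification of the hypotheses are accurate.
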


  \begin{corollary}\label{cor:comp-genus3}
    There is a formula
    $\formel{csps-simplifying}(x,x') \in \LW[s+2]$ such
    that for all connected graphs $G\in\CE_g$ of order $|G|\le n$ and all
    $u,u'\in V(G)$,
    \[
     G \models
      \formel{csps-simplifying}(u,u')\iff\parbox[t]{7cm}{$\CQ^G(u,u')$
        is simplifying.}
\]
  \end{corollary}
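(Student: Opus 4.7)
The plan is to obtain $\formel{csps-simplifying}$ by combining $\formel{csps-vert}$ from Lemma~\ref{lem:csps-def} with $\formel{csps-comp-genus}_{g-1}$ from Corollary~\ref{cor:comp-genus2}. By Definition~\ref{def:simplifying}, the sps $\CQ \coloneqq \CQ^G(u,u')$ is simplifying precisely when every connected component of $G\setminus V(\CQ)$ lies in $\CE_{g-1}$, which can be rephrased as a universal statement: every vertex $v\in V(G)$ either belongs to $V(\CQ)$ or lies in a connected component of $G\setminus V(\CQ)$ of Euler genus at most $g-1$. Since Assumption~\ref{ass:induction} guarantees $g\ge 1$, the parameter $h=g-1$ is a legal choice in Corollary~\ref{cor:comp-genus2}.

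Concretely, I would define
\[
  \formel{csps-simplifying}(x,x') \;\coloneqq\; \forall y \bigl(\formel{csps-vert}(x,x',y) \vee \formel{csps-comp-genus}_{g-1}(x,x',y)\bigr).
\]
Correctness is immediate from the semantic guarantees of the two constituent formulae: the disjunction holds at $v$ iff either $v\in V(\CQ)$ or $v\notin V(\CQ)$ and its component in $G\setminus V(\CQ)$ has Euler genus at most $g-1$, and this is equivalent to saying that \emph{every} component of $G\setminus V(\CQ)$ has Euler genus at most $g-1$, i.e.\ that $\CQ$ is simplifying.

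For the width bound, note that $\formel{csps-vert}\in\LW[3]$ and $\formel{csps-comp-genus}_{g-1}\in\LW[s+2]$ by their respective statements, and every subformula inside the disjunction mentions only the three free variables $x,x',y$; universally quantifying over $y$ produces a formula with two free variables. Since $s\ge 4$, we have $\max\{3,s+2\}=s+2$, so every subformula of $\formel{csps-simplifying}$ has at most $s+2$ free variables, giving $\formel{csps-simplifying}\in\LW[s+2]$ as required.

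There is no genuine obstacle here: the corollary is essentially a bookkeeping consequence of Corollary~\ref{cor:comp-genus2}, whose construction has already absorbed the nontrivial work of translating an Euler genus bound on a connected component into a formula of width $s+2$ via the genus-defining sentence of Lemma~\ref{lem:define-genus} and the component-relativisation machinery of Lemma~\ref{lem:rel2comp2}. The only point worth a sanity check is that when $V(\CQ)=V(G)$ the universal statement is vacuously true, which matches the fact that in this degenerate case $G\setminus V(\CQ)$ has no components and is therefore trivially simplifying.
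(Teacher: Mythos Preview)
Your proposal is correct and is exactly the intended derivation: the paper states Corollary~\ref{cor:comp-genus3} without proof precisely because it follows immediately from Corollary~\ref{cor:comp-genus2} by universally quantifying over the third variable, just as you do. Your width accounting is also correct.
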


The formulae we have defined so far make no reference to an embedding
of the input graph. However, if we want to talk about patches and
necklaces, we need to take the embedding into account. For the rest of the
section, we fix a specific embedded graph $G$.

\begin{assumption}\label{ass:polyhedral2}
  $G$ is a coloured graph of order $|G|=n$ that is polyhedrally
  embedded in a surface $\FS$ of Euler genus $g$.
\end{assumption}

It is our goal to construct a $\LW[s+3]$-sentence that identifies $G$. 

Intuitively, the followinglemma says that even though the logical formulae only have access to the abstract graph and the disk of a patch and the internal graph depend on the embedding, we can still define the
internal graph. This is non-trivial and somewhat surprising.

\begin{lemma}\label{lem:disk-non-def}
  There are formulae 
  $\formel{int-vert}(x,x',y)$, $\formel{int-edge}(x,x',y_1,y_2)$, $\formel{bd-vert}(x,x',y)$,
  $\formel{bd-edge}(x,x',y_1,y_2)$ in $\LW[7]$ such that for all vertices
  $u,u'\in V(G)$ for which $\CQ \coloneqq \CQ^G(u,u')$ is a non-trivial
  non-simplifying patch, the following holds:
  \begin{align*}
    \formel{int-vert}[G,u,u',y]&=V\big(I(\CQ)\big),\\
    \formel{int-edge}[G,u,u',y_1,y_2]&=E\big(I(\CQ)\big),\\
    \formel{bd-vert}[G,u,u',y]&=V\big(C(\CQ)\big),\\
    \formel{bd-edge}[G,u,u',y_1,y_2]&=E\big(C(\CQ)\big).
  \end{align*}
\end{lemma}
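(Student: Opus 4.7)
The plan is to define the four sets in a cascade, with the main technical work concentrated in characterising $V(C(\CQ))$. First, since $\CQ$ is non-simplifying, Fact~\ref{fact:astar} guarantees that there is a unique non-planar connected component $A^*$ of $G\setminus V(\CQ)$; all other components are planar. So I would set
\[
\formel{astar-vert}(x,x',y)\coloneqq\neg\formel{csps-vert}(x,x',y)\wedge\neg\formel{csps-comp-genus}_0(x,x',y).
\]
Since planarity is $\LC[4]$-identifiable by Theorem~\ref{thm:planar-dimension}, plugging $\formel{genus}_0\in\LW[4]$ into Corollary~\ref{cor:comp-genus2} yields $\formel{csps-comp-genus}_0\in\LW[6]$, so $\formel{astar-vert}\in\LW[6]$.

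The heart of the proof is defining $V(C(\CQ))$. Here I would exploit the fact that, by \cite[Corollary~15.4.5]{gro17}, $G/A^*$ is $3$-connected and planar, and hence by Whitney's theorem admits a unique embedding in the sphere up to reflection; consequently $V(C(\CQ))$ is an invariant of the abstract graph $G$ (with $u,u'$ fixed). My combinatorial characterisation would read: $v\in V(C(\CQ))$ iff $v\in V(\CQ)$ and there exist two internally disjoint shortest paths $Q,Q'\in\CQ^G(u,u')$ with $v\in V(Q)\cup V(Q')$ such that (a) $V(Q)\cup V(Q')$ separates $V(\CQ)\setminus V(Q\cup Q')$ from $V(A^*)$ in $G$, and (b) for every further shortest path $Q''\in\CQ^G(u,u')$, each vertex of $V(Q'')\setminus V(Q\cup Q')$ also lies in a component of $G\setminus V(Q\cup Q')$ disjoint from $V(A^*)$. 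Condition (b) is what singles out the \emph{outermost} cycle, forcing the disk bounded by $Q\cup Q'$ to enclose the whole patch; without it, candidate cycles bounding smaller sub-disks of $\FD(\CQ)$ can spuriously satisfy (a) alone. The existence of two internally disjoint witness paths is expressed by selecting one interior vertex on each at a prescribed height via $\formel{csps-height}_i$ (Lemma~\ref{lem:csps-def}), while the separation conditions (a) and (b) are written using the component formula $\formel{comp}_\phi$ of Lemma~\ref{lem:avoiding-path}. With the parameters $x,x'$ fixed and careful reuse of the remaining five variables across the witness vertices, the two separating-cycle tests, and the existential quantifier over $V(A^*)$, the total width can be kept at $7$.

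The remaining formulae follow quickly. For $V(I(\CQ))$ I observe that $v\in V(I(\CQ))$ iff $v\in V(C(\CQ))$ or the component of $v$ in $G\setminus V(C(\CQ))$ contains no vertex of $V(A^*)$, and thus set
\[
\formel{int-vert}(x,x',y)\coloneqq\formel{bd-vert}(x,x',y)\vee\neg\exists y'\bigl(\formel{astar-vert}(x,x',y')\wedge\formel{comp}_{\formel{bd-vert}}(x,x',y,y')\bigr),
\]
which lies in $\LW[7]$. The edge formulae $\formel{bd-edge}$ and $\formel{int-edge}$ are then straightforward: an edge belongs to $E(C(\CQ))$ iff both endpoints lie in $V(C(\CQ))$ and they are consecutive-height on a shortest path in $\CQ$ (definable via $\formel{csps-edge}$), while an edge belongs to $E(I(\CQ))$ iff both endpoints lie in $V(I(\CQ))$ and there is an edge between them in $G$. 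The hard part throughout is Step~2: ruling out the spurious sub-cycles that satisfy only the bare separation condition~(a) by invoking condition~(b), whose correctness relies precisely on Whitney's uniqueness for $G/A^*$. Encoding this extra condition while staying within $7$ variables is the delicate aspect of the argument.
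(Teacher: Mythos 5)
Your high-level plan mirrors the paper's opening moves: define $A^*$ via the unique non-planar component (your $\formel{astar-vert}$ is essentially the paper's $\formel{astar}$), then pin down $C(\CQ)$, then derive $I(\CQ)$. But the central step — characterising $V\big(C(\CQ)\big)$ — is both differently conceived from the paper and, as written, incorrect.

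\textbf{The separation characterisation of $C(\CQ)$ fails.} Your conditions (a) and (b) are logically identical: since $V(\CQ)=\bigcup_{Q''\in\CQ}V(Q'')$, saying that each vertex of $V(Q'')\setminus V(Q\cup Q')$ lies in a component of $G\setminus V(Q\cup Q')$ disjoint from $V(A^*)$ (for all $Q''$) is exactly condition (a). So (b) adds nothing, and (a) alone is insufficient. Concretely, take $h=\dist(u,u')=2$ and a patch $\CQ$ whose paths are $u\,m_i\,u'$, $i=1,\dots,r$, embedded so that $C(\CQ)=u\,m_1\,u'\,m_r\,u$, with $N(A^*)=\{u,u',m_1\}$ (add edges or planar components between consecutive $m_i$ to make $G$ 3-connected). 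Then for $Q=u\,m_1\,u'$, $Q'=u\,m_{r-1}\,u'$, every path from $\{m_2,\dots,m_{r-2},m_r\}$ to $A^*$ must pass through $N(A^*)\subseteq V(Q\cup Q')$, so (a) holds; yet $m_{r-1}\notin V\big(C(\CQ)\big)$. Your formula would misclassify $m_{r-1}$. The intuition that you want the \emph{outermost} cycle is right, but the graph-theoretic separation test cannot distinguish the outer cycle from any inner one whose vertex set happens to cover $N(A^*)$.

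\textbf{The derivation of $\formel{int-vert}$ has a similar gap.} There can be \emph{planar} components of $G\setminus V(\CQ)$ embedded outside $\FD$ (attached only to $V\big(C(\CQ)\big)$). Such a component is, in $G\setminus V\big(C(\CQ)\big)$, isolated from $A^*$; your formula would wrongly include it in $V\big(I(\CQ)\big)$.

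\textbf{The definability claim is also unsubstantiated.} Even granting a correct characterisation, you never explain how to quantify over a pair of internally disjoint shortest paths $Q,Q'\in\CQ$ and then test a separation property of the set $V(Q)\cup V(Q')$ within $\LW[7]$. A single interior vertex per path at a ``prescribed height'' does not determine a shortest path in a non-trivial patch, and the separation condition needs the whole vertex set, which is of unbounded size. The paper circumvents this entirely by a different route: it picks anchor vertices $v_1,v_2$ (one adjacent to $A^*$ on $C(\CQ)$ of minimal height, one a neighbour of $a^*$ on a facial cycle of $G/A^*$ through the edge $v_1a^*$), invokes the discrete-colouring result for 3-connected planar graphs with three individualised vertices on a common facial cycle, uses Lemma~\ref{lem:factor} to transfer formulae from $G/A^*$ back to $G$, and then removes the anchors via the orbit-determination result together with Corollary~\ref{cor:Q-invariance}. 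That machinery — Whitney's theorem being realised via \cite[Lemma~22, Corollary~26]{kieponschwe17}, the factoring lemma, and the automorphism-invariance of $C(\CQ)$ — is what your sketch would need but does not supply.
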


\begin{proof}
  Let $u,u'\in V(G)$ such that $\CQ \coloneqq \CQ^G(u,u')$ is a
  non-trivial non-simplifying patch. Let $\FD \coloneqq \FD(\CQ)$, $C \coloneqq C(\CQ)$,
  and $I\coloneqq I(\CQ)$ (see Section~\ref{sec:sps-patch-necklace}).

By Lemma~\ref{fact:astar}, the graph $G \setminus V(\CQ)$ has a unique
non-planar connected component $A^*$. We let
\begin{align*}
\formel{planar-comp}(x,x',y) &\coloneqq {}
  \formel{csps-comp-genus}_0(x,x',y)\qquad\text{(see
  Corollary~\ref{cor:comp-genus2})}
\intertext{and}
\formel{astar}(x,x',y)&\coloneqq {}\neg \formel{csps-vert}(x,x',y) \wedge
\neg\formel{planar-comp}(x,x',y). 
\end{align*}
Note that both $\formel{planar-comp}(x,x',y)$ and $\formel{astar}(x,x',y)$ are
$\LW[s+2]$-formulae. In fact, since we can
identify planar graphs in the logic $\LW[4]$, we can construct these formulae as
$\LW[6]$-formulae.
For $v\in V(G)\setminus
V(\CQ)$, we have $G\models \formel{planar-comp}(u,u',v)$ if and
only if the connected component of $v$ in $G\setminus V(\CQ)$ is
planar, and $G\models \formel{astar}(u,u',v)$ if and
only if the connected component of $v$ in $G\setminus V(\CQ)$ is
$A^*$.
  
Let $v_1$ be a vertex in $V(\CQ)$ that is adjacent to $A^*$ and among
all such vertices has minimal height in the sps, and let $h \coloneqq \height^{\CQ}(v_1)$. Since $A^*$ is embedded outside of the disk $\FD$, the vertex
$v_1$ must be on the boundary cycle $C$ of $\FD$. There is at
most one other vertex of height $h$ on this cycle. Thus, even though
$v_1$ is not unique, there are at most two choices. If there is
a second vertex of height $h$ adjacent to $A^*$, let us call it
$v_1'$. Let
  \begin{align*}
    \varphi_1(x,x',y_1) \coloneqq{} &\formel{csps-height}_h(x,x',y_1) \wedge \exists z^* \big(E(z^*,y_1) \wedge \formel{astar}(x,x',z^*)\big) \wedge {}
    \\ & \neg \exists y_1' \left(\bigvee\limits_{i=0}^{h-1} \formel{csps-height}_i(x,x',y_1') \wedge \exists z^* \big(E(z^*,y_1') \wedge \formel{astar}(x,x',z^*)\big)\right).
  \end{align*}

Then $v_1$ and possibly
$v_1'$ are the only vertices in $\phi_1[G,u,u',y_1]$. Note that $\varphi_1 \in \LW[6]$.

Recall that $G/A^*$ denotes the graph obtained from $G$ by contracting
the connected subgraph $A^*$ to a single vertex, which we call $a^*$,
and that the graph $G/A^*$ is a 3-connected planar graph.
By Whitney's Theorem, the facial subgraph of a
3-connected plane graph are precisely the chordless non-separating
cycles. In particular, they are independent of the
embedding. Furthermore, every edge is contained in exactly two of
these facial cyles. Let us consider the edge $v_1a^*$ in the graph
$G/A^*$. Let $F$ and $F'$ be the two facial cycles that contain this
edge. Both $F$ and $F'$ contain exactly one neighbour of $a^*$
distinct from $v_1$. Let $v_2$ and $v_2'$ be these neighbours. 

By \cite[Lemma~22]{kieponschwe17}, if we have a 3-connected planar
graph $H$ and three vertices $w_1,w_2,w_3$ on a common facial cycle,
then after individualising these three vertices, the 1-dimensional WL
algorithm computes a discrete colouring. By Theorem~\ref{thm:immlan}, this implies that for every vertex $w \in V(H)$ there is a formula
$\psi_{H,w}(z_1,z_2,z_3,z)\in\LW[5]$ such that
$\psi_{H,w}[H,w_1,w_2,w_3,z]=\{w\}$. We apply \cite[Lemma~22]{kieponschwe17} to the graph
$G/A^*$ and the three vertices $a^*,v_1,v_2$ and obtain, for every vertex $w\in V(G/A^*)=(V(G)\setminus A^*)\cup\{a^*\}$, a formula
$\psi_w(z^*,y_1,y_2,z)\in\LW[5]$ such that
$\psi_w[G/A^*,a^*,v_1,v_2,z]=\{w\}$.

Let $w\in V(G)\setminus A^*$. Recall that $\formel{astar}(x,x',y) \in
\LW[6]$ and $\psi_w(z^*,y_1,y_2,y) \in \LW[5]$. By
Lemma~\ref{lem:factor} (applied to $k=6$, $\ell=2$, $m=5$ and the formulae
$\xi(x,x',z^*)\coloneqq\formel{astar}(x,x',z^*)$ and
$\psi(y_1,y_2,y,z^*)\coloneqq \psi_w(z^*,y_1,y_2,y)$), there is a formula $\tilde\psi_w(x,x',y_1,y_2,y)\in\LW[6]$ such that
 $\tilde\psi_w[G,u,u',v_1,v_2,y]=\{w\}$. 

Since $A^*\cap \FD=\emptyset$,
we have $V(I)=V(G)\cap \FD\subseteq V(G\setminus A^*)$. We let
  \[
    \delta(x,x',y_1,y_2,z) \coloneqq {}\bigvee_{w\in V(I)}\tilde\psi_w(x,x',y_1,y_2,z).
  \] 
  Then $\delta[G,u,u',v_1,v_2,z]=V(I)$. Thus $\delta(x,x',y_1,y_2,z)$ is almost
  the formula $\formel{int-vert}$ we want, except that it
  has two additional parameters $v_1,v_2$ which we have to get rid of. 

  We will apply \cite[Corollary~26]{kieponschwe17}, which says that
  the $3$-dimensional WL algorithm determines orbits in
  coloured 3-connected graphs. This implies that within a given graph, the $3$-dimensional WL algorithm distinguishes
  two vertices if and only if they belong to different orbits of the
  automorphism group of the graph. It follows that for every 3-connected planar
  graph $H$ and for every orbit $O$ of the automorphism group of $H$
  there is a formula $\xi_{H,O}(y_2)\in\LW[4]$ such that $\xi_{H,O}[H,y_2]=O$. 

  To eliminate the parameter $v_2$, we apply the corollary to the
  graph $G/A^*$, but only after individualising the vertices
  $a^*$ and $v_1$. (That is, we modify the colouring such that each of the two vertices has its own colour and is thus fixed by all
  automorphisms.) Let $O_2$ be the orbit of $v_2$ in this coloured
  graph. By the definition of $v_2$, either $O_2=\{v_2,v_2'\}$ or
  $O_2=\{v_2\}$. Since the graph $G/A^*$ is 3-connected, by eliminating the colour relations for $a^*$ and $v_1$ at the cost of new free variables $z^*$ and $y_1$, we obtain a new formula
 $\psi_2(z^*,y_1,y_2)\in\LW[6]$ such that
  $\psi_2[G/A^*,a^*,v_1,y_2]=O_2$. Since $\formel{astar}(x,x',y) \in
  \LW[6]$ and $\xi_{H,O}(y)\in\LW[4]$, by Lemma~\ref{lem:factor}
  (with $k=6$, $\ell=2$, $m=4$ and the formulae
  $\xi(x,x',z^*)\coloneqq\formel{astar}(x,x',z^*)$ and
  $\psi(y_1,y_2,z^*)\coloneqq\psi_2(z^*,y_1,y_2)$), there is a formula
  $\tilde\psi_2(x,x',y_1,y_2)\in\LW[6]$ such that $\tilde\psi_2[G,u,u',v_1,y_2]=O_2$. We let
  \[
    \delta'(x,x',y_1,z) \coloneqq {}\exists
    y_2\big(\tilde\psi_2(x,x',y_1,y_2)\wedge\delta(x,x',y_1,y_2,z)\big).
  \]
  If $O_2=\{v_2\}$ then clearly
  $\delta'[G,u,u',v_1,z]=\delta[G,u,u',v_1,v_2,z]=V(I)$. So suppose
  that $O_2=\{v_2,v_2'\}$, and let $f$ be an automorphism of $G$ with
  $f(u)=u$, $f(u')=u'$, $f(v_1)=v_1$, and $f(v_2)=v_2'$. By
  Corollary~\ref{cor:Q-invariance}, we have $f\big(V(I)\big)=V(I)$ and thus
  \begin{align*}
    \delta[G,u,u',v_1,v_2',z]&=\delta[f(G),f(u),f(u'),f(v_1),f(v_2),z]\\
    &=f\big(
      \delta[G,u,u',v_1,v_2,z]\big)\\
    &=f\big(V(I)\big)=V(I).
  \end{align*}
  It follows that
  \[
    \delta'[G,u,u',v_1,z]=\delta[G,u,u',v_1,v_2,z]\cup
    \delta[G,u,u',v_1,v_2',z]=V(I).
  \]
  So we have eliminated the parameter $v_2$. To eliminate $v_1$, we
  use essentially the same argument. Let $O_1$ be the orbit of $v_1$
  in the graph $G/A^*$ with the vertices $a^*$, $u$, $u'$
  individualised. Then either $O_1=\{v_1,v_1'\}$ for some $v_1' \neq v_1$ or
  $O_1=\{v_1\}$. 

  By \cite[Corollary~26]{kieponschwe17}, there is a formula
  $\xi_{G/A^*,O_1}(y_1) \in \LW[4]$ such that 
  \[
    \xi_{G/A^*,O_1}[(G/A^*)_{a^*,u,u'},y_1] = O_1.
  \]
  Then by eliminating the colour relations for $a^*$, $u$, $u'$ at the cost of new free variables $z^*$, $x$, $x'$, we obtain a formula $\psi_1(z^*,x,x',y_1)\in\LW[7]$ such
  that $\psi_1[G/A^*,a^*,u,u',y_1]=O_1$. Since $\formel{astar}(x,x',y)
  \in \LW[6]$ and $\xi_{G/A^*,O}(y_1) \in \LW[4]$, by
  Lemma~\ref{lem:factor} (with $k=7$, $\ell=2$, $m=3$ and the formulae $\xi(x,x',z^*)\coloneqq\formel{astar}(x,x',z^*)$ and
  $\psi(x,x',y_1,z^*)\coloneqq\psi_1(z^*,x,x',y_1)$), there is a formula
  $\tilde\psi_1(x,x',y_1)\in\LW[7]$ such that $\tilde\psi_1[G,u,u',y_1]=O_1$.

  We let
  \[
    \formel{int-vert}(x,x',z) \coloneqq {}\exists
    y_1\big(\tilde\psi_1(x,x',y_1)\wedge\delta'(x,x',y_1,z)\big).
  \]
  Now a similar argument as above shows that
  $\formel{int-vert}[G,u,u',z]=V(I)$. Moreover, since $\delta', \tilde\psi_1 \in \LW[7]$, we have $\formel{int-vert} \in \LW[7]$.

  The formulae $\formel{int-edge}(x,x',y_1,y_2)$, $\formel{bd-vert}(x,x',y)$,
  $\formel{bd-edge}(x,x',y_1,y_2)$ can be defined similarly.
\end{proof}

Now we branch into two cases, depending on whether $G$ contains a simplifying
patch or not.

%%%%%%%%%%%%%%%%%%%%%%%%%%%%%%%%%%%%%%%%%%%%%%%%%%%%%%%%%%%%
\subsection{Case 1: Absence of simplifying patches}\label{subsec:necklace} 

Throughout this subsection, in addition to Assumption \ref{ass:polyhedral2}, we assume the following.

\begin{assumption}\label{ass:no-simplifying}
  $G$ does not contain any simplifying patches.
\end{assumption}

By Lemma~\ref{lem:exnecklace}, $G$ contains a reducing necklace $\CB$. We are going
to define a subgraph $\Cut(\CB)$ of $G$ that is obtained from $G$ by ``cutting through the beads''. Since the
necklace is reducing, the Euler genus of every connected component of $\Cut(\CB)$ is at most $g-1$ and
we can identify it with a $\LC[s]$-sentence. We colour $\Cut(\CB)$ in
such a way that we can reconstruct $G$ and identify it using only $3$
more variables.

For a necklace $\CB \coloneqq (u^0, \CQ^1, u^1, \CQ^2, u^2, \CQ^3)$ in $G$, let
$u^i = u_0^i, u_1^i, \dotsc, u_{n_i}^i = u^{i+1}$ be the articulation
vertices of $\CQ^i$, ordered by height, and for
$j \in \{0, \dotsc, n_i-1\}$ let
$\CQ_j^i \coloneqq \CQ^i[u_j^i, u_{j+1}^i]$ be the segment of $\CQ^i$ between $u_j^i$ and $u_{j+1}^i$. If the patch $\CQ^i_j$ is
trivial, we denote its unique edge by $e^i_j$. If $Q^i_j$ is
non-trivial, we let $\FD^i_j \coloneqq \FD(\CQ^i_j)$.

The \emph{region} of $\CB$ is the point set
\[
\FR(\CB) \coloneqq \bigcup_{i=0}^2 
\Big(\bigcup_{\substack{0\le j\le     n_i-1\\\CQ\text{ non-trivial}}}\FD^i_j\cup \bigcup_{\substack{0\le j\le
     n_i-1\\\CQ\text{ trivial}}}\Fe^i_j\Big).
\]

Recall that the internal graph of a non-simplifying patch $\CQ$ is the graph $I\coloneqq I(\CQ)$ with vertex set $V(I)\coloneqq V(G)\cap\FD(\CQ)$ and edge set
$E(I)\coloneqq\{\Fe\in E(G)\mid \Fe\subseteq\FD(\CQ)\}$. We associate three subgraphs of $G$ with $\CB$:

\begin{definition}\label{def:cut}
The \emph{inside} of $\CB$ is 
$I(\CB)\coloneqq\bigcup_{i=0}^2\bigcup_{j=0}^{n_i-1}I(\CQ^i_j)$.

The \emph{outside} of $\CB$ is the graph $O(\CB)$ defined by 
\begin{equation*}
  \begin{split}
  V\big(O(\CB)\big)&\coloneqq {}V(G)\setminus\Fint(\FR),\\
  E\big(O(\CB)\big)&\coloneqq {}E(G)\setminus\big\{\Fe\in E(G)\bigmid \Fe\cap\Fint(\FR)\neq\emptyset\big\}.
  \end{split}
\end{equation*}
The \emph{cut graph} of $\CB$ is $\Cut(\CB)\coloneqq
O(\CB) \backslash \art(\CB)$. \uend
\end{definition}

\begin{lemma}\label{lem:necklace-simp}
  Suppose $\CB$ is a reducing necklace in $G$. Then every connected
  component of $\Cut(\CB)$ is in $\CE_{g-1}$.
\end{lemma}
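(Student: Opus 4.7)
The plan is to give a topological argument based on cutting $\FS$ along a non-contractible curve contained in $\FR(\CB)$. Since $\CB$ is reducing, I first fix paths $Q^i \in \CQ^i$ such that $B \coloneqq Q^0 \cup Q^1 \cup Q^2$ is a non-contractible cycle in $G$. I would then observe that $\FB \subseteq \FR(\CB)$: every segment of $Q^i$ between two consecutive articulation vertices of $\CQ^i$ either lies in the disk $\FD^i_j$ of a non-trivial bead or equals the edge $\Fe^i_j$ of a trivial bead, so the curve $\FB$ is contained in $\FR$.

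Next I would analyse the topology of $\FR(\CB)$. It is a cyclic chain of closed disks and edges that meet pairwise only at articulation vertices, so as a topological space it is homotopy equivalent to a circle whose homotopy class in $\FS$ is represented by $B$. A small closed regular neighborhood $U \supseteq \FR$ in $\FS$ is therefore a tubular neighborhood of this circle, homeomorphic to a closed annulus or a closed Möbius strip, whose core curve is freely homotopic to $\FB$ and hence non-contractible in $\FS$. Invoking the cut-and-cap construction from Section~\ref{subsec:topo}, cutting $\FS$ along $\Fbd(U)$ and capping the resulting one or two boundary circles with closed disks produces a (possibly disconnected) closed surface $\FS^*$ each of whose components has Euler genus at most $g-1$.

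To finish, I would show that every connected component $A$ of $\Cut(\CB)$ embeds in $\FS^*$. By construction, $V(A)$ is disjoint from $\Fint(\FR) \cup \art(\CB)$ and no edge of $A$ meets $\Fint(\FR)$. Because $A$ avoids the articulation vertices, which are precisely the pinch points at which the beads of the necklace meet, a small ambient isotopy of $\FS$ supported in a neighborhood of $\FR$ can push the embedded image $\FA$ entirely into $\FS \setminus U$ without changing the combinatorial embedding. Hence $A$ embeds into a connected component of $\FS^*$, yielding $\eg(A) \le g-1$.

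The main obstacle will be handling vertices and edges of $A$ that happen to lie on $\Fbd(\FR)$ but not at articulation vertices, namely non-articulation vertices on the boundary cycles $C(\CQ^i_j)$ together with their incident edges (some of which can themselves lie on $\FB$). Here the polyhedral embedding of $G$ controls the cyclic ordering of edges around each such vertex, and combined with $V(A) \cap \art(\CB) = \emptyset$ this ensures that all edges of $A$ incident to such a vertex exit the corresponding bead on one and the same side, so they can be jointly isotoped off $\FR$ without introducing crossings. Equivalently, one can view the capping disks added in the construction of $\FS^*$ as absorbing the relevant arcs of $\Fbd(\FR)$, so that $\FA$ already embeds into $\FS^*$ directly after the cap-off, without an explicit isotopy.
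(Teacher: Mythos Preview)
Your argument is correct in outline, but it takes a more laborious route than the paper's proof. You thicken $\FR(\CB)$ to a regular neighborhood $U$, cut along its core (equivalently, remove $\Fint(U)$ and cap), and then have to argue that $\Cut(\CB)$ can be isotoped into $\FS\setminus U$. The paper instead moves the \emph{curve} rather than the \emph{graph}: it perturbs $\FB$ inside $\FR(\CB)$ to a homotopic simple closed curve $\Fb\subseteq\FR(\CB)$ whose only contact with $\Fbd\big(\FR(\CB)\big)$ is at the articulation vertices $u^i_j$ and along the edges $\Fe^i_j$ of the trivial beads. Since precisely these are deleted in passing from $O(\CB)$ to $\Cut(\CB)$, one gets $\Fb\cap\FH=\emptyset$ for $H\coloneqq\Cut(\CB)$ immediately, and a single cut along $\Fb$ finishes the proof. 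This sidesteps your ``main obstacle'' entirely: there is no need to analyse how boundary-cycle edges of $\Cut(\CB)$ leave a bead, nor to invoke polyhedrality (which, incidentally, is not needed for your isotopy either---the definition of $O(\CB)$ already ensures that no $\Cut(\CB)$-edge enters $\Fint(\FR)$, so all such edges at a non-articulation boundary vertex lie on one side of the bead). Your alternative phrasing via ``capping disks absorbing arcs of $\Fbd(\FR)$'' does not work as stated, because the caps are glued along $\Fbd(U)$ rather than $\Fbd(\FR)$; making that precise would again amount to the isotopy you are trying to avoid.
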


\begin{proof}
    This proof is a slight adaptation of the proof of \cite[Lemma~15.5.6]{gro17}. 

    Let $\FR \coloneqq \FR(\CB)$. For all $i,j$ such that $\CQ^i_j$ is
    non-trivial, we let $\FD^i_j \coloneqq \FD(\CQ^i_j)$.

  Let $B$ be a non-contractible cyle in $\CB$, whose existence is
guaranteed by Definition \ref{def:necklace-nc}. Then $\FB\subseteq\FR$ is
a simple closed curve, and for all $i,j$ such that $\CQ^i_j$ is non-trivial, the intersection
  $\FB^i_j \coloneqq \FD^i_j\cap\FB$ is a simple curve in the disk $\FD^i_j$
  with endpoints $u^i_{j}$ and $u^i_{j+1}$. By slightly perturbing
  $\FB$, we obtain a homotopic simple closed curve $\Fb\subseteq\FR$ such that for
  all $i,j$ with non-trivial $\CQ^i_j$ we have
  $\Fb\cap\Fbd(\FD^i_j)= \{u^i_{j},u^i_{j+1}\}$. This new curve $\Fb$ is
  still non-contractible, and it intersects
  $\Fbd(\FR)$ only in the articulation vertices $u^i_j$ of $\CB$ and in the
  edges $\Fe^i_j$ of the trivial patches $\CQ^i_j$. 

  This implies that for $H \coloneqq \Cut(\CB)$ we have
  $\Fb\cap\FH=\emptyset$. 
  Thus $\FH\subseteq G\setminus\Fb$, and since $\Fb$ is
  non-contractible, this implies that every connected component of $H$
  is embeddable in a surface of Euler genus at most $g-1$ obtained
  from $\FS$ by cutting along $\Fg$ and gluing a disk on each hole.
\end{proof}

Our next goal is to show that the cut graph is definable in
$\LW[s+3]$. We start with the definability of patches.

From Lemma~\ref{lem:disk-non-def}, we obtain that $\LW[7]$ distinguishes the internal graph of a reducing necklace from the remainder of the graph.

\begin{corollary}\label{lem:necklace-def1}%\sloppy
  There are $\LW[7]$-formulae 
  \[
  \begin{array}{l@{\hspace{1.5cm}}l}
  \formel{nl-int-vert}(x^0,x^1,x^2,y),&
  \formel{nl-int-edge}(x^0,x^1,x^2,y_1,y_2).
  \end{array}
  \]
  such that for $u^0,u^1,u^2\in
  V(G)$ the following holds. 

  If
  $\CB \coloneqq (u^0,\CQ^0,u^1,\CQ^1,u^2,\CQ^2)$ is a necklace in $G$,
  then
  \begin{align*}
    \formel{nl-int-vert}[G,u^0,u^1,u^2,y]&=V\big(I(\CB)\big); \\
  \formel{nl-int-edge}[G,u^0,u^1,u^2,y_1,y_2]&=E\big(I(\CB)\big).
  \end{align*}
\end{corollary}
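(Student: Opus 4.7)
The plan is to decompose $I(\CB) = \bigcup_{i=0}^{2}\bigcup_{j=0}^{n_i-1} I(\CQ^i_j)$, where each segment $\CQ^i_j = \CQ^G(u^i_j, u^i_{j+1})$ between consecutive articulation vertices of $\CQ^i$ is itself a patch. Under Assumption~\ref{ass:no-simplifying}, every patch of $G$ is non-simplifying, so each non-trivial $\CQ^i_j$ satisfies the hypothesis of Lemma~\ref{lem:disk-non-def}, allowing us to apply $\formel{int-vert}$ and $\formel{int-edge}$ to each such segment. This reduces the task to expressing ``$z_1, z_2$ are consecutive articulation vertices on some $\CQ^i$'' in $\LC$ and handling trivial patches separately.

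First, I would build an auxiliary formula $\formel{consec}(x^0, x^1, x^2, z_1, z_2) \in \LW[5]$ that holds iff $z_1, z_2$ are consecutive articulation vertices on some $\CQ^i$ (indices modulo $3$). It disjuncts over $i \in \{0, 1, 2\}$ and, for each $i$, asserts $z_1, z_2 \in \art(\CQ^i)$ via $\formel{csps-art} \in \LW[4]$ (from Lemma~\ref{lem:csps-def}), that $z_1$ has strictly smaller height in $\CQ^i$ than $z_2$ via a finite disjunction over height pairs $(k_1, k_2)$ of $\formel{dist}_{=k_1}(x^i, z_1) \wedge \formel{dist}_{=k_2}(x^i, z_2)$, and $\neg \exists z_3$ such that $z_3 \in \art(\CQ^i)$ lies at a height strictly between those of $z_1$ and $z_2$. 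The body of the $\exists z_3$ subformula has free variable set $\{x^i, x^{i+1}, z_1, z_2, z_3\}$ of size $5$, so $\formel{consec} \in \LW[5]$.

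Second, I would define $\formel{nl-int-vert}$ as the disjunction of $\formel{nl-art}(x^0, x^1, x^2, y)$, which by Corollary~\ref{cor:necklace-def1} covers the articulation vertices of $\CB$ (always in $V(I(\CB))$ as endpoints of patches), together with the formula $\exists z_1 \exists z_2 \bigl(\formel{consec}(x^0, x^1, x^2, z_1, z_2) \wedge z_1 \neq z_2 \wedge \neg E(z_1, z_2) \wedge \formel{int-vert}(z_1, z_2, y)\bigr)$, where the clauses $z_1 \neq z_2$ and $\neg E(z_1, z_2)$ ensure that $\CQ^G(z_1, z_2)$ is non-trivial so that Lemma~\ref{lem:disk-non-def} applies. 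The formula $\formel{nl-int-edge}$ is constructed analogously, using a trivial-patch disjunct $\formel{consec}(x^0, x^1, x^2, y_1, y_2) \wedge E(y_1, y_2)$ and a non-trivial-patch disjunct using $\formel{int-edge}(z_1, z_2, y_1, y_2)$; note that for a non-trivial patch the endpoints have distance at least $2$, so the trivial disjunct only fires on genuinely trivial segments.

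The main obstacle is the width bookkeeping. For $\formel{nl-int-vert}$, the innermost conjunction has free variable set $\{x^0, x^1, x^2, y, z_1, z_2\}$ of size $6$, while internal subformulae of $\formel{int-vert}(z_1, z_2, y)$ reach the maximum width of $7$ using $\{z_1, z_2, y\}$ together with bound variables of $\formel{int-vert}$ (renamed to avoid clashes with $x^0, x^1, x^2$). For $\formel{nl-int-edge}$, the analogous conjunction has free variable set $\{x^0, x^1, x^2, y_1, y_2, z_1, z_2\}$ of size exactly $7$, matching the target width bound tightly; this tightness is exactly what makes $\LW[7]$ the natural bound for this corollary.
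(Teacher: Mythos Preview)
Your proof is correct and in fact more careful than the paper's. The paper simply sets
\[
\formel{nl-int-vert}(x^0,x^1,x^2,y)\coloneqq\bigvee_{i=0}^2\formel{int-vert}(x^i,x^{i+1},y),
\]
applying Lemma~\ref{lem:disk-non-def} directly to the whole sps $\CQ^i=\CQ^G(u^i,u^{i+1})$. This is terse to the point of being technically incomplete: $\CQ^i$ may have proper articulation vertices and hence fail to be a patch, which is the hypothesis under which $\formel{int-vert}$ is specified to compute $V(I(\cdot))$. Your decomposition into subpatches $\CQ^i_j=\CQ^G(u^i_j,u^i_{j+1})$ between consecutive articulation vertices is the rigorous way to invoke Lemma~\ref{lem:disk-non-def}, and it is precisely what the paper itself does one lemma later, in the proof of Lemma~\ref{lem:necklace-def2}, where the iteration over consecutive articulation vertices is made explicit via $\formel{nl-art}_i$ and $\formel{nl-art}_{i+1}$. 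So conceptually you are on the same track as the paper; you just carry out here what the paper postpones.

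Two minor remarks. First, instead of building your own $\formel{consec}$, you can use a disjunction over $i$ of $\formel{nl-art}_i(x^0,x^1,x^2,z_1)\wedge\formel{nl-art}_{i+1}(x^0,x^1,x^2,z_2)$ from Corollary~\ref{cor:necklace-def1}; this is the device the paper uses in Lemma~\ref{lem:necklace-def2} and saves you the consecutiveness bookkeeping. Second, your width analysis is correct, and the edge formula is indeed tight at~$7$: the body of the $\exists z_1\exists z_2(\ldots)$ in your $\formel{nl-int-edge}$ has exactly the seven free variables $\{x^0,x^1,x^2,y_1,y_2,z_1,z_2\}$.
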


\begin{proof}
  Remember that we suppose Assumption \ref{ass:no-simplifying}. Thus, we can simply define 
    \begin{align*}
      \formel{nl-int-vert}(x^0,x^1,x^2,y) \coloneqq {}\bigvee\limits_{i=0}^2 \formel{int-vert}(x^i,x^{i+1},y).
    \end{align*}
  Similarly, we obtain the formula $\formel{nl-int-edge}$ with the desired width.
\end{proof}

In the following we show that $\LC[7]$ distinguishes vertices in the outside and the cut graph of $\CB$ from the rest of the graph.

\begin{lemma}\label{lem:necklace-def2}\sloppy
  There are $\LW[7]$-formulae
  \[
  \begin{array}{l@{\hspace{1.5cm}}l}
    \formel{nl-out-vert}(x^0,x^1,x^2,y),&
    \formel{nl-out-edge}(x^0,x^1,x^2,y),\\
    \formel{nl-cut-vert}(x^0,x^1,x^2,y),&
    \formel{nl-cut-edge}(x^0,x^1,x^2,y_1,y_2)
  \end{array}
  \]
  such that for all
  $u^0,u^1,u^2\in V(G)$ the following holds: if
  $\CB \coloneqq (u^0,\CQ^0,u^1,\CQ^1,u^2,\CQ^2)$ is
  a necklace in $G$, then
  \begin{align*}
  \formel{nl-out-vert}[G,u^0,u^1,u^2,y]=V\big(O(\CB)\big); \\
  \formel{nl-out-edge}[G,u^0,u^1,u^2,y_1,y_2]=E\big(O(\CB)\big); \\
  \formel{nl-cut-vert}[G,u^0,u^1,u^2,y]=V\big(\Cut(\CB)\big); \\
  \formel{nl-cut-edge}[G,u^0,u^1,u^2,y_1,y_2]=E\big(\Cut(\CB)\big).
  \end{align*}
\end{lemma}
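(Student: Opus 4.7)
The plan is to reduce everything to Lemma~\ref{lem:disk-non-def}, which already provides formulae picking out the internal graph $I(\CQ)$ and the boundary cycle $C(\CQ)$ of a non-trivial non-simplifying patch $\CQ$. Under Assumption~\ref{ass:no-simplifying} every patch $\CQ^i_j$ of the necklace $\CB$ is non-simplifying, and since any two disks $\FD^i_j, \FD^{i'}_{j'}$ meet only at articulation vertices, a vertex $v$ lies in $\Fint(\FR(\CB))$ iff $v \in V(I(\CQ^i_j)) \setminus V(C(\CQ^i_j))$ for some $i \in \{0,1,2\}$ and some consecutive pair of articulation vertices $(u^i_j, u^i_{j+1})$ of $\CQ^i$ with non-trivial $\CQ^i_j$; an analogous statement holds for edges. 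This characterisation is the only geometric input needed.

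With this in hand I would first build a $\LW[5]$-formula $\formel{consec}(x,x',z_1,z_2)$ expressing ``$z_1, z_2 \in \art(\CQ^G(x,x'))$, $\height^{\CQ^G(x,x')}(z_1) < \height^{\CQ^G(x,x')}(z_2)$, no articulation vertex of $\CQ^G(x,x')$ has intermediate height, and $\neg E(z_1,z_2)$''; the last conjunct rules out trivial patches so that $\formel{int-vert}$ applies. The height formulae of Lemma~\ref{lem:csps-def} allow a bounded disjunction over $h \in \{0,\ldots,n-1\}$ for the comparison, and a single existential quantifier over an auxiliary variable handles the intermediate-height clause. Then, for $i \in \{0,1,2\}$ (indices modulo~$3$), set
\[
\phi_i(x^i,x^{i+1},y) \coloneqq \exists z_1 \exists z_2 \Big(\formel{consec}(x^i,x^{i+1},z_1,z_2) \wedge \formel{int-vert}(z_1,z_2,y) \wedge \neg \formel{bd-vert}(z_1,z_2,y)\Big)
\]
and $\formel{nl-out-vert}(x^0,x^1,x^2,y) \coloneqq \neg(\phi_0 \vee \phi_1 \vee \phi_2)$. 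The formula $\formel{nl-out-edge}$ is obtained identically, using $\formel{int-edge}$ and $\formel{bd-edge}$ in place of $\formel{int-vert}$ and $\formel{bd-vert}$. Finally, $\formel{nl-cut-vert}$ and $\formel{nl-cut-edge}$ follow by conjoining $\neg\formel{nl-art}(x^0,x^1,x^2,\cdot)$ (from Corollary~\ref{cor:necklace-def1}) to the outside-formulae for each vertex argument.

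The main obstacle is the width bookkeeping, since only $7$ variables are available. Inside $\phi_i$ only two of $\{x^0, x^1, x^2\}$ occur, so the subformula under the two existentials has exactly $5$ free variables ($x^i, x^{i+1}, y, z_1, z_2$). The deepest subformulae arise within $\formel{int-vert}$ (or $\formel{int-edge}$): they may carry up to $7$ free variables, namely the $3$ (or $4$) arguments together with the $4$ (or $3$) auxiliary variables internal to those formulae, and crucially these auxiliaries are disjoint from $x^i, x^{i+1}$. The outer disjunction introduces the third necklace-variable into the free variables of $\formel{nl-out-vert}$, but it appears only in sibling disjuncts, never inside the width-$7$ block of any single $\phi_i$. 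Thus no subformula exceeds $7$ free variables, and all four formulae land in $\LW[7]$ as required.
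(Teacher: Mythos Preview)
Your proposal is correct and follows essentially the same approach as the paper: both rely on the characterisation $V(G)\cap\Fint(\FR)=\bigcup_{i,j}V\big(I(\CQ^i_j)\setminus C(\CQ^i_j)\big)$ (over the non-trivial subpatches) and then invoke the $\LW[7]$-formulae of Lemma~\ref{lem:disk-non-def} to test membership in $I(\CQ^i_j)\setminus C(\CQ^i_j)$. The only cosmetic difference is that the paper iterates over subpatches via the indexed formulae $\formel{nl-art}_i$ (a finite disjunction over indices), whereas you quantify existentially over a generic consecutive pair of articulation vertices; your width bookkeeping, in particular the observation that each $\phi_i$ mentions only two of the three necklace parameters so that the internals of $\formel{int-vert}$ never push the free-variable count past~$7$, is sound.
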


\begin{proof}
  Let $\FR \coloneqq \FR(\CB)$ and recall that $u^i = u_0^i, u_1^i, \dotsc, u_{n_i}^i = u^{i+1}$ denote the articulation
vertices of $\CB$, ordered by height, and that for
$j \in \{0, \dotsc, n_i-1\}$, we denote the segment of $\CQ^i$ between $u_j^i$ and $u_{j+1}^i$ by 
$\CQ_j^i \coloneqq \CQ^i[u_j^i, u_{j+1}^i]$.
  Since by
  Assumption~\ref{ass:no-simplifying}, all
 subpatches are non-simplifying, it holds that 
  \[V(G)\cap\Fint(\FR)=\bigcup_{i=0}^2\bigcup_{\substack{0\le j\le
  n_i-1\\\CQ^i_j\text{ non-trivial}}} V\left(I(\CQ^i_j) \setminus C(\CQ^i_j)\right).\] 
  Therefore,
  \begin{align*}
    V\big(O(\CB)\big)&=V(G)\setminus \bigcup_{i=0}^2\bigcup_{\substack{0\le j\le
  n_i-1\\\CQ^i_j\text{ non-trivial}}} V\big(I(\CQ^i_j) \setminus C(\CQ^i_j)\big),\\
    E\big(O(\CB)\big)&=E(G)\setminus\bigcup_{i=0}^2\bigcup_{\substack{0\le j\le
    n_i -1\\\CQ^i_j\text{ non-trivial}}}E\big(I(\CQ^i_j) \setminus C(\CQ^i_j)\big).
  \end{align*}

  Thus, we can just let 
  \begin{align*}
    \formel{nl-out-vert}(x^0,x^1,x^2,y) \coloneqq {}&\bigvee\limits_{i=1}^n \exists z \exists z' \big(\formel{nl-art}_i(x^0,x^1,x^2,z) \wedge \formel{nl-art}_{i+1}(x^0,x^1,x^2,z') 
    \\&\wedge \neg \formel{nl-edge}(x^0,x^1,x^2,z,z') \wedge \formel{bd-vert}(z,z',y)\big) 
    \\&\vee \formel{nl-art}(x^0,x^1,x^2,y) \vee \neg \formel{nl-int-vert}(x^0,x^1,x^2,y),
  \end{align*}
  where the big disjunction expresses that the given vertex lies on the boundary of some disk of a non-trivial patch.

Similarly, we obtain the formula $\formel{nl-out-edge}$ of width 7.

To define that a vertex is contained in the cut graph, we just need to guarantee that it is contained in $O(\CB)$ and that is not an articulation vertex of the necklace. Similarly, for an edge contained in $O(\CB)$, to appear in $\Cut(\CB)$, its incident vertices must not be articulation vertices of $\CB$. We obtain the desired $\LW[7]$-formulae $\formel{nl-cut-vert}$ and $\formel{nl-cut-edge}$.
\end{proof}

We have collected all ingredients to show the statement from Lemma \ref{lem:main} in case $G$ contains no simplifying patches.

\begin{proof}[Proof of Lemma \ref{lem:main}, Case 1]
  We show that the statement holds if $g \geq 1$ and $G$ does not
  contain any simplifying patches.

  Recall that by Assumption \ref{ass:induction}, for every
  coloured graph $H \in \CE_{g-1}$, we assume the existence of a
  formula $\iso H \in \LW[s]$ such that for all graphs $G'$ it holds
  that \[G' \models \iso H \iff G' \cong H.\]

  Let $G$ be a coloured graph that does not contain any
  simplifying patches and is polyhedrally embedded in a surface $\FS$
  of genus $g \geq 1$. Let $\hat G$ be a second coloured graph such
  that there is no formula in $\LC[s+3]$ which distinguishes
  $G$ and $\hat G $. We show that $G \cong \hat G$.

  We may assume $|\hat G | = |G|$, otherwise we can distinguish $G$ and $\hat G $
  via the formula
  $\exists^{= |G|} v (v=v)$. 

  \cite[Theorem~5]{kieponschwe17} implies that if for some $k \geq 3$,
  the logic $\LC[k]$ distinguishes all non-isomorphic pairs of
  coloured $2$-connected graphs, then it distinguishes all pairs
  of non-isomorphic graphs in $\mathcal{C}$. Thus, if
  $\LC[s+3]$ distinguishes (the 2-connected graph) $G$ from
  every non-isomorphic 2-connected coloured graph, then the
  same logic distinguishes $G$ from every arbitrary non-isomorphic coloured graph and
  thus, it identifies $G$. Hence, we can assume $\hat G $ to be
  $2$-connected.

  Moreover, if $\hat G $ is not $3$-connected, then it has a separator of
  size $2$ whereas $G$ does not. Since for $k \geq 3$, the
  $k$-dimensional WL algorithm distinguishes $2$-separators from other
  pairs of vertices (see \cite[Corollary~14]{kieponschwe17}), by
  Corollary \ref{cor:correspondence}, there is a formula in $\LW[4]$
  which distinguishes $G$ and $\hat G $.

  Hence, without loss of generality we may assume that $\hat G $ is
  $3$-connected.

  By Lemma \ref{lem:exnecklace}, there is a reducing necklace
  $\CB \coloneqq (u^0, \CQ^0, u^1, \CQ^1, u^2, \CQ^2)$
  in $G$, which we fix for the rest of the proof. For a pseudo-necklace
  $\hat \CB \coloneqq (\hat u^0, \hat \CQ^0, \hat u^1, \hat \CQ^1, \hat u^2, \hat \CQ^2)$
  in ${\hat G }$, we say $\CB$ and $\hat \CB$ are \emph{isomorphic}, and write
  $\CB \cong \hat \CB$, if there is an isomorphism from $G(\CB)$ to
  $\hat G (\hat \CB)$ mapping $u^i$ to $\hat u^i$ for $i \in
  \{0,1,2\}$.

  \begin{claim}\label{ns:claim1}
    There is a formula $\formel{nl-iso}(x^0,x^1,x^2) \in \LW[6]$
    (not depending on $\hat G $) such that
    ${\hat G } \models \formel{nl-iso}(\hat u^0,\hat u^1,\hat u^2)$ if and only if
    $\hat \CB \coloneqq (\hat u^0, \hat \CQ^0, \hat u^1, \hat \CQ^1, \hat u^2, \hat \CQ^2)$
    is a pseudo-necklace with $\hat \CB \cong \CB$.

    \proof 
    $\hat \CB$ is a pseudo-necklace isomorphic to $\CB$ if and only
    if for all $i \in \{0,1,2\}$, the following two conditions hold for $\hat \CQ^i \coloneqq \CQ(\hat u^i,\hat u^{i+1})$.
    \begin{enumerate}
    \item $\hat G ({\hat \CQ}^i) \cong G(\CQ^i)$ via an isomorphism mapping $\hat u^i$ to
      $u^i$ and $\hat u^{i+1}$ to $u^{i+1}$.\label{it1:necklace-iso}
    \item $V(\hat \CQ^{i-1}) \cap V(\hat \CQ^{i+1}) = \{{\hat u}^i\}$. \label{it2:necklace-iso}
    \end{enumerate}

    Condition \ref{it2:necklace-iso} is easy to express in $\LW[6]$. To treat Condition \ref{it1:necklace-iso}, let $\formel{sps-iso}'_i \in \LW[4]$ be the sentence from
    Theorem \ref{thm:planar-dimension} which identifies the planar
    coloured graph
    $\CQ^i_{u^i,u^{i+1}} \coloneqq
    G(\CQ^i)_{u^i,u^{i+1}}$.
    Let $R^i$ and $R^{i+1}$ be the
    relations representing the unique colours of $u^i$ and
    $u^{i+1}$ in $\CQ^i_{u^i,u^{i+1}}$. We transform
    $\formel{sps-iso}'_i$ into a formula
    $\formel{sps-iso}_i(x^i, x^{i+1})$ such that
    $\hat G  \models \formel{sps-iso}_i(\hat u^i,\hat u^{i+1})$ if and only if
 $\hat G (\hat \CQ^i) \cong G(\CQ^i)$ via an isomorphism that maps $\hat u^i$ to
    $u^i$ and $\hat u^{i+1}$ to $u^{i+1}$. To this end, we first
    replace in $\formel{sps-iso}'_i$ every $R^i(z,z)$ with the formula
    $z=x^i$ and every $R^{i+1}(z,z)$ with $z=x^{i+1}$. To relativise
    $\formel{sps-iso}'_i$ to the $i$-th shortest path system, we also replace
    subformulae of the form $\exists y \varphi$ with
    $\exists y (\formel{csps-vert}(x^i, x^{i+1},y) \wedge \varphi)$
    and $E(y_1,y_2)$ with
    $\formel{csps-edge}(x^i,x^{i+1},y_1,y_2)$. By Lemma
    \ref{lem:csps-def}, the resulting formula
    $\formel{sps-iso}_i(x^i,x^{i+1})$ is in $\LW[6]$.

    Now we can define the desired $\LW[6]$-formula
    \begin{align*}
      \formel{nl-iso}(x^0,x^1,x^2) \coloneqq {}&\bigwedge\limits_{i=0}^2 \formel{sps-iso}_i(x^i, x^{i+1}) \wedge \forall y \bigwedge\limits_{i=0}^2 \Big(\big(\formel{csps-vert}(x^{i-1},x^i,y) \wedge {}
      \\&\formel{csps-vert}(x^i,x^{i+1},y)\big) \rightarrow y = x^i\Big),
    \end{align*}
    where we take indices modulo 3.
    \uend
  \end{claim}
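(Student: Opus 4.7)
The plan is to split the condition ``$\hat\CB$ is a pseudo-necklace with $\hat\CB\cong\CB$'' into two independent sub-conditions: (i) for each $i\in\{0,1,2\}$, the canonical sps $\hat\CQ^i\coloneqq\CQ^{\hat G}(\hat u^i,\hat u^{i+1})$ is isomorphic as a coloured graph to $G(\CQ^i)$ via an isomorphism sending $\hat u^i$ to $u^i$ and $\hat u^{i+1}$ to $u^{i+1}$; and (ii) the intersection property $V(\hat\CQ^{i-1})\cap V(\hat\CQ^{i+1})=\{\hat u^i\}$ holds for each $i$ (indices modulo $3$). Granted (i) and (ii), the three endpoint-preserving sps-isomorphisms agree only on the articulation vertices $\hat u^i$, so they glue together into a global isomorphism from $G(\CB)$ to $\hat G(\hat\CB)$ sending $u^i$ to $\hat u^i$; conversely any such global isomorphism restricts to sps-isomorphisms witnessing (i), and $\hat\CB$ being a pseudo-necklace is precisely (ii). Constructing $\formel{nl-iso}$ therefore reduces to expressing these two conditions inside $\LW[6]$, and the formula depends only on $G$ (via the sentences identifying each $G(\CQ^i)$) and not on $\hat G$.

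For (i), the key observation is that $G(\CQ^i)$ is planar, because clause~(3) of Definition~\ref{def:necklace} places it inside a closed disk of $\FS$. Hence Theorem~\ref{thm:planar-dimension} yields a $\LW[4]$-sentence $\formel{sps-iso}_i'$ identifying the coloured graph $G(\CQ^i)_{u^i,u^{i+1}}$ obtained by individualising $u^i$ and $u^{i+1}$ via fresh colour relations $R^i$ and $R^{i+1}$. I would transform $\formel{sps-iso}_i'$ into a formula $\formel{sps-iso}_i(x^i,x^{i+1})$ via three syntactic replacements: (a) replace every atom $R^i(z,z)$ by $z=x^i$ and $R^{i+1}(z,z)$ by $z=x^{i+1}$; (b) replace every edge atom $E(y_1,y_2)$ by $\formel{csps-edge}(x^i,x^{i+1},y_1,y_2)$; (c) relativise every quantifier $\exists y_j\,\varphi$ to $\exists y_j\,\big(\formel{csps-vert}(x^i,x^{i+1},y_j)\wedge\varphi\big)$. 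The resulting formula then evaluates $\formel{sps-iso}_i'$ precisely on the canonical sps from $\hat u^i$ to $\hat u^{i+1}$ in $\hat G$, with the endpoints playing the role of the individualised vertices.

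Condition (ii) is directly expressible as the $\LW[4]$-formula
\[
\forall y\,\bigwedge_{i=0}^{2}\Big(\big(\formel{csps-vert}(x^{i-1},x^i,y)\wedge\formel{csps-vert}(x^i,x^{i+1},y)\big)\to y=x^i\Big),
\]
which mentions only the four variables $x^0,x^1,x^2,y$. Setting $\formel{nl-iso}(x^0,x^1,x^2)$ to be the conjunction of $\bigwedge_{i=0}^{2}\formel{sps-iso}_i(x^i,x^{i+1})$ with the above disjointness formula then yields the desired sentence. The main obstacle I anticipate is the width bookkeeping in step~(c): each inserted relativising subformula lies in $\LW[3]$ or $\LW[4]$ by Lemma~\ref{lem:csps-def}, and because the two parameters $x^i,x^{i+1}$ are shared among all substituted subformulae, they contribute at most two extra free variables to any ancestor subformula of $\formel{sps-iso}_i'$; since each such ancestor originally used at most four variables, the transformed formula stays within width $6$.
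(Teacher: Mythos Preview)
Your proposal is correct and follows essentially the same route as the paper: you decompose ``$\hat\CB\cong\CB$ and $\hat\CB$ is a pseudo-necklace'' into the per-fibre isomorphism condition and the intersection condition, invoke Theorem~\ref{thm:planar-dimension} on each planar $G(\CQ^i)_{u^i,u^{i+1}}$, and relativise the resulting $\LW[4]$-sentence via $\formel{csps-vert}$ and $\formel{csps-edge}$ exactly as the paper does. Your explicit justification of the planarity of $G(\CQ^i)$ from clause~(3) of Definition~\ref{def:necklace} and your width accounting (two shared parameters $x^i,x^{i+1}$ added to a $\LW[4]$-sentence) are details the paper leaves implicit, but the argument is the same.
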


  For the remainder of this proof, let
  $\hat \CB \coloneqq (\hat u^0, \hat \CQ^0, \hat u^1, \hat \CQ^1, \hat u^2, \hat \CQ^2)$
  be a pseudo-necklace in $\hat G $ such that $\CB\cong\hat \CB$. If no such
  pseudo-necklace exists, we can distinguish $G$ and $\hat G $ in $\LW[6]$ using
  Claim~\ref{ns:claim1}.
  Let $u^i = u_1^i, u_2^i, \dots, u_{n_i}^i$ be the articulation
  vertices of $\CQ^i$ ordered by increasing height in $\CB$. Since
  $\hat \CB \cong \CB$, there is a bijection from $\art(\CB)$ to
  $\art(\hat \CB)$ mapping each articulation vertex to one of equal
  height in $\hat \CB$. Thus, for simplicity, we use the same name for
  the two articulation vertices in $\CB$ and $\hat \CB$ of equal height. In the
  following, let $\CQ_{i,j} \coloneqq \CQ^G(u^i_j,u^i_{j+1})$ and
  $\hat \CQ_{i,j} \coloneqq \CQ^{\hat G }(u^i_j,u^i_{j+1})$. Note that the
  $\hat \CQ_{i,j}$ are pseudo-patches.
  By our assumption that $\hat \CB \cong \CB$, the pseudo-patch
  $\hat \CQ_{i,j}$ is trivial if and only if the patch $\CQ_{i,j}$ is.

  Let $I \coloneqq I(\CB)$. Let $\hat I$ be the graph
  with vertex set $V(\hat I)\coloneqq\formel{nl-int-vert}[\hat G ,\hat u^0,\hat u^1,\hat u^2,y]$ and
  edge set $E(\hat I)\coloneqq\formel{nl-int-edge}[\hat G ,\hat u^0,\hat u^1,\hat u^2,y_1,y_2]$. Since $\hat \CB$ need not be a proper necklace (it
  might not comply with the third item in Definition \ref{def:necklace}),
  the graph $\hat I$ is not necessarily the inside of a
  necklace. However, for simplicity, we also use the letter $I$ to refer
  to such a ``pseudo-inside'' just as we also use $\CB$ for all
  pseudo-necklaces. For simplicity, we call $I$ and $\hat I$
  \emph{isomorphic}, and we write $I \cong \hat I$, if
  $I_{u^0,u^1,u^2} \cong \hat I_{\hat u^0,\hat u^1,\hat u^2}$,
  i.e., if there is an isomorphism from $I$ to $\hat I$
  mapping $u^i$ to $\hat u^i$ for every $i \in \{0,1,2\}$. Note that
  every such isomorphism induces an isomorphism from $\CB$ to
  $\hat \CB$. 
  We also define a ``pseudo-inside'' for all the pseudo-patches
  $\hat \CQ_{i,j}$: we let $I(\hat \CQ_{i,j})$ be the graph with vertex set
    $\formel{int-vert}[\hat G ,u_j^i,u_{j+1}^i,y]$ and edge set
    $\formel{int-edge}[\hat G ,u_j^i,u_{j+1}^i,y]$.

  \begin{claim}[resume]\label{ns:claim2}
    There is a formula $\formel{inside-iso}(x^0,x^1,x^2) \in \LW[7]$
    (not depending on $\hat G $) such that
    $\hat G  \models \formel{inside-iso}(\hat u^0,\hat u^1,\hat u^2)$ if and only if
    $\hat I \cong I$.
 
    \proof 
    We have that $\hat I \cong I$ if and only if $\hat G $
    satisfies the following conditions for all $i \in \{0,1,2\}$.
    \begin{enumerate}[({1}a)]
    \item $I(\hat \CQ_{i,j})_{u^i_j, u^i_{j+1}} \cong I(\CQ_{i,j})_{u^i_j, u^i_{j+1}}$ for all $j \in [n_i-1]$.\label{item:1a}
      
    \item If $i' = i + 1 \mod 3 \text{ and } j = n_i-1 \text{ and } j' = 1$, it holds that $I(\hat \CQ_{i,j}) \cap I(\hat \CQ_{i',j'}) = \{\hat u^{i+1}\}$.\label{item:1b}
      
    \item If $j \in [n_i-1]$ and $j' = j+1$, it holds that $I(\hat \CQ_{i,j}) \cap I(\hat \CQ_{i,j'}) = \{u_{j+1}^i\}$.\label{item:1c}
      
    \item If $i' = i + 1 \mod 3 \text{ and } (j \neq n_i-1 \text{ or } j' \neq 1)$, it holds that $I(\hat \CQ_{i,j}) \cap I(\hat \CQ_{i',j'}) = \emptyset$. If $|j'-j| \geq 2$, it holds that $I(\hat \CQ_{i,j}) \cap I(\hat \CQ_{i,j'}) = \emptyset$.\label{item:1d}
    \end{enumerate}
    
    The ``only if'' follows from the definition of $I$ and the
    $\LC$-definability of $I(\CQ_{i,j})$. We now show the ``if''-part. Consider isomorphisms $\pi_{i,j}$ witnessing Item
    (1\ref{item:1a}). We define an isomorphism $\pi$ from $\hat I$ to
    $I$ by letting $\pi(v)$ be $\pi_{i,j}(v)$ where $i$ and
    $j$ are such that $I(\hat \CQ_{i,j})$ contains $v$. Note that by
    Items (1\ref{item:1b}), (1\ref{item:1c}) and (1\ref{item:1d}), if
    $I(\hat \CQ_{i,j}) \cap I(\hat \CQ_{i',j'}) \neq \emptyset$ for $\hat \CQ_{i,j} \neq \hat \CQ_{i',j'}$, then
    there is a unique vertex $v \in I(\hat \CQ_{i,j}) \cap I(\hat \CQ_{i',j'})$. In that
    case, Item (1\ref{item:1a}) guarantees that the two possible images
    $\pi(v)$ coincide. Thus, $\pi$ is well-defined and it certainly is
    an isomorphism.
    
    We still need to translate Items (1\ref{item:1a})--(1\ref{item:1d}) into
    $\LC$-formulae. Since the subgraph $I(\CQ_{i,j})$ of $G$ is embedded in the disk
    $\FD(\CQ_{i,j})$, it is planar. Hence, by Theorem
    \ref{thm:planar-dimension}, there is a sentence
    $\formel{disk-iso}'_{i,j} \in \LC[4]$ which identifies
    $I(\CQ_{i,j})_{u^i_j,
      u^i_{j+1}}$. Let $R^i_j$ and $R^i_{j+1}$ be the relations that occur in $\formel{disk-iso}'_{i,j}$ for the colours of
    $u^i_j$ and $u^i_{j+1}$, respectively.
    
    To relativise $\formel{disk-iso}'_{i,j}$ to $I(\hat \CQ_{i,j})$, we
    replace every $\exists^{\geq p} x \varphi$ with
    $\exists^{\geq p} x (\formel{int-vert}(x_j^i,x_{j+1}^i,x)
    \wedge \varphi)$ and every $E(x,y)$ with
    $\formel{int-edge}(x_j^i,x_{j+1}^i,x)$. Furthermore, we replace
    every $R^i_j(z,z)$ with the formula $z = x^i_j$ and every
    $R^i_{j+1}(z,z)$ with $z = x^i_{j+1}$. By Lemma \ref{lem:disk-non-def}, the resulting formula
    $\formel{disk-iso}_{i,j}(x^i_j,x^i_{j+1})$ is in $\LW[7]$.
    
    Again using Lemma \ref{lem:disk-non-def}, it is tedious but
    straightforward to construct a $\LW[7]$-formula
    $\formel{disk-chain}(x^0,x^1,x^2)$ which checks if Items
    (1\ref{item:1b}), (1\ref{item:1c}) and (1\ref{item:1d}) hold.

    Now we can just let
    \begin{align*}
      \formel{inside-iso}(x^0,x^1,x^2) \coloneqq {}&\formel{nl-iso}(x^0,x^1,x^2) \wedge \formel{disk-chain}(x^0,x^1,x^2) \wedge {}
      \\&\bigwedge\limits_{i=0}^2 \bigwedge\limits_{j=1}^{n_i-1} \exists z \exists z'\big(\formel{csps-art}_j(x^i,x^{i+1},z) \wedge {}
       \\&\formel{csps-art}_{j+1}(x^i,x^{i+1},z') \wedge \formel{disk-iso}_{i,j}(z,z')\big).
    \uenda
    \end{align*}
  \end{claim}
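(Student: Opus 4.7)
The plan is to reduce the isomorphism $\hat I \cong I$, with the articulation triple $(u^0, u^1, u^2)$ fixed, to isomorphisms of the individual sub-insides $I(\CQ_{i,j})$ together with a condition encoding how they glue together. Since $I = \bigcup_{i,j} I(\CQ_{i,j})$, two neighbouring sub-insides share exactly one articulation vertex of $\CB$, and non-neighbouring sub-insides are disjoint, any family of isomorphisms $I(\hat \CQ_{i,j}) \to I(\CQ_{i,j})$ respecting the articulation vertices patches together uniquely into an isomorphism $\hat I \to I$, and any such global isomorphism restricts to each sub-inside.

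For the sub-insides themselves, I would exploit that each $I(\CQ_{i,j})$ is planar, being contained in the disk $\FD(\CQ_{i,j})$. Theorem~\ref{thm:planar-dimension} yields a $\LC[4]$-sentence $\formel{disk-iso}'_{i,j}$ that identifies the coloured graph $I(\CQ_{i,j})_{u^i_j, u^i_{j+1}}$ (with the endpoints marked by unary colour relations $R^i_j, R^i_{j+1}$). To apply this inside $\hat G$ while only referring to the abstract graph and two free variables $x^i_j, x^i_{j+1}$, I would perform three substitutions: replace each colour atom $R^i_j(z,z)$ and $R^i_{j+1}(z,z)$ by $z = x^i_j$ and $z = x^i_{j+1}$ respectively; replace each edge atom $E(y_1, y_2)$ by $\formel{int-edge}(x^i_j, x^i_{j+1}, y_1, y_2)$ from Lemma~\ref{lem:disk-non-def}; and relativise each quantifier $\exists^{\geq p} y\, \varphi$ to $\exists^{\geq p} y\bigl(\formel{int-vert}(x^i_j, x^i_{j+1}, y) \wedge \varphi\bigr)$. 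This yields $\formel{disk-iso}_{i,j}(x^i_j, x^i_{j+1})$ expressing that $I(\hat \CQ_{i,j})$ is isomorphic to $I(\CQ_{i,j})$ via an isomorphism sending $x^i_j \mapsto u^i_j$ and $x^i_{j+1} \mapsto u^i_{j+1}$.

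Finally, I would assemble $\formel{inside-iso}$ by conjoining three ingredients: $\formel{nl-iso}(x^0, x^1, x^2)$ from Claim~\ref{ns:claim1} to enforce the pseudo-necklace structure; for each pair of consecutive articulation vertices on each $\CQ^i$, an existential retrieval of these vertices via $\formel{csps-art}_j$ and $\formel{csps-art}_{j+1}$ followed by $\formel{disk-iso}_{i,j}$; and a formula $\formel{disk-chain}(x^0, x^1, x^2)$ expressing that neighbouring sub-insides intersect exactly in the expected articulation vertex of the necklace while all other pairs of sub-insides are disjoint (definable using $\formel{int-vert}$ combined with the articulation-vertex locators of Corollary~\ref{cor:necklace-def1}). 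The main obstacle is the width bookkeeping: since $\formel{int-vert}, \formel{int-edge} \in \LW[7]$ while the base sentence has width only $4$, the two endpoint variables $x^i_j, x^i_{j+1}$ must share the seven-variable budget with the internal variables of the relativiser, and the existential retrieval of articulation vertices must recycle variable slots carefully. With appropriate renaming this keeps the overall width at $7$, yielding $\formel{inside-iso} \in \LW[7]$ as required.
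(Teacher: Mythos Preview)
Your proposal is correct and follows essentially the same approach as the paper: decompose $\hat I \cong I$ into per-patch isomorphisms $I(\hat\CQ_{i,j})_{u^i_j,u^i_{j+1}} \cong I(\CQ_{i,j})_{u^i_j,u^i_{j+1}}$ together with the gluing/disjointness conditions, identify each planar $I(\CQ_{i,j})$ via Theorem~\ref{thm:planar-dimension}, relativise via the same substitutions using $\formel{int-vert}$ and $\formel{int-edge}$, and assemble $\formel{inside-iso}$ from $\formel{nl-iso}$, the retrieved $\formel{disk-iso}_{i,j}$, and $\formel{disk-chain}$. The paper makes the four gluing conditions (1a)--(1d) explicit and argues both directions of the equivalence, but the content and the width bookkeeping are exactly what you outline.
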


  In the following, we assume without loss of generality that
  $\hat I \cong I$.
  
  Since $C(\CQ_{i,j})$ is a cycle, the two sets of vertices of the
  segments on $C(\CQ_{i,j})$ between $u^i_j$ and $u^i_{j+1}$ form
  blocks of the automorphism group of
  $C(\CQ_{i,j})_{u^i_j,u^i_{j+1}}$ and thus by Corollary
  \ref{cor:Q-invariance}, also of the automorphism group of
  $G_{u_j^i, u_{j+1}^i}$.
  To be more precise, every automorphism of $G$ that fixes $u_j^i$ and
  $u_{j+1}^i$ either leaves each of the two segments invariant or
  ``swaps sides'', i.e., maps the two segments onto each other while
  preserving heights. Moreover, there is such an automorphism swapping
  sides in $I(\CQ_{i,j})$ if and only if there is a vertex
  $v \in C(\CQ_{i,j})$ with $v \notin \{u^i_j,u^i_{j+1}\}$ whose
  orbit of the automorphism group of $G_{u_j^i, u_{j+1}^i}$ has size
  greater than 1. (In that case, it has size 2.)

  Recall the definition of the cut graph $\Cut(\CB)$
  from Definition~\ref{def:cut}. Also recall Lemma~\ref{lem:necklace-def2},
  where we introduced $\LW[7]$-formulae
  $\formel{nl-cut-vert}(x^0,x^1,x^2,y)$,
  $\formel{nl-cut-edge}(x^0,x^1,x^2,y_1,y_2)$ defining the vertex
  set and edge set of the cut graph. We define a
  \emph{pseudo-cut graph} $\Cut(\hat \CB)$ of $\hat \CB$ by letting
  \begin{align*}
    V\big(\Cut(\hat \CB)\big) &\coloneqq {}
  \formel{nl-cut-vert}[\hat G ,\hat u^0,\hat u^1,\hat u^2,y],\\
    E\big(\Cut(\hat \CB)\big) &\coloneqq {}
  \formel{nl-cut-edge}[\hat G ,\hat u^0,\hat u^1,\hat u^2,y_1,y_2].
  \end{align*}
    Let $C(\hat \CQ_{i,j})$ be the graph with $V\big(C(\hat \CQ_{i,j})\big) = \formel{bd-vert}[\hat G ,u_j^i,u_{j+1}^i,y]$ and $E\big(C(\hat \CQ_{i,j})\big) = \formel{bd-edge}[\hat G ,u_j^i,u_{j+1}^i,y_1,y_2]$,
  where $\formel{bd-vert}(x,x',y)$,
  $\formel{bd-edge}(x,x',y_1,y_2)$ are the
  $\LW[7]$-formulae from
  Lemma~\ref{lem:disk-non-def}. 
  Furthermore, let $I^*(\hat \CQ_{i,j})$ be the graph resulting from
  $I(\hat \CQ_{i,j})_{u^i_j, u^i_{j+1}}$ by assigning all vertices in
  $V\big(C(\hat \CQ_{i,j})\big)$ a common distinct colour and proceeding similarly
  with $E\big(C(\hat \CQ_{i,j})\big)$. Define the graph $I^*(\CQ_{i,j})$
  similarly.

  Let $\Cut^*(\CB)$ be the (coloured) graph resulting from
  $\Cut(\CB)$ by adding colours corresponding to the following unary
  and binary relations:
  \begin{enumerate}[({2}a)]
  \item for each set $J \subseteq [|\art(\CB)|]$ a relation $R_J$
    with $v \in R_J$ if and only if
    $J = \{i \mid vw \in E(G) \text{ for a } w \in
    \formel{nl-art}_i[G,u^0,u^1,u^2,y]\}$, where
    $\formel{nl-art}_i(x^0,x^1,x^2,y)$ is the $\LW[4]$-formula
    introduced in Corollary~\ref{cor:necklace-def1},\label{item:2a}
  \item for each $i \in \{0,1,2\}$ and each $j \in [n_i-1]$ a relation
    $R^1_{i,j}$ with $v \in R^1_{i,j}$ if and only if
    $v \in \formel{bd-vert}[G, u_j^i, u_{j+1}^i, y]$,\label{item:2b}
  \item for each $i \in \{0,1,2\}$ and each $j \in [n_i-1]$ a relation
    $R^2_{i,j}$ with $e \in R^2_{i,j}$ if and only if
    $e \in \formel{bd-edge}[G, u_j^i, u_{j+1}^i,
    y_1,y_2]$,\label{item:2c}
  \item for every orbit $O$ of the automorphism group of
    $I^*(\CQ_{i,j})$ a relation $R_O$ with $v \in R_O$ if and only
    if $v \in O$.\label{item:2d}
  \end{enumerate}
  We show that all of the relations introduced in Items
  (2\ref{item:2a})--(2\ref{item:2d}) can be defined in $\LW[s+3]$
  by providing formulae that express containment in the relations. We
  omit the correctness proofs since they are straightforward.
  \begin{enumerate}[({3}a)]
  \item For $R \coloneqq R_J$, let
    \begin{align*}
      \varphi_R(x^0,x^1,x^2, y) \coloneqq {}&\formel{nl-cut-vert}(x^0,x^1,x^2,y) \wedge \bigwedge\limits_{i
      \in J} \exists z \big(\formel{nl-art}_i(x^0,x^1,x^2,z) \wedge
    E(y,z)\big)\\
      &\wedge \bigwedge\limits_{i \in [|\art(\CB)|] \setminus
      J} \neg\exists z \big(\formel{nl-art}_i(x^0,x^1,x^2,z) \wedge
    E(y,z)\big).
    \end{align*}
       
  \item For $R \coloneqq R^1_{i,j}$, let
    \begin{align*}
      \varphi_R(x^0,x^1,x^2,y) \coloneqq {}
    &\formel{nl-cut-vert}(x^0,x^1,x^2,y)\wedge \exists x_j^i \exists
    x_{j+1}^i \big(\formel{csps-art}_j(x^i,x^{i+1},x_j^i)\\&\wedge
    \formel{csps-art}_{j+1}(x^i,x^{i+1},x_{j+1}^i) \wedge
    \formel{bd-vert}(x_j^i,x_{j+1}^i,y)\big).
\end{align*}
  \item For $R \coloneqq R^2_{i,j}$, let
    \begin{align*}
      \varphi_R(x^0,x^1,x^2,y_1,y_2) \coloneqq {}
    &\formel{nl-cut-edge}(x^0,x^1,x^2,y_1,y_2) \wedge {}
    \\&\exists x_j^i
    \exists x_{j+1}^i \big(\formel{csps-art}_j(x^i,x^{i+1},x_j^i) \wedge
    \formel{csps-art}_{j+1}(x^i,x^{i+1},x_{j+1}^i) 
    \\ &\wedge
    \formel{bd-edge}(x_j^i,x_{j+1}^i,y_1,y_2)\big).
    \end{align*}

  \item Let $R \coloneqq R_O$. By Proposition \ref{prop:corrdetorbits} and the correspondence from Corollary \ref{cor:correspondence},
    since $\LC[4]$ identifies $I^*(\CQ_{i,j})$, the logic $\LC[5]$ determines orbits on $I^*(\CQ_{i,j})$. Thus, there is
    a $\LW[5]$-formula $\varphi'_R(x)$ such that for any graph $H$ it
    holds that $H \models \varphi'_R(v)$ if and only if there is an
    isomorphism $\pi^*$ from $I^*(\CQ_{i,j})$ to $H$ such that
    $\pi^*(w) = v$ for some $w \in O$. 

    We relativise $\varphi'_R(x)$ to $I(\CQ_{i,j})$ by replacing
    every occurrence of the form $\exists^{\geq p}x \psi$ with
    \[
      \exists^{\geq p}x \exists z \exists z'
    (\formel{csps-art}_{j}(x^i,x^{i+1},z) \wedge
    \formel{csps-art}_{j+1}(x^i,x^{i+1},z') \wedge
    \formel{int-vert}(z,z',x) \wedge \psi)
  \] 
  and proceeding similarly
    for the edges.

    Let $R^1_C$ and $R^2_C$ be the colour relations corresponding to
    $V\big(C(\CQ_{i,j})\big)$ and $E\big(C(\CQ_{i,j})\big)$ in $I^*(\CQ_{i,j})$,
    respectively. We replace $R^1_C(z,z)$ with
    \[
      \exists y \exists y' (\formel{csps-art}_{j}(x^i,x^{i+1},y) \wedge
    \formel{csps-art}_{j+1}(x^i,x^{i+1},y') \wedge
    \formel{bd-vert}(y,y',z)
  \] 
  and proceed similarly with
    $R^2_C(z,z')$. Recall that $R^i_j$ and $R^i_{j+1}$ are the colour relations for $u^i_j$ and $u^i_{j+1}$, respectively. We replace $R^i_j(z,z)$ with the formula
    $\formel{csps-art}_j(x^i,x^{i+1},z)$ and
    do the analogous for $R^i_{j+1}(z,z)$. By Corollary \ref{cor:necklace-def1} and Lemma
    \ref{lem:disk-non-def}, the
    resulting formula $\varphi_R(x^0,x^1,x^2,x)$ is in
    $\LW[7]$. \label{item:3d}
  \end{enumerate}

  Our assumption $\hat \CB \cong \CB$ implies that $|\art(\hat \CB)| =
  |\art(\CB)|$. Define $\Cut^*(\hat \CB)$ as the graph resulting from
  $\Cut(\hat \CB)$ by interpreting each relation $R$ from Items 2\ref{item:2a},
  (2\ref{item:2b}), (2\ref{item:2d}) as $\varphi_R[\hat G ,\hat u^0,\hat u^1,\hat u^2,y]$
  and each $R$ from Item (2\ref{item:2c}) as
  $\varphi_R[\hat G ,\hat u^0,\hat u^1,\hat u^2,y_1,y_2]$.

  \begin{claim}[resume]\label{ns:claim3}
    $\Cut^*(\CB) \cong \Cut^*(\hat \CB)$ if
    and only if
    $G_{u^0,u^1,u^2} \cong \hat G _{\hat u^0,\hat u^1,\hat u^2}$.
    
    \proof We prove the backward direction first. Assume that
    $G_{u^0,u^1,u^2} \cong \hat G _{\hat u^0,\hat u^1,\hat u^2}$, and let $\pi$
    be an isomorphism from $G$ to $\hat G $ mapping $u^i$ to $\hat u^i$.
    Since $\CB \cong \hat \CB$, for each $i,j$ the isomorphism
    $\pi$ maps $\CQ_{i,j}$ to
    $\hat \CQ_{i,j}$. By Lemma \ref{lem:necklace-def2}, it holds that
    $V\big(\Cut(\CB)\big) = \formel{nl-cut-vert}[G, u^0,u^1,u^2,y]$
    and
    $E\big(\Cut(\CB)\big) =
    \formel{nl-cut-edge}[G,u^0,u^1,u^2,y_1,y_2]$. Thus, $\pi$
    must map $\Cut(\CB)$ to $\Cut(\hat \CB)$. To see that $\pi$ also
    induces an isomorphism between $\Cut^*(\CB)$ and
    $\Cut^*(\hat \CB)$, consider a vertex $v \in V\big(\Cut^*(\CB)\big)$ and
    suppose $\pi(v)$ has a different colour (i.e.\ satisfies different
    colour relations) in $\Cut^*(\hat \CB)$ than $v$ in $\Cut^*(\CB)$.

    Let $R$ be one of the unary relations from Items
    (2\ref{item:2a})--(2\ref{item:2d}). Since $\pi$ is an isomorphism which
    maps $u^i$ to $\hat u^i$ for $i \in \{0,1,2\}$, we have that
    \begin{align*}
      v \in R(G) &\iff v \in \varphi_R[G,u^0,u^1,u^2,y] 
      \\&\iff \pi(v) \in \varphi_R[\hat G ,\hat u^0,\hat u^1,\hat u^2,y] 
      \\&\iff \pi(v) \in R(\hat G).
    \end{align*}
    Similarly, we can show that every edge $e \in E\big(\Cut(\CB)\big)$ is
    mapped to an edge $\pi(e) \in E\big(\Cut(\hat \CB)\big)$ contained in the
    same colour relations. Thus,
    $\pi$ induces an isomorphism between $\Cut^*(\CB)$ and
    $\Cut^*(\hat \CB)$, which concludes the backward direction of the proof.

    For the forward direction, assume that
    $\Cut^*(\CB) \cong \Cut^*(\hat \CB)$ via an isomorphism $\pi$. Then
    since $|\art(\CB)| = |\art(\hat \CB)|$, by Items (2\ref{item:2a}) and
    (2\ref{item:2b}), the isomorphism $\pi$ can be extended to an
    isomorphism $\pi'$ from the graph with vertex set
    $V\big(\Cut(\CB)\big) \cup \art(\CB)$ and whose edge set is the extension of
    $E\big(\Cut(\CB)\big)$ by all edges between
    $V\big(\Cut(\CB)\big) \cup \art(\CB)$ and $\art(\CB)$ to the
    corresponding subgraph of $\hat G $, where $\pi'$ maps every
    articulation vertex of $\CB$ to one of equal height in $\hat \CB$.

    Furthermore, by Items (2\ref{item:2b}) and (2\ref{item:2c}), the mapping $\pi$ induces an isomorphism from $C(\CQ_{i,j})$ to $C(\hat \CQ_{i,j})$ which fixes $u_j^i$ and $u_{j+1}^i$ (and thus preserves heights). 
    Let $L_{i,j}$ and $R_{i,j}$ as well as $\widehat{L}_{i,j}$ and $\widehat{R}_{i,j}$ be the two segments of $C(\CQ_{i,j})$ and $C(\hat \CQ_{i,j})$ between $u_j^i$ and $u_{j+1}^i$, respectively. Then for every pair $(i,j)$, the isomorphism $\pi$ either maps $L_{i,j}$ to $\widehat{L}_{i,j}$ and $R_{i,j}$ to $\widehat{R}_{i,j}$, or $L_{i,j}$ to $\widehat{R}_{i,j}$ and $R_{i,j}$ to $\widehat{L}_{i,j}$. Without loss of generality, assume the first. 

    If for every pair $(i,j)$, the coloured graphs $I^*(\CQ_{i,j})$ and $I^*(\hat \CQ_{i,j})$ are isomorphic via an isomorphism $\pi_{i,j}'$ mapping $L_{i,j}$ to $\widehat{L}_{i,j}$, then the collection of the $\pi_{i,j}$ clearly extends $\pi$ to an isomorphism between $G$ and $\hat G $. 

    Thus, suppose there is a pair $(i,j)$ such that every isomorphism from $I^*(\CQ_{i,j})$ to $I^*(\hat \CQ_{i,j})$ swaps sides. Let $v \in C(\CQ_{i,j})$ and let $O$ be the orbit of $v$ with respect to the automorphism group of $I^*(\CQ_{i,j})$. Let $R \coloneqq R_O$. Then, by Item (3\ref{item:3d}), it holds that $v \in \varphi_R[G,u^0,u^1,u^2,x]$ but $\pi(v) \notin \varphi_R[\hat G ,\hat u^0,\hat u^1,\hat u^2,x]$. However, this is a contradiction since $\pi$ must respect all relations.
      \uend
    \end{claim}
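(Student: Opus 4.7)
The plan is to prove both directions of the biconditional. For the backward direction, I would start with an isomorphism $\pi \colon G \to \hat G$ satisfying $\pi(u^i)=\hat u^i$. Since each $\CQ^i = \CQ^G(u^i,u^{i+1})$ is the canonical shortest-path system and $\pi$ fixes the endpoints correspondingly, $\pi$ must send each subpatch $\CQ_{i,j}$ to $\hat\CQ_{i,j}$. By Lemma~\ref{lem:necklace-def2}, the vertex and edge sets of $\Cut(\CB)$ are precisely the sets defined by the $\LW[7]$-formulae $\formel{nl-cut-vert}$ and $\formel{nl-cut-edge}$ with parameters $u^0,u^1,u^2$, so $\pi$ restricts to an isomorphism $\Cut(\CB)\to\Cut(\hat\CB)$. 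Each colour relation $R$ added in items~(2a)--(2d) is definable by a formula $\varphi_R$ whose only parameters are among $x^0,x^1,x^2$; since $\pi$ maps $u^i$ to $\hat u^i$, membership in $R$ is preserved, yielding $\Cut^*(\CB)\cong\Cut^*(\hat\CB)$.

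For the forward direction, suppose $\pi\colon\Cut^*(\CB)\to\Cut^*(\hat\CB)$ is an isomorphism. The unary colours from (2a) record, for each $v\in V(\Cut(\CB))$, exactly which articulation vertices of $\CB$ it is adjacent to in $G$; combined with the height-preserving bijection $\art(\CB)\to\art(\hat\CB)$ given by $\CB\cong\hat\CB$, this lets me extend $\pi$ to a map $\pi'$ defined on $V(\Cut(\CB))\cup\art(\CB)$ that additionally preserves all $G$-edges incident to $\art(\CB)$. The colours from (2b) and (2c) then force $\pi'$ to send each boundary cycle $C(\CQ_{i,j})$ to $C(\hat\CQ_{i,j})$ while fixing $u^i_j$ and $u^i_{j+1}$, and hence either to preserve or to swap the two sides $L_{i,j}, R_{i,j}$ versus $\widehat L_{i,j}, \widehat R_{i,j}$.

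It remains to extend $\pi'$ across each internal graph $I(\CQ_{i,j})$. By Claim~\ref{ns:claim2} we have $I(\CQ_{i,j})\cong I(\hat\CQ_{i,j})$ via an isomorphism fixing $u^i_j$ and $u^i_{j+1}$, so each admits a patch-isomorphism that is either side-preserving or side-swapping relative to the boundary labeling. If for every $(i,j)$ at least one such patch-isomorphism agrees with $\pi'$'s side assignment, I can glue these together with $\pi'$; the gluing is consistent because two distinct patches meet only at articulation vertices, on which $\pi'$ is already defined. This yields the desired global isomorphism $G_{u^0,u^1,u^2}\to\hat G_{\hat u^0,\hat u^1,\hat u^2}$.

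The principal obstacle is the case in which for some $(i,j)$ every patch-isomorphism $I^*(\CQ_{i,j})\to I^*(\hat\CQ_{i,j})$ swaps sides while $\pi'$ does not, or conversely. Here the orbit colouring $R_O$ of item~(2d) is what saves the argument: choose $v\in V\big(C(\CQ_{i,j})\big)\setminus\{u^i_j,u^i_{j+1}\}$ whose orbit $O$ under $\Aut(I^*(\CQ_{i,j}))$ is entirely contained on one side of the boundary (such $v$ exists precisely when no side-preserving isomorphism is available). Then $v\in R_O(\Cut^*(\CB))$, so $\pi(v)\in R_O(\Cut^*(\hat\CB))$, and unfolding the defining formula $\varphi_{R_O}$ produces an isomorphism $I^*(\CQ_{i,j})\to I^*(\hat\CQ_{i,j})$ whose side behaviour contradicts our case assumption. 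This orbit-colouring argument, which uses the $\LC$-definability of orbits in $3$-connected planar graphs, is the technical heart of the claim.
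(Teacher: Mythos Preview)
Your proposal is correct and follows essentially the same approach as the paper's proof: the backward direction via definability of the cut graph and all colour relations through $\LW$-formulae with parameters $u^0,u^1,u^2$, and the forward direction by extending $\pi$ over articulation vertices using (2a), matching boundary cycles via (2b)/(2c), and resolving any side mismatch with the orbit colouring (2d). Your treatment of the orbit-colouring step is in fact slightly more explicit than the paper's (you spell out why the orbit of the chosen $v$ lies on one side and how unfolding $\varphi_{R_O}$ yields the contradictory isomorphism), but the underlying idea is identical.
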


    Thus, to check whether $G_{u^0, u^1, u^2} \cong \hat G _{\hat u^0, \hat u^1, \hat u^2}$ it suffices to consider the (pseudo-)cut graphs of $G$ and $\hat G $.

    \begin{claim}[resume]\label{ns:claim4}
      There is a formula $\formel{cut-iso}_G(x^0,x^1,x^2) \in
        \LW[s+3]$ (not depending on $\hat G $) such that $\hat G  \models \formel{cut-iso}(\hat u^0,\hat u^1,\hat u^2)$ if and only if $\Cut^*(\hat \CB) \cong \Cut^*(\CB)$.

    \proof
  By Lemma \ref{lem:necklace-simp}, every connected component of $\Cut(\CB)$ is in $\CE_{g-1}$. Therefore, by Corollary \ref{cor:id-comp} and by the induction assumption there is a sentence $\formel{cut-iso}'_G \in \LC[s]$ which identifies $\Cut^*(\CB)$. By replacing every subformula $\exists^{\geq p} x \varphi$ with $\exists^{\geq p} x (\formel{nl-cut-vert}(x^0,x^1,x^2,x) \wedge \varphi)$ and $E(x,y)$ with $\formel{nl-cut-edge}(x^0,x^1,x^2,x,y)$, we relativise $\formel{cut-iso}'_G$ to the (pseudo-)cut graph. To transform it into a formula working also on the uncoloured cut graph, for every relation $R$ from Items (2\ref{item:2a}), (2\ref{item:2b}), (2\ref{item:2d}), we replace each occurrence $R(z,z)$ with $\varphi_R(x^0,x^1,x^2,z)$ and proceed analogously for every $R$ from Item (2\ref{item:2c}). 

  The resulting formula $\formel{cut-iso}_G(x^0,x^1,x^2)$ is in $\LW[s+3]$ and it holds that
  \begin{align*}
    \hat G  \models \formel{cut-iso}_G(\hat u^0,\hat u^1,\hat u^2) \iff \Cut^*(\hat \CB) \cong \Cut^*(\CB).\uenda
  \end{align*}
    \end{claim}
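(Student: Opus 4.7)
\proof
The plan is to construct $\formel{cut-iso}_G$ by taking an identifying formula for the coloured graph $\Cut^*(\CB)$ supplied by the induction hypothesis and then relativising it so that it can be evaluated inside $\hat G$ using only the necklace parameters $(x^0,x^1,x^2)$ to pick out $\Cut^*(\hat\CB)$.

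The first step is to apply Lemma~\ref{lem:necklace-simp}, which tells us that every connected component of $\Cut(\CB)$ (and hence of the coloured version $\Cut^*(\CB)$) lies in $\CE_{g-1}$. By Assumption~\ref{ass:induction}, each such coloured component is identified by some $\LW[s]$-sentence, and then Corollary~\ref{cor:id-comp} combines these into a single $\LW[s]$-sentence $\formel{cut-iso}'_G$ that identifies $\Cut^*(\CB)$, where the colouring uses exactly the unary and binary relations from items (2\ref{item:2a})--(2\ref{item:2d}).

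The second step is a two-stage substitution. First, we relativise to the pseudo-cut graph: every subformula $\exists^{\ge p} x\,\chi$ becomes $\exists^{\ge p} x\bigl(\formel{nl-cut-vert}(x^0,x^1,x^2,x)\wedge\chi\bigr)$ and every edge atom $E(y,y')$ becomes $\formel{nl-cut-edge}(x^0,x^1,x^2,y,y')$, using the $\LW[7]$-formulae from Lemma~\ref{lem:necklace-def2}. Second, for every colour relation $R$ appearing in $\formel{cut-iso}'_G$ we substitute the defining formula $\varphi_R$ from items (3\ref{item:2a})--(3\ref{item:2d}) for the corresponding atom $R(z)$ or $R(z_1,z_2)$; each $\varphi_R$ has free variables among $\{x^0,x^1,x^2,z\}$ or $\{x^0,x^1,x^2,z_1,z_2\}$ and lies in $\LW[7]$, with item (3\ref{item:2d}) relying on the fact that $\LC[4]$ identifies the planar graphs $I^*(\CQ_{i,j})$ and hence $\LC[5]$ determines orbits on them. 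Correctness is immediate: Lemma~\ref{lem:necklace-def2} guarantees that the vertex and edge sets of $\Cut(\hat\CB)$ are exactly what the relativisation picks out, and the $\varphi_R$ correctly define the colour classes of $\Cut^*(\hat\CB)$, so the relativised formula holds in $\hat G$ at $(\hat u^0,\hat u^1,\hat u^2)$ iff $\Cut^*(\hat\CB)\cong\Cut^*(\CB)$.

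The main obstacle is the width bookkeeping. The starting sentence $\formel{cut-iso}'_G$ has width $s$ and no free variables, and every plug-in formula used above sits in $\LW[7]$ with free variables drawn from $\{x^0,x^1,x^2,x\}$ or at most two further bound variables. Since $s\ge 4$ gives $s+3\ge 7$, the plug-ins themselves fit within $\LW[s+3]$. When a subformula with $k$ free variables of the original sentence is relativised, the substitutions introduce the three common free variables $x^0,x^1,x^2$ (already in scope at that point) plus at most $4-1=3$ additional slots internal to the $\LW[7]$ plug-in, and these extra slots reuse bound variables of the relativisation itself. A careful renaming of bound variables, together with Lemma~\ref{lem:width}, then confirms that no subformula of $\formel{cut-iso}_G$ has more than $\max\{s,7\}+3-3 = s+3$ distinct variables, so $\formel{cut-iso}_G\in\LW[s+3]$ as required.
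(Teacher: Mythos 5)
Your proposal follows essentially the same route as the paper: invoke Lemma~\ref{lem:necklace-simp} together with Assumption~\ref{ass:induction} and Corollary~\ref{cor:id-comp} to get a $\LW[s]$-sentence identifying $\Cut^*(\CB)$, then relativise using $\formel{nl-cut-vert}$ and $\formel{nl-cut-edge}$, and finally substitute the $\varphi_R$ formulae for the colour relations. The width bookkeeping paragraph at the end is informal and the expression $\max\{s,7\}+3-3$ is not really a meaningful computation, but the conclusion $s+3$ is right (each subformula of the base sentence picks up $x^0,x^1,x^2$ as new free variables, giving at most $s+3$, while every subformula inside a plug-in has at most $7\le s+3$ free variables since $s\ge4$), so the argument is sound.
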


  Now the formula 
  \[
    \iso G \coloneqq {} \exists x^0 \exists x^1 \exists x^2
    \big(\formel{inside-iso}(x^0,x^1,x^2) \wedge
    \formel{cut-iso}_G(x^0,x^1,x^2)\big) \in \LW[s+3]
  \] 
  identifies $G$. An application of Lemma \ref{lem:width} concludes the proof.
\end{proof}

%%%%%%%%%%%%%%%%%%%%%%%%%%%%%%%%%%%%%%%%%%%%%%%%%%%%%%%%%%%%
\subsection{Case 2: Presence of simplifying patches}
\label{subsec:simplifying}

In this section, we still assume that $G$ is polyhedrally embedded in
the surface $\FS$ of Euler genus $g$ and that $n=|G|$ (Assumption
\ref{ass:polyhedral2}), but we replace
Assumption~\ref{ass:no-simplifying} with the following assumption.

\begin{assumption}\label{ass:simplifying}
  $G$ contains a simplifying patch.
\end{assumption}

This case sounds simpler than the first one: we only need to remove a simplifying patch from our
graph. The remaining pieces have smaller Euler genus and thus can be
identified in the logic $\LW[s]$. Thus, all we need to do is colour the
pieces in such a way that we can reconstruct the original graph. The
problem with this line of reasoning is that simplifying patches have a
much more complicated structure than non-simplifying patches. For
example, we cannot define the internal graph of a simplifying patch in the
same way as we did for non-simplifying patches in
Lemma~\ref{lem:disk-non-def}. A consequence is that there is
no easy way to reconstruct the original graph from the graph obtained
by removing a simplifying patch. 

The first lemma handles trivial simplifying patches, so that afterwards we can focus on non-trivial ones.

\begin{lemma}
  If $G$ has a trivial simplifying patch, then there is a sentence $\iso G\in\LW[s+2]$ that identifies $G$.
\end{lemma}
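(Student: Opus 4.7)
The plan is to imitate the argument from Lemma~\ref{lem:representativity}, replacing the non-contractible curve with a trivial simplifying patch. Let $\CQ$ be a trivial simplifying patch in $G$ and set $U \coloneqq V(\CQ)$, so $|U| \le 2$, say $U \subseteq \{u_1, u_2\}$ (allowing $u_1 = u_2$ to uniformly cover the single-vertex case). By the definition of ``simplifying'', every connected component of $G \setminus U$ belongs to $\CE_{g-1}$.

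First I would recolour each vertex $v \in V(G) \setminus U$ by an additional label encoding its original colour together with the set $\{i \in \{1,2\} \mid v u_i \in E(G)\}$ recording its adjacencies to $U$. The resulting coloured components still lie in $\CE_{g-1}$, so by Assumption~\ref{ass:induction} each is identified by a $\LW[s]$-sentence, and Corollary~\ref{cor:id-comp} packages these into a single $\LW[s]$-sentence $\psi$ that identifies the coloured graph $G \setminus U$. I would then turn $\psi$ into a formula $\psi'(x_1, x_2)$ with two free variables by (i) replacing each artificial colour relation $R(y,y)$ encoding adjacency to $U$ by the corresponding boolean combination of the atoms $E(x_i, y)$ together with the original colour relations on $y$, and (ii) relativising every quantifier $\exists^{\ge p} y \varphi$ to $\exists^{\ge p} y (y \ne x_1 \wedge y \ne x_2 \wedge \varphi)$. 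Since this rewriting only introduces $x_1$ and $x_2$ as additional free variables in every subformula, we have $\psi' \in \LW[s+2]$.

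Finally I would set
\[
  \iso G \coloneqq \exists x_1 \exists x_2 \big(\formel{atomic}(x_1, x_2) \wedge \psi'(x_1, x_2)\big),
\]
where $\formel{atomic}(x_1, x_2)$ is the quantifier-free formula describing the atomic type of $(u_1, u_2)$: their colours, whether $x_1 = x_2$, and whether $E(x_1, x_2)$ holds. Any model of $\iso G$ must provide witnesses whose removal yields a coloured graph isomorphic to $G \setminus U$ with matching adjacency labels, which together with $\formel{atomic}$ reconstructs $G$ up to isomorphism; conversely $G$ itself satisfies $\iso G$ by taking $x_i \coloneqq u_i$. Since every subformula of $\iso G$ has at most $s+2$ free variables, $\iso G \in \LW[s+2]$, as required. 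There is no real obstacle in this case: once the patch is trivial, $U$ is just a tiny separator that can be handled by brute existential guessing, exactly as in Lemma~\ref{lem:representativity}; the substantive work is deferred to the non-trivial simplifying patches treated in the remainder of Section~\ref{subsec:simplifying}.
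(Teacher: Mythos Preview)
Your proposal is correct and follows essentially the same approach as the paper, which also appeals directly to the argument of Lemma~\ref{lem:representativity}: remove the at most two vertices of the trivial patch, apply Assumption~\ref{ass:induction} and Corollary~\ref{cor:id-comp} to the components of $G\setminus U$ (coloured by adjacencies to $U$), and then reintroduce the two vertices at a cost of two extra variables. You have simply spelled out in detail what the paper leaves as ``arguing as in the proof of Lemma~\ref{lem:representativity}''.
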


\begin{proof}
  If $G$ has a trivial simplifying patch consisting of an edge $uu'$,
  then each connected component of $G\setminus\{u,u'\}$ is in
  $\CE_{g-1}$ and can be identified by a sentence in $\LW[s]$. From
  these sentences, we can construct a sentence in $\LW[s+2]$ identifying $G$
  (arguing as in the proof of Lemma~\ref{lem:representativity}).
\end{proof}

From now on, we make the following assumption.

\begin{assumption}\label{ass:non-trivial}
  $G$ contains no trivial simplifying patch.
\end{assumption}

Recall the definition of a
\emph{segment} $\CQ[v,v']$ of an sps $\CQ$ from Section \ref{sec:sps-patch-necklace}. A \emph{subpatch} of a
patch $\CQ$ is a segment of $\CQ$ which is a patch, i.\,e., which has no proper articulation
vertices. A patch ${\CQ}$ is a \emph{minimal simplifying patch} if
${\CQ}$ is simplifying and all proper subpatches of ${\CQ}$ are
non-simplifying. We are mostly interested in minimal simplifying patches.

The \emph{internal region} of an sps $\CQ$ is the set
\[
  \FR(\CQ) \coloneqq {}\bigcup_{\Fe\in E(\CQ)}\Fe\quad\cup\bigcup_{\substack{\CQ'\text{
        non-trivial non-simplifying}\\\text{subpatch of
      }\CQ}}\FD(\CQ').
\]
Note that if $\CQ$ is a non-trivial patch, then $\FR(\CQ)\subseteq\FD(\CQ)$, because $\FD(\CQ')\subseteq\FD(\CQ)$ for
every subpatch $\CQ'$ of $\CQ$. The \emph{regional graph} of a patch $\CQ$ is
the graph $J\coloneqq J(\CQ)$ with vertex set $V(J)\coloneqq V(G)\cap\FR(\CQ)$ and
$E(J)\coloneqq\{\Fe\in E(G)\mid \Fe\subseteq\FR(\CQ)\}$. It follows
from Lemma~\ref{lem:ns-abstract} that the graph $J$ only depends on
the abstract graph $G$ and not on the embedding of $G$.
Observe that if $\CQ$ is a non-trivial and non-simplifying patch, then
$\FR(\CQ)=\FD(\CQ)$ and $J(\CQ)=I(\CQ)$.

Our first lemma shows that the regional graph of a patch is definable in $\LW[\max\{7,s+2\}]$.

\begin{lemma}\label{lem:regional-def}
    There are $\LW[\max\{7,s+2\}]$-formulae $\formel{J-vert}(x,x',y)$ and
    $\formel{J-edge}(x,x',y_1,y_2)$ such that for all $u,u'\in V(G)$ the following holds.
  If the canonical sps $\CQ \coloneqq \CQ^G(u,u')$ from $u$ to $u'$ is a 
  patch, 
then for all
  $u,u' \in V(G)$,
  \begin{align*}
    \formel{J-vert}[G,u,u',y]&=V\big(J(\CQ)\big),\\
   \formel{J-edge}[G,u,u',y_1,y_2]&=E\big(J(\CQ)\big).
  \end{align*}
\end{lemma}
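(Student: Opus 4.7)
The regional graph $J(\CQ)$ of a patch $\CQ = \CQ^G(u,u')$ decomposes as $V(J(\CQ)) = V(\CQ) \cup \bigcup_{\CQ'} V(I(\CQ'))$ and $E(J(\CQ)) = E(\CQ) \cup \bigcup_{\CQ'} E(I(\CQ'))$, where $\CQ'$ ranges over all non-trivial, non-simplifying subpatches of $\CQ$. My plan is to quantify over such subpatches by their endpoints on $\CQ$. The key graph-theoretic observation is that for $z_1, z_2 \in V(\CQ)$ with $z_1 \dagle z_2$, the segment $\CQ[z_1,z_2]$ equals the canonical sps $\CQ^G(z_1,z_2)$: any shortest $z_1$-$z_2$ path $P$ satisfies $|P| = d_G(u,z_2) - d_G(u,z_1)$ (since $z_1$ lies before $z_2$ on some $u$-$u'$ path in $\CQ$), so the concatenation of $P$ with a shortest $u$-$z_1$ path and a shortest $z_2$-$u'$ path is a walk of length $d_G(u,u')$, hence a simple path, hence a member of $\CQ$. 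This lets me apply the already-established formulae $\formel{csps-vert}$, $\formel{csps-art}$, $\formel{csps-simplifying}$, $\formel{int-vert}$, and $\formel{int-edge}$ directly to the endpoints $z_1, z_2$.

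First I would construct a formula $\formel{subpatch}(x,x',z_1,z_2) \in \LW[s+2]$ expressing that $z_1, z_2 \in V(\CQ^G(x,x'))$ with $z_1 \dagle z_2$ (encoded as $\formel{csps-vert}(x,z_2,z_1)$), that $\CQ^G(z_1,z_2)$ is non-trivial ($\exists w(w \neq z_1 \land w \neq z_2 \land \formel{csps-vert}(z_1,z_2,w))$), that it has no proper articulation vertices ($\neg \exists w(w \neq z_1 \land w \neq z_2 \land \formel{csps-art}(z_1,z_2,w))$), and that it is non-simplifying ($\neg \formel{csps-simplifying}(z_1,z_2)$). All conjuncts except the last are built from width-$4$ building blocks over the four free variables $x,x',z_1,z_2$, while the last contributes the dominant internal width $s+2$. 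Then I would set
\begin{align*}
\formel{J-vert}(x,x',y) &\coloneqq \formel{csps-vert}(x,x',y) \lor \exists z_1 \exists z_2 \big(\formel{subpatch}(x,x',z_1,z_2) \land \formel{int-vert}(z_1,z_2,y)\big),\\
\formel{J-edge}(x,x',y_1,y_2) &\coloneqq \formel{csps-edge}(x,x',y_1,y_2) \lor \exists z_1 \exists z_2 \big(\formel{subpatch}(x,x',z_1,z_2) \land \formel{int-edge}(z_1,z_2,y_1,y_2)\big).
\end{align*}
Correctness follows directly from Lemma~\ref{lem:disk-non-def}: whenever $\formel{subpatch}$ holds, $\CQ^G(z_1,z_2)$ is a non-trivial non-simplifying patch, so $\formel{int-vert}(z_1,z_2,y)$ defines $V(I(\CQ^G(z_1,z_2)))$ exactly, and analogously $\formel{int-edge}$ defines the edge set, yielding the required union over subpatches.

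The part requiring the most care is the width bookkeeping. The largest subformula is the conjunction inside the existential in $\formel{J-edge}$, with free variables $\{x,x',z_1,z_2,y_1,y_2\}$ — six in total, within the bound $\max\{7,s+2\}$. Subformulae of $\formel{int-vert}$ and $\formel{int-edge}$ reach width $7$ internally; subformulae of $\formel{csps-simplifying}$ reach width $s+2$; every other subformula is strictly smaller. The main potential pitfall is an inattentive reorganisation that accidentally places additional free variables within the internal-graph formulae at the syntactic level where the width bound is already tight. The clean factoring above avoids this, and an application of Lemma~\ref{lem:width} concludes.
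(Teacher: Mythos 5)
Your proof is correct and follows essentially the same route as the paper: quantify over endpoint pairs $z_1 \dagle^{\CQ} z_2$ of non-trivial non-simplifying subpatches, observe that $\CQ[z_1,z_2] = \CQ^G(z_1,z_2)$ so the csps-formulae of Lemma~\ref{lem:csps-def} apply, and take the disjunction of $\formel{csps\mbox{-}vert}$ (resp.\ $\formel{csps\mbox{-}edge}$) with the existential closure of $\formel{int\mbox{-}vert}$ (resp.\ $\formel{int\mbox{-}edge}$) guarded by the subpatch condition. The only small difference is that you explicitly include the conjunct ``no proper articulation vertices'' so that the hypothesis of Lemma~\ref{lem:disk-non-def} is met verbatim, whereas the paper's $\formel{J\mbox{-}vert}$ uses only $\formel{csps\mbox{-}path}$, $\neg\formel{csps\mbox{-}simplifying}$, and $\neg E(z,z')$ without that filter; this is a minor refinement of the same construction, not a different argument, and it does not affect the width bound.
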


\begin{proof}
  We first define a $\LW[4]$-formula $\formel{csps-path}$ such that
  $G \models \formel{csps-path}(u,u',v,w)$ if and only if there is a
  path $Q \in \CQ$ such that $v,w\in V(Q)$ and $\height^{\CQ}(v) \le \height^{\CQ}(w)$.
  We set
  \begin{align*}
    \formel{csps-path}(x,x',z,z') \coloneqq {}&\formel{csps-vert}(x,x',z) \wedge \formel{csps-vert}(x,x',z') \wedge 
                                              \bigvee\limits_{i=0}^{n-1} \bigg(\formel{dist}_{=i}(x,x')\wedge {}
    \\ &\bigvee\limits_{j \leq i} \Big(\formel{dist}_{=j}(x,z) \wedge \bigvee\limits_{k \leq i-j} \big(\formel{dist}_{=k}(z,z') \wedge \formel{dist}_{=i-j-k}(z',x')\big)\Big)\bigg).
  \end{align*} 
  
  Now we can let 
   \begin{align*}
   \formel{J-vert}(x,x',y) \coloneqq {}&\formel{csps-vert}(x,x',y) \vee \exists z \exists z' \big(\formel{csps-path}(x,x',z,z') \wedge {}
   \\&\neg\formel{csps-simplifying}(z,z') \wedge \neg E(z,z') \wedge \formel{int-vert}(z,z',y)\big).
  \end{align*}

   By Corollary \ref{cor:comp-genus3} and Lemma \ref{lem:disk-non-def}, we have $\formel{J-vert} \in \LW[\max\{7,s+2\}]$.

   Similarly, we can define the $\LW[\max\{7,s+2\}]$-formula

  \begin{align*}
    \formel{J-edge}(x,x',y_1,y_2) \coloneqq {}&\formel{csps-edge}(x,x',y_1,y_2) \vee \exists z \exists z' \big(\formel{csps-path}(x,x',z,z') \wedge {}
    \\&\neg\formel{csps-simplifying}(z,z') \wedge \formel{J-vert}(z,z',y_1) \wedge \formel{J-vert}(z,z',y_2) \wedge {}
    \\& E(y_1,y_2)\big).\qedhere
  \end{align*}
\end{proof}

For the remainder of this section, we fix a minimal simplifying patch of $G$. 

\begin{assumption}\label{ass:fixpatch}
  $\CQ \coloneqq \CQ^G(u,u')$ is a minimal (non-trivial) simplifying
  patch of $G$. Furthermore, $\FD\coloneqq\FD(\CQ)$,
  $\FR\coloneqq\FR(\CQ)$, and $J\coloneqq J(\CQ)$.
\end{assumption}

By Lemma \ref{lem:regional-def}, the logic $\LC[\max\{7, s+2\}]$
distinguishes the regional graph $J$ from the remainder of the
graph. Furthermore, since $\CQ$ is simplifying, every connected component of $G \setminus J$ is
contained in $\CE_{g-1}$. We need to branch into two cases once more.

\subsubsection*{Case 2.1:~$G\setminus J$ is connected}
The proof in this case is similar to Case 1, but simpler.  The key
fact is that in this case we have $\FR=\FD$, by
\cite[Lemma~15.4.22]{gro17}(1). As remarked in
Section~\ref{sec:sps-patch-necklace} (after Definition~\ref{def:patch}),
there are paths $Q,Q'\in \CQ$ such that $C\coloneqq Q\cup Q'$ is
a cycle and $\Fbd(\FD)=\FC$. It turns out that the cycle $C$ only
depends on $u$, $u'$ and the abstract graph $G$ (that is, we have
analogues of Lemma~\ref{lem:ns-abstract} and
Corollary~\ref{cor:Q-invariance}). Indeed, our first step in this case
will be to define the cycle $C$ in the logic $\LW[7]$.

Let $A_1,\ldots,A_m$ be the connected components
of the graph $A^* \coloneqq G\setminus J$. Each $A_i$ is embedded in $\FS\setminus\FD$, because each
component of $G\setminus V(\CQ)$ embedded in $\FD$ belongs to
$J$. Hence $N(A^*)\subseteq V(C)$. This implies that the graph
$G/A^*$ obtained from $G$ by identifying all vertices
in $A^*$ is planar. Furthermore, by \cite[Corollary~15.2.7]{gro17}, the graph
$G/A^*$ is 3-connected.

\begin{lemma}\label{lem:1-fibre-boundary}
  There are $\LW[7]$-formulae 
  $\formel{bd-vert}(x,x',y)$,
  $\formel{bd-edge}(x,x',y_1,y_2)$ such that
  \begin{align*}
    \formel{bd-vert}[G,u,u',y]&=V(C),\\
    \formel{bd-edge}[G,u,u',y_1,y_2]&=E(C).
  \end{align*}
\end{lemma}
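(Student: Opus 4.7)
The strategy closely mirrors that of Lemma~\ref{lem:disk-non-def}, with the complement $A^* \coloneqq G \setminus J$ playing the role of the unique non-planar component in the non-simplifying case. The key topological facts, used freely below, are that in Case~2.1 we have $\FR = \FD$ by \cite[Lemma~15.4.22]{gro17}(1), so $\Fbd(\FD) = \FC$; that $A^*$ is the unique connected component of $G \setminus V(J)$ and lies in $\FS \setminus \FD$, whence $N(A^*) \subseteq V(C)$; and that $G/A^*$ is $3$-connected and planar.

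I would begin by defining $\formel{astar}(x,x',y) \coloneqq \neg\formel{J-vert}(x,x',y)$ via Lemma~\ref{lem:regional-def}, which for the fixed $u,u'$ picks out $V(A^*)$. Following the template of Lemma~\ref{lem:disk-non-def}, I would then introduce two anchor vertices: $v_1 \in V(C) \cap N(A^*)$ of minimum $\CQ$-height, defined by a formula $\varphi_1(x,x',y_1)$ built from $\formel{csps-height}_h$, $\formel{astar}$ and adjacency (exactly as in the definition of $v_1$ in Lemma~\ref{lem:disk-non-def}); and a second anchor $v_2$, a neighbor of $a^*$ in $G/A^*$ sharing a facial cycle with the edge $a^* v_1$. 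Each anchor is unique up to at most one symmetric alternative on the opposite side of $C$ relative to $u,u'$.

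Because $G/A^*$ is $3$-connected planar and $(a^*,v_1,v_2)$ lie on a common facial cycle, \cite[Lemma~22]{kieponschwe17} yields, for every $w \in V(G/A^*)$, a formula $\psi_w(z^*,y_1,y_2,z) \in \LW[5]$ pinning down $\{w\}$ in $G/A^*$ after individualizing these three vertices. By Whitney's theorem the planar embedding of $G/A^*$ is combinatorially unique, so $V(C)$, viewed as a cycle in $G/A^*$, is determined intrinsically by $(G/A^*, a^*)$ (the unique cycle whose removal separates $a^*$ from the remaining vertices of $J$ and whose edges are the ``other'' boundary of the $a^*$-containing region). Forming $\bigvee_{w \in V(C)} \psi_w$ and lifting through Lemma~\ref{lem:factor} with $\xi \coloneqq \formel{astar}$ then yields a formula $\delta(x,x',y_1,y_2,z)$ with $\delta[G,u,u',v_1,v_2,z] = V(C)$.

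Finally, to eliminate the parameters $v_1$ and $v_2$, I would use orbit determinability in $3$-connected planar graphs (Proposition~\ref{prop:corrdetorbits} combined with Theorem~\ref{thm:planar-dimension}): the orbit of each $v_i$ under the appropriate stabilizer in $\Aut(G/A^*)$ has size at most $2$, so existentially quantifying over the orbit via a short orbit-defining formula, lifted through Lemma~\ref{lem:factor}, produces the desired parameter-free formula $\formel{bd-vert}(x,x',z)$. The formula $\formel{bd-edge}(x,x',y_1,y_2)$ is obtained by the same scheme applied to edges of $G/A^*$ lying on $C$. The main technical obstacle will be the tight width bookkeeping required to bring all of this into $\LW[7]$: each invocation of Lemma~\ref{lem:factor} must re-use variables carefully, exactly mirroring the accounting in the proof of Lemma~\ref{lem:disk-non-def}, so that the ingredient formulae ($\formel{astar}$, the orbit-defining formulae obtained from planar identification, and the lifted $\psi_w$) compose without pushing the width beyond seven.
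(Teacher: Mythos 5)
Your proposal is correct and follows exactly the same route as the paper: redefine $\formel{astar}(x,x',y)\coloneqq\neg\formel{J-vert}(x,x',y)$ and then rerun the proof of Lemma~\ref{lem:disk-non-def} verbatim, relying on $G/A^*$ being $3$-connected planar, the two anchor vertices on $C$, \cite[Lemma~22]{kieponschwe17}, orbit determinability, and Lemma~\ref{lem:factor}. The paper's own proof is exactly this observation stated in one line (together with the remark that the argument in Lemma~\ref{lem:disk-non-def} never uses connectivity of $A^*$), so your write-up is simply a faithful unfolding of that remark.
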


\begin{proof}
  The proof is completely analogous to the proof of
  Lemma~\ref{lem:disk-non-def}, just redefining the formula
  $\formel{astar}$:
  \[
    \formel{astar}(x,x',y)\coloneqq\neg \formel{J-vert}(x,x',y).
  \]
  Note that in the proof of Lemma~\ref{lem:disk-non-def} we never use that $A^*$ is connected.
\end{proof}

Now we fix one addtional vertex: let $u''$ be the neighbour of $u$ on
the path $Q$. Let $k\coloneqq|C|/2=|Q|-1=|Q'|-1$. 
Using $u''$, we can enumerate the vertices of the cycle $C$
in the cyclic order given by letting $u_0\coloneqq u$, $u_1\coloneqq u''$ , then moving
along $Q$ to $u_{k}\coloneqq u'$, and from there moving backwards
along $Q'$ to the neighbour $u_{2k-1}$ of $u$ on the path $Q'$.

\begin{lemma}
  For $0\le i\le 2k-1$ there is a $\LW[7]$-formula 
  $\formel{bd-vert}_i(x,x',x'',y)$ such that
  $\formel{bd-vert}_i[G,u,u',u'',y]=\{u_i\}$. 
\end{lemma}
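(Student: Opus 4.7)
The plan is to characterise $u_i$ by its two cyclic distances on $C$: one from $u_0=u$ and one from $u_1=u''$. Having $u''$ available as an additional parameter is exactly what will break the symmetry between the two arcs of $C$, which is necessary because an automorphism of $G$ fixing $u,u'$ could in principle swap those arcs.

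I will first build an auxiliary cycle-distance formula $\formel{bd-dist}_{\le m}(x,x',v,v')\in\LW[7]$ by induction on $m$, starting from $\formel{bd-dist}_{\le 0}(x,x',v,v')\coloneqq (v=v')$ and recursing via
\[
\formel{bd-dist}_{\le m}(x,x',v,v')\coloneqq \exists z\big(\formel{bd-edge}(x,x',v,z)\wedge\formel{bd-dist}_{\le m-1}(x,x',z,v')\big),
\]
where $\formel{bd-edge}\in\LW[7]$ is from Lemma~\ref{lem:1-fibre-boundary}. Inside each quantifier the free variables are $\{x,x',v,v',z\}$ (just five), while $\formel{bd-edge}$ contributes only internal width $7$, so the whole construction stays in $\LW[7]$. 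Setting $\formel{bd-dist}_{=m}\coloneqq\formel{bd-dist}_{\le m}\wedge\neg\formel{bd-dist}_{\le m-1}$ then yields exact cycle distance. It is essential here to use $\formel{bd-edge}$, and not adjacency in $G$ relativised to $\formel{bd-vert}$, because $C$ may have chords in $G$, so distances in $G[V(C)]$ could be strictly smaller than distances around $C$.

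Next I will use the fact that the pair $\big(\dist_C(u,u_i),\dist_C(u'',u_i)\big)$ determines $u_i$ uniquely. A direct reading of the cyclic enumeration $u_0,u_1,\ldots,u_{2k-1}$ gives these values as $(i,i-1)$ for $1\le i\le k$ and as $(2k-i,\,2k-i+1)$ for $k+1\le i\le 2k-1$. In either range the other vertex of $C$ at the same cyclic distance from $u$ (namely $u_{2k-i}$ or $u_i$ respectively) differs from $u''$ in cyclic distance by exactly two, so the two conditions together isolate $u_i$.

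With this in hand, I will define $\formel{bd-vert}_0(x,x',x'',y)\coloneqq (y=x)$ and, for $1\le i\le k$,
\[
\formel{bd-vert}_i(x,x',x'',y)\coloneqq \formel{bd-dist}_{=i}(x,x',x,y)\wedge \formel{bd-dist}_{=i-1}(x,x',x'',y),
\]
and, for $k+1\le i\le 2k-1$,
\[
\formel{bd-vert}_i(x,x',x'',y)\coloneqq \formel{bd-dist}_{=2k-i}(x,x',x,y)\wedge \formel{bd-dist}_{=2k-i+1}(x,x',x'',y).
\]
Each such formula has free variables among $\{x,x',x'',y\}$ and each conjunct lies in $\LW[7]$, so $\formel{bd-vert}_i\in\LW[7]$. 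The only delicate point is keeping $\formel{bd-dist}$ within width $7$: the parameters $x,x'$ must be carried through every recursive level together with the two endpoints and a single fresh variable $z$ that advances one cycle edge at a time, so that no level introduces more than five free variables outside of $\formel{bd-edge}$ itself.
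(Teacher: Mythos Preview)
Your proof is correct. The cycle-distance characterisation via the pair $\big(\dist_C(u,\cdot),\dist_C(u'',\cdot)\big)$ does uniquely pin down each $u_i$, and your width bookkeeping for $\formel{bd-dist}_{\le m}$ is accurate: the recursive step carries only the five free variables $x,x',v,v',z$ alongside the width-$7$ subformula $\formel{bd-edge}$, so everything stays in $\LW[7]$. Your remark about using $\formel{bd-edge}$ rather than $E$ relativised to $\formel{bd-vert}$ is well taken.

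The paper takes a more direct route: rather than computing distances, it simply walks around the cycle one step at a time. It sets $\formel{bd-vert}_0\coloneqq(y=x)$, $\formel{bd-vert}_1\coloneqq(y=x'')$, and for $i\ge 2$ defines $u_i$ as the $C$-neighbour of $u_{i-1}$ that is not $u_{i-2}$:
\[
\formel{bd-vert}_i(x,x',x'',y)\coloneqq\neg\formel{bd-vert}_{i-2}(x,x',x'',y)\wedge\exists y'\big(\formel{bd-vert}_{i-1}(x,x',x'',y')\wedge\formel{bd-edge}(x,x',y',y)\big).
\]
This avoids your case split at $i=k$ and needs no separate distance machinery; the symmetry-breaking role of $u''$ is baked in as the base case rather than appearing as a second distance constraint. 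Your approach, on the other hand, makes the reason for uniqueness (the distance pair) completely explicit, which is arguably clearer conceptually. Both arguments are equally valid and yield the same width bound.
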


\begin{proof}
  We let $\formel{bd-vert}_0(x,x',x'',y)\coloneqq {} (y=x)$,
  $\formel{bd-vert}_1(x,x',x'',y)\coloneqq (y=x'')$, and 
  \[
    \formel{bd-vert}_i\coloneqq {}\neg\formel{bd-vert}_{i-2}(x,x',x'',y)\wedge\exists
    y'\big(\formel{bd-vert}_{i-1}(x,x',x'',y)\wedge
    \formel{bd-edge}(x,x',y',y)\big)
  \]
  for $2\le i\le 2k-1$.
\end{proof}

We are ready to finish this subcase.

\begin{proof}[Proof of Lemma~\ref{lem:main}, Case 2.1]
  Let $\hat G$ be an arbitrary graph.
  We shall prove that if there is no $\LW[s+3]$-formula
  distinguishing $G$ and $\hat G$ then the two graphs are isomorphic.
  
  So assume that there is no $\LW[s+3]$-formula
  distinguishing $G$ and $\hat G$. Then there are vertices $\hat u,\hat u',\hat u''\in V(\hat G)$
  such that for all $\LW[s+3]$-formulae $\phi(x,x',x'')$ we
  have $G\models\phi(u,u',u'')\iff\hat G\models\phi(\hat u,\hat
  u',\hat u'')$. We fix such vertices $\hat u,\hat u',\hat u''$. We
  shall prove that there is an isomorphism from $G$ to $\hat G$
  mapping $u$ to $\hat u$, $u'$ to $\hat u'$, and $u''$ to $\hat u''$.

  Let $\hat J$ be the subgraph of $\hat G$ with
  vertex set $V(\hat J)\coloneqq \formel{J-vert}[\hat G,\hat u,\hat
  u',y]$ and  $E(\hat J)\coloneqq \formel{J-edge}[\hat G,\hat u,\hat
  u',y_1,y_2]$. Similarly, let $\hat C$ be the subgraph of $\hat G$ with
  vertex set $V(\hat C)\coloneqq \formel{bd-vert}[\hat G,\hat u,\hat
  u',y]$ and  $E(\hat C)\coloneqq \formel{bd-edge}[\hat G,\hat u,\hat
  u',y_1,y_2]$. Then $\hat C\subseteq\hat J$ and $\hat C$ is a
  cycle. For every $i\in \{0,\ldots,2k-1\}$, let $\hat
  u_i$ be the unique vertex such that $\hat
  G\models\formel{bd-vert}_i(\hat u,\hat u',\hat u'',\hat u_i)$. Then
  $V(\hat C)=\{u_0,\ldots,u_{2k-1}\}$, and the vertices $\hat u_i$
  appear on $\hat C$ in cyclic order starting from $\hat u_0=\hat u$ and $\hat u_1=\hat u''$. Moreover, $\hat u_k=\hat u'$.

  Now we individualise the vertices $u_i$ in $J$ and $\hat u_i$ in
  $\hat J$ using the same colour. More formally, for every $i$ we introduce
  a new colour relation $R_i$ and define $R_i(J)\coloneqq \{u_i\}$ and
  $R_i(\hat J)\coloneqq \{\hat u_i\}$. We observe that the obtained coloured
  versions of $J$ and $\hat J$ satisfy the same $\LW[4]$-sentences, because
  the vertices $u_i$ and $\hat u_i$ are defined in terms of $u,u',u''$ and
  $\hat u,\hat u',\hat u''$, respectively, and for all
  $\LW[7]$-formulae $\phi(x,x',x'')$ we have $G\models\phi(u,u',u'')\iff\hat G\models\phi(\hat u,\hat
  u',\hat u'')$. Since $J$ is planar and every planar graph is 
  identified by a $\LC[4]$-sentence, it follows that the coloured
  graphs are isomorphic. Hence there is an
  isomorphism $f \colon V(J)\to V(\hat J)$ such that $f(u_i)=\hat u_i$ for
  $0\le i\le 2k-1$.

  We shall extend $f$ to an isomorphism from $G$ to
  $\hat G$. Let $\hat A^*\coloneqq \hat G\setminus\hat J$. Note that
  $N^{\hat G}(\hat A^*)\subseteq V(\hat C)$, because $G$ and thus also
  $\hat G$ satisfies the
  $\LW[7]$-formula 
  \[
  \forall y\forall
  z\big(\neg\formel{J-vert}(x,x',y)\wedge
  \formel{J-vert}(x,x',z)\wedge
  E(y,z)\to\formel{bd-vert}(x,x',z)\big).
  \]
  We colour the graphs $A^*=G\setminus J$ and $\hat A^*$ using new
  colour relations $R_I$ for $I\subseteq \{0,\ldots,2k-1\}$. We let
  $R_I(G)$ be the set of all $v\in V(A^*)$ such that
  $N^G(v)=\{u_i\mid i\in I\}$, and similarly we let $R_I(G)$ be the
  set of all $\hat v\in V(\hat A^*)$ such that
  $N^{\hat G}(\hat v)=\{\hat u_i\mid i\in I\}$. Observe that there
  is a $\LW[7]$-formula $\chi_I(x,x',x'',y)$ such that
  $\chi_I[G,u,u',u'',y]=R_I(G)$ and
  $\chi_I[\hat G,\hat u,\hat u',\hat u'',y]=R_I(\hat G)$. Thus the coloured graphs $A^*$ and $\hat A^*$ satisfy the same
  $\LW[s]$-sentences, because for
  all $\LW[s+3]$-formulae $\phi(x,x',x'')$ we have
  $G\models\phi(u,u',u'')\iff\hat G\models\phi(\hat u,\hat u',\hat
  u'')$,

  As $\CQ$ is simplifying, all connected components of $A^*$ are in
  $\CE_{g-1}$. Hence by Assumption~\ref{ass:induction} and Corollary~\ref{cor:id-comp}, there is a $\LW[s]$-sentence $\iso{A^*}$ that
  identifies $A^*$. As $A^*$ and $\hat A^*$ satisfy the same
  $\LW[s]$-sentences, there is an isomorphism $g$ from $A^*$ to $\hat
  A^*$. The colour relations $R_I$ guarantee that $f\cup g$ is an
  isomorphism from $G$ to $\hat G$.
\end{proof}
 
%%%%%%%%%%%%%%%%%%%%%%%%%%%%%%%%%%%%%%%%%%%%%%%%%%%%%%%%%%%%
\subsubsection*{Case~2.2: $G\setminus J$ is disconnected}
In this case, we need to analyse the structure of the graph $J$ in
more detail. Let $\mathring H_1,\ldots,\mathring H_\ell$ be
the connected components of $J\setminus\{u,u'\}$. By the assumption of this case, we have $\ell\ge
2$. For every
$i\in[{\ell}]$, let $H_i\coloneqq J[V(\mathring
H_i)\cup\{u,u'\}]$, and let $\CQ_i$ be the set of all paths
$Q\in\CQ$ such that $Q\subseteq H_i$. Then $\CQ_i$ is a shortest path
system from $u$ to $u'$. We call the $\CQ_i$ the \emph{fibres} of
$\CQ$. Note that $\FH_i\subseteq \FD$ for all $i$. Let
$\FR_i\coloneqq \FR(\CQ_i)$.  Then $\FR=\bigcup_{i=1}^{\ell}\FR_i$ and
$\FR_i\cap\FR_j=\{u,u'\}$ for $i\neq j$. Let
$\Ff_1,\ldots,\Ff_{{\ell}'}$ be the arcwise connected
components
of
$\FD\setminus\FR$. By \cite[Lemma~15.4.22]{gro17}(2), we have
${{\ell}'}={\ell}-1$ and there is a permutation $\pi\in S_{\ell}$ such
that $\Fbd(\Ff_i)\subseteq \FR_{\pi(i)}\cup\FR_{\pi(i+1)}$. Without
loss of generality, we assume that $\pi$ is the identity, that is,
$\Fbd(\Ff_i)\subseteq \FR_{i}\cup\FR_{i+1}$. It is not hard to see
(and shown in the proof of \cite[Lemma~15.4.22]{gro17}) that there are
paths $Q_i'\in\CQ_i$ and $Q_{i+1}\in\CQ_{i+1}$ such that
$C_i\coloneqq Q_i'\cup Q_{i+1}$ is a cycle and
$\Fbd(\Ff_i)=\FC_i$. Let $\Ff_\ell\coloneqq\FS\setminus\FD$. Then there are
paths $Q_1\in\CQ_1$ and $Q'_{\ell}\in\CQ_{\ell}$ such that
$C_{\ell} \coloneqq Q'_{\ell} \cup Q_1$ is a cycle and
$\FC_{\ell}=\Fbd(\FD)=\Fbd(\Ff_{\ell})$. For every $i$ we have
$\Fbd(\FR_i)=\FQ_i\cup\FQ'_i$. But note that $Q_i\cup Q_i'$ is not
necessarily a cycle.

We use the notation introduced in this section so far (that is,
$\FD$, $\FR$, $\Ff_i$, $J$, $H_i$, $\mathring H_i$, $\CQ_i$) throughout
the remainder of this subsection. Moreover, we always use indices 
modulo $\ell$. For example, $H_{\ell+1}$ refers to $H_1$.

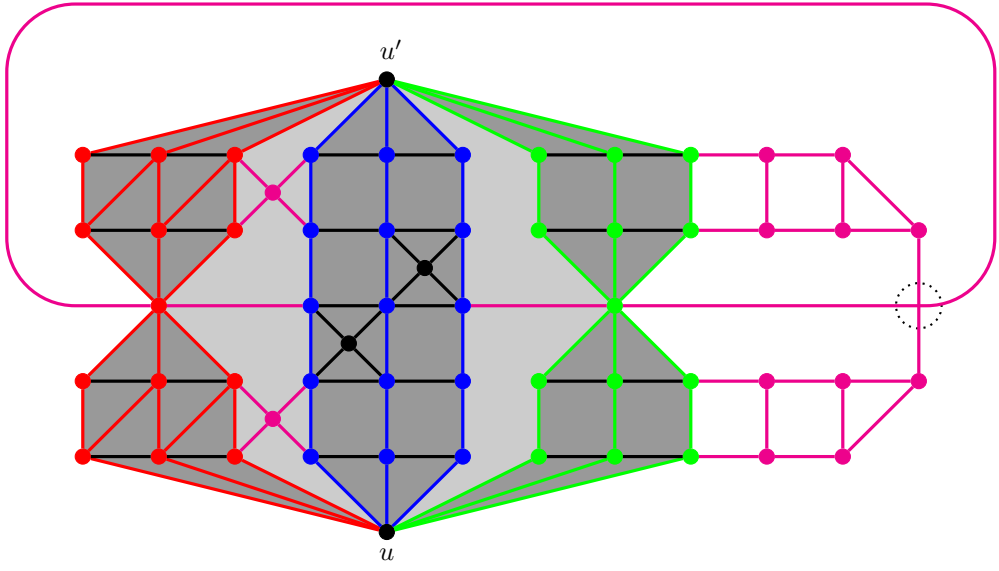
\begin{figure}
  \centering
\begin{tikzpicture}[
  vertex/.style={circle,draw,fill,minimum size=2mm,inner sep=0mm}
  ]
  %\useasboundingbox (-4,0) rectangle (7,6);

  \fill[black!20] (0,0) -- (-4,1) -- (-4,2) -- (-3,3) -- (-4,4) --
  (-4,5) -- (0,6) -- (4,5) -- (4,4) -- (3,3)  -- (4,2) -- (4,1) -- cycle;
 
  \fill[black!40] (0,0) -- (-4,1) -- (-4,2) -- (-3,3) -- (-4,4) --
  (-4,5) -- (0,6) -- (-2,5) -- (-2,4) -- (-3,3)  -- (-2,2) -- (-2,1) -- cycle;

\fill[black!40] (0,0) -- (-1,1) -- (-1,5) -- (0,6) -- (1,5) --  (1,1) -- cycle;
 
\fill[black!40] (0,0) -- (2,1) -- (2,2) -- (3,3) -- (2,4) --
  (2,5) -- (0,6) -- (4,5) -- (4,4) -- (3,3)  -- (4,2) -- (4,1) -- cycle;

  \draw[dotted,thick] (7,3) circle (3mm);

  \node[vertex] (u) at (0,0) {};
  \node[vertex,red] (v1) at (-4,1) {};
  \node[vertex,red] (v2) at (-3,1) {};
  \node[vertex,red] (v3) at (-2,1) {};
  \node[vertex,red] (v4) at (-4,2) {};
  \node[vertex,red] (v5) at (-3,2) {};
  \node[vertex,red] (v6) at (-2,2) {};
  %\node[vertex,red] (v7) at (-4,3) {}; 
  \node[vertex,red] (v8) at (-3,3) {}; 
  %\node[vertex,red] (v9) at (-2,3) {}; 
  \node[vertex,red] (v10) at (-4,4) {}; 
  \node[vertex,red] (v11) at (-3,4) {}; 
  \node[vertex,red] (v12) at (-2,4) {}; 
  \node[vertex,red] (v13) at (-4,5) {}; 
  \node[vertex,red] (v14) at (-3,5) {}; 
  \node[vertex,red] (v15) at (-2,5) {}; 
  \node[vertex,blue] (w1) at (-1,1) {};
  \node[vertex,blue] (w2) at (0,1) {};
  \node[vertex,blue] (w3) at (1,1) {};
  \node[vertex,blue] (w4) at (-1,2) {};
  \node[vertex,blue] (w5) at (0,2) {};
  \node[vertex,blue] (w6) at (1,2) {};
  \node[vertex,blue] (w7) at (-1,3) {}; 
  \node[vertex,blue] (w8) at (0,3) {}; 
  \node[vertex,blue] (w9) at (1,3) {}; 
  \node[vertex,blue] (w10) at (-1,4) {}; 
  \node[vertex,blue] (w11) at (0,4) {}; 
  \node[vertex,blue] (w12) at (1,4) {}; 
  \node[vertex,blue] (w13) at (-1,5) {}; 
  \node[vertex,blue] (w14) at (0,5) {}; 
  \node[vertex,blue] (w15) at (1,5) {}; 
  \node[vertex,green] (x1) at (2,1) {};
  \node[vertex,green] (x2) at (3,1) {};
  \node[vertex,green] (x3) at (4,1) {};
  \node[vertex,green] (x4) at (2,2) {};
  \node[vertex,green] (x5) at (3,2) {};
  \node[vertex,green] (x6) at (4,2) {};
  %\node[vertex,green] (x7) at (-1,3) {}; 
  \node[vertex,green] (x8) at (3,3) {}; 
  %\node[vertex,green] (x9) at (1,3) {}; 
  \node[vertex,green] (x10) at (2,4) {}; 
  \node[vertex,green] (x11) at (3,4) {}; 
  \node[vertex,green] (x12) at (4,4) {}; 
  \node[vertex,green] (x13) at (2,5) {}; 
  \node[vertex,green] (x14) at (3,5) {}; 
  \node[vertex,green] (x15) at (4,5) {}; 
  \node[vertex] (uu) at (0,6) {};

  \node[vertex,magenta] (y1) at (5,1) {};
  \node[vertex,magenta] (y2) at (6,1) {};
  \node[vertex,magenta] (y3) at (5,2) {};
  \node[vertex,magenta] (y4) at (6,2) {};
  \node[vertex,magenta] (y5) at (7,2) {};
  \node[vertex,magenta] (y6) at (5,4) {};
  \node[vertex,magenta] (y7) at (6,4) {};
  \node[vertex,magenta] (y8) at (5,5) {};
  \node[vertex,magenta] (y9) at (6,5) {};
  \node[vertex,magenta] (y10) at (7,4) {};

  \node[vertex,magenta] (z1) at (-1.5,1.5) {}; 
  \node[vertex,magenta] (z2) at (-1.5,4.5) {}; 
 
  \node[vertex] (z4) at (0.5,3.5) {}; 
  \node[vertex] (z3) at (-0.5,2.5) {};

  \draw[red,very thick] (u) edge (v1) edge (v2) edge (v3)
  (v1) edge (v4) edge (v5) (v2) edge (v5) edge (v6) (v3) edge (v6)
  (v4) edge (v8) (v5) edge (v8) (v6) edge (v8)
  (v8) edge (v10) edge (v11) edge (v12)
  (v10) edge (v13) edge (v14) (v11) edge (v14) edge (v15) (v12) edge
  (v15)
  (uu) edge (v13) edge (v14) edge (v15)
  ;
  \draw[very thick] (v1) edge (v2) (v2) edge (v3)
  (v4) edge (v5) (v5) edge (v6)
  (v10) edge (v11) (v11) edge (v12)
  (v13) edge (v14) (v14) edge (v15);

    \draw[blue,very thick] (u) edge (w1) edge (w2) edge (w3)
  (w1) edge (w4) (w2) edge (w5) (w3) edge (w6)
  (w4) edge (w7) (w5) edge (w8) (w6) edge (w9)
  (w7) edge (w10) (w8) edge (w11) (w9) edge (w12)
  (w10) edge (w13) (w11) edge (w14) (w12) edge
  (w15)
  (uu) edge (w13) edge (w14) edge (w15)
  ;
  \draw[very thick] (w1) edge (w2) (w2) edge (w3)
  (w4) edge (w5) (w5) edge (w6)
  (w7) edge (w8) (w8) edge (w9)
  (w10) edge (w11) (w11) edge (w12)
  (w13) edge (w14) (w14) edge (w15);

      \draw[green,very thick] (u) edge (x1) edge (x2) edge (x3)
  (x1) edge (x4) (x2) edge (x5) (x3) edge (x6)
  (x4) edge (x8) (x5) edge (x8) (x6) edge (x8)
  (x8) edge (x10) edge (x11) edge (x12)
  (x10) edge (x13) (x11) edge (x14) (x12) edge
  (x15)
  (uu) edge (x13) edge (x14) edge (x15)
  ;
  \draw[very thick] (x1) edge (x2) (x2) edge (x3)
  (x4) edge (x5) (x5) edge (x6)
  (x10) edge (x11) (x11) edge (x12)
  (x13) edge (x14) (x14) edge (x15);

  \draw[magenta,very thick] (x3) edge (y1) (x6) edge (y3) (y1) edge (y2)
  edge (y3) (y2) edge (y4) edge (y5) (y3) edge (y4) (y4) edge (y5)
  (x12) edge (y6) (x15) edge (y8) (y6) edge (y7) edge (y8) (y7) edge
  (y9) edge (y10) (y8) edge (y9) (y9) edge (y10) (y5) edge (y10)
  ;

  \draw[very thick,magenta] (v8) edge (w7) (w9) edge (x8);
  \draw[very thick,magenta,rounded corners=9mm] (x8) -- (8,3) -- (8,7) -- (-5,7)
  -- (-5,3) -- (v8)
  ;
\draw[very thick] (z3) edge (w4) edge (w5) edge (w7) edge (w8);
\draw[very thick] (z4) edge (w8) edge (w9) edge (w11) edge (w12);

  \draw[very thick,magenta] (z1) edge (v3) edge (v6) edge (w1) edge (w4)
  (z2) edge (v12) edge (v15) edge (w10) edge (w13);

  \path (0,-0.3) node {$u$} (0.06,6.4) node {$u'$};

\end{tikzpicture}
  \caption{A simplifying patch in a graph of orientable genus $1$}
  \label{fig:simp-patch}
\end{figure}

\begin{example}\label{exa:ns-patch1}
  Consider the graph shown in Figure~\ref{fig:simp-patch}. The graph
  can be embedded into a torus in such a way that the red, blue, and green paths
    form a simplifying patch $\CQ \coloneqq \CQ^G(u,u')$. The disk $\FD(\CQ)$ is shown in
    grey; the region $\FR(\CQ)$ within $\FD(\CQ)$ in a darker
    grey. The patch has three fibres $\CQ_1,\CQ_2,\CQ_3$ shown in red,
    blue, green, respectively. The boundary cyle of $\FD(\CQ)$ consists
    of the leftmost red path and the rightmost green path from $u$ to
    $u'$. The two areas in light grey are $\Ff_1$ (between red and blue)
    and $\Ff_2$ (between blue and green). The regional graph $J$
    consists of the red, blue, and green paths and all black edges and
    vertices. 

    Observe that the graph has a second, different embedding into the torus in
    which $\CQ$ is still a patch, but the boundary of its disk
    consists of a green and a blue path (and therefore our numbering
    of the fibres would be different; the red fibre would be in the
    middle).
    \uend
\end{example}

 Recall the definition of a bridge from Section \ref{subsec:graphs}. Let $B_1,\ldots,B_m$ be a list of all $J$-bridges in $G$. If $|B_j|\ge 3$, let $A_j$ be the connected component
 of $G\setminus \FR$ associated with $B_j$. If $B_j$ is just a single
 edge, let $A_j$ be the empty graph. In this case, we call $B_j$ \emph{trivial}. Note that for each $j\in[m]$
 there is an $i\coloneqq i(j)\in[\ell]$ such that $B_j$ is embedded in
 $\Ff_{i}$, or more precisely, in $\Fcl(\Ff_{i})$. We say that$B_j$ is \emph{attached to} $H_i$ if it has a vertex of
 attachment in $V(H_i)$. We say that $B_j$ \emph{connects} $H_i$ and
 $H_{i'}$ if it is attached to both $H_i$ and $H_{i'}$. If $B_j$ is
 attached to $H_i$ then it is embedded in $\Ff_i$ or in $\Ff_{i-1}$
 (indices modulo $m$). Thus, if $B_j$ connects $H_{i}$ and $H_{i'}$,
 then either $i'=i+1$ and $B_j$ is embedded in $\Ff_i$, or $i'=i-1$ and
 $B_j$ is embedded in $\Ff_{i-1}$.

\begin{example}\label{exa:ns-patch2}
  Consider again the graph shown in Figure~\ref{fig:simp-patch} with
  the simplifying patch $\CQ$ detailed in Example~\ref{exa:ns-patch1}.
  The graph $J$ (consisting of all red, green, blue, and black
  vertices and edges) has six bridges, all shown in pink. Three of
  these bridges are trivial. Note
  that in this example, all six bridges are planar; the non-planarity of the entire graph is a result of
  combining the bridges.\uend
\end{example}

Observe that there is at most one $i\in[\ell]$ such that there is no
$J$-bridge connecting $H_i$ and $H_{i+1}$. To see this, towards a
contradiction suppose that there are $i$ and $i'$ with $i<i'$ such that there is no
bridge connecting $H_i$ and $H_{i+1}$ and no
bridge connecting $H_{i'}$ and $H_{i'+1}$. Then $\{u,u'\}$
separates $\mathring H_{i}$ from  $\mathring H_{i+1}$, which is
impossible since $G$ is 3-connected. If there is no
bridge connecting $H_i$ and $H_{i+1}$, then we call $\CQ_i$ and
$\CQ_{i+1}$ \emph{dangling fibres}.

We say that two fibres $\CQ_i$ and $\CQ_{i'}$ are \emph{adjacent} if
$|i-i'|=1$ or $\{i,i'\}=\{1,\ell\}$. Note that $\CQ_i,\CQ_{i'}$ are
adjacent if $i\neq i'$ and 
either there is a $J$-bridge that connects $H_i$ and
 $H_{i'}$ or both $\CQ_i$ and $\CQ_{i'}$ are dangling fibres.
This means that we can detect the cyclic adjacency structure on the fibres
$\CQ_i$ just by looking at the bridges connecting them. It follows
that the cyclic order of the fibres only depends on the abstract
graph $G$ and not on its embedding.

\begin{lemma}\label{lem:sameh-def}
  There are $\LW[\max\{7,s+2\}]$-formulae 
  $\formel{same-H}(x,x',y_1,y_2)$, $\formel{bconn-H}(x,x',y_1,y_2)$, $\formel{adj-H}(x,x',y_1,y_2)$
  such that for all $w_1,w_2\in V(G)$ we have:
  \begin{align*}
    G\models\formel{same-H}(u,u',w_1,w_2)&\iff\text{there is an $i\in[\ell]$
      such that $w_1,w_2\in V(H_i)$},\\
    G\models\formel{bconn-H}(u,u',w_1,w_2)&\iff\parbox[t]{8cm}{there
                                            are distinct $i,i'\in[\ell]$ and
                                            $j\in[m]$ such that
                                            $w_1\in V(\mathring H_i)$
                                            and $w_2\in V(\mathring
                                            H_{i'})$ and $B_j$
                                            connects $\CQ_i$ and $\CQ_{i'}$,}\\
     G\models\formel{adj-H}(u,u',w_1,w_2)&\iff\parbox[t]{8cm}{there is an $i\in[\ell]$
      such that $w_1\in V(\mathring{H}_i)$ and $w_2\in V(\mathring{H}_{i-1})\cup V(\mathring{H}_{i+1})$.}
  \end{align*}
\end{lemma}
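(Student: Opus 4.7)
The plan is to build all three formulae from two primitives: the $\LW[\max\{7,s+2\}]$-definability of the regional subgraph $J$ provided by Lemma~\ref{lem:regional-def}, and a $\LW[\max\{7,s+2\}]$-formula for connectivity in $J\setminus\{u,u'\}$. Lemma~\ref{lem:avoiding-path} is not directly usable here, because a trivial single-edge $J$-bridge between adjacent fibres would add a $G$-edge between $V(\mathring H_i)$ and $V(\mathring H_{i+1})$ that is not a $J$-edge; hence $G$-connectivity inside $V(J)\setminus\{u,u'\}$ could wrongly merge distinct $\mathring H_i$. Instead I will inductively define formulae $\formel{reach}^J_{\le k}(x,x',y_1,y_2)$ for $k<n$ by imitating $\formel{dist}_{\le k}$ of Example~\ref{exa:dist}, replacing every occurrence of $E$ by $\formel{J-edge}(x,x',\cdot,\cdot)$ and requiring each quantified intermediate vertex to differ from $x$ and $x'$. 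Their disjunction is a formula $\formel{comp-H}$ capturing the equivalence relation whose classes are the sets $V(\mathring H_i)$.

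Since $V(H_i)=V(\mathring H_i)\cup\{u,u'\}$ for every $i$, I can then set $\formel{same-H}(x,x',y_1,y_2)$ as the conjunction of $\formel{J-vert}(x,x',y_1)$, $\formel{J-vert}(x,x',y_2)$ with the disjunction $(y_1=x)\vee(y_1=x')\vee(y_2=x)\vee(y_2=x')\vee\formel{comp-H}(x,x',y_1,y_2)$.

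For $\formel{bconn-H}$, I apply Lemma~\ref{lem:avoiding-path} with $\phi\coloneqq\formel{J-vert}$ to obtain a formula $\formel{out-comp}(x,x',a,b)\in\LW[\max\{7,s+2\}]$ characterising connectivity in $G\setminus V(J)$. A $J$-bridge that connects $H_i$ and $H_{i'}$ in the strict sense, namely by attachments outside $\{u,u'\}$ in each of $V(H_i),V(H_{i'})$, is either a trivial bridge (an edge of $G\setminus E(J)$) or is witnessed by distinct attachment vertices $z_1,z_2$ together with out-of-$J$ neighbours $z_1',z_2'$ that are $\formel{out-comp}$-connected. Since $G$ is $3$-connected, no $J$-bridge can have its attachment set contained in $\{u,u'\}$ (which would induce a separator of size at most $2$), so the strict interpretation matches the intended one, and this is exactly the interpretation implicitly used in the paper's subsequent argument that $\{u,u'\}$ separates $\mathring H_i$ from $\mathring H_{i+1}$ whenever no bridge connects these fibres. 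Accordingly, $\formel{bconn-H}(x,x',y_1,y_2)$ asserts $y_1,y_2 \in V(J)\setminus\{x,x'\}$, $\lnot\formel{same-H}(x,x',y_1,y_2)$, and the existence of such witnesses $z_1,z_2\in V(J)\setminus\{x,x'\}$ with $\formel{comp-H}(x,x',y_i,z_i)$ satisfying the trivial-or-non-trivial bridge condition on $(z_1,z_2)$.

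Finally, the $3$-connectivity argument of the paper shows that the graph on $\{\CQ_1,\ldots,\CQ_\ell\}$ whose edges are the bridge connections is either a Hamiltonian cycle or a Hamiltonian path whose endpoints are exactly the two dangling fibres. Thus a fibre vertex $y$ lies in a dangling fibre iff its fibre is bridge-connected to exactly one other fibre, expressed by
\[
  \formel{endpt}(x,x',y)\coloneqq\exists z\,\big(\formel{bconn-H}(x,x',y,z)\wedge\forall z'(\formel{bconn-H}(x,x',y,z')\rightarrow\formel{same-H}(x,x',z,z'))\big).
\]
I then define $\formel{adj-H}(x,x',y_1,y_2)$ as the disjunction of $\formel{bconn-H}(x,x',y_1,y_2)$ and the dangling-pair condition, namely $y_1,y_2\in V(J)\setminus\{x,x'\}$ together with $\lnot\formel{same-H}(x,x',y_1,y_2)$, $\formel{endpt}(x,x',y_1)$, and $\formel{endpt}(x,x',y_2)$. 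The main obstacle throughout will be width bookkeeping: every invoked subformula has width at most $\max\{7,s+2\}$, and each construction adds at most two further simultaneously-free vertex variables, so an application of Lemma~\ref{lem:width} keeps the overall widths within the promised bound.
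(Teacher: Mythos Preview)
Your proposal is correct and follows essentially the same route as the paper's proof: build $\formel{same-H}$ from $J$-connectivity in $J\setminus\{u,u'\}$ using $\formel{J-vert}$ and $\formel{J-edge}$, build $\formel{bconn-H}$ by finding witnesses $z_1,z_2$ in the respective $\mathring H_i$'s linked by a path whose interior lies in $G\setminus J$, and build $\formel{adj-H}$ as the disjunction of $\formel{bconn-H}$ with the ``dangling pair'' condition.

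A few minor remarks. First, your explicit observation that Lemma~\ref{lem:avoiding-path} is not directly usable for $\formel{same-H}$ (because one needs $J$-edges rather than $G$-edges) is a nice clarification; the paper just says ``using $\formel{J-vert}$ and $\formel{J-edge}$ as building blocks, it is easy to define such a formula''. Second, your case split into trivial versus non-trivial bridges in $\formel{bconn-H}$ is unnecessary: the paper's single condition ``there is a path from $z_1$ to $z_2$ with all internal vertices in $G\setminus J$'' already covers the trivial case (a single edge has no internal vertices). Third, your $\formel{endpt}$ demands \emph{exactly one} bridge-connected neighbouring fibre, whereas the paper's $\formel{dangling}$ only demands \emph{at most one}; these coincide here because $3$-connectivity guarantees every fibre has at least one bridge-connected neighbour, but you should note this. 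Finally, the reference to Lemma~\ref{lem:width} at the end is superfluous: your width bookkeeping already places the formulae directly in $\LW[\max\{7,s+2\}]$, and Lemma~\ref{lem:width} does not reduce width.
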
 

\begin{proof}
  By definition of $H_i$, there is an $i\in[\ell]$ such that
  $w_1,w_2\in V(H_i)$ if and only if $w_1,w_2\in V(J)$ and either
  $\{w_1,w_2\}\cap \{u,u'\}\neq\emptyset$ or $w_1$ and $w_2$ belong to
  the same $\mathring{H}_i$.  Thus, we can let
  \begin{align*}
    \formel{same-H}(x,x',y_1,y_2) \coloneqq {}&\formel{J-vert}(x,x',y_1)
  \wedge \formel{J-vert}(x,x',y_2) \wedge {}\\
    & \big(y_1 = x \vee y_1 = x'
  \vee y_2 = x \vee y_2 = x' \vee \phi(x,x',y_1,y_2)\big),
  \end{align*}
  where $\phi(x,x',y_1,y_2)$ is a $\LW[\max\{7,s+2\}]$-formula
  stating that $y_1,y_2$ belong to the same connected component of
  $J\setminus\{u,u'\}$. Using $\formel{J-vert}(x,x',y)$ and
  $\formel{J-edge}(x,x',y_1,y_2)$ as building blocks, it is easy to
  define such a formula.  

  There is a $J$-bridge that connnects fibres $\CQ_i$ and $\CQ_{i'}$
  if and only if there is a path $P\subseteq G\setminus\{u,u'\}$ from
  a vertex $w_i\in V(\mathring H_i)$ to a vertex
  $w_{i'}\in V(\mathring H_{i'})$ with all internal vertices in
  $G\setminus J$.  Let $\psi(x,x',z_1,z_2)$ be a
  $\LW[\max\{7,s+2\}]$-formula such that $G\models\psi(u,u',w_1,w_2)$
  if and only $w_1,w_2\in V(J)\setminus\{x,x'\}$ and there is a path
  from $w_1$ to $w_2$ with all internal vertices in
  $V(G \setminus J)$. We can easily construct such a formula using
  $\formel{J-vert}(x,x',y)$ and $\formel{J-edge}(x,x',y_1,y_2)$ as
  building blocks. Now we let
  \begin{align*}
    \formel{bconn-H}(x,x',y_1,y_2)\coloneqq {} &
    \formel{J-vert}(x,x',y_1) \wedge \formel{J-vert}(x,x',y_2)
                                              \wedge\neg\formel{same-H}(x,x',y_1,y_2) \wedge {}\\
    &\exists z_1\exists
    z_2\left(\bigwedge_{i=1}^2\formel{same-H}(x,x',y_i,z_i)\wedge
      \psi(x,x',z_1,z_2)\right).
  \end{align*}

  Recall that two fibres are adjacent if and only if either there is a
  $J$-bridge that connects them or both fibres are dangling.
  To define dangling fibres, we use the following formula:
  \begin{align*}
    \formel{dangling}(x,x',y)\coloneqq {} &\forall y'\forall
    y''\big(\formel{bconn-H}(x,x',y,y')\wedge
    \formel{bconn-H}(x,x',y,y'')
    \\&\to\formel{same-H}(x,x',y',y'')\big).
  \end{align*}
    Then $G\models \formel{dangling}(u,u',v)$ if and only if
  $v$ belongs to a dangling fibre.

 We let
\begin{align*}
    \formel{adj-H}(x,x',y_1,y_2)\coloneqq {} &
    \formel{bconn-H}(x,x',y_1,y_2)\vee {}
    \\&\left(\neg\formel{same-H}(x,x',y_1,y_2)\wedge\bigwedge_{i=1}^2\formel{dangling}(x,x',y_i)\right).\qedhere
\end{align*}
\end{proof}

%%%%%%%%%%%%%%%%%%%%%%%%%%%%%%%%%%%%%%%%%%%%%%%%%%%%%%%%%%%%

\begin{lemma}\label{lem:fibre-k}
  There is a vertex $u''\in V(J)$ and for every $i\in[\ell]$ a $\LW[\max\{7, s+2\}]$-formula $\formel{H-vert}_i(x,x',x'',y)$ such
  that 
   \[
     \formel{H-vert}_i[G,u,u',u'',y] = V(H_i).
    \]
\end{lemma}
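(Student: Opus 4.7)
\emph{Plan.} The plan is to use $u''$ to pin down the fibre $H_1$ and then walk around the cyclic adjacency structure of the fibres given by $\formel{adj-H}$ to inductively identify $H_2,\ldots,H_\ell$.

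First, I would choose $u'' \in V(\mathring H_1)$; such a vertex exists because $\CQ$ is a non-trivial simplifying patch with $\ell \ge 2$ fibres, so by Assumption~\ref{ass:non-trivial}, the interior $V(\mathring H_1)$ is non-empty. I would then define
\[
  \formel{H-vert}_1(x,x',x'',y) \coloneqq \formel{J-vert}(x,x',y) \wedge \big(\formel{same-H}(x,x',x'',y) \vee y = x \vee y = x'\big).
\]
By Lemmas~\ref{lem:regional-def} and~\ref{lem:sameh-def}, this lies in $\LW[\max\{7,s+2\}]$.

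For $\ell=2$, the second fibre is just the complement within $J$:
\[
  \formel{H-vert}_2(x,x',x'',y) \coloneqq \formel{J-vert}(x,x',y) \wedge \big(\neg\formel{same-H}(x,x',x'',y) \vee y = x \vee y = x'\big).
\]
For $\ell \ge 3$, the fibre $H_1$ has exactly two neighbours in the cyclic adjacency graph of the fibres, and I need to designate one of them as $H_2$ in order to orient the labelling. Since we are free to renumber cyclically, I would strengthen the choice of $u''$ in this case: pick $u''$ as a vertex in $V(\mathring H_1)$ whose bridge-attachments (or lack thereof) single out one of the two neighbouring fibres. Concretely, if $H_1$ is bridge-connected to at least one neighbour, I would pick $u''$ to be an attachment vertex in $V(\mathring H_1)$ of some $J$-bridge $B$ and declare $H_2$ to be the fibre containing the other attachment of $B$; if $H_1$ lies in the dangling pair, I would use $\formel{dangling}$ (as in the proof of Lemma~\ref{lem:sameh-def}) together with $\formel{bconn-H}$ to single out the dangling partner as $H_2$. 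In either subcase, $\formel{H-vert}_2(x,x',x'',y)$ can be written as a conjunction of $\formel{J-vert}$, $\formel{same-H}$, $\formel{adj-H}$, $\formel{bconn-H}$, together with one existential quantifier witnessing a bridge-neighbour of $u''$, all within the width budget $\max\{7,s+2\}$.

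Finally, for $i \ge 3$, I would define $\formel{H-vert}_i$ inductively as the unique fibre adjacent to $H_{i-1}$ and distinct from $H_{i-2}$:
\begin{align*}
  \formel{H-vert}_i(x,x',x'',y) \coloneqq{}&\formel{J-vert}(x,x',y) \wedge \neg \formel{H-vert}_{i-1}(x,x',x'',y) \wedge \neg \formel{H-vert}_{i-2}(x,x',x'',y)\\
  &\wedge \exists z \big(\formel{H-vert}_{i-1}(x,x',x'',z) \wedge \formel{adj-H}(x,x',z,y)\big),
\end{align*}
supplemented with clauses reinstating $u,u'$ as members of every fibre. Since the subformulae $\formel{H-vert}_{i-1}$, $\formel{H-vert}_{i-2}$, $\formel{J-vert}$ and $\formel{adj-H}$ all lie in $\LW[\max\{7,s+2\}]$ and the induction step introduces at most one extra bound variable (which can be reused), an easy induction on $i$ shows $\formel{H-vert}_i \in \LW[\max\{7,s+2\}]$, and correctness follows from the cyclic structure of the fibre adjacency graph together with the at-most-one-dangling-pair property.

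The hard part is the orientation step for $\ell \ge 3$: breaking the reflection symmetry of the cyclic labelling using only the three parameters $x,x',x''$. Once $H_2$ is fixed, the induction is routine, and all remaining work is simply combining the already-established formulae of Lemmas~\ref{lem:regional-def} and~\ref{lem:sameh-def} with a bounded amount of logical glue.
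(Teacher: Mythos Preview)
Your overall architecture---pin down one fibre with $u''$, orient to a neighbour, then march around the cycle---is exactly the paper's, and your $\ell=2$ case and inductive step are fine. The genuine gap is in the orientation step for $\ell\ge 3$.

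You propose to pick $u''$ as an attachment vertex of some $J$-bridge $B$ and then let $H_2$ be ``the fibre containing the other attachment of $B$''. But the formula only has access to $u''$, not to $B$; all it can say is ``$y$ lies in a fibre that is bridge-connected to the fibre of $u''$ via a bridge attached at $u''$''. If $u''$ happens to be an attachment vertex of both a $(1,2)$-bridge and an $(\ell,1)$-bridge, this picks out \emph{both} neighbouring fibres and you have not oriented anything. Your argument silently assumes you can always find a vertex that is attached to a bridge going to exactly one side, but this need not exist: it may be that in every fibre, every vertex of attachment is \emph{doubly-attached} (attached to bridges toward both neighbours simultaneously). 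The paper's proof isolates precisely this obstruction and then spends most of its length disposing of it through a cascade of further case distinctions: first it looks for \emph{unattached} vertices (which can only occur in the two outer fibres, giving an asymmetry); failing that, every $\CQ_i$ collapses to a single path $Q_i$, and the paper distinguishes $H_1$ from $H_\ell$ by testing whether the subgraph $L_\ell$ formed by $Q_\ell\cup Q_1$ together with the $(\ell,1)$-bridges admits a planar embedding with $Q_\ell\cup Q_1$ as a facial cycle; failing even that, it exploits the existence of an $i$-bridge (a bridge attached to a single fibre), which again can only occur for $i\in\{1,\ell\}$. Each of these asymmetries is genuinely needed to break the reflection symmetry when the simpler one is unavailable, and each is definable within the width budget; your proposal does not supply any of them.

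So the missing idea is not the cyclic walk but the exhaustive case analysis showing that \emph{some} definable asymmetry between the two sides of the chosen fibre always exists.
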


Before we prove the lemma, let us remark that $H_i$ is an induced
subgraph of $G$. Therefore there is no need for a formula
$\formel{H-edge}$ defining $E(H_i)$.

\begin{proof}[Proof of Lemma~\ref{lem:fibre-k}]
  It will be easier to define
  formulae $\formel{H-vert}^\circ_i(x,x',x'',y)$ such that
  $\formel{H-vert}^\circ_i[G,u,u',u'',y] = V(\mathring H_i)$. Then we
  let 
  \[
    \formel{H-vert}_i(x,x',x'',y)\coloneqq {}
    \formel{H-vert}^\circ_i(x,x',x'',y)'\vee y=x\vee y=x'.
  \] 
  We let 
  \[
    \formel{same-H}^\circ(x,x',y_1,y_2)\coloneqq {}
    \formel{same-H}(x,x',y_1,y_2)\wedge \bigwedge_{i=1}^2 (y_i \neq x \wedge y_i\neq x').
  \]
  
  If $\ell=2$, we choose an arbitrary $u''\in V(\mathring H_1)$, and
  we let 
  \begin{align*}
    \formel{H-vert}^\circ_1(x,x',x'',y) &\coloneqq {}
                                    \formel{same-H}^\circ(x,x',x'',y),\\
    \formel{H-vert}^\circ_2(x,x',x'',y) &\coloneqq {}
    \formel{J-vert}(x,x',y)\wedge\neg \formel{same-H}(x,x',x'',y).
  \end{align*}
  In the following, we assume $\ell\ge 3$. If there are dangling fibres, we proceed as follows. Suppose $\CQ_{i-1}$
  and $\CQ_{i}$ are dangling. We choose an arbitrary $u''\in
  V(\mathring H_{i})$. 
  We let 
  \begin{align*}
    \formel{H-vert}^\circ_{i}(x,x',x'',y) &\coloneqq {}
                                    \formel{same-H}^\circ(x,x',x'',y),\\
    \formel{H-vert}^\circ_{i+1}(x,x',x'',y) &\coloneqq {}
    \formel{J-vert}(x,x',y)\wedge\formel{bconn-H}(x,x',x'',y),\\
    \intertext{and for $2\le j\le\ell-1$}
    \formel{H-vert}^\circ_{i+j}(x,x',x'',y) \coloneqq {}
                                          &\formel{J-vert}(x,x',y)
                                                \wedge\neg\formel{H-vert}^\circ_{i+j-2}(x,x',x'',y) \wedge {}\\
                                          &\exists y'\big(\formel{H-vert}^\circ_{i+j-1}(x,x',x'',y')\wedge\formel{bconn-H}(x,x',y',y)\big).
  \end{align*}
  In the following, we assume that there are no dangling fibres. For
  $i\in[\ell]$, denote by \emph{$(i,i+1)$-bridge} a $J$-bridge that connects
  $\CQ_i$ and $\CQ_{i+1}$. Since there is no dangling fibre, for all
  $i\in[\ell]$ there is at least one $(i,i+1)$-bridge. 

  Suppose that for some $i$ there is a vertex $v\in V(\mathring H_i)$
  that is a vertex of attachment of an $(i,i+1)$-bridge, but not of an
  $(i-1,i)$-bridge. Then we let $u'' \coloneqq v$. 
  As before,
  $\formel{H-vert}^\circ_{i}(x,x',x'',y) \coloneqq
    \formel{same-H}^\circ(x,x',x'',y)$. To define
    $\formel{H-vert}^\circ_{i+1}(x,x',x'',y)$, we let
    $\psi(x,x',z_1,z_2)$ be a $\LW[\max\{7,s+2\}]$-formula such that $G\models\psi(u,u',w_1,w_2)$
  if and only $w_1,w_2\in V(J)\setminus\{x,x'\}$ and there is a path
  from $w_1$ to $w_2$ with all internal vertices in
  $V(G \setminus J)$. Then we let
    \[
      \formel{H-vert}^\circ_{i+1}(x,x',x'',y)\coloneqq {}\neg\formel{H-vert}^\circ_i(x,x',x'',y) \wedge \exists
      y'\big(\psi(x,x',x'',y') \wedge \formel{same-H}^\circ(x,x',y,y')\big).
    \]
    For $2\le j\le \ell-1$, we define
    $\formel{H-vert}^\circ_{i+j}(x,x',x'',y) $ as above:
    \begin{align*}
\formel{H-vert}^\circ_{i+j}(x,x',x'',y) \coloneqq {}
                                          & \formel{J-vert}(x,x',y)
                                                \wedge\neg\formel{H-vert}^\circ_{i+j-2}(x,x',x'',y) \wedge {}\\
                                          &\exists y'\big(\formel{H-vert}^\circ_{i+j-1}(x,x',x'',y')\wedge\formel{bconn-H}(x,x',y',y)\big).
  \end{align*}
  In the following, we assume that for every $i\in[\ell]$ and every
  $v\in V(\mathring H_i)$, either $v$ is a vertex of attachment of
  both an $(i-1,i)$-bridge and an $(i,i+1)$-bridge (we call $v$
  \emph{doubly-attached}) or it is neither a vertex of attachment of
  an $(i-1,i)$-bridge nor of an
  $(i,i+1)$-bridge (we call $v$ \emph{unattached}). Observe that if
  $v$ is doubly-attached then it is an articulation vertex of the sps
  $\CQ_i$.

  \begin{claim}
    Let $i\in[2,\ell-1]$. Then no vertex $v\in V(\mathring H_i)$ is
    unattached.
    
    \proof
    Suppose towards a contradiction that $v\in V(\mathring H_i)$ is
    unattached. Suppose first that $v\in V(Q)$ for some path $Q\in
    \CQ_i$ from $u$ to $u'$. As we have no dangling fibres, there is at least one
    doubly-attached vertex in $V(\mathring H_i)$. Since all doubly-attached vertices are articulation vertices, every doubly-attached vertex in $V(\mathring H_i)$ appears on the
    path $Q$. Let $w$ be the last doubly-attached vertex on $Q$ before
    $v$, or if no such vertex exists, let $w\coloneqq u$. Let $w'$ be the first doubly-attached vertex on $Q$ after
    $v$, or if no such vertex exists, let $w'\coloneqq u'$. Then
    by our assumption we have $w\neq u$ or $w'\neq u'$. Say, $w'\neq u'$. Now $\{w,w'\}$ separates $v$ from $u'$. This is an easy
    consequence of the fact that the graph that is the union of
    $H_{i-1},H_i,H_{i+1}$, all $(i-1,i)$-bridges, all
    $(i,i+1)$-bridges, and all $J$-bridges that have all their vertices of
    attachment in $H_i$ is embedded in the disk $\FD$. However, this
    contradicts $G$ being 3-connected.

    It remains to consider the case that $v\in V(J)\setminus
    V(\CQ_i)$. Then $v\in V\big(I(\CQ')\big)$ for some non-trivial non-simplifying
    subpatch $\CQ' = \CQ_i[v_1,v_2]$ of $\CQ_i$. Note that $\CQ'$
    contains no articulation vertices of $\CQ_i$ except for (possibly)
    $v_1$ and $v_2$; otherwise it would not be a patch. Thus every
    vertex $v'\in V(\CQ')\setminus\{v_1,v_2\}$ is also unattached. We
    pick such a $v'$. It is contained in a path $Q\in\CQ_i$. Therefore, we can apply the
    same argument as above to $v'$ instead of $v$ and again obtain a
    contradiction.
    \uend
  \end{claim}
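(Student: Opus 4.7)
The plan is to contradict $3$-connectedness of $G$ by exhibiting a $2$-separator, proceeding by two cases depending on whether $v$ lies on a shortest path of $\CQ_i$ or only in the internal graph of some non-trivial non-simplifying subpatch of $\CQ_i$.

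For the first case, assume $v\in V(Q)$ for some $Q\in\CQ_i$. The observation recorded just before the claim is that every doubly-attached vertex of $\mathring H_i$ is an articulation vertex of $\CQ_i$ and hence lies on every path of $\CQ_i$, so the doubly-attached vertices of $\mathring H_i$ are linearly ordered along $Q$ between $u$ and $u'$. Let $w$ be the last doubly-attached vertex of $Q$ preceding $v$ along $Q$, or $w\coloneqq u$ if there is none, and let $w'$ be the first doubly-attached vertex of $Q$ following $v$, or $w'\coloneqq u'$ if there is none. The assumption that no fibre is dangling guarantees the existence of at least one doubly-attached vertex in $\mathring H_i$, so at least one of $w\neq u$ or $w'\neq u'$ holds. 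I would then show that $\{w,w'\}$ separates $v$ from $u'$ (respectively $u$) in $G$, which contradicts $3$-connectedness.

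The main obstacle is justifying the separator property, and this is the step where the hypothesis $i\in[2,\ell-1]$ is crucial. Consider the subgraph $K$ consisting of $H_{i-1}\cup H_i\cup H_{i+1}$ together with all $(i-1,i)$-bridges, all $(i,i+1)$-bridges and all $J$-bridges whose vertices of attachment all lie in $V(H_i)$. For interior $i$, the embedding places $K$ inside the closed disk $\Fcl(\FR_{i-1}\cup\Ff_{i-1}\cup\FR_i\cup\Ff_i\cup\FR_{i+1})\subseteq\FD$, with the cycles bounding $\FR_{i-1}$ and $\FR_{i+1}$ lying on the outer boundary. In this planar piece, the articulation vertices $w$ and $w'$ on $Q$ are $2$-cuts of the relevant sub-disk separating the $v$-side from $u'$, so any path in $G$ from $v$ to the far side of $\{w,w'\}$ must either stay in $K$ (and hence be blocked) or exit $K$ through a bridge attached to $H_i$ at $v$ — but no such bridge exists because $v$ is unattached. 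For $i\in\{1,\ell\}$ this breaks down: the outer face $\Ff_\ell$ provides a detour around $\{w,w'\}$ through $H_\ell$ or $H_1$ outside $\FD$.

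Case 2, where $v\in V(J)\setminus V(\CQ_i)$, I would reduce to Case 1. Here $v\in V(I(\CQ'))$ for some non-trivial non-simplifying subpatch $\CQ'=\CQ_i[v_1,v_2]$ of $\CQ_i$. Since $\CQ'$ is a patch it has no proper articulation vertices, so no vertex of $V(\CQ')\setminus\{v_1,v_2\}$ is an articulation vertex of $\CQ_i$; by the running dichotomy every such vertex is therefore unattached. Pick any $v'\in V(\CQ')\setminus\{v_1,v_2\}$ lying on a path of $\CQ_i$ and apply the first case to $v'$ to reach the same contradiction.
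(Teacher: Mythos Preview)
Your proposal is correct and follows essentially the same approach as the paper: the same two-case split, the same choice of $w$ and $w'$ along $Q$, the same auxiliary subgraph (the union of $H_{i-1},H_i,H_{i+1}$ with the relevant bridges) embedded in $\FD$, and the same reduction of Case~2 to Case~1 via an unattached vertex of $V(\CQ')\setminus\{v_1,v_2\}$.

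One small imprecision worth tightening: in arguing that $\{w,w'\}$ is a separator, you write that a path leaving $K$ would have to do so ``through a bridge attached to $H_i$ at $v$''. The exit need not be at $v$ itself; rather, any exit from the $v$-component of $K\setminus\{w,w'\}$ to the rest of $G$ would require a bridge attached to that component. The point is that no $(i-1,i)$- or $(i,i+1)$-bridge is attached to the segment of $\CQ_i$ strictly between $w$ and $w'$ (all such vertices are unattached), and the planarity of $K$ in $\FD$ then forces the component containing $v$ to be confined. The paper leaves this step at the same level of terseness (``This is an easy consequence of the fact that [$K$] is embedded in the disk $\FD$''), so your sketch is comparable in detail.
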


  Let $\formel{unattached}(x,x',y)$ be a
  $\LW[\max\{7,s+2\}]$-formula such that $G\models \formel{unattached}(u,u',v)$ if
  and only if $v\in V(J)\setminus\{u,u'\}$ and $v$ is unattached. It
  is straightforward to construct such a formula. Suppose next that there is an unattached vertex $v$. Then $v\in
  V(\mathring H_1)$ or $v\in V(\mathring H_\ell)$. Say, $v\in
  V(\mathring H_\ell)$. Let $u''$ be an arbitrary vertex in
  $V(\mathring H_1)$. 
  We let 
  \begin{align*}
    \formel{H-vert}^\circ_{1}(x,x',x'',y) \coloneqq {}
                                            &\formel{same-H}^\circ(x,x',x'',y),\\
    \formel{H-vert}^\circ_{\ell}(x,x',x'',y) \coloneqq {}
                                               &\formel{J-vert}(x,x',y)\wedge\neg
                                               \formel{H-vert}^\circ_{1}(x,x',x'',y) \wedge {}\\
    &\exists
                                               y'\big(\formel{same-H}^\circ(x,x',y,y')\wedge\formel{unattached}(x,x',y')\big)\\
    \intertext{and for $2\le i\le\ell-1$}
    \formel{H-vert}^\circ_{i}(x,x',x'',y) \coloneqq {}
                                          &\formel{J-vert}(x,x',y)
                                                \wedge\neg\formel{H-vert}^\circ_{i-2}(x,x',x'',y) \wedge  {}\\
                                          &\exists y'\big(\formel{H-vert}^\circ_{i-1}(x,x',x'',y')\wedge\formel{bconn-H}(x,x',y',y)\big).
  \end{align*}
  In the following, we assume that there is no unattached vertex. This
  implies that every $\CQ_i$ consists of just one path $Q_i$. For
  every $i$, let $L_i$ be the graph that is the union of the paths $Q_i$ and
  $Q_{i+1}$ and all $(i,i+1)$-bridges. Observe that $L_1,\ldots
  L_{\ell-1}$ are planar, because they are embedded in the disk
  $\FD$. In fact, all these $L_i$ have a planar embedding where $Q_i\cup
  Q_{i+1}$ is a facial cycle. Note that if $L_\ell$ also has such a
  planar embedding, then the graph $\bigcup_{i=1}^\ell L_\ell$ is planar.

  Suppose $L_\ell$ does not have a planar embedding where $Q_\ell\cup
  Q_1$ is a facial cycle. Using the fact that planarity is
  expressible in $\LC[4]$, we can construct a $\LW[\max\{7,s+2\}]$-formula
  $\formel{planar-L}(x,x',y,y')$ such that $G\models
  \formel{planar-L}(u,u',v,v')$ if and only if for some $i\in[\ell]$ the following
  conditions are satisfied:
  \begin{itemize}
  \item either $v\in V(\mathring H_i)$ and $v'\in V(\mathring H_{i+1})$, or
    $v'\in V(\mathring H_i)$ and $v\in V(\mathring H_{i+1})$;
  \item $L_i$ has a planar embedding where $Q_i\cup Q_{i+1}$ is a facial cycle.
  \end{itemize}
  We choose $u''\in V(\mathring H_1)$, and we let
  \begin{align*}
    \formel{\llap{H-v}ert}^\circ_{1}(x,x',x'',y) \coloneqq {}
                                            &\formel{same-H}^\circ(x,x',x'',y),\\
    \formel{\llap{H-v}ert}^\circ_{\ell}(x,x',x'',y) \coloneqq {}
                                               &\formel{J-vert}(x,x',y) \wedge {}\\
    &\exists
                                               y'\big(\formel{bconn}(x,x',y,y')\wedge\neg\formel{planar-L}(x,x',y,y')\wedge  \formel{same-H}^\circ(x,x',x'',y')\big)\\
    \intertext{and for $2\le i\le\ell-1$, as before,}
    \formel{\llap{H-v}ert}^\circ_{i}(x,x',x'',y) \coloneqq {}
                                          &\formel{J-vert}(x,x',y)
                                                \wedge\neg\formel{H-vert}^\circ_{i-2}(x,x',x'',y) \wedge {} \\
                                          &\exists y'\big(\formel{H-vert}^\circ_{i-1}(x,x',x'',y')\wedge\formel{bconn-H}(x,x',y',y)\big).
  \end{align*}
 In the following, we assume that  $L_\ell$ has a planar embedding where $Q_\ell\cup
  Q_1$ is a facial cycle. Then the graph $L\coloneqq
  \bigcup_{i=1}^\ell L_\ell$ is planar.  
  However, $G$
  is not planar. The graphs $G$ and $L$ differ only in the $J$-bridges that are attached to just a single fibre. Let us call a
  $J$-bridge whose vertices of attachment are in the fibre $\CQ_i$ (and
  thus on the path $Q_i$) an \emph{$i$-bridge}. Due to the 3-connectivity of $G$, every $i$-bridge
  has at least 3 vertices of attachment in $V(Q_i)$. Furthermore, since all vertices of
  $Q_i$ are doubly-attached (i.\,e., they are vertices of attachment of both an $(i-1,i)$-bridge and an $(i,i+1)$-bridge), there is no way to attach an $i$-bridge for some $i \in \{2, \dots, \ell-1\}$ without destroying the embedding in the disk $\FD$. This is easy to see considering the fact that an $i$-bridge embedded in, say, $\Ff_i$ has two vertices $v$, $v'$ of attachment of distance at least two in $\CQ_i$ and it thus ``blocks'' the vertex between $v$ and $v'$ in $V(\CQ_i)$ from being attached to a vertex in $\CQ_{i+1}$ (cf.\ Corollary 9.1.2, \cite{gro17}).
  Hence there can only be $i$-bridges for
  $i=1$ and $i=\ell$, and there must be at least one such bridge,
  because otherwise $G=L$ would be planar. Say, there is an
  $\ell$-bridge. We can easily construct a $\LW[\max\{7,s+2\}]$-formula
  $\formel{self-bridge}(x,x',y)$ such that
  $G\models\formel{self-bridge}(u,u',v)$ if and only if $v\in V(\mathring
  H_i)$ for some $i$ such that there is an $i$-bridge. We choose
  $u''\in V(\mathring H_1)$ and let
  \begin{align*}
    \formel{H-vert}^\circ_{1}(x,x',x'',y) \coloneqq {}
                                            &\formel{same-H}^\circ(x,x',x'',y),\\
    \formel{H-vert}^\circ_{\ell}(x,x',x'',y) \coloneqq {}&\formel{self-bridge}(x,x',y)\wedge\neg\formel{same-H}(x,x',x'',y)\\
    \intertext{and for $2\le i\le\ell-1$, as before,}
    \formel{H-vert}^\circ_{i}(x,x',x'',y) \coloneqq {}
                                          &\formel{J-vert}(x,x',y)
                                                \wedge\neg\formel{H-vert}^\circ_{i-2}(x,x',x'',y) \wedge {}\\
                                          &\exists y'\big(\formel{H-vert}^\circ_{i-1}(x,x',x'',y')\wedge\formel{bconn-H}(x,x',y',y)\big).
  \end{align*}
  This completes the proof.
\end{proof}

In the following, we fix a vertex $u''$ that is chosen according to Lemma \ref{lem:fibre-k}.

\begin{assumption}\label{ass:u1}
$u''\in V(J)$ is a fixed vertex such that
$\formel{H-vert}_i[G,u,u',u'',y] = V(H_i)$ for every $i\in[\ell]$.
\end{assumption}

A $J$-bridge $B$ is an \emph{inner bridge} if it has at least one
vertex of attachment in $\bigcup_{i=2}^{\ell-1}V(\mathring H_i)$. Note
that all inner bridges are embedded in the disk $\FD$. Let $K$ be the union of $J$ with all inner bridges. Then $K$ is a planar
graph embedded in $\FD$. Using Lemma \ref{lem:avoiding-path} for $\varphi \coloneqq \formel{J-vert}(x,x',y)$, we can construct $\LW[\max\{7,s+2\}]$-formulae that define membership in $K$.

\begin{corollary}\label{cor:K}
    There are $\LW[\max\{7,s+2\}]$-formulae $\formel{K-vert}(x,x',x'',y)$, 
    $\formel{K-edge}(x,x',x'',y_1,y_2)$ such that 
  \begin{align*}
    \formel{K-vert}[G,u,u',u'',y]&=V(K),\\
    \formel{K-edge}[G,u,u',u'',y_1,y_2]&=E(K).
  \end{align*}
\end{corollary}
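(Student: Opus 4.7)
The plan is to construct both formulae by translating the combinatorial definition of ``inner bridge'' into a logical characterisation, using as building blocks the formulae $\formel{J-vert}, \formel{J-edge}$ from Lemma~\ref{lem:regional-def}, the formulae $\formel{H-vert}_i$ from Lemma~\ref{lem:fibre-k}, and the connectivity formula supplied by Lemma~\ref{lem:avoiding-path}. First I would introduce an auxiliary $\LW[\max\{7,s+2\}]$-formula $\formel{inner-H}(x,x',x'',y)$ defining $\bigcup_{i=2}^{\ell-1} V(\mathring H_i)$, which is simply $\bigvee_{i=2}^{\ell-1}\formel{H-vert}_i(x,x',x'',y)\wedge y\neq x\wedge y\neq x'$. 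Next I would apply Lemma~\ref{lem:avoiding-path} to $\phi(x,x',y)\coloneqq\formel{J-vert}(x,x',y)$ (so $k=2$ and the relevant width parameter is $\max\{7,s+2\}$) to obtain a formula $\formel{J-comp}(x,x',y,y')\in\LW[\max\{7,s+2\}]$ expressing that $y$ and $y'$ lie in the same connected component of $G\setminus V(J)$.

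For $\formel{K-vert}$, observe that a vertex $v$ lies in $V(K)$ iff either $v\in V(J)$, or $v\notin V(J)$ and the connected component of $v$ in $G\setminus V(J)$ forms the non-trivial part of an inner bridge, i.e., contains some vertex adjacent to a vertex in $\bigcup_{i=2}^{\ell-1}V(\mathring H_i)$. This translates directly into
\[
\formel{K-vert}(x,x',x'',y)\coloneqq\formel{J-vert}(x,x',y)\vee\exists y'\exists z\big(\formel{J-comp}(x,x',y,y')\wedge E(y',z)\wedge\formel{inner-H}(x,x',x'',z)\big).
\]

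For $\formel{K-edge}$, a case analysis on the type of bridge containing the edge is needed. Given $vw\in E(G)$ with both endpoints in $V(K)$, the edge belongs to $E(K)$ iff (i)~$vw\in E(J)$, or (ii)~at least one of $v,w$ lies outside $V(J)$ (so $vw$ is an edge of a non-trivial inner bridge, which is automatically inner because its non-trivial part already belongs to $V(K)$), or (iii)~both $v,w\in V(J)$ and at least one of them lies in $\bigcup_{i=2}^{\ell-1}V(\mathring H_i)$ (so $\{vw\}$ is a trivial inner bridge). I would then let $\formel{K-edge}(x,x',x'',y_1,y_2)$ be the conjunction of $E(y_1,y_2)\wedge\formel{K-vert}(x,x',x'',y_1)\wedge\formel{K-vert}(x,x',x'',y_2)$ with the disjunction of the three conditions (i)--(iii), each of which is immediately expressible using the formulae already at hand.

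The main thing to check is the width bound. The only subformulae with many free variables are those of the building blocks, which stay within $\LW[\max\{7,s+2\}]$ by construction; everything built on top of them introduces at most six simultaneous free variables (for $\formel{K-vert}$, the set $\{x,x',x'',y,y',z\}$), and since Assumption~\ref{ass:induction} gives $s\ge 4$, we have $\max\{7,s+2\}\ge 7\ge 6$. The principal (minor) obstacle is merely the case distinction for $E(K)$: one must be careful to account separately for trivial and non-trivial $J$-bridges, since only the non-trivial bridges are captured by the ``same component in $G\setminus V(J)$'' argument used for $\formel{K-vert}$.
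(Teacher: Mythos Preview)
Your proposal is correct and follows exactly the approach the paper intends: the paper's justification for Corollary~\ref{cor:K} is the single sentence ``Using Lemma~\ref{lem:avoiding-path} for $\varphi \coloneqq \formel{J-vert}(x,x',y)$, we can construct $\LW[\max\{7,s+2\}]$-formulae that define membership in $K$,'' and your construction spells this out in detail, using $\formel{comp}_{\formel{J-vert}}$ together with the fibre formulae $\formel{H-vert}_i$ to pick out the inner bridges. Your separate handling of trivial versus non-trivial bridges in $\formel{K-edge}$ is exactly the care the one-line proof suppresses.
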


Finally, we are ready to complete the proof of Lemma \ref{lem:main}.

\begin{proof}[Proof of Lemma \ref{lem:main}, Case 2.2] 
  Let us briefly recall our main assumptions for this case: 
  \begin{itemize}
  \item $G$ is a graph of order $n=|G|$ polyhedrally embedded in a
    surface $\FS$ of Euler genus $g\ge 1$.
  \item $\CQ=\CQ^G(u,u')$ is a non-trivial simplifying patch in $G$
    with $\ell\ge 2$ fibres.
  \item $u''$ is a vertex that allows us to identify the fibres of
    $\CQ$ via the formulae of Lemma~\ref{lem:fibre-k}.
  \end{itemize}
  We continue to use the notation introduced in this section, such as
  $\FD$, $J$, $\CQ_i$, $H_i$ and $\mathring H_i$, et cetera.

  Moreover,
  we define 
  \[
    h\coloneqq\dist(u,u').
  \]
  
  Let $\hat G$ be an arbitrary graph.
  We shall prove that if there is no $\LW[s+3]$-formula
  distinguishing $G$ and $\hat G$, then the two graphs are isomorphic.
  
  So assume that there is no $\LW[s+3]$-formula
  distinguishing $G$ and $\hat G$. Then $|\hat G|=n$ and $\hat
  G\not\in\CE_{g-1}$. Furthermore, there are vertices
  $\hat u,\hat u',\hat u''\in V(\hat G)$ such that for all
  $\LW[s+3]$-formulae $\phi(x,x',x'')$ we have
  $G\models\phi(u,u',u'')\iff\hat G\models\phi(\hat u,\hat u',\hat
  u'')$. We fix such vertices $\hat u,\hat u',\hat u''$. We shall
  prove that there is an isomorphism from $G$ to $\hat G$ mapping $u$
  to $\hat u$, $u'$ to $\hat u'$, and $u''$ to $\hat u''$.

  Let $\hat\CQ\coloneqq\CQ^{\hat G}(\hat u,\hat u')$.  We say $\CQ$
  and $\hat \CQ$ are \emph{isomorphic}, and write
  $\CQ \cong \hat \CQ$, if there is an isomorphism from $G(\CQ)$
  to $\hat G (\hat \CQ)$ mapping $u$ to $\hat u$
  and $u'$ to $\hat u'$. Thus, in the following we always regard $u,u',u''$ and the corresponding
  $\hat u,\hat u',\hat u''$ as distinguished vertices that isomorphisms need to respect.

  Just as in the proof of Claim \ref{ns:claim1} of Case 1, we have a formula $\formel{sps-iso}(x,x') \in \LW[6]$ (not
    depending on $\hat G $) such that $\hat G  \models \formel{sps-iso}(\hat u,\hat u')$ if and only
    if $\hat \CQ$ is a pseudo-patch with
    $\hat \CQ \cong \CQ$.

  Hence $\hat\CQ$ is a non-trivial pseudo-patch in $\hat G$. Let $\hat J$ be the graph
  with vertex set
  $V(\hat J)\coloneqq \formel{J-vert}[\hat G ,\hat u,\hat u',y]$
  and edge set
  $E(\hat J)\coloneqq \formel{J-edge}[\hat G ,\hat u,\hat
  u',y_1,y_2]$. Note that $\hat u''\in V(\hat J)$, because
  $u''\in V(J)=\formel{J-vert}[G,u, u',y]$. Therefore, we call $J$ and $\widehat{J}$
  \emph{isomorphic}, and write $J \cong \widehat{J}$, if
  $J_{u,u',u''} \cong \widehat{J}_{{\hat u},{\hat u'},\hat u''}$, that
  is, there is an isomorphism from $J$ to $\hat J$ that maps $u$ to
  $\hat u$, $u'$ to $\hat u'$, and $u''$ to $\hat u''$. Note that
  every such isomorphism induces an isomorphism from $G(\CQ)$ to
  $G(\hat \CQ)$.

  Let $T\subseteq V( J)$ be the set of vertices of attachment of all
  $J$-bridges in $G$, and, similarly, let $\hat T$ be the set of
  vertices of attachment of all $\hat J$-bridges in $\hat
  G$.

  \begin{claim}\label{s:claim2}
    There is a formula $\formel{J-iso}(x,x',x'') \in \LW[\max\{7,s+2\}]$ such that
    $\hat G  \models \formel{J-iso}({\hat u},{\hat u'},{\hat u''})$ if and
    only if $J\cong\widehat{J}$ via an isomorphism that maps $T$ to
    $\hat T$.

    \proof Let $J^*$ be the graph resulting from $J_{u,u',u''}$ by
    assigning all vertices in $T$ a common
    distinct colour (however maintaining the individual colours for $u$, $u'$, $u''$). Since $J$ is planar, the claim follows by relativising a sentence
    $\formel{J-iso}' \in \LW[4]$ which identifies $J^*$ to the
    subgraph whose vertex and edge set is defined by the formula
        $\formel{J-vert}$ and
        $\formel{J-edge}$, respectively, and by
    replacing the colour relations for $u$, $u'$, and $u''$ with
    equations of the form $z = x$, $z = x'$, $z=x''$ and the colour
    relation for $T$ with
    $\exists z' \big(E(z,z') \wedge \neg \formel{J-edge}(x,x',z,z')\big)$. By
    Lemma \ref{lem:regional-def}, the formula has width
    $\max\{7,s+2\}$.  \uend
  \end{claim}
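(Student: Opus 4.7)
The plan is to leverage the planarity of $J$ (it is embedded in the disk $\FD$) together with the identification theorem for planar graphs from Theorem~\ref{thm:planar-dimension}, and then relativise the resulting $\LC[4]$-sentence to the $\LW$-definable subgraph cut out by $\formel{J-vert}$ and $\formel{J-edge}$.

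First, I would construct the coloured graph $J^*$ by starting from $J$, assigning $u$, $u'$, $u''$ three pairwise distinct individualising colours, and marking all vertices of $T$ with an additional common colour. Theorem~\ref{thm:planar-dimension} then supplies a sentence $\formel{J-iso}' \in \LW[4]$ identifying $J^*$. The task then reduces to translating $\formel{J-iso}'$ into a formula $\formel{J-iso}(x,x',x'')$ in the language of $G$ that, when evaluated on $(\hat G, \hat u, \hat u', \hat u'')$, simulates the evaluation of $\formel{J-iso}'$ on the analogous coloured subgraph of $\hat G$.

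The translation follows the standard relativisation pattern: each subformula $\exists^{\ge p} z\, \psi$ is rewritten as $\exists^{\ge p} z\,(\formel{J-vert}(x,x',z) \wedge \psi)$, each edge atom $E(z_1,z_2)$ is replaced by $\formel{J-edge}(x,x',z_1,z_2)$, and the individualising colour predicates at $u$, $u'$, $u''$ become the equations $z=x$, $z=x'$, $z=x''$. The only nontrivial colour to express is the $T$-colour: observing that a vertex $v\in V(J)$ lies in $T$ precisely when it has some $G$-neighbour which is not a $J$-neighbour, I would substitute, at the variable $z$, the formula $\exists z'\,\big(E(z,z') \wedge \neg\formel{J-edge}(x,x',z,z')\big)$ for the $T$-colour atom.

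The main obstacle, and the only genuinely delicate point, will be the width accounting. Since $\formel{J-iso}'$ has width $4$ and the building blocks $\formel{J-vert}$ and $\formel{J-edge}$ lie in $\LW[\max\{7,s+2\}]$ by Lemma~\ref{lem:regional-def}, and since the two ``global'' parameter variables $x,x'$ are shared across all substitutions, the total width never exceeds $\max\{7,s+2\}$. Variables must be recycled carefully when nesting the relativising clauses inside quantifiers, but this is routine because $\formel{J-vert}$ and $\formel{J-edge}$ have only one, respectively two, active free variables beyond the global $x$, $x'$. Correctness of the equivalence then follows directly from the correctness of the relativisation together with the identification property of $\formel{J-iso}'$.
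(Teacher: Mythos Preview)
Your proposal is correct and follows essentially the same approach as the paper: form the coloured planar graph $J^*$, invoke Theorem~\ref{thm:planar-dimension} to get an identifying $\LW[4]$-sentence, and relativise via $\formel{J-vert}$ and $\formel{J-edge}$, replacing the individualising colours by equations and the $T$-colour by the formula $\exists z'\big(E(z,z')\wedge\neg\formel{J-edge}(x,x',z,z')\big)$. The only minor imprecision is in your width accounting: you speak of ``two global parameters $x,x'$'', but $x''$ is also threaded through as a free variable (via the substitution $z=x''$), so the honest count is $4+3=7$ rather than $4+2$; since the target width is $\max\{7,s+2\}$ anyway, this does not affect the conclusion.
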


  In the following, we assume without loss of generality that
  $\widehat{J} \cong J$ and we only consider isomorphisms which
  preserve the property of being a vertex of attachment.

  We intend to equip certain supergraphs of $J$ and $\hat J$ with colour relations such that the
  coloured graphs are isomorphic if and only if $G$ and $\hat G $
  are isomorphic via an isomorphism mapping $u$ to ${\hat u}$, $u'$ to
  ${\hat u'}$, and $u''$ to $\hat u''$. Then we show that the coloured
  supergraph of $J$ can be identified in $\LW[s+3]$.

  First recall that for every $J$-bridge its vertices of attachment
  lie on a shortest path from $u$ to $u'$. Thus, by
  Claim~\ref{s:claim2}, we can assume the same for $\hat G $,
  since we can express the sps-containment of a vertex of
  attachment. Hence, each element in $V(J)$ and $V(\widehat{J})$ which
  is a vertex of attachment of a bridge has a well-defined height in
  $\CQ$ and $\hat \CQ$, respectively.

  For $i\in[\ell]$, we let $\widehat{H}_i$ be the induced
  subgraph of $\hat G$ with vertex set
  $\formel{H-vert}_i[\hat G ,{\hat u},{\hat u'},{\hat
    u''},y]$.
  Then $\hat J=\bigcup_{i=1}^\ell\hat H_i$, because there is a $\LW[\max\{7,s+2\}]$-formula which expresses that
  $J=\bigcup_{i=1}^\ell H_i$.

  We now colour vertices in $V(G \setminus J)$ by their ``attachment pattern'' in $J$. For every $v\in V(G \setminus J)$, we let 
  \begin{align}\label{s:conn-iso}
    S(v) \coloneqq {}\llcurly(i,j) \mid w \in N(v)\cap V(H_i), \text{ $w$ has height $j$ in $\CQ$}\rrcurly.
  \end{align}
  That is, for each $w \in N(v) \cap V(H_i)$ of height $j$, the set
  $S(v)$ contains one separate copy of $(i,j)$. Similarly, for $\hat v\in V(\hat G \setminus \hat J)$ we let
  \begin{align*}\label{s:conn-iso-hat}
    \hat S(\hat v) \coloneqq {}\llcurly(i,j) \mid \hat w\in N(\hat
    v)\cap V(\hat H_i), \text{ $\hat w$ has height $j$ in $\hat\CQ$}\rrcurly.
  \end{align*}
  Let $A$ be a connected component of $G \setminus J$. We view $A$ as
  a coloured graph where (in addition to colours that may have already
  been present in $G$) each vertex $v$ is coloured by the multiset
  $S(v)$. Since $\CQ$ is simplifying, $\eg(A)\le g-1$, and thus there is
  a $\LW[s]$-sentence
  $\formel{bridge-iso}'_{A}$ that identifies $A$. We shall transform it
  into a $\LW[s+3]$-formula
  $\formel{bridge-iso}_{A}(x,x',x'',y)$ such that
  $\hat G  \models \formel{bridge-iso}_{A}({\hat u},{\hat
    u'},{\hat u''},\hat v)$ if and only if $A$ is isomorphic to the
  connected component of $\hat v$ in $\hat G  \setminus \widehat{J}$
  via an isomorphism $\pi$ that preseves the attachment patterns, that
  is, $S(v)=\hat S\big(\pi(v)\big)$ for all $v \in V(A)$. 
  
  Note that if a bridge in $G$ is attached to two vertices $w$ and
  $w'$ with the same label pair $(i,j)$, then it must hold that
  $w = w'$. Thus, for any vertex $v$ in $G$, the multiset $S(v)$ is actually a set, i.e., each tuple occurring in the multiset has
  multiplicity 1. However, in $\hat G $ this might not be the
  case.

  To relativise $\formel{bridge-iso}'_A$ to the connected component of
  a vertex $v$ in $G \setminus J$, we use the formula
  $\formel{comp}_\varphi$ from Lemma \ref{lem:avoiding-path} for
  $\varphi \coloneqq \formel{J-vert}(x,x',y)$ and replace every
  $\exists z \psi$ with
  $\exists z(\formel{comp}_\formel{J-vert}(x,x',y,z) \wedge
  \psi)$. Since $\formel{J-vert}(x,x',y)$ has width $\max \{7,s+2\}$,
  Lemma \ref{lem:avoiding-path} yields that
  $\formel{comp}_{\formel{J-vert}}(x,x',y,z) \in \LW[\max\{7,s+2\}]$.

  \newcommand{\mult}{\operatorname{mult}}

  To account for the colours, we define for each multiset $S$ of label
  pairs $(i,j)$ a relation $R_S$ with $v \in R_S$ if and only if
  $S(v)=S$. Note that all label
  pairs that can occur are contained in $[m]\times\{0,\ldots,h\}$. Let us denote the
  multiplicity of a pair $(i,j)$ in a multiset $S$ by $\mult_S(i,j)$.
  We let
  \begin{align*}
    \formel{att-pat}_S(x,x',x'',y) \coloneqq \hspace{-8mm}\bigwedge_{\substack{(i,j) \in [m]\times\{0,\ldots,h\}\\\mult_S(i,j)=p}}\hspace{-8mm} \exists^{=p} z \big(\formel{H-vert}_i(x,x',x'',z) \wedge \formel{csps-height}_j(x,x',z) \wedge E(y,z)\big).
  \end{align*}
  Then $G\models\formel{att-pat}_S(u,u',u'',v)\iff S(v)=S$. Note that
  by Lemma \ref{lem:fibre-k}, we have
  $\formel{att-pat}_S \in \LW[\max\{7,s+2\}]$. We replace
  every $R_S(z,z)$ in $\formel{bridge-iso}'_A$ with
  $\formel{att-pat}_S(x,x',x'',z)$ and obtain the desired formula
  $\formel{bridge-iso}_A(x,x',x'',y)$, which has width
  $s+3$.

   In the following, we only consider $J$-bridges and $\hat J$-bridges. If the reference to $J$ or $\hat J$ is clear from the context, we often do not mention it explicitly and simply use the term ``bridge''.

  Recall that every
  $J$-bridge is either an \emph{$i$-bridge} with all vertices of
  attachment in a single fibre $H_i$ or an \emph{$(i,i+1)$-bridge}
  with vertices of attachment in two adjacent fibres $H_i$ and $H_{i+1}$
  for some $i\in[\ell]$.

  Recall (from the paragraph preceding Corollary~\ref{cor:K}) that an
  \emph{inner bridge} is a $J$-bridge which has at least one vertex of
  attachment in $\bigcup_{i=2}^{\ell-1}V(\mathring H_i)$ and that the
  graph $K$ is the union of $J$ with all inner
  bridges. $K$ is a planar graph embedded in $\FD$. By
  Corollary~\ref{cor:K} we have $\LW[s+3]$-formulae
  $\formel{K-vert}(x,x',x'',y)$ and
  $\formel{K-edge}(x,x',x'',y_1,y_2)$ such that
  $V(K)=\formel{K-vert}[G,u,u',u'',y]$ and
  $E(K)=\formel{K-edge}[G,u,u',u'',y_1,y_2]$.
  We let $\hat K$ be the subgraph of $\hat G$ with vertex set $V(\hat
  K)=\formel{K-vert}[\hat G,\hat u,\hat u',\hat u'',y]$ and edge set
  $E(\hat K)=\formel{K-edge}[\hat G,\hat u,\hat u',\hat u'',y_1,y_2]$.

\newcommand{\Bcrit}{\CB_{\operatorname{crit}}}
\newcommand{\Bsc}{\CB_{\operatorname{sc}}}
\newcommand{\HBcrit}{\hat \CB_{\operatorname{crit}}}
\newcommand{\HBsc}{\hat \CB_{\operatorname{sc}}}

\begin{figure}
  \centering
   \includegraphics[width=0.5\textwidth]{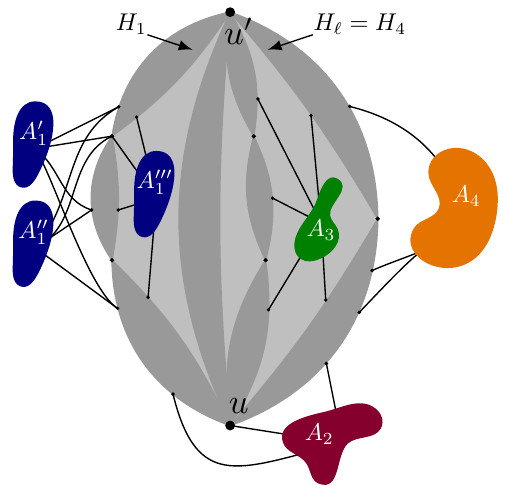}
   \caption{A simplifying patch with an inner bridge (green), multiple critical bridges (blue, orange, red) and a super-critical bridge (blue). Note that without $A''_1$, the graph would have an opposite pair $\{A_1',A_1'''\}$.}\label{fig:critical}
 \end{figure}

  A bridge is \emph{critical} if it is not an inner
  bridge (see Figure \ref{fig:critical}). Observe that a bridge is critical if it is either an
  $\ell$-bridge or a $1$-bridge or an $(\ell,1)$-bridge. Let $\Bcrit$
  denote the set of all critical $J$-bridges. Similarly, let $\HBcrit$
  be the set of all $\hat J$-bridges in $\hat G$ whose vertices of attachment are contained in $V(\hat \CQ_\ell) \cup V(\hat \CQ_1)$ (where $\hat \CQ_i$ is the set of all paths
$\hat Q\in\hat\CQ$ such that $\hat Q\subseteq \hat H_i$).
  With each $J$-bridge
  $B\in\Bcrit$ we associate a \emph{type} $\theta(B)$ as follows.
  \begin{itemize}
  \item If $B$ consists of a single edge $vv'$, then
    $\theta(B)=\{(i,j),(i',j')\}$, where $(i,j)=(0,0)$ if $v=u$,
    $(i,j)=(0,h)$ if $v=u'$, and otherwise $v\in V(\mathring H_i)$ and
    $j=\dist(u,v)$, and similarly $(i',j')=(0,0)$ if $v'=u$,
    $(i',j')=(0,h)$ if $v'=u'$, and otherwise
    $v'\in V(\mathring H_{i'})$ and $j'=\dist(u,v')$.
  \item If $|B|\ge 3$, let $A \coloneqq B\setminus V(J)$ be the connected
    component of $G\setminus J$ associated with $B$. We view $A$ as a
    coloured graph (with colours representing the attachment patterns
    $S(v)$ as above) and choose a label $\theta_A$ for the isomorphism
    type of $A$ (in such a way that $\theta_A=\theta_{A'}\iff A\cong A'$). We let
    $\theta(B)\coloneqq\{\theta_A\}$.
  \end{itemize}
  We can define the type $\hat\theta(B)$ of a $\hat J$-bridge
  $\hat B\in\HBcrit$ similarly. Observe that there is a
  bijection $\beta \colon \Bcrit\to\HBcrit$ such that
  $\theta(B)=\hat\theta\big(\beta(B)\big)$ for all $B\in\Bcrit$. To see this,
  note that we can use the formulae $\formel{bridge-iso}_A$ to construct
  for every type $\theta$ a $\LW[s+3]$-formula that encodes the number
  of bridges of type $\theta$.

  Observe that if two critical bridges have the same type, then either
  both are $(\ell,1)$-bridges or both are $1$-bridges or both are
  $\ell$-bridges.
  
  Recall that $\at(B)$ denotes the set of vertices of attachment of a
  bridge $B$. Let us call bridges $B$, $B'$ \emph{aligned} if
  $\at(B)=\at(B')$. We show that being aligned can be defined in $\LW[\max\{7,s+2\}]$. Let
  \[\formel{att-vert}(x,x',y,z) \coloneqq \formel{J-vert}(x,x',z) \wedge \exists z' \big( E(z,z') \wedge \formel{comp}_{\formel{J-vert}}(x,x',z',y)\big).\]
  Then if $v \notin J$, it holds that $G \models \formel{att-vert}(u,u',v,w)$ if and only if $w$ is a vertex of attachment of some $J$-bridge that contains $v$. By Lemmas \ref{lem:avoiding-path} and \ref{lem:regional-def}, the formula $\formel{att-vert}$ has width $\max\{7,s+2\}$. Now we can define
  \begin{align*}
    \formel{aligned}(x,x',y,y') \coloneqq {} &\neg \formel{comp}_{\formel{J-vert}}(x,x',y,y') \wedge {}
    \\&\forall z \big(\formel{att-vert}(x,x',y,z) \leftrightarrow \formel{att-vert}(x,x',y',z)\big).
  \end{align*}
  Then if $v,v' \notin J$, it holds that $G \models \formel{aligned}(u,u',v,v')$ if and only if $v$ and $v'$ are contained in distinct $J$-bridges $B$ and $B'$, respectively, and $B$ and $B'$ are aligned. Furthermore, $\formel{aligned}$ has width $\max\{7,s+2\}$ since $\formel{att-vert}$ has width $\max\{7,s+2\}$. The two formulae can easily be modified to also capture the case that $v$ or $v'$ itself is a vertex of attachment (and the case of trivial bridges, but we do not need this for our purposes).

  Recall that for all critical bridges $B\in\Bcrit$
  we have $\at(B)\subseteq V(\CQ_1)\cup V(\CQ_\ell)$. Let
  \begin{align*}
    Z&\coloneqq\art(\CQ_1)\cup\art(\CQ_\ell),\\
    \hat Z&\coloneqq\art(\hat\CQ_1)\cup\art(\hat\CQ_\ell).
  \end{align*}

  \begin{claim}[resume]\label{claim:cbridge}
    Let $B,B'\in\Bcrit$ such that $\theta(B)=\theta(B')$.
    \begin{enumerate}
    \item If $B$ and $B'$ are not aligned, then
      \[
        \at(B)\cap\at(B')=\at(B)\cap Z=\at(B')\cap Z,
      \]
      and $B$ or $B'$ is embedded in $\FD$.\label{it:1:cbridge}
    \item At most one of $B,B'$ is embedded in $\FD$. \label{it:2:cbridge}
    \end{enumerate}

    \proof Let $\theta\coloneqq\theta(B)=\theta(B')$. Suppose that
    $\at(B)=\{v_1,\ldots,v_r\}$ and
    $\at(B')=\{v_1',\ldots,v_{r'}'\}$. Since the type of a bridge
    contains information about the number of vertices of attachment,
    their fibres, and their height, we have $s=s'$, and without loss
    of generality we may assume that for every $i$ the vertices $v_i$
    and $v_i'$ belong to the same fibre and have the same height in
    this fibre. Thus, $v_i$ is an articulation vertex of its fibre if and only if $v_i'$ is one (and in this case they are equal). Hence $\at(B)\cap Z =\at(B')\cap Z$ and therefore we have
    \[\at(B)\cap Z = \at(B) \cap \at(B')\cap Z \subseteq \at(B) \cap \at(B').\]
    Furthermore, for every fibre $i$ and every height $j$ there are at
    most two vertices $v,v'\in V(H_i)$ of height $j$ that may be
    vertices of attachment of a bridge, one on the path $Q_i$ and one on
    the path $Q_i'$. It follows that every vertex in $\at(B)\cap \at(B')$ lies in $V(Q_i) \cap V(Q_i')$. Hence, every path from $u$ to $u'$ in $\CQ_i$ passes through $v$ and therefore, $v$ is an articulation vertex of $\CQ_i$. This proves $\at(B) \cap \at(B') \subseteq \at(B) \cap Z$ and hence equality.
    
   This means that if $B$
    and $B'$ are not aligned, one of them has some vertices of attachment in $V(Q_i) \setminus V(Q_i')$
    and the other has some vertices of attachment in $V(Q_i') \setminus V(Q_i)$. Thus, one must be embedded in $\Ff_{i-1}$
    and one in $\Ff_i$. At least one of these sets is a subset of $\FD$.
     
    Since $G$ is 3-connected, we have $r\ge 3$, and this means we
    cannot embed both $B$ and $B'$ into $\FD$, because this would
    violate planarity (reasoning via a $K_{3,3}$-minor).
    \uend
  \end{claim}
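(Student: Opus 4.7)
The plan is to exploit the fact that $\theta(B)=\theta(B')$ encodes the isomorphism type together with, for each vertex of attachment, its fibre membership and its height in $\CQ$. Since both bridges are critical, their vertices of attachment lie in $V(\CQ_1)\cup V(\CQ_\ell)$, so the encoded data allows us to pair up the vertices of attachment of $B$ with those of $B'$ into pairs $(v_i,v_i')$ that share the same fibre and the same height. A key geometric observation is that, within a single fibre $\CQ_i$ of a non-trivial simplifying patch, a given height $j$ is attained by at most two vertices — the one on the path $Q_i$ and the one on $Q_i'$ that together bound $\FR_i$; and these two vertices coincide precisely when the vertex is an articulation vertex of $\CQ_i$, i.\,e.\ lies in $\art(\CQ_i)$.

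For part~\ref{it:1:cbridge}, I would first argue that this pairing gives $\at(B)\cap Z = \at(B')\cap Z$: whenever $v_i\in Z$, its fibre-and-height data determines it uniquely inside the fibre, forcing $v_i'=v_i$; conversely each vertex in $\at(B)\cap Z$ forces its paired partner in $\at(B')$ to coincide with it, giving equality and the inclusion $\at(B)\cap Z\subseteq \at(B)\cap \at(B')$. For the reverse inclusion, observe that any $v\in\at(B)\cap\at(B')$ lies in $V(H_i)$ at some height $j$, and the two possible candidates are the vertex on $Q_i$ and the vertex on $Q_i'$; if both $B$ and $B'$ pick the same vertex at that height, then this vertex must lie in $V(Q_i)\cap V(Q_i')$ and hence is an articulation vertex, so it belongs to $Z$. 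Now suppose $B$ and $B'$ are not aligned: then the pairing must include an index $i$ where $v_i\neq v_i'$, forcing (for that fibre) $v_i\in V(Q_i)\setminus V(Q_i')$ and $v_i'\in V(Q_i')\setminus V(Q_i)$ (up to relabelling). Since bridges attached to a fibre are embedded either in $\Ff_{i-1}$ or $\Ff_i$, and since $B$ touches a vertex only on one side of $\CQ_i$ while $B'$ touches one on the other side, $B$ and $B'$ must be embedded in different faces of $\{\Ff_{i-1},\Ff_i\}$. One of these two faces is $\Ff_\ell=\FS\setminus\FD$ (when $i\in\{1,\ell\}$), so the other lies in $\FD$, proving that at least one of $B,B'$ is embedded in $\FD$.

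For part~\ref{it:2:cbridge}, the strategy is to use 3-connectivity of $G$ together with a topological planarity argument. Since $G$ is $3$-connected, every bridge has at least three vertices of attachment, so $|\at(B)|,|\at(B')|\ge 3$. Assume for contradiction that both $B$ and $B'$ are embedded in $\FD$. Inside $\FD$, together with $B,B'$ we also have the cycle $Q_1\cup Q_\ell'$ bounding $\FD$; contracting each of $B$ and $B'$ to a single vertex and contracting the two arcs of this boundary cycle between consecutive shared attachment points yields, after standard topological clean-up, a $K_{3,3}$-minor embedded in the closed disk $\FD$. This contradicts the Jordan curve theorem / planarity of the disk, so at most one of $B,B'$ lies in $\FD$.

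The main obstacle I anticipate is the last step: turning the rough combinatorial/topological picture (two bridges sharing $\ge 3$ attachment points on a cycle inside a disk) into a clean $K_{3,3}$-minor argument. One needs to carefully choose which three attachment points on each bridge to use and verify that the three internally disjoint paths through $B$, the three through $B'$, and the disjoint arcs of the boundary cycle $Q_1\cup Q_\ell'$ between these points really assemble into a $K_{3,3}$-subdivision; the easier cases (e.g.\ when the common attachment set has size exactly three) are routine, but one has to rule out degenerate configurations when some attachment points coincide with $u$ or $u'$.
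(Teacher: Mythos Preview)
Your proposal follows essentially the same approach as the paper: pair the attachment vertices of $B$ and $B'$ by fibre and height, use the ``at most two candidates per fibre and height, namely one on $Q_i$ and one on $Q_i'$'' observation to obtain the equalities in part~\ref{it:1:cbridge} and to force $B,B'$ into different faces $\Ff_{i-1},\Ff_i$ when they are not aligned, and then appeal to $3$-connectivity and a $K_{3,3}$-type planarity obstruction for part~\ref{it:2:cbridge}. The paper leaves the $K_{3,3}$ step at exactly the level of detail you anticipated as the main obstacle, so your sketch is at least as complete.

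One small correction to your $K_{3,3}$ sketch: if both $B$ and $B'$ are embedded in $\FD$, then (being critical but not $(\ell,1)$-bridges) they both sit in the same face $\Ff_1$ or $\Ff_{\ell-1}$, and their attachment points lie on $Q_1'$ or $Q_\ell$, \emph{not} on the outer boundary cycle $Q_1\cup Q_\ell'$ of $\FD$; run your disk argument inside $\Fcl(\Ff_1)$ (respectively $\Fcl(\Ff_{\ell-1})$) with the three shared attachment points on its boundary cycle $C_1$ (respectively $C_{\ell-1}$), and use the exterior of that disk as the third branch vertex.
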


  Note that for a connected component of $G \setminus J$ or $\hat G \ setminus \hat J$, there is only a bounded number of possible isomorphism types $\theta_A$. Thus, we can check whether two bridges have the same isomorphism type using the formulae $\formel{bridge-iso}_A$. Hence, employing the formulae $\formel{bridge-iso}_A$, $\formel{aligned}$, $\formel{att-vert}$ and requiring the variable $z$ in the definition of $\formel{csps-art}$ (cf.\ Lemma \ref{lem:csps-def}) additionally to be in $V(H_1)$ and $V(H_\ell)$, respectively, we can show that all restrictions the claim imposes on $G$ are definable in $\LW[s+3]$. Thus, for all $\hat B,\hat B'\in\HBcrit$, if
  $\hat B$
  and $\hat B'$ are not aligned, then
  $\at(\hat B)\cap\at(\hat B')=\at(\hat B)\cap \hat Z=\at(\hat
  B')\cap\hat Z$.

  There is an interesting special case of pairs of critical bridges that
  we need to deal with separately. Consider a type $\theta$ such that
  there are exactly two bridges $B,B'\in\Bcrit$ of type $\theta$, and
  these two bridges are not aligned. Then by
  Claim~\ref{claim:cbridge}, either both $B$ and $B'$ are
  $\ell$-bridges or both are $1$-bridges. Moreover, the claim implies
  that exactly one of them is embedded in
  $\Ff_1\cup\Ff_{\ell-1}\subseteq\FD$. Say, $B$ is embedded in
  $\Ff_1\cup\Ff_{\ell-1}$. We call $\{B,B'\}$ an opposite
    pair. That is, an \emph{opposite pair} in $G$ is an unordered pair
  $\{B,B'\}$ of bridges $B,B'\in\Bcrit$ such that $\theta(B)=\theta(B')$, there is
  no $B''\in\Bcrit\setminus\{B,B'\}$ such that
  $\theta(B'')=\theta(B)$, and $B$,
  $B'$ are not aligned.

  \begin{claim}[resume]\label{claim:opposite}
    Let $\{B_1,B_1'\},\ldots,\{B_p,B_p'\}$ be a list of all opposite
    pairs of $G$. Then the graph
    \[
      K^+\coloneqq {}K\cup\bigcup_{i=1}^p(B_i\cup B_i')
    \]
    is planar.

    \proof Recall that for every $i$ either $B_i$ or $B_i'$ is
    embedded in $\FD$. Without loss of generality we assume that for
    all $i$, the bridge $B_i$ is embedded in $\FD$. Then
    $K\cup\bigcup_{i=1}^p B_i$ is planar, because it is embedded in
    $\FD$. Moreover, for every $i$ the bridges $B_i$ and $B_i'$ are
    isomorphic and have vertices of attachment in the same fibres and
    of the same height. We can just copy the embedding of $B_i$ and
    embed $B_i'$ in the same way outside of $\FD$.  \uend
  \end{claim}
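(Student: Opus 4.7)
The plan is to construct a planar embedding of $K^+$ starting from the restriction of the ambient embedding to $\FD$ and then attaching a ``mirror'' of each $B_i'$ outside $\FD$. First I would invoke Claim~\ref{claim:cbridge}(\ref{it:2:cbridge}) and rename within each opposite pair if necessary so that $B_i$ is embedded in $\FD$ for every $i\in[p]$. Since $K$ is planar and embedded in $\FD$ (it is the union of $J$ with its inner bridges, all of which sit inside $\FD$), and each $B_i$ is also embedded in $\FD$, the subgraph $K\cup\bigcup_{i=1}^p B_i$ inherits a planar embedding $\eta$ from the disk $\FD$, with boundary cycle $C_\ell = Q_\ell'\cup Q_1$ forming the outer face of $\eta$.

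Next I would add each $B_i'$ to $\eta$, one at a time, inside the outer face. By definition of an opposite pair, $\theta(B_i)=\theta(B_i')$, so $B_i$ and $B_i'$ are isomorphic via a bijection $\varphi_i\colon B_i\to B_i'$ that maps attachment vertices to attachment vertices of the same fibre and the same height; moreover, by the discussion preceding the claim, both bridges are either $\ell$-bridges or both are $1$-bridges. Treating the $\ell$-bridge case (the $1$-bridge case is symmetric): $B_i$ is embedded in the open disk $\Ff_{\ell-1}$ with $\at(B_i)\subseteq V(Q_\ell)$, while $\at(B_i')\subseteq V(Q_\ell')$. Since $Q_\ell'$ lies on the outer boundary of $\eta$, I would use $\varphi_i$ to transfer the embedding of $B_i$ (which sits in a subdisk of $\Ff_{\ell-1}$) into a small pocket attached to the outer face of $\eta$ along the subarc of $Q_\ell'$ spanned by $\at(B_i')$; the $1$-bridge case proceeds analogously, using $Q_1\subseteq C_\ell$.

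The hard part will be verifying that the pockets introduced for distinct $B_i'$s in the outer face do not conflict with one another. For this I would argue that the attachment intervals of the $B_i'$s on $Q_\ell'$ are non-interlocking: via the isomorphisms $\varphi_i$ they mirror the attachment intervals of the $B_i$s on $Q_\ell$, and these are forced to be non-interlocking because all $\ell$-bridges among $B_1,\ldots,B_p$ are planarly embedded simultaneously inside the open disk $\Ff_{\ell-1}$ with all attachments on the subarc $Q_\ell$ of its boundary. The same reasoning applies to the $1$-bridge pockets on $Q_1$. Since $Q_\ell'$ and $Q_1$ are the two arcs of the outer boundary $C_\ell$, meeting only at $u$ and $u'$, the $\ell$-bridge and $1$-bridge pockets occupy disjoint regions of the outer face of $\eta$. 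Consequently all the $B_i'$s can be placed simultaneously without crossings, yielding a planar embedding of $K^+$.
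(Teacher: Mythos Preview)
Your approach is essentially the same as the paper's: renormalise so that each $B_i$ lies in $\FD$, observe that $K\cup\bigcup_i B_i$ is then planar (embedded in $\FD$), and mirror the embedding of each $B_i$ to place $B_i'$ in the outer face along the appropriate arc of $C_\ell$. Your additional care about non-interlocking of the mirrored bridges is a welcome expansion of the paper's one-line ``copy the embedding'' argument. One minor correction: to justify that at least one of $B_i,B_i'$ is embedded in $\FD$ you need Claim~\ref{claim:cbridge}(\ref{it:1:cbridge}), not (\ref{it:2:cbridge}); part~(\ref{it:2:cbridge}) only gives you that at most one is.
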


  An \emph{opposite pair} in $\hat G$ is an unordered pair
  $\{\hat B,\hat B'\}$ of bridges $\hat B,\hat B'\in\HBcrit$ such
  that $\theta(\hat B)=\theta(\hat B')$, there is no
  $\hat B''\in\Bcrit\setminus\{\hat B, \hat B'\}$ such that
  $\theta(\hat B'')=\theta(\hat B)$, and $\hat B$, $\hat B'$ are not
  aligned. Let $\{\hat B_1,\hat B_1'\},\ldots,\{\hat B_{p'},\hat B_{p'}'\}$
  be a list of all opposite pairs in $\hat G$. It is easy to see that
  $p=p'$. 
  We let
  \[
    \hat K^+\coloneqq\hat K\cup\bigcup_{i=1}^p(\hat B_i\cup \hat B_i').
  \]
    It is easy to construct $\LW[s+3]$-formulae
    $\formel{K-plus-vert}(x,x',x'',y)$ and
    $\formel{K-plus-edge}(x,x',x'',y_1,y_2)$ such that
    $V(K^+)=\formel{K-plus-vert}[G, u,u',u'',y]$,
    $E(K^+)=\formel{K-plus-edge}[G, u,u',u'',y_1,y_2]$, $V(\hat
    K^+)=\formel{K-plus-vert}[\hat G, \hat u,\hat u',\hat u'',y]$, and
    $E(\hat K^+)=\formel{K-plus-edge}[\hat G, \hat u,\hat u',\hat
    u'',y_1,y_2]$.

  Let us call a $J$-bridge $B$ \emph{super-critical} if it is critical,
  but not contained in an opposite pair. Let $\Bsc$ be the set of all
  super-critical $J$-bridges (see Figure \ref{fig:critical}). Similarly, we call a $\hat J$-bridge
  $\hat B$ \emph{super-critical} if it is critical, but not contained
  in an opposite pair, and we let $\HBsc$ be the set of all
  super-critical $\hat J$-bridges.

  Observe that the bijection $\beta$ between $\Bcrit$ and
  $\HBcrit$ defined above induces a bijection between $\Bsc$ and
  $\HBsc$. Moreover, we have $G=K^+\cup\bigcup_{B\in\Bsc}B$, and this implies
  $\hat G=\hat K^+\cup\bigcup_{\hat B\in\HBsc}\hat B$. 
  
  Next, we expand $K^+$ and $\hat K^+$ by new colours that encode the
  information about which bridges are attached to which vertices. 
  For every $v\in V(K^+)$, we let
  \[
    \Theta(v)\coloneqq\llcurly \theta(B)\mid B\in\Bcrit, v\in\at(B)\rrcurly
  \]
  Moreover,
  we let 
  \[
    \Phi(v)\coloneqq
    \begin{cases}
      x&\text{if }v=u,\\
      x'&\text{if }v=u',\\
      x''&\text{if }v=u'',\\
      i&\text{if }v\in V(\mathring H_i)\setminus\{u''\}\text{ for some }i\in\ell,\\
      \bot&\text{if }v\in V(K)\setminus V(J).
    \end{cases}
  \]
  We view $\Theta(v)$ and $\Phi(v)$ as additional colours of the
  vertices of $K^+$ and in the following view $K^+$ as a coloured
  graph where these new colours are incorporated.  We note that for
  every colour $c\in\operatorname{rg}(\Phi)$ we have a $\LW[s+3]$-formula
  $\formel{Phi}_c(x,x',x'',y)$ such that
  $G\models\formel{Phi}_c(u,u',u'',v)\iff\Phi(v)=c$. Similarly,
  using the formulae $\formel{bridge-iso}_A$ defined above, for
  every colour $c\in\operatorname{rg}(\Theta)$ we can construct a $\LW[s+3]$-formula
  $\formel{Theta}_c(x,x',x'',y)$ such that
  $G\models\formel{Theta}_c(u,u',u'',v)\iff\Theta(v)=c$.

  We can use these formulae to transfer the colouring to the graph
  $\hat K^+$: for $\hat v\in V(\hat K^+)$, we let $\hat\Phi(\hat v)$ be the
  unique $c\in\operatorname{rg}(\Phi)$ such that
  $\hat G\models\formel{Phi}_c(\hat u,\hat u',\hat u'',\hat v)$. If
  there is more than one or no such $c$, then the graphs can be
  distinguished by a $\LW[s+3]$-formula.  Similarly, for
  $\hat v\in V(\hat K^+)$, we let $\hat\Theta(\hat v)$ be the unique
  $c\in\operatorname{rg}(\Theta)$ such that
  $\hat G\models\formel{Theta}_c(\hat u,\hat u',\hat u'',\hat v)$.
  In the following, we regard $K^+$ and $\hat K^+$ as coloured graphs with
  these colours, in addition to the colours inherited from $G$ and $\hat G$.

  \begin{claim}[resume]
    \[
      K^+\cong\hat K^+.
    \]

    \proof
    The graphs $K^+$ and $\hat K^+$ with all colours are
    definable in $G$ by $\LW[s+3]$-formulae using the three parameters
    $u,u',u''$. Moreover, $K^+$ is a
    planar graph, and thus there is a $\LW[4]$-formula that
    identifies it. From this formula and the formulae defining membership in the subgraphs $K^+$ and
    $\hat K^+$ we can construct a $\LW[s+3]$-formula that would
    distinguish $G$ and $\hat G$ if $K^+$ and $\hat K^+$ were
    non-isomorphic.
    \uend
  \end{claim}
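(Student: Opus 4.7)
The plan is to exploit planarity of the coloured graph $K^+$, identify it by a $\LC[4]$-sentence via Theorem~\ref{thm:planar-dimension}, and then transfer this identification to $\hat K^+$ using our standing assumption that no $\LW[s+3]$-formula distinguishes $G$ and $\hat G$ at the tuples $(u,u',u'')$ and $(\hat u,\hat u',\hat u'')$.

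First I would invoke Claim~\ref{claim:opposite} to conclude that $K^+$ is planar. Only finitely many colours appear on $K^+$ --- namely the values in the ranges of $\Phi$ and $\Theta$, together with the colouring inherited from $G$ --- so Theorem~\ref{thm:planar-dimension} yields a sentence $\iso{K^+}\in\LC[4]$ that identifies the coloured graph $K^+$.

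Next I would translate $\iso{K^+}$ into a $\LW[s+3]$-formula $\phi(x,x',x'')$ which, evaluated at $(u,u',u'')$ in $G$, expresses that the subgraph with vertex set $\formel{K-plus-vert}[G,x,x',x'',y]$ and edge set $\formel{K-plus-edge}[G,x,x',x'',y_1,y_2]$, equipped with the colouring defined by the formulae $\formel{Phi}_c$, $\formel{Theta}_c$ and the colouring inherited from $G$, satisfies $\iso{K^+}$. This uses the standard relativisation: each subformula $\exists^{\ge p}z\,\psi$ of $\iso{K^+}$ is replaced by $\exists^{\ge p}z\big(\formel{K-plus-vert}(x,x',x'',z)\wedge\psi\big)$, each edge atom $E(z_1,z_2)$ by $\formel{K-plus-edge}(x,x',x'',z_1,z_2)$, and each colour-relation atom by the corresponding $\formel{Phi}_c(x,x',x'',z)$ or $\formel{Theta}_c(x,x',x'',z)$. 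Since $\iso{K^+}$ has width $4$, we add three fixed parameters $x,x',x''$, and all the auxiliary formulae lie in $\LW[s+3]$ (using $s\ge 4$), a width count shows that $\phi\in\LW[s+3]$.

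Finally, $G\models\phi(u,u',u'')$ holds by construction and by the defining properties of $\formel{K-plus-vert}$, $\formel{K-plus-edge}$, $\formel{Phi}_c$, $\formel{Theta}_c$. The indistinguishability hypothesis then gives $\hat G\models\phi(\hat u,\hat u',\hat u'')$, which unfolds to say that $\hat K^+$, with its colouring defined by $\hat\Phi$, $\hat\Theta$ and the colouring inherited from $\hat G$, satisfies the identifying sentence $\iso{K^+}$ of $K^+$; hence $K^+\cong\hat K^+$. The only real obstacle is the variable-count bookkeeping under the substitutions, but since every substituent already fits in $\LW[s+3]$ and $\iso{K^+}$ itself uses only $4$ variables (while $s\ge 4$ provides the necessary slack), the width closes cleanly at $s+3$.
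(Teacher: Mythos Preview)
Your proposal is correct and follows essentially the same approach as the paper's proof: establish planarity of $K^+$ via Claim~\ref{claim:opposite}, identify it by a $\LW[4]$-sentence using Theorem~\ref{thm:planar-dimension}, relativise to the definable subgraph using $\formel{K-plus-vert}$, $\formel{K-plus-edge}$, $\formel{Phi}_c$, $\formel{Theta}_c$, and invoke the indistinguishability assumption. Your width bookkeeping (at most $7$ from the relativised sentence, at most $s+3$ from the substituted auxiliary formulae, hence $\max\{7,s+3\}=s+3$ since $s\ge 4$) is exactly what the paper leaves implicit.
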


  In the following, we let $\pi$ be an isomorphism from $K^+$ to
  $\hat K^+$. It is our goal to extend $\pi$ to an isomorphism from
  $G$ to $\hat G$. For this, we need to extend $\pi$ to all
  super-critical bridges. We process the bridges by type. So let
  $\theta$ be a type. Let $\CB_\theta$ be the set of all $B\in\Bsc$
  with $\theta(B)=\theta$, and similarly, let $\hat\CB_\theta$ be the
  set of all $\hat B\in\HBsc$ with $\hat\theta(B)=\theta$. Then the
  bijection $\beta$ between $\Bcrit$ and $\HBcrit$ defined above
  induces a bijection between $\CB_\theta$ and $\hat\CB_\theta$.

  We shall construct an extension $\pi_\theta$ of $\pi$ that is an
  isomorphism from $K^+\cup\bigcup_{B\in\CB_\theta}B$ to
  $\hat K^+\cup\bigcup_{\hat B\in\hat\CB_\theta}\hat B$. We can easily
  combine all the $\pi_\theta$ to one isomorphism from $G$ to
  $\hat G$, because they all coincide on $K^+$ and
  the intersection between any two bridges is in $K^+$
  and $\hat K^+$, respectively.

  Suppose first that all $B,B' \in\CB_\theta$ are aligned. Then for
  all $\hat B,\hat B' \in \hat \CB_\theta$ we have
  $\at(\hat B)=\at(\hat B')$, since
  $\formel{aligned} \in \LW[\max\{7,s+2\}]$. Note that $\beta$ induces
  an isomorphism from $\bigcup_{B\in\CB_\theta}B$ to
  $\bigcup_{\hat B\in\hat\CB_\theta}\hat B$. We can easily extend this
  isomorphism to an isomorphism from
  $K^+\cup\bigcup_{B\in\CB_\theta}B$ to
  $\hat K^+\cup\bigcup_{\hat B\in\hat\CB_\theta}\hat B$ because the
  attachment pattern is encoded in the colouring of the bridges.

  Suppose next that there are $B_1,B_2\in\CB_{\theta}$ that are not
  aligned. Then $\CB_{\theta}\ge 3$, because otherwise $\{B_1,B_2\}$
  would be an opposite pair. Say,
  $\CB_{\theta}=\{B_1,B_2,\ldots,B_m\}$. Without loss of generality we
  assume that every $B_i$ is a $1$-bridge. The case that every $B_i$ is an $\ell$-bridge or every $B_i$ is an $(\ell,1)$-bridge can be
  dealt with in the same way. By Item \ref{it:1:cbridge} of Claim~\ref{claim:cbridge},
  one of $B_1$ and $B_2$, say $B_1$, is embedded in $\Ff_1$. But
  then by Item \ref{it:2:cbridge} of Claim~\ref{claim:cbridge}, the bridges $B_2,\ldots,B_p$ are not
  embedded in $\FD$. By Item \ref{it:1:cbridge} again,
  $B_2,\ldots,B_p$ are aligned. For every $i\in[p]$, let $X_i\coloneqq
  \at(B_i)\cap\art(\CQ_1)$ and $Y_i\coloneqq \at(B_i)\setminus
  X_i$. Then by Item \ref{it:1:cbridge} we know that $X_1=X_2=\cdots=X_p$
  and $Y_2=\cdots=Y_p$ and $Y_1\cap Y_i=\emptyset$ for $i\ge 2$. Now
  the key observation is that the vertices in $Y_1$ have a different
  colour than the vertices in $Y_2$, because they are attached to a
  different number of bridges of type $\theta$. The isomorphism $\pi$
  maps $Y_1$ to a set $\hat Y_1$ of vertices that are attached to
  exactly one bridge of type $\theta$, and it maps $Y_2$ to a set
  $\hat Y_2$ of vertices that are attached to
  $p-1$ bridges of type $\theta$. Moreover, it maps $X \coloneqq X_1$ to a set
  $\hat X$ of vertices that are attached to $p$ bridges of type
  $\theta$. We can now extend the isomorphism $\pi$ by mapping $B_1$
  to the unique bridge of type $\theta$ that is attached to the
  vertices in $\hat Y_1$ and by mapping $B_2,\ldots,B_p$ to the $p-1$
  bridges of type $\theta$ that are attached to $\hat Y_2$.
\end{proof}

This completes the proof of Lemma \ref{lem:main} and thus also the proof of Theorem \ref{thm:main}. We finally prove the bound $2g+3$ if the surface $\FS$ that $G$ is embedded into is orientable.

\begin{proof}[Proof of Corollary~\ref{cor:orient}]
  The Euler genus of an orientable surface is always even. Suppose $G$ is a graph embeddable in an orientable surface of Euler genus $g$. Since the subgraphs obtained by cutting through the beads are also embeddable in orientable surfaces of smaller Euler genus, their Euler genus is at least $2$ smaller than the Euler genus of $G$. Therefore, inductively proceeding as described in the previous section, redefining $s$ to be the number of variables needed for graphs embeddable in orientable surfaces of Euler genus at most $g-2$, we can improve our bound from Theorem \ref{thm:main} to $2g + 3$.
\end{proof}

%%%%%%%%%%%%%%%%%%%%%%%%%%%%%%%%%%%%%%%%%%%%%%%%%%%%%%%%%%%%
\section{Concluding Remarks}

The WL dimension is a measure for the combinatorial and
descriptive complexity of a graph. In view of its numerous, seemingly
unrelated characterisations in terms of logic, algebra, mathematical
programming, and homomorphisms, we can arguably regard the WL
dimension as a 
natural and robust graph invariant.

We have proved an upper bound of $4g+3$ for the WL dimension of graphs
of Euler genus $g$ and showed that if $G$ is known to be embeddable on an orientable surface of Euler genus $g$, the bound improves to $2g+3$. The immediate question that remains is how tight our
bound is. 

We believe that by refining our arguments in some
places it might be possible to reduce the bound from Theorem \ref{thm:main} to $3g+3$ or even $2g+3$; any
further improvement seems to require substantial additional ideas. It is conceivable that the WL dimension of planar graphs is
$2$. If this is the case, the additive term in our bound would
automatically drop to $2$. 

In terms of lower bounds, using the so-called CFI construction
\cite{caifurimm92} it is easy to prove a linear lower bound of $\epsilon\cdot g$ for the WL
dimension of graphs of Euler genus $g$, albeit with a rather small
constant $\epsilon>0$. To close the gap between upper and lower bound, it may
be worthwhile to spend some effort on improving the lower bound.

Beyond graphs of bounded genus, we can try to determine the WL
dimension of other graph classes and tie the WL dimension to other graph
invariants. A natural target would be the class of all graphs that
exclude the complete graph $K_\ell$ as a minor. We know that the WL
dimension of this class is bounded \cite{gro17}. But even an exponential
bound of the WL dimension in terms of $\ell$ would be major progress.

\bibliography{main}
\end{document}